\newtheorem{proposition}{Proposition}
\newtheorem{theorem}{Theorem}
\newtheorem{remark}{Remark}
\newtheorem{definition}{Definition}
\newtheorem{maintheorem}{Main Theorem}
\newcommand{\nabb}{{\nabla} \mkern-13mu /\,}
\newcommand{\lapp}{{\Delta} \mkern-13mu /\,}
\title{The Null Condition and Global Existence for Nonlinear Wave Equations on Slowly Rotating Kerr Spacetimes}
\author{Jonathan Luk}
\begin{document}

\maketitle

\begin{abstract}
We study a semilinear equation with derivatives satisfying a null condition on slowly rotating Kerr spacetimes. We prove that given sufficiently small initial data, the solution exists globally in time and decays with a quantitative rate to the trivial solution. The proof uses the robust vector field method. It makes use of the decay properties of the linear wave equation on Kerr spacetime, in particular the improved decay rates in the region $\{r\leq \frac{t}{4}\}$.
\end{abstract}

\section{Introduction}

In this paper, we consider the global existence for small data for a semilinear equation with null condition on a Kerr spacetime. Kerr spacetimes are stationary axisymmetric asymptotically flat black hole solutions to the vacuum Einstein equations
$$R_{\mu\nu}=0$$
in $3+1$ dimensions. They are parametrized by two parameters $\left(M, a\right)$, representing respectively the mass and the angular momentum of a black hole. We study semilinear equations on a Kerr spacetime with $a\ll M$ of the form
$$\Box_{g_K}\Phi=F(D \Phi),$$
where $\Box_{g_K}$ is the Laplace-Beltrami operator for the Kerr metric $g_K$, and $F$ denotes nonlinear terms that are at least quadratic and satisfy the null condition that we will define in Section \ref{theorem}.

The corresponding problem on Minkowski spacetime has been well studied. In 4+1 or higher dimension, the decay of the linear wave equation is sufficiently fast for one to prove global existence of small data for nonlinear wave equations with any quadratic nonlinearity \cite{Kl}. However, in 3+1 dimensions, which is also the dimension of physical relevance, the decay rate is only sufficient to prove the almost global existence of solutions \cite{JoK}. Indeed, a counter-example is known \cite{John} for the equation

$$\Box_m\Phi=(\partial_t\Phi)^2.$$
Nevertheless, if the quadratic nonlinearity satisfies the null condition defined by Klainerman, it has been proved independently by Christodoulou \cite{Chr} and Klainerman \cite{Knull} that any solutions to sufficiently small initial data are global in time. There have been an extensive literature on extensions and variations of the original results, including the cases of the multiple-speed system and the exterior domains (\cite{ST}, \cite{Sogge}, \cite{MNS}, \cite{MSo}).

The decay rate of the solutions to the linear wave equation on Kerr spacetimes with $a\ll M$ have been proved in \cite{DRL}, \cite{AB}, \cite{Ta} and \cite{LKerr}. The known decay outside the set $\{ct^*\leq r\leq Ct^*\}$ is sufficiently strong and the proof of them (in \cite{DRL}, \cite{AB} and \cite{LKerr}) is sufficiently robust that one expects the main obstacle from proving a small data global existence result (if it indeed holds) would come from quantities in the set $\{ct^*\leq r\leq Ct^*\}$. This set, however, approaches the same set in Minkowski spacetime as $t^*\to\infty$ due to the asymptotic flatness of Kerr spacetimes. Therefore, one expects that with a null condition similar to that on the Minkowski spacetime, a similar global existence result holds. Indeed, we have (see the precise version in Section \ref{theorem})

\begin{maintheorem}
Consider $\Box_{g_K}\Phi= F(D\Phi)$ where $F$ satisfies the null condition (see Section \ref{theorem}). Then for any initial data that are sufficiently small, the solution exists globally in time.
\end{maintheorem}

Our major motivation in studying the null condition on a Kerr spacetime is the problem of the stability of the Kerr spacetime. It is conjectured that Kerr spacetimes are stable. In the framework of the initial value problem, the stability of Kerr would mean that for any solution to the vacuum Einstein equations with initial data close to the initial data of a Kerr spacetime, its maximal Cauchy development has an exterior region that approaches a nearby, but possibly different, Kerr spacetime. In the case of the Minkowski spacetime, the null condition has served as a good model problem for the study of the stability of the Minkowski spacetime. We hope that this work will find relevance to the problem of the stability of the Kerr spacetime.

\subsection{Some Related Known Results}

We turn to some relevant work on linear and nonlinear scalar wave equations on Kerr spacetimes. The decay of solutions to the linear wave equation on Kerr spacetimes has received considerable attention. We mention some results on Kerr spacetimes with $a>0$ here and refer the readers to \cite{DRL}, \cite{LS} for references on the corresponding problem on Schwarzschild spacetimes. There has been a large literature on the mode stability and non-quantitative decay of azimuthal modes (See for example \cite{PT}, \cite{HW}, \cite{Wh}, \cite{FKSY}, \cite{FKSY2} and references in \cite{DRL}). The first global result for the Cauchy problem was obtained by Dafermos-Rodnianski in \cite{DRK}, in which they proved that for a class of small, axisymmetric, stationary perturbations of Schwarzschild spacetime, which include Kerr spacetimes that rotate sufficiently slowly, solutions to the wave equation are uniformly bounded. Similar results were obtained later using an integrated decay estimate on slowly rotating Kerr spacetimes by Tataru-Tohaneanu \cite{TT}. Using the integrated decay estimate, Tohaneanu also proved Strichartz estimates \cite{To}.

Decay for general solutions to the wave equation on sufficiently slowly rotating Kerr spacetimes was first proved by Dafermos-Rodnianski \cite{DRL} with a quantitative rate of $|\Phi|\leq C(t^*)^{-1+Ca}$. A similar result was later obtained by \cite{AB} using a physical space construction to obtain an integrated decay estimate. In all of \cite{TT}, \cite{DRL} and \cite{AB}, the integrated decay estimate is proved and plays an important role. All proofs of such estimates rely heavily on the separability of the wave equation, or equivalently, the existence of a Killing tensor on Kerr spacetime. In a recent work \cite{DRNPS}, Dafermos-Rodnianski prove the non-degenerate energy decay and the pointwise decay assuming the integrated local energy decay estimate and boundedness for the wave equation on an asymptotically flat spacetime. Their work shows a decay rate of $|\Phi|\leq Ct^{-1}$ and improves the rates in \cite{DRL} and \cite{AB}. In a similar framework, but assuming in addition exact stationarity, Tataru \cite{Ta} proved a local decay rate of $(t^*)^{-3}$ using Fourier-analytic methods. This applies in particular to sufficiently slowly rotating Kerr spacetimes. Dafermos and Rodnianski have recently announced a proof for the decay of solutions to the wave equation on the full range of sub-extremal Kerr spacetimes $a< M$. 

For nonlinear equations, global existence for the equation with power nonlinearity $\Box_{g_k}\Phi=\pm |\Phi|^p \Phi$ was initiated in \cite{NiS} and \cite{NiK}, in which the large data subcritical defocusing case of $p=2$ is studied. Later, there have been much work on the small data problem in which the sign of the nonlinearity is not important, and that the dispersive properties of the linear equation plays a crucial role. Global existence was proved for small radial data for $p>3$ on Reissner-Nordstrom spacetime \cite{DRNL} and for general small data vanishing on the bifurcate sphere for $p>2$ \cite{BSt} on Schwarzschild spacetime. Global existence was also proved for $p=4$ on Schwarzschild spacetime with general data that has small non-degenerate energy \cite{MMTT}. This was extended to the case of sufficiently slowly rotating Kerr spacetime in \cite{To}. Counterexample is known for the case $0<p<\sqrt{2}$ \cite{CG}. To our knowledge, the present work is the first work on semilinear equations with derivatives on black hole spacetimes.

\subsection{The Statement of the Main Theorem}\label{theorem}

Before introducing the null condition and stating the precise version of the Main Theorem, we briefly introduce the necessary concepts and notations on Kerr geometry and the vector field method. See \cite{LKerr} for more details.

The Kerr metric in the Boyer-Lindquist coordinates takes the following form:
\begin{equation}\label{kerrmetric}
\begin{split}
g_K=&-\left(1-\frac{2M}{r\left(1+\frac{a^2 \cos ^2{\theta}}{r^2}\right)}\right)dt^2+\frac{1+\frac{a^2\cos ^2 \theta}{r^2}}{1-\frac{2M}{r}+\frac{a^2}{r^2}}dr^2+r^2\left(1+\frac{a^2\cos ^2 \theta}{r^2}\right)d\theta^2 \\
&+r^2\left(1+\frac{a^2}{r^2}+\left(\frac{2M}{r}\right)\frac{a^2\sin ^2\theta}{r^2\left(1+\frac{a^2\cos ^2\theta}{r^2}\right)}\right)\sin ^2\theta d\phi^2
-4M \frac{a \sin ^2\theta}{r\left(1+\frac{a^2\cos ^2\theta}{r^2}\right)}dtd\phi.
\end{split}
\end{equation}
Let $r_+$ be the larger root of $\Delta=r^2-2Mr+a^2$. $r=r_+$ is the event horizon. In this paper, we will use the coordinate system $(t^*,r,\theta,\phi^*)$ defined by 
$$t^*=t+\chi(r)h(r),\quad \mbox{where  }\frac{dh(r)}{dr}=\frac{2Mr}{r^2-2Mr+a^2},$$
$$\phi^*=\phi+\chi(r)P(r),\quad \mbox{where  }\frac{dP(r)}{dr}=\frac{a}{r^2-2Mr+a^2},$$
where
$$\chi(r)=\left\{\begin{array}{clcr}1&r\le r^-_Y-\frac{r^-_Y-r_+}{2}\\0&r\ge r^-_Y-\frac{r^-_Y-r_+}{4}\end{array}\right.,$$
where $r_+$, as above, is the larger root of $\Delta=r^2-2Mr+a^2$ and $r^-_Y>r_+$ is a fixed constant very close to $r_+$, the value of which can be determined from the proof of the energy estimates in \cite{LKerr}. Following the notation in \cite{LKerr}, we will use $t^*=\tau$ to denote the $t^*$ slice on which we want to prove estimates and $t^*=\tau_0$ to denote the $t^*$ slice on which the initial data is posed.

In \cite{LKerr}, following \cite{DRK}, various quantities are defined via an explicit identification of the Kerr spacetime with the corresponding Schwarzschild spacetime with the same mass. We recall the identification:
$$r_S^2-2Mr_S=r^2-2Mr+a^2,$$
$$t_S+\chi(r_S)2M\log\left(r_S-2M\right)=t^*,$$
$$\theta_S=\theta,$$
$$\phi_S=\phi^*,$$
where $\chi$ is as above. 

Define 
$$r^*_S=r_S+2M\log(r_S-2M)-3M-2M\log M,$$
$$\mu=\frac{2M}{r_S},$$
$$u=\frac{1}{2}(t_S-r^*_S),$$
$$v=\frac{1}{2}(t_S+r^*_S).$$
We note that the variable $u$ will also be used to quantify decay.

We define in coordinates
$$\underline{L}=\partial_u\mbox{, in the $(u,v,\theta_S,\phi_S)$ coordinates},$$
$$L=2\partial_{t^*}+\chi(r)\frac{a}{Mr_+}\partial_{\phi^*}-\underline{L}$$

We can now define the ``good'' and ``bad'' derivatives. Define
$$\nabb\in\{\frac{1}{r}\partial_{\theta},\frac{1}{r}\partial_\phi\},$$
$$\overline{D}\in\{L,\frac{1}{r}\partial_{\theta},\frac{1}{r}\partial_\phi\},$$
$${D}\in\{\frac{1}{1-\mu}\underline{L}, L,\frac{1}{r}\partial_{\theta},\frac{1}{r}\partial_\phi\}.$$
Notice that $D$ spans the whole tangent space and that we always have $[D,\partial_{t^*}]=0$.

We now define the null condition. On Minkowski spacetime, the classical null condition can be defined geometrically by requiring the nonlinearity to have the form
$$A^{\mu\nu}\partial_{\mu}\Phi\partial_{\nu}\Phi,$$
where $A$ satisfies $A^{\mu\nu}\xi_\mu\xi_\nu=0$ whenever $\xi$ is null. On Kerr spacetime, we would like to define a notion of the null condition that includes this geometric notion. This is also because many physically relevant semilinear equations satisfy this condition. On the other hand, in order to prove the global existence result, we need to use the vector fields that capture the good derivative. We would therefore like to define the null condition using the vector fields defined in \cite{DRL}, \cite{LKerr}, i.e., using $D$ and $\overline{D}$. In particular, we want the nonlinearity to have at least one good, i.e., $\overline{D}$, derivative. This on its own is however inconsistent with the geometric null condition. We therefore allow a term in the quadratic nonlinearity that does not have a good derivative but decays in $r$.

\begin{definition}
Consider the nonlinearity $F(\Phi,D\Phi,t^*,r,\theta,\phi^*).$ We say that $F$ satisfies the null condition if
$$F=\Lambda_0(\Phi,t^*,r,\theta,\phi^*)D\Phi\overline{D}\Phi+\Lambda_1(\Phi,t^*,r,\theta,\phi^*)D\Phi{D}\Phi+\mathcal C(\Phi,D\Phi,t^*,r,\theta,\phi^*),$$
where $$ |D_\Phi^{i_1}\partial_{t^*}^{i_2}\partial_r^{i_3}\partial_\theta^{i_4}\partial_{\phi^*}^{i_5}\Lambda_j|\leq C(t^*)^{-i_2}r^{-i_3} \quad\mbox{for } i_1+i_2+i_3+i_4+i_5\leq 16, j=0,1$$
and
$$|D_\Phi^{i_1}\partial_{t^*}^{i_2}\partial_r^{i_3}\partial_\theta^{i_4}\partial_{\phi^*}^{i_5}\Lambda_1|\leq C(t^*)^{-i_2}r^{-1-i_3} \quad\mbox{for } i_1+i_2+i_3+i_4+i_5\leq 16\mbox{ and }r\geq\frac{9t^*}{10}$$
 and $\mathcal C$ denotes a polynomial that is at least cubic in $D\Phi$ (with coefficients in $\Phi,t^*,r,\theta,\phi^*$) satisfying
$$|D_\Phi^{i_1}\partial_{t^*}^{i_2}\partial_r^{i_3}\partial_\theta^{i_4}\partial_{\phi^*}^{i_5}\mathcal C|\leq C(t^*)^{-i_2}r^{-i_3}\sum_{s= 3}^S|D\Phi|^s \quad\mbox{for } i_1+i_2+i_3+i_4+i_5\leq 16$$
\end{definition}

\begin{remark}
The null condition is a special structure for the quadratic nonlinearity. We note that in our case, the restriction is necessary only for $r\geq \frac{9t^*}{10}$. Moreover, higher order terms should give better estimates and do not need any special structure.
\end{remark}
Under this definition of the null condition, global existence holds for small data. Moreover, the solution $\Phi$ satisfies pointwise decay estimates. In order to appropriately describe smallness, we introduce the language of compatible currents. Define the energy-momentum tensor
$$T_{\mu\nu}=\partial_\mu\Phi\partial_\nu\Phi-\frac{1}{2}g_{\mu\nu}\partial^\alpha\Phi\partial_\alpha\Phi.$$
By virtue of the wave equation, $T$ is divergence-free,
$$\nabla^\mu T_{\mu\nu}=0.$$
For a vector field $V$, define the compatible currents
$$J^V_{\mu}\left(\Phi\right)=V^{\nu}T_{\mu\nu}\left(\Phi\right),$$
$$K^V\left(\Phi\right)=\pi^V_{\mu\nu}T^{\mu\nu}\left(\Phi\right),$$
where $\pi^V_{\mu\nu}$ is the deformation tensor defined by
$$\pi^V_{\mu\nu}=\frac{1}{2}\left(\nabla_{\mu}V_{\nu}+\nabla_{\nu}V_{\mu}\right).$$
In particular, $K^V\left(\Phi\right)=\pi^V_{\mu\nu}=0$ if $V$ is Killing.
Since the energy-momentum tensor is divergence-free,
$$\nabla^{\mu}J^{V}_{\mu}\left(\Phi\right)=K^V\left(\Phi\right).$$
We also define the modified currents $$J^{V,w}_{\mu}\left(\Phi\right)=J^{V}_{\mu}\left(\Phi\right)+\frac{1}{8}\left(w\partial_{\mu}\Phi^2-\partial_{\mu}w\Phi^2\right),$$
$$K^{V,w}\left(\Phi\right)=K^V\left(\Phi\right)+\frac{1}{4}w\partial^{\nu}\Phi\partial_{\nu}\Phi-\frac{1}{8}\Box_g w\Phi^2.$$
Then $$\nabla^{\mu}J^{V,w}_{\mu}\left(\Phi\right)=K^{V,w}\left(\Phi\right).$$
In \cite{LKerr}, we have used the currents corresponding to $N$ and $(Z,w^Z)$ defined by
$$N=\partial_{t^*}+e\left(y_1\left(r\right)\hat{Y}+y_2\left(r\right)\hat{V}\right)$$
$$Z=u^2\underline{L}+v^2L,$$
$$w^Z=\frac{8tr^*_S\left(1-\frac{2M}{r_S}\right)}{r},$$
where
$$y_1\left(r\right)=1+\frac{1}{(\log(r-r_+))^3},$$
$$y_2\left(r\right)=\frac{1}{(\log(r-r_+))^3},$$
$$\Delta=r^2-2Mr+a^2,$$
and $\hat{Y}$ and $\hat{V}$ are compactly supported vector fields in a neighborhood of $\{r_+\leq r\leq r^-_Y\}$ and are null in $\{r_+\leq r\leq r^-_Y\}$ and $e$ is an appropriately small constant depending only on $a$ (See \cite{LKerr}). 
Since $N$ is future-directed, we have the pointwise inequality $$J^N_\mu(\Phi) n^\mu_{\Sigma_{t^*}}\geq 0.$$ In \cite{LKerr} we have shown that there exists a constant $C$ such that 
$$\int_{\Sigma_{t^*}}J^{Z,w^Z}_\mu(\Phi) n^\mu_{\Sigma_{t^*}}+C(t^*)^2\int_{\Sigma_{t^*}\cap\{r\leq r^-_Y\}}J^{N}_\mu(\Phi) n^\mu_{\Sigma_{t^*}}\geq 0.$$
These energy quantities will be used for $\Phi$ as well as derivatives of $\Phi$. We now define the commutators that we will used. $\partial_{t^*}$ is a Killing vector field that is defined as the coordinate vector field with respect to the $(t^*,r,\theta,\phi^*)$ coordinate system. Near the event horizon, we use the commutator $\hat{Y}$ which is  compactly supported in $\{ r\leq r^+_Y\}$ (where $r^+_Y>r^-_Y$ is an explicit constant in \cite{LKerr}), null in $\{r_+\leq r\leq r^-_Y\}$ and is transverse to the event horizon (See \cite{LKerr}). $\hat{Y}$ has good positivity property that reflects the celebrated red-shift effect. In the region of large $r$, we use the commutators $\tilde{\Omega}$. Let $\Omega_i$ be a basis of vector fields of rotations in Schwarzschild spacetimes. An explicit realization can be $\Omega=\partial_\phi, \sin\phi\partial_\theta\pm \frac{\cos\phi\cos\theta}{\sin\theta}\partial_\phi$. Define $\tilde{\Omega}_i=\chi(r)\Omega_i$ to be cutoff so that it is supported in $\{r>R_\Omega\}$ and equals $\Omega_i$ for $r>R_\Omega+1$ for some large $R$. We also use the commutator $S$ that would provide an improved decay rate of the solution. It is defined as
$$S=t^*\partial_{t^*}+h(r_S)\partial_{r},$$ where $h(r_S)= \left\{\begin{array}{clcr}r_S-2M &r_S\sim 2M\\r_S^*(1-\mu )&r\ge R\end{array}\right.$, for some large $R$ and is interpolated so that it is smooth and non-negative. For the commutators, we also use the notation that 
$$\Gamma\in\{\partial_{t^*},\tilde{\Omega}\}.$$
We are now in a position to state our Main Theorem precisely.

\begin{theorem}
Consider the equation
\begin{equation}\label{equation}
 \Box_g\Phi=F(\Phi,D\Phi,t^*,r,\theta,\phi^*),
\end{equation} 
where $F$ satisfies the null condition.
There exists an $\epsilon$ such that if the initial data of $\Phi$ satisfies 
\begin{equation*}
\begin{split}
 \sum_{i+j+k=16}\int_{\Sigma_{\tau_0}} J^{Z+CN,w^Z}_\mu\left(\hat{Y}^k\partial_{t^*}^i\tilde{\Omega}^j\Phi\right) n^{\mu}_{\Sigma_{\tau_0}} +\sum_{i+j+k=16}\int_{\Sigma_{\tau_0}} J^{Z+CN,w^Z}_\mu\left(\hat{Y}^k S\partial_{t^*}^i\tilde{\Omega}^j\Phi\right) n^{\mu}_{\Sigma_{\tau_0}}\leq \epsilon
\end{split}
\end{equation*}
and 
\begin{equation*}
\begin{split}
\sum_{\ell=0}^{13} r|D^\ell\Phi(\tau_0)|+ r|D^\ell S\Phi(\tau_0)|\leq \epsilon.
\end{split}
\end{equation*}
Then $\Phi$ exists globally in time.
Moreover, for all $\eta>0$, we can take $a$ sufficiently small such that the solution $\Phi$ obeys the decay estimate
$$|\Phi|\leq C\epsilon r^{-1}u^{-\frac{1}{2}}(t^*)^\eta, |D\Phi|\leq C\epsilon r^{-1}u^{-1}(t^*)^{\eta}, |\overline{D}\Phi|\leq C\epsilon r^{-1}(t^*)^{-1+\eta}\mbox{ for }r\geq R,\mbox{ and}$$ 
$$|\Phi|\leq C_\delta\epsilon (t^*)^{-\frac{3}{2}+\eta}r^{\delta}, |D\Phi|\leq C_\delta\epsilon (t^*)^{-\frac{3}{2}+\eta}r^{-\frac{1}{2}+\delta}\mbox{ for }r\leq \frac{t^*}{4}.$$
\end{theorem}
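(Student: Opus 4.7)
The overall strategy is a continuity/bootstrap argument. Let $\tau_*$ be the supremum of $\tau\geq\tau_0$ such that the solution exists on $[\tau_0,\tau]$ and the pointwise decay bounds of the theorem hold with the constant $C$ replaced by $2C$, together with corresponding bootstrap bounds on the higher-order energies $\int_{\Sigma_{t^*}}J^{Z+CN,w^Z}_\mu(\hat{Y}^k\partial_{t^*}^i\tilde{\Omega}^j\Phi)n^\mu_{\Sigma_{t^*}}$ for $i+j+k\leq 16$, and likewise with one factor of $S$ inserted. Local well-posedness gives $\tau_*>\tau_0$; recovering the decay bounds with the original constants on $[\tau_0,\tau_*)$ will force $\tau_*=\infty$ and give both global existence and the stated decay.

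First, I would commute equation \eqref{equation} with strings of vector fields from $\{\partial_{t^*},\tilde{\Omega},\hat{Y}\}$ of length up to $16$, and separately with one factor of $S$ composed with a string of length up to $15$. Since $\partial_{t^*}$ is Killing, it commutes trivially with $\Box_{g_K}$; the commutators of $\tilde{\Omega}$, $\hat{Y}$ and $S$ with $\Box_{g_K}$ are absorbed using the error-absorption structure already developed in \cite{LKerr}, since $\tilde{\Omega}$ is active only in the asymptotic region, $\hat{Y}$ only near the horizon where the redshift dominates, and $[S,\Box_{g_K}]$ is lower-order modulo a harmless scaling. Applying the $(N,Z,w^Z)$-energy inequality from \cite{LKerr} to each commuted equation produces an energy identity whose left-hand side controls the bootstrap energies plus an integrated local energy decay term, and whose right-hand side is the initial data plus spacetime nonlinear error integrals.

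Second, the energy bounds are converted into pointwise bounds. In $r\geq R$, Klainerman--Sobolev-type embeddings adapted to the $(t^*,r,\theta,\phi^*)$ foliation yield the claimed rates on $|\Phi|$, $|D\Phi|$ and $|\overline{D}\Phi|$, with the faster rate on $\overline{D}\Phi$ coming from the good-derivative terms of $J^{Z,w^Z}$. In $r\leq t^*/4$ one invokes the improved rate $(t^*)^{-3/2+\eta}$ obtained by commutation with $S$ via the mechanism developed in \cite{LKerr}; this improvement is crucial because $u\sim t^*$ there, and the exterior estimate alone would give only $(t^*)^{-1+\eta}$, which fails to close the nonlinear argument in the interior.

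The main obstacle is controlling the spacetime nonlinear error terms generated by applying the energy identity to each commuted equation. Write $F=\Lambda_0 D\Phi\,\overline{D}\Phi+\Lambda_1 D\Phi\,D\Phi+\mathcal{C}$. The null structure enters here: in the $\Lambda_0$-term one factor is always a good derivative obeying the stronger pointwise bound $|\overline{D}\Phi|\leq C\epsilon r^{-1}(t^*)^{-1+\eta}$, and in the $\Lambda_1$-term the coefficient carries an extra $r^{-1}$ weight for $r\geq\tfrac{9 t^*}{10}$, which in the wave zone is equivalent to a good-derivative factor. Placing one factor in $L^\infty$ via the bootstrap and one in $L^2$ via the energy, and distinguishing carefully among the wave zone $r\sim t^*$, the interior $r\leq t^*/4$, and the intermediate region, gives a spacetime integrand whose $t^*$-integral is finite provided $\eta$ and $a$ are chosen sufficiently small; the cubic remainder $\mathcal{C}$ is lower order. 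The delicate point, and the hardest step, is arranging the balance of constants so that the recovered decay is strictly sharper than the doubled bootstrap rather than merely matching it, thereby closing the continuity argument.
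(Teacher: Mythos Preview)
Your overall architecture matches the paper's: a bootstrap in which one commutes with $\partial_{t^*},\tilde\Omega,\hat Y$ and once with $S$, applies the $N$- and $(Z,w^Z)$-energy identities and the integrated local energy decay of \cite{LKerr}, converts energies to pointwise bounds via Sobolev embeddings adapted to the commutators, and closes using the null structure in the wave zone and the $S$-based improved decay in $r\le t^*/4$.

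What you suppress, however, is the central structural difficulty, and ``arranging the balance of constants so that the recovered decay is strictly sharper than the doubled bootstrap'' does not capture it. Because of trapping near $r=3M$, the top-order (16-commutator) non-degenerate energy \emph{cannot} be shown bounded; the paper only proves it grows like $\tau^{\eta_{16}}$. The 15-commutator conformal energy then grows like $\tau^{1+\eta_{15}}$, and only at $\le 14$ commutators does one recover $\tau^{\eta_{14}}$ for the conformal energy and genuine boundedness for the non-degenerate energy; an analogous three-tier descent holds for the $S$-commuted quantities. The bootstrap is therefore not a single set of doubled constants but a large hierarchy of assumptions---fixed-time energies, spacetime $K^{X_0}$ and $K^{X_1}$ fluxes, and pointwise bounds---each carrying its own growth exponent from a chain $\eta_{16}\ll\eta_{15}\ll\eta_{14}\ll\eta_{S,13}\ll\eta_{S,12}\ll\eta_{S,11}$ and its own constant from a chain $B\ll A_0\ll A_{X,0}\ll A_1\ll\cdots\ll A_{S,Y}$ obeying quantitative relations such as $A_{X,j-1}/A_j\ll\delta'$ and $A_j/A_{X,j}\ll\eta_{14}$. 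The $\tilde\Omega$- and $S$-commutator errors are not simply ``absorbed by the structure of \cite{LKerr}''; they are closed by an induction on the number of $\tilde\Omega$'s that exploits precisely these gaps between successive constants, coupled with elliptic estimates that recover all derivatives from control of $\partial_{t^*}$- and $\hat Y$-commuted quantities. If one attempts your scheme as written---bounded energies at every level with a uniform doubled constant---the borderline $(t^*)^{-1}$ integrand in the conformal-energy error diverges logarithmically and the bootstrap does not close.
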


We specialize to a particular case which resembles better the classical null condition \cite{Knull}.
\begin{theorem}
Consider the equation
\begin{equation}\label{geometric null}
 \Box_g\Phi=\Gamma(\Phi)A^{\mu\nu}\partial_{\mu}\Phi\partial_{\nu}\Phi,
\end{equation} 
where $A$ satisfies $A^{\mu\nu}\xi_\mu\xi_\nu=0$ whenever $\xi\in TK$ is null. If the initial data of $\Phi$ satisfies \begin{equation*}
\begin{split}
 \sum_{i+j+k=16}\int_{\Sigma_{\tau_0}} J^{Z+CN,w^Z}_\mu\left(\hat{Y}^k\partial_{t^*}^i\tilde{\Omega}^j\Phi\right) n^{\mu}_{\Sigma_{\tau_0}} +\sum_{i+j+k=16}\int_{\Sigma_{\tau_0}} J^{Z+CN,w^Z}_\mu\left(\hat{Y}^k S\partial_{t^*}^i\tilde{\Omega}^j\Phi\right) n^{\mu}_{\Sigma_{\tau_0}}\leq \epsilon
\end{split}
\end{equation*}
and 
\begin{equation*}
\begin{split}
\sum_{\ell=0}^{13} r|D^\ell\Phi(\tau_0)|+ r|D^\ell S\Phi(\tau_0)|\leq \epsilon.
\end{split}
\end{equation*}
Then $\Phi$ exists globally in time.
Moreover, for all $\eta>0$, we can take $a$ sufficiently small such that the solution $\Phi$ obeys the decay estimate
$$|\Phi|\leq C\epsilon r^{-1}u^{-\frac{1}{2}}(t^*)^\eta, |D\Phi|\leq C\epsilon r^{-1}u^{-1}(t^*)^{\eta}, |\overline{D}\Phi|\leq C\epsilon r^{-1}(t^*)^{-1+\eta}\mbox{ for }r\geq R,\mbox{ and}$$ 
$$|\Phi|\leq C_\delta\epsilon (t^*)^{-\frac{3}{2}+\eta}r^{\delta}, |D\Phi|\leq C_\delta\epsilon (t^*)^{-\frac{3}{2}+\eta}r^{-\frac{1}{2}+\delta}\mbox{ for }r\leq \frac{t^*}{4}.$$
\end{theorem}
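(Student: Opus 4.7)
The plan is to reduce Theorem~2 to the Main Theorem (Theorem~1) by showing that every nonlinearity satisfying the geometric null condition automatically fits into Definition~1. Since the initial-data hypotheses and conclusions of the two theorems coincide, this reduction suffices.

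Only the symmetric part of $A$ contributes to $A^{\mu\nu}\partial_\mu\Phi\partial_\nu\Phi$. After symmetrization, the geometric null condition combined with the standard fact that on a four-dimensional Lorentzian manifold any symmetric two-tensor vanishing on the null cone must be a scalar multiple of the inverse metric forces $A^{\mu\nu}=c(t^*,r,\theta,\phi^*)\,g_K^{\mu\nu}$ for some smooth function $c$. The nonlinearity therefore takes the form $\Gamma(\Phi)\,c\,g_K^{\mu\nu}\partial_\mu\Phi\partial_\nu\Phi$, and the task becomes an explicit decomposition of $g_K^{\mu\nu}\partial_\mu\Phi\partial_\nu\Phi$ in the frame $\{\underline{L},L,\nabb\}$ underlying the definitions of $D$ and $\overline{D}$.

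Writing $g_K^{\mu\nu}\partial_\mu\Phi\partial_\nu\Phi = (G_K^{-1})^{ab}(e_a\Phi)(e_b\Phi)$ with $G_K$ the Gram matrix of $g_K$ in this frame, every entry other than the $\underline{L}\underline{L}$ one pairs $\underline{L}\Phi$ with a good derivative or pairs two good derivatives, and so fits into the $\Lambda_0 D\Phi\overline{D}\Phi$ slot of Definition~1. The only potentially problematic contribution is $(G_K^{-1})^{\underline{L}\underline{L}}(\underline{L}\Phi)^2$. In the wave zone $r\geq\tfrac{9t^*}{10}$ one has $\chi(r)\equiv 0$, so that $L=2\partial_{t^*}-\underline{L}$, and a direct computation using Kerr in Boyer--Lindquist--type coordinates gives $g_K(L,L),\,g_K(\underline{L},\underline{L}),\,g_K(L,\nabb)=O(r^{-1})$ by asymptotic flatness. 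Because on Schwarzschild $(L,\underline{L},\nabb)$ is an exact null frame, inverting $G_K$ perturbatively yields $(G_K^{-1})^{\underline{L}\underline{L}}=O(r^{-1})$, and smoothness of $g_K$ propagates this bound through up to sixteen coordinate derivatives, matching the decay required of $\Lambda_1$ in Definition~1.

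The factors $\Gamma(\Phi)$ and $c$ are smooth in $\Phi$ and have bounded coordinate derivatives, so absorbing them into $\Lambda_0$ and $\Lambda_1$ does not affect the structural bounds; the cubic remainder $\mathcal{C}$ is absent from (\ref{geometric null}), so its hypothesis is vacuous. With the nonlinearity now identified as an instance of Definition~1, Theorem~1 applies verbatim and gives the stated global existence and decay. The most delicate ingredient is the uniform control of sixteen coordinate derivatives of $(G_K^{-1})^{\underline{L}\underline{L}}$; however, because this estimate is needed only in the wave zone where $\chi\equiv 0$ and $g_K$ is manifestly a small asymptotically flat perturbation of Schwarzschild, the bookkeeping is routine and constitutes no substantive obstacle.
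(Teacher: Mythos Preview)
Your reduction is correct and is precisely the route the paper intends: Theorem~2 is introduced with ``We specialize to a particular case,'' and no separate proof is given beyond the implicit claim that the geometric null condition is an instance of Definition~1. Your verification---symmetrizing $A$, using that a symmetric $(2,0)$-tensor vanishing on the null cone of a Lorentzian $4$-manifold is a scalar multiple of $g_K^{-1}$, and then checking that in the frame $\{L,\underline{L},\nabb\}$ the coefficient of $(\underline{L}\Phi)^2$ is $O(r^{-1})$ in the wave zone---is the natural and correct way to fill this gap.

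One small remark: when you say ``smoothness of $g_K$ propagates this bound through up to sixteen coordinate derivatives,'' you need slightly more than smoothness---you need symbol-type behavior (each $\partial_r$ gains a factor $r^{-1}$, each $\partial_{t^*}$ gains $(t^*)^{-1}$, angular derivatives are harmless), since Definition~1 demands $|\partial_{t^*}^{i_2}\partial_r^{i_3}\partial_\theta^{i_4}\partial_{\phi^*}^{i_5}\Lambda_1|\leq C(t^*)^{-i_2}r^{-1-i_3}$. This is indeed true for the Kerr metric components and their inverse in the region $r\geq\tfrac{9t^*}{10}$ (where the metric is stationary, so $\partial_{t^*}$ derivatives vanish, and the $r$-dependence is rational with the correct asymptotics), but it is worth stating explicitly rather than subsuming under ``smoothness.'' Likewise, if $A$ is allowed to depend on the spacetime point, you are implicitly assuming it inherits these symbol bounds; the paper is silent on this, so your reading of $A$ as constant (or at least symbol-class) is the reasonable one.
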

The above formulation is geometric and independent of the choice of coordinates. We note that this condition is obviously satisfied by the wave map equation in the intrinsic formulation.

\subsection{The Case of Minkowski Spacetime}

We now turn to the outline of the proof of the main theorem. In the original proof in \cite{Knull}, many symmetries of Minkowski spacetime are captured and exploited using the vector field method. Kerr spacetime, on the other hand, lacks symmetries and this limits the set of vector fields that is at our disposal. In view of this, we would like to re-examine the proof of the small data global existence result for the nonlinear wave equation with a null condition on Minkowski spacetime, using only the vector fields whose analogues in Kerr spacetimes have been established in previous works. In particular, we would have to avoid using the Lorentz boost. 

We first study the decay properties of the solutions to the linear wave equation on Minkowski spacetime. Since the vector field $T=\partial_t$ is Killing and $Z=u^2\partial_u+v^2\partial_v$ is conformally Killing, we have for $w=8t$ that
$$\int_{\Sigma_t}J^T_\mu(\Phi)n^\mu_{\Sigma_t}, \int_{\Sigma_t}J^{Z,w^Z}_\mu(\Phi)n^\mu_{\Sigma_t}$$
are conserved in time.

Decay can be proved using the above conserved quantities for $V\Phi$ for appropriate vector fields $V$. It is proved separately for $r\geq\frac{t}{2}$ and $r\leq\frac{t}{2}$. In the former case, we use the fact that $\Omega_{ij}=x_i\partial_{x_j}+x_j\partial_{x_i}$ is Killing on Minkowski spacetime and hence $\Box_m(\Omega^k\Phi)=0$. Since $\Omega$ has a weight in $r$, it can be proved that
$$|D\Phi|^2\leq Cr^{-2}\sum_{k=0}^2\int_{\Sigma_t} J^T_\mu(\Omega^k\Phi)n^\mu_{\Sigma_t}.$$
Notice that in this region $r^{-2}\leq Ct^{-2}$. It is known, for example by the representation formula, that this decay rate cannot be improved. In the region $r\leq\frac{t}{2}$, however, the decay rate is better. One can consider the conformal energy
$$\int J^{Z,w^Z}_\mu(\Phi)n^\mu_{\Sigma_t}\geq \int_{\Sigma_t} u^2\left(\underline{L}\Phi\right)^2+v^2\left(L\Phi\right)^2+\left(u^2+v^2\right)|\nabb\Phi|^2+\left(\frac{u^2+v^2}{r^2}\right)\Phi^2,$$
where $u=\frac{1}{2}(t-r)$, $v=\frac{1}{2}(t+r)$. 
In particular, we have $$|D\Phi|^2\leq t^{-2}\int_{\Sigma_t\cap\{r\leq\frac{t}{2}\}} \tau^2(D\Phi)^2\leq t^{-2}\int J^{Z,w^Z}_\mu(\Phi)n^\mu_{\Sigma_t}.$$
To improve the decay rate in this region, we can consider the equation for $S\Phi=(t\partial_t+r\partial_r)\Phi$ and use the integrated decay estimates as in \cite{LS}, \cite{LKerr}. This approach allows us to avoid the use of Lorentz boost in \cite{Knull} and the global elliptic estimates in \cite{KS}, both of which have no clear analogue in Kerr spacetimes. On Minkowski spacetime, a local energy decay estimate can be proved using the vector field $\left(1-\frac{1}{(1+r^2)^{\frac{1+\delta}{2}}}\right)\partial_r$ for the linear wave equation \cite{StR} which together with the conformal energy yields:
$$\int_{t}^{(1.1)t}\int_{\Sigma_{t'}\cap\{r\leq\frac{t'}{2}\}} r^{-1-\delta} J^T_\mu(\Phi)n^\mu_{\Sigma_{t'}} dt'\leq C\int_{\Sigma_\tau\cap\{r\leq\frac{t}{2}\}} (D\Phi)^2\leq Ct^{-2}\int J^{Z,w^Z}_\mu(\Phi)n^\mu_{\Sigma_t}.$$
This would imply that there exists a "dyadic" sequence $t_i\sim (1.1)^i t_0$ on which there is better decay
$$\int_{\Sigma_{t_i}\cap\{r\leq\frac{t}{2}\}} r^{-1-\delta} J^T_\mu(\Phi)n^\mu_{\Sigma_{t_i}}\leq Ct_i^{-3}\int J^{Z,w^Z}_\mu(\Phi)n^\mu_{\Sigma_t}.$$
Since $S$ is Killing on Minkowski spacetime, $\Box_m\Phi=0$ implies $\Box_m(S\Phi)=0$. Then the above argument would give
$$\int_{t}^{(1.1)t}\int_{\Sigma_{t}\cap\{r\leq\frac{t}{2}\}} r^{-1-\delta} J^T_\mu(S\Phi)n^\mu_{\Sigma_t} dt\leq t^{-2}\int J^{Z,w^Z}_\mu(S\Phi)n^\mu_{\Sigma_t}.$$
Since $S=t\partial_t+r\partial_r$ has a weight in $t$, we can integrate along the integral curves of $S$ from the "good" $t_i$ slice and get
$$\int_{\Sigma_{t}\cap\{r\leq\frac{t}{2}\}} r^{-1-\delta} J^T_\mu(\Phi)n^\mu_{\Sigma_{t}}\leq Ct^{-3}\int J^{Z,w^Z}_\mu(\Phi)n^\mu_{\Sigma_{t_0}}.$$
Together with the use of $\Omega$, we have the pointwise estimate
$$|D\Phi|^2\leq Cr^{-1+\delta}\sum_{k=0}^2\int_{\Sigma_{t}\cap\{r\leq\frac{t}{2}\}} r^{-1-\delta} J^T_\mu(\Omega^k\Phi)n^\mu_{\Sigma_{t}}\leq Cr^{-1+\delta}t^{-3}\sum_{k=0}^2\int J^{Z,w^Z}_\mu(\Omega^k\Phi)n^\mu_{\Sigma_{t_0}}.$$
We now study how this decay rate can be used for the nonlinear problem. The main idea is to prove the above conservation and decay estimates in a bootstrap setting, showing that the decay to the linear wave equation is sufficiently strong that the nonlinear terms can be treated as error. In this framework, the decay of $t^{-1}$ is borderline and since the decay rate is better when $r\leq\frac{t}{2}$, the difficulty arises when dealing with terms in the region $r\geq\frac{t}{2}$. Furthermore, in order to achieve this decay of $t^{-1}$ it is imperative to show that $\int_{\Sigma_t} J^T_\mu(\Phi)n^\mu_{\Sigma_t}$ is uniformly bounded in time. We now show a heuristic argument. With the inhomogeneous term, the conservation law for the energy now has the error term:
$$\int_{\Sigma_t} J^T_\mu(\Phi)n^\mu_{\Sigma_t}\leq \int_{\Sigma_{t_0}} J^T_\mu(\Phi)n^\mu_{\Sigma_{t_0}}+\left(\int_{t_0}^t\left(\int_{\Sigma_{t}} \left(\Box_m\Phi\right)^2\right)^{\frac{1}{2}}dt\right)^2,$$
and that for the conformal energy has the error term:
$$\int_{\Sigma_t} J^{Z,w^Z}_\mu(\Phi)n^\mu_{\Sigma_t}\leq \int_{\Sigma_{t_0}} J^{Z,w^Z}_\mu(\Phi)n^\mu_{\Sigma_{t_0}}+\left(\int_{t_0}^t\left(\int_{\Sigma_{t}} (t^2+r^2)\left(\Box_m\Phi\right)^2\right)^{\frac{1}{2}}dt\right)^2$$
Since $\Box_m\Phi$ is quadratic in $D\Phi$, we can use Holder's inequality on the inside integral to control one term in $L^2$ and one in $L^\infty$. However, since on the linear level $D\Phi$ is bounded in $L^2$ and decays as $t^{-1}$ in $L^\infty$, the inhomogeneous term for the estimate for the energy is controlled by
$$\left(\int_{t_0}^t (t)^{-1}(\int_{\Sigma_{t}} J^{T}_\mu(\Phi)n^\mu_{\Sigma_{t}})^{\frac{1}{2}} dt\right)^2.$$ 
This is barely insufficient to show that the energy is bounded. We therefore need to make use of the null condition. The null condition would allow one to prove
\begin{equation}\label{null}
\int (D\Phi\overline{D}\Phi)^2\leq C t^{-2}\int_{\Sigma_{t}} J^{Z,w^Z}_\mu(\partial^k\Phi)n^\mu_{\Sigma_{t}}
\end{equation}
In order to prove this estimate, we observe that in the conformal energy, the good derivatives ($\partial_v$, $\nabb$) has better decay rates. In order to use this, we then need to control the conformal energy. Using again the null condition, the inhomogeneous term in the conservation law for the conformal energy can be bounded by
$$\left(\int_{t_0}^t t^{-1}(\int_{\Sigma_{t}} J^{Z,w^Z}_\mu(\Phi)n^\mu_{\Sigma_{t}})^{\frac{1}{2}} dt\right)^2.$$
This would not be sufficient to prove that the conformal energy is bounded, but is sufficient to prove that it grows no faster than $t^{\eta}$ for sufficiently small data. This in turn would be sufficient to prove the boundedness of the energy and obtain all the necessary decay rates. In practice, the argument is more complicated as we need to control the higher order energy and conformal energy in order to obtain the decay rates.
\subsection{The Case of Kerr Spacetime}

In \cite{DRL} and \cite{LKerr}, all the analogues of the above estimates have been proved in the linear setting in Kerr spacetimes. However, it is apparent from the linear case that several issues arise as we apply a similar strategy to the nonlinear problem on Kerr spacetime. 

Among other issues, two difficulties loom large. The first of these is the lack of symmetries in Kerr spacetimes. While Kerr spacetimes possess the Killing vector field $\partial_{t^*}$, it is spacelike in a neighborhood of the event horizon and thus does not give a non-negative conserved quantities. The works \cite{DRK}, \cite{DRL} suggest that we can instead use the vector fields $N$ and $Z$ on Kerr spacetime as substitutes for $T$ and $Z$ on Minkowski spacetime. $N$ in constructed as the Killing vector field $\partial_{t^*}$ added to a small amount of the red-shift vector field near the event horizon. The red-shift vector field, first introduced in \cite{DRS}, takes advantage of the geometry of the event horizon and has been used crucially to obtain decay rates in \cite{DRS}, \cite{DRK}, \cite{DRL}, \cite{LS}, and \cite{LKerr}. It is one of the few stable features of the Schwarzschild spacetime. The vector field $Z$ approaches the corresponding $Z$ on Minkowski spacetime at the asymptotically flat end and has the weights in $r$ and $t^*$ from which we can prove decay. These vector fields, however, do not correspond to any symmetries of Kerr spacetimes, and therefore, as already is apparent in the linear scenario, the energy estimates would contain error terms that need to be controlled. One consequence is that even in the linear setting, the conformal energy is not bounded. Similar issues arise for the vector field commutators $\Omega$ and $S$, which are crucial in obtaining pointwise decay estimates, whose corresponding error terms at the linear level have been studied in \cite{DRL}, \cite{LS}, \cite{LKerr}. A further issue that arises in the case of the Kerr spacetime is the lack of good vector field commutator that are useful to obtain control of higher order derivatives. This has been treated in the linear setting in \cite{DRK} and \cite{DRL} using $\partial_{t^*}$ and the red-shift vector field as commutators and retrieving all other derivatives via elliptic estimates. In the nonlinear setting, we again use elliptic estimates, noting however that the proof of the elliptic estimates now couples with that of the energy estimates in a bootstrap argument.

Secondly, Kerr spacetimes contain trapped null geodesics. As a consequence, any decay results at the linear level must involve a loss of derivatives. This is manifested in the degeneracy of the integrated decay estimate near $r=3M$. We note, however, that on the linear level the non-degenerate energy can be proved to be bounded without any loss of derivatives. We therefore prove energy bounds that is consistent with the linear scenario. We would try to prove on the highest level of derivatives only a boundedness result and begin to prove decay results on the level of fewer derivatives. However, as we will see, the nonlinear effect comes into play and it is not possible to prove even the boundedness of the non-degenerate energy at the highest level of derivatives. We can nevertheless show that it is bounded by $(t^*)^\eta$. On the level of one less derivative, we can prove that the conformal energy grows no faster than $\tau^{1+\eta}$. Using this fact as we prove the estimates for the non-degenerate energy, we can show that at this level of derivatives, the non-degenerate energy is bounded. This is crucial for obtaining the necessary borderline decay of $(t^*)^{-1}$ in $r\geq\frac{t^*}{2}$, thus allowing us to close the bootstrap argument. Trapping would also cause a loss in derivatives when controlling the error terms arising from the commutation with $S$. To tackle this problem, we would commute with $S$ only once. With this approach, we would not have an improved decay for $DS\Phi$ in $r\leq\frac{t^*}{2}$. Nevertheless, we can show that the bootstrap can be closed. Here we make use of the fact that as we close the assumptions for $S\Phi$, we are at a level of derivatives of $\Phi$ such that the local energy flux decays.

In the next section, we will introduce the energy quantities on Kerr spacetimes that can be thought of as analogues of the energy, conformal energy and the integrated local energy. In section \ref{estimates}, we will state the energy estimates that they satisfy. In section \ref{sectionelliptic}, we will state the elliptic estimates that will be used. Then in section \ref{sectionpointwise}, we prove the necessary $L^\infty$ estimates. With all this preparation, we then prove all the estimates using a bootstrap argument in section \ref{bootstrap}. This then easily implies the main theorem in \ref{pfmaintheorem}.

\section{The Energy Quantities}\label{estimates}
We use three kinds of energy quantities, following the notation in \cite{LKerr}. The represent the non-degenerate energy, the conformal energy and the energy norm for the integrated decay estimate. The nondegnerate energy controls all derivatives:
\begin{proposition}
\begin{equation*}
\begin{split}
&\int_{\Sigma_\tau} (D\Phi)^2
\leq C\int_{\Sigma_\tau}  J^{N}_\mu\left(\Phi\right)n^\mu_{\Sigma_\tau}.
\end{split}
\end{equation*}
\end{proposition}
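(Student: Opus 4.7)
The plan is to reduce the integrated inequality to a pointwise one of the form
$$J^N_\mu(\Phi)n^\mu_{\Sigma_\tau}\geq c\sum_{V\in D}|V\Phi|^2$$
for a constant $c>0$ uniform on $\Sigma_\tau$. Since $J^N_\mu n^\mu_{\Sigma_\tau}=T(N,n_{\Sigma_\tau})$ and $T_{\mu\nu}$ is the standard scalar-field energy-momentum tensor, the classical algebraic fact that $T(X,Y)$ controls $|d\Phi|^2$ whenever $X$ and $Y$ are both future-directed and uniformly timelike reduces the proposition to showing that this is the case for the pair $N$, $n_{\Sigma_\tau}$ pointwise on $\Sigma_\tau$, with a bound independent of $\tau$. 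Once this is established, the uniform comparability of $|d\Phi|^2$ with $\sum_{V\in D}|V\Phi|^2$ in the frame $D$ (which spans the tangent space uniformly in the $(t^*,r,\theta,\phi^*)$ coordinates, including up to and across $r=r_+$) will yield the proposition after integration over $\Sigma_\tau$.

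To verify the pointwise timelike condition I would split $\Sigma_\tau$ into two regions. On $\{r\geq r^-_Y\}$, the cutoff in the definition of $N$ makes the correction $e(y_1\hat{Y}+y_2\hat{V})$ vanish, so $N=\partial_{t^*}$; since we are in the slowly rotating regime $a\ll M$, the Killing vector $\partial_{t^*}$ is uniformly timelike there and $n_{\Sigma_\tau}$ remains uniformly timelike by the construction of the $t^*$ slicing in \cite{LKerr}, which keeps $\Sigma_\tau$ spacelike across the event horizon. On the compact set $\{r_+\leq r\leq r^-_Y\}$, $\partial_{t^*}$ may degenerate or become spacelike near the horizon, but $\hat{Y}$ and $\hat{V}$ are null vectors transverse to the horizon spanning the radial null 2-plane, so that with $y_1,y_2>0$ the combination $y_1\hat{Y}+y_2\hat{V}$ is strictly timelike; the constant $e>0$ in the definition of $N$ was chosen in \cite{LKerr} precisely small enough for $\partial_{t^*}$ not to spoil this timelike character, and large enough for the horizon contribution to dominate, so $N$ is uniformly timelike on this compact region.

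The main obstacle is not really present in this proposition since the genuinely delicate work, namely constructing $N$ and the slicing $\Sigma_\tau$ so that both are simultaneously uniformly timelike across the horizon in the slowly rotating regime, was carried out in \cite{LKerr}. Granting these two properties, the remainder is a pointwise algebraic application of the dominant energy condition for the scalar field, followed by integration over $\Sigma_\tau$.
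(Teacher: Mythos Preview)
The paper does not give a proof of this proposition; it is stated as a fact imported from \cite{LKerr}, where $N$ and the $t^*$--foliation were constructed precisely so that both $N$ and $n_{\Sigma_\tau}$ are uniformly future-directed timelike. Your approach---reducing to the pointwise algebraic coercivity of $T(N,n_{\Sigma_\tau})$ via the dominant energy condition---is exactly the standard argument and is what underlies the cited result.

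One imprecision worth flagging: your region split at $r=r^-_Y$ is not quite right. The paper says $\hat{Y},\hat{V}$ are compactly supported in a \emph{neighborhood} of $\{r_+\leq r\leq r^-_Y\}$, so the red-shift correction need not vanish exactly on $\{r\geq r^-_Y\}$; moreover, since $r^-_Y$ is chosen very close to $r_+$, it may lie inside the ergoregion, where $\partial_{t^*}$ is spacelike even for small $a$. The correct dichotomy is: outside the (compact) support of the correction one is already outside the ergoregion and $N=\partial_{t^*}$ is uniformly timelike; inside that compact support, the construction of \cite{DRK,LKerr} arranges $e,y_1,y_2$ so that $N$ is uniformly timelike despite $\partial_{t^*}$ degenerating. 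With that adjustment your argument is complete.
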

The conformal energy gives different weights to different derivatives and this will be crucially used to capture the null condition:
\begin{proposition}\label{Zlowerbound}
\begin{equation*}
\begin{split}
&\int_{\Sigma_\tau\cap\{r\geq r^-_Y\}} u^2\left(\underline{L}\Phi\right)^2+v^2\left(L\Phi\right)^2+\left(u^2+v^2\right)|\nabb\Phi|^2+\left(\frac{u^2+v^2}{r^2}\right)\Phi^2\\
\leq &C\int_{\Sigma_\tau} J^{Z+N,w^Z}_\mu\left(\Phi\right)n^\mu_{\Sigma_\tau}+C^2 \tau^2\int_{\Sigma_\tau\cap\{r\leq r^-_Y\}} J^{N}_\mu\left(\Phi\right)n^\mu_{\Sigma_\tau}.
\end{split}
\end{equation*}
\end{proposition}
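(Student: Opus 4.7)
The plan is to treat the exterior region $\{r \geq r^-_Y\}$ and the region near the horizon $\{r \leq r^-_Y\}$ separately. In the former I would carry out a direct pointwise computation of the conformal energy density, following the Minkowski conformal Morawetz calculation, exploiting the fact that in the $(u,v,\theta_S,\phi_S)$ coordinates the Kerr metric is a small perturbation of Schwarzschild. In the latter, where $Z$ may fail to be timelike, I would use the crude fact that the weights on $Z$ are $O(\tau^2)$ and absorb the contribution into an $N$-energy on the horizon region.

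In $\{r \geq r^-_Y\}$, I would first compute $J^Z_\mu(\Phi)n^\mu$ pointwise with $Z = u^2 \underline{L} + v^2 L$. The leading-order contribution is the diagonal quadratic form
$$u^2 (\underline{L}\Phi)^2 + v^2 (L\Phi)^2 + (u^2+v^2)|\nabb\Phi|^2,$$
with Kerr-vs-Schwarzschild deviations smaller by a factor of $a$ and hence dominated by the principal terms. Since $N$ is everywhere future-directed timelike, $J^N_\mu n^\mu \geq 0$, so its addition only strengthens the bound. To recover the zeroth-order contribution $\frac{u^2+v^2}{r^2}\Phi^2$, I would invoke the $w^Z$ modification together with a Hardy-type integration by parts: rewriting the quadratic form in terms of $L(r\Phi) = (Lr)\Phi + rL\Phi$ and the analogous expression for $\underline{L}(r\Phi)$, the cross terms generated are precisely balanced by the modifications $\frac18(w^Z \partial_\mu \Phi^2 - \partial_\mu w^Z\, \Phi^2)$ for the choice $w^Z = \frac{8 t r_S^*(1-2M/r_S)}{r}$, leaving the correctly weighted $\Phi^2/r^2$ term as a positive contribution, modulo boundary terms on $\{r=r^-_Y\}$ absorbable by a trace inequality into the horizon $N$-flux.

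In $\{r \leq r^-_Y\}$, $Z$ is not causal, so $J^{Z,w^Z}_\mu n^\mu$ has no definite sign. However, $r$ is bounded here, hence $r_S^*$ is bounded and $|u|,|v|\leq C\tau$. Combined with the smoothness and boundedness of the coefficients of $\underline{L}$, $L$, and $w^Z$ in the regular $(t^*,r,\theta,\phi^*)$ basis, this yields the pointwise estimate
$$|J^{Z,w^Z}_\mu(\Phi) n^\mu| \leq C\tau^2\, J^N_\mu(\Phi) n^\mu.$$
Writing
$$\int_{\Sigma_\tau \cap \{r\geq r^-_Y\}} J^{Z+N,w^Z}_\mu n^\mu = \int_{\Sigma_\tau} J^{Z+N,w^Z}_\mu n^\mu - \int_{\Sigma_\tau \cap \{r\leq r^-_Y\}} J^{Z+N,w^Z}_\mu n^\mu,$$
and bounding the second term on the right by $C\tau^2 \int_{\Sigma_\tau \cap \{r\leq r^-_Y\}} J^N_\mu n^\mu$, produces precisely the extra $C^2\tau^2$ term appearing in the claimed inequality.

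The principal technical obstacle is the bookkeeping of the Hardy-type integration by parts that extracts the zeroth-order $\Phi^2/r^2$ contribution with the correct weights, and the verification that the Kerr perturbations of Schwarzschild contribute only lower-order errors in the exterior computation. These manipulations are identical in form to those performed in the linear theory in \cite{LKerr}, and I would simply invoke those computations with the current $\Phi$ in place of the linear solution.
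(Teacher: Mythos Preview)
Your proposal is correct and aligns with the paper's treatment: the paper does not give a proof of this proposition at all, but simply states it as a quotation from \cite{LKerr}, and your final paragraph explicitly recognizes this and invokes \cite{LKerr} for the detailed computations. The outline you give (direct pointwise computation of $J^{Z,w^Z}_\mu n^\mu$ in the exterior yielding the weighted derivative and zeroth-order terms via the Hardy-type rewriting, together with the crude $O(\tau^2)$ bound on the $Z$-weights near the horizon absorbed by the $N$-flux) is indeed the standard argument underlying the cited result.
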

We use the following notations even though they do not correspond to any vector fields:
\begin{definition}
$$K^{X_0}\left(\Phi\right)=r^{-1-\delta}\mathbbm{1}_{\{|r- 3M|\geq \frac{M}{8}\}}J^N_\mu\left(\Phi\right)n^\mu_{\Sigma_\tau}+r^{-1-\delta}\left(\partial_r\Phi\right)^2+r^{-3-\delta}\Phi^2,\quad\mbox{and}$$
$$K^{X_1}\left(\Phi\right)=r^{-1-\delta}J^N_\mu\left(\Phi\right)n^\mu_{\Sigma_\tau}+r^{-3-\delta}\Phi^2.$$
\end{definition}

\section{The Energy Estimates}
We have proved in \cite{LKerr} the energy estimates for the energy quantities define in the last section for $\Box_{g_K}\Phi=G$. We have boundedness for the non-degenerate energy:

\begin{proposition}\label{bddcom}
Let $G=G_1+G_2$ be any way to decompose the function $G$. Then
\begin{equation*}
\begin{split}
&\int_{\Sigma_{\tau}} J^{N}_\mu\left(\Phi\right)n^\mu_{\Sigma_{\tau}} +\int_{\mathcal H(\tau',\tau)} J^{N}_\mu\left(\Phi\right)n^\mu_{\mathcal H^+} +\iint_{\mathcal R(\tau',\tau)\cap\{r\leq r^-_Y\}}K^{N}\left(\Phi\right)+\iint_{\mathcal R(\tau',\tau)}K^{X_0}\left(\Phi\right)\\
\leq &C\left(\int_{\Sigma_{\tau'}} J^{N}_\mu\left(\Phi\right)n^\mu_{\Sigma_{\tau'}}+\left(\int_{\tau'-1}^{\tau+1}\left(\int_{\Sigma_{t^*}} G_1^2\right)^{\frac{1}{2}}dt^*\right)^2+\iint_{\mathcal R(\tau'-1,\tau+1)}G_1^2\right.\\
&\left.+\sum_{m=0}^{1}\iint_{\mathcal R(\tau'-1,\tau+1)}r^{1+\delta}\left(\partial_{t^*}^{m}G_2\right)^2+\sup_{t^*\in [\tau'-1,\tau+1]}\int_{\Sigma_{t^*}\cap\{|r-3M|\leq\frac{M}{8}\}} G_2^2\right).
\end{split}
\end{equation*}
\end{proposition}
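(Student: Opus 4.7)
The plan is to run the vector field multiplier argument with the red-shift-modified Killing field $N$, combined with an integrated local energy decay (Morawetz-type) current, following the template worked out at the linear level in \cite{DRK}, \cite{DRL}, \cite{LKerr}. First, apply the divergence theorem to $J^N_\mu(\Phi)$ over the slab $\mathcal R(\tau',\tau)$; this produces a bulk integral of $K^N(\Phi) + N\Phi\cdot G$, a nonnegative flux on $\mathcal H(\tau',\tau)$ (since $N$ is timelike on the horizon), and the difference of the fluxes on $\Sigma_\tau$ and $\Sigma_{\tau'}$. The red-shift construction of $N$ guarantees $K^N\gtrsim J^N_\mu n^\mu$ in $\{r\leq r^-_Y\}$, while outside that region $K^N$ is compactly supported and pointwise dominated by a multiple of $J^N_\mu n^\mu$. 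This bad-sign remainder is absorbed by adding a small multiple of the Morawetz current from \cite{LKerr}, whose divergence is bounded below by $K^{X_0}(\Phi)$; the photon-sphere degeneracy at $r=3M$ is the only remaining weakness of the bulk norm.

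Next, I would bound the inhomogeneity $\iint(N\Phi+X_0\Phi)G$ using the decomposition $G=G_1+G_2$. For the $G_1$ piece, Cauchy--Schwarz in space followed by $L^1_{t^*}$ integration yields a term of the form $\int_{\tau'-1}^{\tau+1}\|N\Phi\|_{L^2}\|G_1\|_{L^2}\,dt^*$; moving $\sup_{t^*}\|N\Phi\|_{L^2}^2$ to the left (after a mean-value selection of a good slice in $[\tau'-1,\tau+1]$) produces the $\left(\int\|G_1\|_{L^2}\right)^2$ contribution. The spacetime-$L^2$ term $\iint G_1^2$ arises from a second use of Cauchy--Schwarz, this time weighted against $K^{X_0}$ so that the bulk Morawetz norm absorbs one factor.

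The $G_2$ piece is designed to accommodate sources whose support intersects the trapped set, for which pairing directly against the degenerate $K^{X_0}$ fails. The idea is to exploit $[\partial_{t^*},\Box_{g_K}]=0$ and integrate by parts in $t^*$: this shifts a derivative from $\Phi$ onto $G_2$ and rewrites $(X_0\Phi)G_2$ as a sum of $(X_0\partial_{t^*}\Phi)G_2+(X_0\Phi)\partial_{t^*}G_2$ plus boundary pieces on $\Sigma_{\tau'},\Sigma_\tau$. The two bulk pieces are then paired, via a weighted Cauchy--Schwarz with weight $r^{1+\delta}$, against the non-degenerate $K^{X_0}(\partial_{t^*}\Phi)$ (for which trapping is invisible, since we have commuted with $\partial_{t^*}$). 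This accounts for both the weighted source norm $\iint r^{1+\delta}(\partial_{t^*}^m G_2)^2$ for $m=0,1$ and, via Cauchy--Schwarz on the boundary terms of the $t^*$ integration by parts, the localized flux $\sup_{t^*}\int_{|r-3M|\leq M/8}G_2^2$.

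The hard part in this program is the construction and positivity analysis of the Morawetz current $X_0$ on slowly rotating Kerr: the lack of spherical symmetry means one can no longer use a purely physical-space multiplier as in Schwarzschild, and the separability of the wave equation (equivalently the hidden Carter tensor) enters essentially. Since this input is already available from \cite{LKerr}, what remains is the combinatorial work of choosing the small coefficient weighting $X_0$ against $N$ so as to absorb all bad-sign cross terms, and tracking which inhomogeneity estimate contributes to which norm on the right-hand side.
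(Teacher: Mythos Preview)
The paper does not prove this proposition; it is imported wholesale from \cite{LKerr} (see the opening line of Section~3). Your outline is consistent with the standard proof there: the $N$-energy identity with red-shift positivity near the horizon, combined with a Morawetz current whose bulk controls $K^{X_0}$, and the $G_1$ contribution handled by Cauchy--Schwarz in space followed by $L^1_{t^*}$ in time. That part is fine.

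Your account of the $G_2$ mechanism, however, is garbled. The identity you write, ``$(X_0\Phi)G_2$ as a sum of $(X_0\partial_{t^*}\Phi)G_2+(X_0\Phi)\partial_{t^*}G_2$ plus boundary pieces,'' is not an integration by parts in $t^*$ (there is no $\partial_{t^*}$ on either factor to move), and the claim that after commuting with $\partial_{t^*}$ ``trapping is invisible'' for $K^{X_0}$ is false: $K^{X_0}(\partial_{t^*}\Phi)$ is just as degenerate at $r=3M$ as $K^{X_0}(\Phi)$. The actual source of the derivative loss in \cite{LKerr} is that the Morawetz multiplier on Kerr is built via frequency decomposition (separation of variables / Carter tensor), and in the trapped frequency regime the multiplier depends on the time-frequency $\omega$. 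When one undoes the Fourier transform in $t^*$, this produces a $\partial_{t^*}$ acting on the inhomogeneity, which is where the $\sum_{m=0}^1 r^{1+\delta}(\partial_{t^*}^m G_2)^2$ term and the fixed-$t^*$ boundary contribution near $r=3M$ come from. You correctly identify that the hard analytic input is the frequency-space construction of $X_0$; the $G_2$ loss is a direct by-product of that construction rather than a separate physical-space integration by parts.
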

We need an extra derivative for the inhomogeneous term because of trapping. If we know a priori that $G$ is supported away from the trapped region, this loss in derivative is unnecessary.
\begin{proposition}\label{bddcom2}
 Let $G=G_1+G_2$ be any way to decompose the function $G$. Suppose $G_2$ is supported away from $\{r:|r-3M|\leq\frac{M}{8}\}$. Then
\begin{equation*}
\begin{split}
&\int_{\Sigma_{\tau}} J^{N}_\mu\left(\Phi\right)n^\mu_{\Sigma_{\tau}} +\int_{\mathcal H(\tau',\tau)} J^{N}_\mu\left(\Phi\right)n^\mu_{\mathcal H^+} +\iint_{\mathcal R(\tau',\tau)\cap\{r\leq r^-_Y\}}K^{N}\left(\Phi\right)+\iint_{\mathcal R(\tau',\tau)}K^{X_0}\left(\Phi\right)\\
\leq &C\left(\int_{\Sigma_{\tau'}} J^{N}_\mu\left(\Phi\right)n^\mu_{\Sigma_{\tau'}}+\left(\int_{\tau'-1}^{\tau+1}\left(\int_{\Sigma_{t^*}} G_1^2\right)^{\frac{1}{2}}dt^*\right)^2+\iint_{\mathcal R(\tau'-1,\tau+1)}G_1^2\right.\\
&\left.+\iint_{\mathcal R(\tau'-1,\tau+1)}r^{1+\delta}G_2^2\right).
\end{split}
\end{equation*}
\end{proposition}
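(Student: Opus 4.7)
The plan is to run exactly the same multiplier argument that proves Proposition \ref{bddcom}, only replacing the commutator-with-$\partial_{t^*}$ trick for $G_2$ by a direct Cauchy--Schwarz against the integrated decay norm, which is now permissible because $G_2$ avoids the trapped set.

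First I would apply the divergence theorem to $J^{N}(\Phi)$ on $\mathcal R(\tau',\tau)$ exactly as in \cite{LKerr}. After adding a small multiple of the Morawetz-type currents used there, the boundary terms reproduce the left-hand side $\int_{\Sigma_\tau}J^N_\mu n^\mu + \int_{\mathcal H(\tau',\tau)} J^N_\mu n^\mu_{\mathcal H^+}$, and the bulk terms reproduce the positive quantities $\iint_{\mathcal R\cap\{r\le r^-_Y\}}K^N + \iint_{\mathcal R}K^{X_0}$, up to a boundary term of $\int_{\Sigma_{\tau'}}J^N_\mu n^\mu$ and the inhomogeneous contribution $-\iint_{\mathcal R(\tau',\tau)} G\cdot X\Phi$, where $X$ is the combined multiplier (a bounded linear combination of $N$, $\hat Y$, and the Morawetz pieces). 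This is the common prelude to both Proposition \ref{bddcom} and the present statement.

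Next I would split $G = G_1 + G_2$ in the inhomogeneous term. The $G_1$-contribution is handled verbatim as in Proposition \ref{bddcom}: the $L^2_{t^*}$-duality pairing yields the square of the $L^1_{t^*}L^2_x$ norm after a Gronwall-type estimate, while the bulk piece can be absorbed using $\iint G_1^2$ together with a small fraction of $K^N$ near the horizon and $K^{X_0}$ elsewhere. The essential new input is the treatment of $G_2$: because $G_2$ is supported in $\{|r-3M|\ge M/8\}$, the indicator in $K^{X_0}$ is identically $1$ on $\mathrm{supp}\, G_2$, so that on this set $r^{-1-\delta}(X\Phi)^2 \le C K^{X_0}(\Phi)$. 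By Cauchy--Schwarz,
\begin{equation*}
\Big|\iint_{\mathcal R(\tau',\tau)} G_2\cdot X\Phi\Big| \le \left(\iint_{\mathcal R(\tau',\tau)} r^{1+\delta}G_2^2\right)^{1/2}\left(\iint_{\mathcal R(\tau',\tau)} r^{-1-\delta}(X\Phi)^2\,\mathbbm{1}_{\{|r-3M|\ge M/8\}}\right)^{1/2},
\end{equation*}
and the second factor is bounded by $C(\iint K^{X_0}(\Phi))^{1/2}$. An AM--GM step with a sufficiently small parameter then absorbs the $K^{X_0}$-factor into the corresponding bulk term on the left-hand side, leaving $\iint r^{1+\delta}G_2^2$ as an acceptable error on the right. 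Widening the time interval to $[\tau'-1,\tau+1]$ absorbs the various boundary-in-time losses, as in \cite{LKerr}.

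I expect the only delicate point to be the absorption constants in the last step: the coefficient from Cauchy--Schwarz must be small enough that the $K^{X_0}$ term it produces is strictly dominated by the $K^{X_0}$ term on the left-hand side, but this is standard once one inspects the implicit constants coming from the Morawetz multipliers of \cite{LKerr}. No commutation with $\partial_{t^*}$ is needed, which is precisely the source of the improvement over Proposition \ref{bddcom}: the derivative loss in that proposition is driven by the need to control $G_2\cdot X\Phi$ uniformly across the trapped set $r = 3M$, and dropping that requirement dispenses with the commutation.
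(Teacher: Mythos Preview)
Your proposal is correct and captures precisely the mechanism the paper intends. Note, however, that the paper does not actually prove this proposition here: it is imported from \cite{LKerr}, and the only in-text justification is the sentence immediately preceding the statement (``If we know a priori that $G$ is supported away from the trapped region, this loss in derivative is unnecessary''), which is exactly your observation that on $\mathrm{supp}\,G_2$ the indicator in $K^{X_0}$ equals $1$, so the Cauchy--Schwarz/absorption step goes through without the $\partial_{t^*}$-commutation.
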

The estimates for $K^{X_1}$ were also proved. It is be estimated in the same way as $K^{X_0}$ except with an extra derivative.
\begin{proposition}\label{bddcom1}
\begin{equation*}
\begin{split}
&\iint_{\mathcal R(\tau',\tau)}K^{X_1}\left(\Phi\right)\\
\leq& C\left(\sum_{m=0}^1\int_{\Sigma_{\tau'}} J^{N}_\mu\left(\partial_{t^*}^{m}\Phi\right)n^\mu_{\Sigma_{\tau'}}+\sum_{m=0}^1\left(\int_{\tau'-1}^{\tau+1}\left(\int_{\Sigma_{t^*}} \left(\partial_{t^*}^{m}G_1\right)^2\right)^{\frac{1}{2}}dt^*\right)^2+\sum_{m=0}^1\iint_{\mathcal R(\tau'-1,\tau+1)}\left(\partial_{t^*}^{m}G_1\right)^2\right.\\
&\left.+\sum_{m=0}^{2}\iint_{\mathcal R(\tau'-1,\tau+1)}r^{1+\delta}\left(\partial_{t^*}^{m}G_2\right)^2+\sup_{t^*\in [\tau'-1,\tau+1]}\sum_{m=0}^1\int_{\Sigma_{t^*}\cap\{|r-3M|\leq\frac{M}{8}\}} \left(\partial_{t^*}^{m}G_2\right)^2\right).
\end{split}
\end{equation*}
\end{proposition}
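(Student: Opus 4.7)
The strategy is to commute the equation with the Killing vector field $\partial_{t^*}$ and invoke Proposition \ref{bddcom} twice: once applied to $\Phi$ with the decomposition $G = G_1 + G_2$, and once applied to $\partial_{t^*}\Phi$, using that $\Box_{g_K}(\partial_{t^*}\Phi) = \partial_{t^*}G_1 + \partial_{t^*}G_2$ since $\partial_{t^*}$ is Killing. Summing the two inequalities, the right-hand side collects to $\partial_{t^*}^m G_1$ for $m\in\{0,1\}$, $\partial_{t^*}^m G_2$ for $m\in\{0,1,2\}$, and $\int_{\Sigma_{\tau'}} J^N_\mu(\partial_{t^*}^m\Phi) n^\mu$ for $m\in\{0,1\}$, matching the right-hand side of Proposition \ref{bddcom1} exactly. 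It then remains to show that on the left, $\iint K^{X_0}(\Phi) + \iint K^{X_0}(\partial_{t^*}\Phi)$ controls $\iint K^{X_1}(\Phi)$.

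The difference $K^{X_1}(\Phi) - K^{X_0}(\Phi) = r^{-1-\delta}\mathbbm{1}_{\{|r-3M|\leq M/8\}} J^N_\mu(\Phi) n^\mu$ is supported on the compact trapped region around the photon sphere, where $J^N_\mu(\Phi) n^\mu$ is pointwise comparable to $(\partial_{t^*}\Phi)^2 + (\partial_r\Phi)^2 + |\nabb\Phi|^2 + \Phi^2$. The $(\partial_r\Phi)^2$ and $\Phi^2$ contributions are already in $K^{X_0}(\Phi)$ (the latter via the $r^{-3-\delta}\Phi^2$ weight, comparable to $r^{-1-\delta}$ on this bounded region), and the $(\partial_{t^*}\Phi)^2$ contribution is controlled by the same weight of $K^{X_0}(\partial_{t^*}\Phi)$. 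The main obstacle is to bound the angular contribution $\iint r^{-1-\delta}\mathbbm{1}_{\{|r-3M|\leq M/8\}}|\nabb\Phi|^2$ without any further derivative loss, since $K^{X_0}$ offers no direct control of angular derivatives at trapping and no angular commutator has been performed.

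This is accomplished by an elliptic-type argument. Let $\chi = \chi(r)$ be a smooth cutoff equal to $1$ on $\{|r-3M|\leq M/8\}$ and supported in $\{|r-3M|\leq M/4\}$. Integration by parts on each sphere of constant $(t^*, r)$ yields $\int_{S^2}\chi|\nabb\Phi|^2 = -\int_{S^2}\chi\Phi\,\lapp\Phi$, and the wave equation $\Box_{g_K}\Phi = G$ is used to replace $\lapp\Phi$ algebraically by a linear combination of $\partial_{t^*}^2\Phi$, $\partial_r^2\Phi$, mixed second derivatives such as $\partial_{t^*}\partial_r\Phi$ and $\partial_{t^*}\partial_{\phi^*}\Phi$ coming from the Kerr cross terms, first-order derivatives of $\Phi$, and $G$. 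Over the spacetime slab $\mathcal R(\tau',\tau)$ one integrates by parts in $t^*$ and $r$ to convert each second-derivative factor multiplied by $\Phi$ into a product of first derivatives, producing bulk terms of the form $\iint r^{-1-\delta}|\partial_{t^*}\Phi|^2$ and $\iint r^{-1-\delta}|\partial_r\Phi|^2$ that are absorbed by $\iint K^{X_0}(\Phi) + \iint K^{X_0}(\partial_{t^*}\Phi)$. The $t^*$-boundary contributions at $t^* = \tau', \tau$ are absorbed by the non-degenerate energy fluxes on $\Sigma_{\tau'}$ and $\Sigma_\tau$ already bounded through Proposition \ref{bddcom}; $r$-boundary terms vanish by the compact support of $\chi$; and $\nabla\chi$ terms are supported in the annulus $\{M/8 < |r-3M|\leq M/4\}$, where $K^{X_0}$ is already non-degenerate and controls the needed quantities. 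Cauchy-Schwarz on the $G$ contribution yields only $\iint G_1^2$ and $\iint r^{1+\delta}G_2^2$, both already present on the right-hand side of Proposition \ref{bddcom1}; the higher-order $G$-terms ($\partial_{t^*}G_1$, $\partial_{t^*}^2 G_2$, and the sup-norm of $\partial_{t^*}G_2$) arise entirely from applying Proposition \ref{bddcom} to $\partial_{t^*}\Phi$, reflecting the one-derivative loss inherent to trapping.
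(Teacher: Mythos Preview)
Your approach is correct and is essentially what the paper indicates: the paper does not give a proof of this proposition but simply remarks that $K^{X_1}$ ``is estimated in the same way as $K^{X_0}$ except with an extra derivative'' and cites \cite{LKerr}. Your plan---commute with the Killing field $\partial_{t^*}$, apply Proposition~\ref{bddcom} to both $\Phi$ and $\partial_{t^*}\Phi$, and then recover the missing angular control at trapping via an integration-by-parts argument using the equation---is the standard route and matches that description. In particular, your observation that the zeroth-order weight $r^{-3-\delta}(\partial_{t^*}\Phi)^2$ in $K^{X_0}(\partial_{t^*}\Phi)$ supplies non-degenerate control of $(\partial_{t^*}\Phi)^2$ on the compact trapped region is the key point, and the elliptic recovery of $|\nabb\Phi|^2$ from $\lapp\Phi$ via the equation (with the Kerr cross terms carrying a factor of $a$ and hence absorbable for $a\ll M$) is exactly how one closes the argument.
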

As before, if the inhomogeneous term is supported away from the trapped set, we can save a derivative:
\begin{proposition}\label{bddcom3}
 Let $G=G_1+G_2$ be any way to decompose the function $G$. Suppose $G_2$ is supported away from $\{r:|r-3M|\leq\frac{M}{8}\}$. Then
\begin{equation*}
\begin{split}
&\iint_{\mathcal R(\tau',\tau)}K^{X_1}\left(\Phi\right)\\
\leq& C\left(\sum_{m=0}^1\int_{\Sigma_{\tau'}} J^{N}_\mu\left(\partial_{t^*}^{m}\Phi\right)n^\mu_{\Sigma_{\tau'}}+\sum_{m=0}^1\left(\int_{\tau'-1}^{\tau+1}\left(\int_{\Sigma_{t^*}} \left(\partial_{t^*}^{m}G_1\right)^2\right)^{\frac{1}{2}}dt^*\right)^2+\sum_{m=0}^1\iint_{\mathcal R(\tau'-1,\tau+1)}\left(\partial_{t^*}^{m}G_1\right)^2\right.\\
&\left.+\sum_{m=0}^{1}\iint_{\mathcal R(\tau'-1,\tau+1)}r^{1+\delta}\left(\partial_{t^*}^{m}G_2\right)^2\right).
\end{split}
\end{equation*}
\end{proposition}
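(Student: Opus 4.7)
The strategy is to obtain Proposition \ref{bddcom3} from Proposition \ref{bddcom2} by commuting the equation with the Killing vector field $\partial_{t^*}$ exactly once, imitating the passage from Proposition \ref{bddcom} to Proposition \ref{bddcom1} carried out in \cite{LKerr}. The key observation enabling the derivative saving is that $\partial_{t^*}$ acts purely in the $t^*$-direction, so it preserves the $r$-support of $G_2$; in particular $\partial_{t^*}G_2$ is still supported away from $\{|r-3M|\le M/8\}$, so the hypothesis of Proposition \ref{bddcom2} remains valid after commutation.

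First, I apply Proposition \ref{bddcom2} directly to $\Phi$ with the decomposition $G=G_1+G_2$; the hypothesis is satisfied and we obtain a bound on $\iint_{\mathcal R(\tau',\tau)} K^{X_0}(\Phi)$ (together with the boundary/horizon fluxes) by precisely the $m=0$ contributions on the RHS of \ref{bddcom3}. Next, since $\partial_{t^*}$ is Killing, $\Box_{g_K}(\partial_{t^*}\Phi)=\partial_{t^*}G_1+\partial_{t^*}G_2$, and applying Proposition \ref{bddcom2} to $\partial_{t^*}\Phi$ with the decomposition $(\partial_{t^*}G_1,\partial_{t^*}G_2)$ yields a bound on $\iint K^{X_0}(\partial_{t^*}\Phi)$ by the $m=1$ contributions. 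Adding the two estimates majorizes $\iint \bigl[K^{X_0}(\Phi)+K^{X_0}(\partial_{t^*}\Phi)\bigr]$ by the full RHS of the proposition.

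The remaining step is to deduce $\iint K^{X_1}(\Phi)$ from this combination, which is exactly the elliptic upgrade performed in \cite{LKerr} in the passage from Proposition \ref{bddcom} to Proposition \ref{bddcom1}. Indeed, $K^{X_1}$ is strictly stronger than $K^{X_0}$ only inside $\{|r-3M|\le M/8\}$, where $\partial_{t^*}$ is timelike for $a\ll M$: in this compact region $(\partial_{t^*}\Phi)^2$ is controlled (after absorbing the bounded factor $r\sim 3M$) by the $r^{-3-\delta}(\partial_{t^*}\Phi)^2$ piece of $K^{X_0}(\partial_{t^*}\Phi)$, $(\partial_r\Phi)^2$ and $\Phi^2$ come from $K^{X_0}(\Phi)$, and the missing angular contribution $|\nabb\Phi|^2$ is recovered by reading $\Box_{g_K}\Phi=G$ as an elliptic equation on fixed $t^*$-slices whose inhomogeneity involves $\partial_{t^*}\Phi$, $\partial_{t^*}^2\Phi$, and $G$, all already controlled. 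The main obstacle to monitor is that this elliptic step does not implicitly demand any pointwise information on $G_2$ at the trapped set, for otherwise we would have to recreate the $\sup_{t^*}\int_{|r-3M|\le M/8} G_2^2$ term and the $m=2$ bulk term of Proposition \ref{bddcom1}. This is exactly where the support assumption pays off: $G_2$ vanishes identically in the region where the elliptic inversion is performed, so the only trapping-relevant part of the inhomogeneity is $G_1$, which is permitted to enter through bulk $L^2$ norms of $\partial_{t^*}^m G_1$ for $m=0,1$ only. This removes the extra $G_2$-terms present in Proposition \ref{bddcom1} and yields the statement as written.
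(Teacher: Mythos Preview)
Your proposal is correct and follows the paper's intended route: the paper offers no separate proof but simply remarks that Proposition~\ref{bddcom3} follows from the argument for Proposition~\ref{bddcom1} in \cite{LKerr} exactly as Proposition~\ref{bddcom2} follows from Proposition~\ref{bddcom}, the point being that commutation with the Killing field $\partial_{t^*}$ preserves the $r$--support of $G_2$ so that the trapping-sensitive terms never see $G_2$. One small caution: in the $K^{X_0}\to K^{X_1}$ upgrade near $r=3M$ you assert that $\partial_{t^*}^2\Phi$ is ``already controlled,'' but the two applications of Proposition~\ref{bddcom2} do not directly give $\iint_{\{|r-3M|\le M/8\}}(\partial_{t^*}^2\Phi)^2$; in the actual argument of \cite{LKerr} the cross term $\iint \Phi\,\partial_{t^*}^2\Phi$ arising from multiplying the equation by $\Phi$ is handled by an integration by parts in $t^*$ (producing $\iint(\partial_{t^*}\Phi)^2$, which \emph{is} controlled, plus boundary fluxes dominated by $\sum_{m=0}^1\int_{\Sigma}J^N_\mu(\partial_{t^*}^m\Phi)n^\mu$), so the conclusion stands but the phrasing should be adjusted.
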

The conformal energy satisfies the following estimates:
\begin{proposition}\label{conformalenergy}
For $\delta, \delta'>0$ sufficiently small and $0\leq\gamma< 1$, there exist $c=c(\delta,\gamma)$ and $C=C(\delta,\gamma)$ such that the following estimate holds for any solution to $\Box_{g_K}\Phi=G$:
\begin{equation*}
\begin{split}
&c\int_{\Sigma_{\tau}} J^{Z,w^Z}_\mu\left(\Phi\right)n^\mu_{\Sigma_{{\tau}}}+\tau^2\int_{\Sigma_{\tau}\cap\{r\leq \gamma\tau\}} J^{N}_\mu\left(\Phi\right)n^\mu_{\Sigma_{\tau}}\\
\leq &C\int_{\Sigma_{\tau_0}} J^{Z+CN,w^Z}_\mu\left(\Phi\right)n^\mu_{\Sigma_{{\tau_0}}}+C\iint_{\mathcal R(\tau_0,\tau)} t^*r^{-1+\delta}K^{X_1}\left(\Phi\right)\\
& +C\delta'\iint_{\mathcal R(\tau_0,\tau)\cap\{r\leq \frac{t^*}{2}\}} (t^*)^2K^{X_0}\left(\Phi\right)+C\left(\delta'+a\right)\iint_{\mathcal R(\tau_0,\tau)\cap\{r\leq r^-_Y\}}(t^*)^2K^N\left(\Phi\right)\\
&+C(\delta')^{-1}\left(\int_{\tau_0}^{\tau}\left(\int_{\Sigma_{t^*}\cap\{r\geq \frac{t^*}{2}\}}r^2 G^2 \right)^{\frac{1}{2}}dt^*\right)^2+C(\delta')^{-1}\sum_{m=0}^1\iint_{\mathcal R(\tau_0,\tau)\cap\{r\leq\frac{9t^*}{10}\}} (t^*)^2r^{1+\delta}\left(\partial_{t^*}^m G\right)^2\\
&+C(\delta')^{-1}\sup_{t^*\in [\tau_0,\tau]}\int_{\Sigma_{t^*}\cap\{r^-_Y\leq r\leq \frac{25M}{8}\}} (t^*)^2 G^2.
\end{split}
\end{equation*}
\end{proposition}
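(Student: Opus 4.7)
The plan is to extend the conformal energy estimate proved in \cite{LKerr} for the homogeneous equation $\Box_{g_K}\Phi=0$ by carefully tracking the inhomogeneous contribution from $G$. Starting from the divergence identity for the modified current $J^{Z,w^Z}$ over the slab $\mathcal{R}(\tau_0,\tau)$, one obtains
\begin{equation*}
\int_{\Sigma_\tau} J^{Z,w^Z}_\mu n^\mu_{\Sigma_\tau} - \int_{\Sigma_{\tau_0}} J^{Z,w^Z}_\mu n^\mu_{\Sigma_{\tau_0}} + \int_{\mathcal{H}(\tau_0,\tau)} J^{Z,w^Z}_\mu n^\mu_{\mathcal{H}^+} = \iint_{\mathcal{R}(\tau_0,\tau)} K^{Z,w^Z}(\Phi) + \iint_{\mathcal{R}(\tau_0,\tau)}\left(\tfrac{1}{4} w^Z \Phi - Z\Phi\right) G.
\end{equation*}
Adding $C\tau^2$ times the $N$-energy identity of Proposition \ref{bddcom} restricted to $\{r\leq r^-_Y\}$ absorbs the horizon flux and the non-coercive piece of the conformal energy near the horizon, after which Proposition \ref{Zlowerbound} extracts the claimed lower bound localized to $\{r\leq\gamma\tau\}$ for any $\gamma<1$, with the initial terms combining into $\int_{\Sigma_{\tau_0}} J^{Z+CN,w^Z}_\mu n^\mu$.

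The bulk integral $\iint K^{Z,w^Z}(\Phi)$ is handled essentially as in \cite{LKerr}: expanding $\pi^Z$ and $\Box_{g_K} w^Z$ in the frame adapted to the asymptotic null foliation splits the error into three contributions matching the regions in which $Z$ fails to be a good multiplier. The trapping contribution near $r=3M$ is bounded by $t^* r^{-1+\delta} K^{X_1}(\Phi)$; the extra derivative hidden in $K^{X_1}$ is the familiar loss forced by trapping. The intermediate contribution in $\{r\leq t^*/2\}$ is bounded by $\delta' (t^*)^2 K^{X_0}(\Phi)$ after a Cauchy-Schwarz with tunable constant $\delta'$. The near-horizon contribution is dominated by $(\delta'+a)(t^*)^2 K^N(\Phi)$, the $a$-smallness reflecting that the conformal multiplier error vanishes in this region on exact Schwarzschild and the $\delta'$ coming from the same multiplier tuning. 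These give the first three spacetime error terms on the right-hand side.

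The genuinely new content is the inhomogeneous bulk integral $\iint (\tfrac{1}{4} w^Z \Phi - Z\Phi) G$. In the far region $\{r\geq t^*/2\}$ both factors multiplying $G$ have pointwise size comparable to $r t^*$, so a Cauchy-Schwarz with an $r^2$ weight on $G$ in an $L^1_{t^*} L^2_x$--$L^\infty_{t^*} L^2_x$ duality produces the term with $r^2 G^2$ integrand, while the $\Phi$-side is absorbed into the left-hand side up to a multiple of $\delta'$. In the interior $\{r\leq 9t^*/10\}$ this estimate is too lossy, since the $t^*$ factor cannot be placed in $L^\infty$; instead one integrates by parts in $t^*$ using $Z = 2t^* \partial_{t^*} + (\mbox{spatial})$, trading one power of $t^*$ on the $\Phi$-side for a $\partial_{t^*}$ landing on $G$. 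This produces the sum $\sum_{m=0}^{1}\iint (t^*)^2 r^{1+\delta}(\partial_{t^*}^m G)^2$, with the $\Phi$-side absorbed into $K^{X_0}$ after a weighted Hardy inequality applied to the $w^Z\Phi$ piece. The integration by parts in $t^*$ leaves a boundary term supported where the cutoff between the two regions meets the trapping annulus, which is precisely $\{r^-_Y \leq r \leq 25M/8\}$, giving the final $\sup_{t^*}$ term.

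The main obstacle is balancing $\delta'$ (the Cauchy-Schwarz parameter in the bulk estimates) and $a$ (the Kerr rotation parameter) so that the error terms $(t^*)^2 K^{X_0}$ and $(t^*)^2 K^N$ appearing on the right-hand side can eventually be absorbed via the integrated decay estimates of Propositions \ref{bddcom} and \ref{bddcom1}. This requires the multiplier weights in $Z$ and $w^Z$ to be chosen so that the non-Killing error $K^{Z,w^Z}$ factors out an explicit small parameter in each region, which is the crucial structural input carried over from \cite{LKerr}; the bookkeeping of these weights across the three distinct regions (horizon, trapping, asymptotic) is where the delicate work lies.
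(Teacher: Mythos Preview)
The paper does not actually prove this proposition: it is quoted from \cite{LKerr} (see the opening sentence of Section~3, ``We have proved in \cite{LKerr} the energy estimates\ldots''), so there is no in-paper argument to compare against. Your outline is a plausible reconstruction of how such a proof goes, and the first two paragraphs---the divergence identity for $J^{Z,w^Z}$, combination with the $N$-energy near the horizon, and the regional splitting of the bulk error $K^{Z,w^Z}$ into trapping, intermediate, and near-horizon pieces with the respective small parameters---are broadly consistent with the architecture of the conformal-energy arguments in \cite{LKerr} and \cite{DRL}.

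There is, however, a conceptual misattribution in your treatment of the inhomogeneous term. You claim the $\sum_{m=0}^{1}(t^*)^2 r^{1+\delta}(\partial_{t^*}^m G)^2$ term arises from integrating $Z\Phi\cdot G$ by parts in $t^*$ to trade a $t^*$ weight for a $\partial_{t^*}$ on $G$, and that the $\sup_{t^*}$ term over $\{r_Y^-\leq r\leq 25M/8\}$ is the resulting boundary term from a regional cutoff. This is not the mechanism. The Remark immediately following the proposition makes clear that if $G$ is supported away from the trapped set $\{|r-3M|\leq M/8\}$ then the $\partial_{t^*}G$ term disappears entirely---so the derivative loss on $G$ is a \emph{trapping} phenomenon, inherited from the integrated-decay machinery (compare Propositions~\ref{bddcom} and~\ref{bddcom1}, where the extra $\partial_{t^*}$ on the inhomogeneity and the $\sup$ term on the trapping annulus already appear). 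The term $Z\Phi\cdot G$ in the far region is handled essentially as you say, but in the interior one controls it by absorbing $Z\Phi$ into the weighted $K^{X_0}$/$K^{X_1}$ quantities, and it is the estimates for \emph{those} that force the derivative loss near $r=3M$. Your integration-by-parts-in-$t^*$ story would predict a derivative loss everywhere in $\{r\leq 9t^*/10\}$, which is not what the Remark says.
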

\begin{remark}
As in Proposition \ref{bddcom2}, we can save a derivative if we know that the inhomogeneous term is supported away from the trapped region. More precisely, let $G=G_1+G_2$ be any way to decompose the function $G$. Suppose $G_2$ is supported away from $\{r:|r-3M|\leq\frac{M}{8}\}$. Then we can replace 
$$\sum_{m=0}^1\iint_{\mathcal R(\tau_0,\tau)\cap\{r\leq\frac{9t^*}{10}\}} (t^*)^2r^{1+\delta}\left(\partial_{t^*}^m G\right)^2+\sup_{t^*\in [\tau_0,\tau]}\int_{\Sigma_{t^*}\cap\{r^-_Y\leq r\leq \frac{25M}{8}\}} (t^*)^2 G^2$$
by $$\sum_{m=0}^1\iint_{\mathcal R(\tau_0,\tau)\cap\{r\leq\frac{9t^*}{10}\}} (t^*)^2r^{1+\delta}\left(\partial_{t^*}^m G_1\right)^2+\iint_{\mathcal R(\tau_0,\tau)\cap\{r\leq\frac{9t^*}{10}\}} (t^*)^2r^{1+\delta} G_2^2+\sup_{t^*\in [\tau_0,\tau]}\int_{\Sigma_{t^*}\cap\{r^-_Y\leq r\leq \frac{25M}{8}\}} (t^*)^2 G_1^2.$$
in Proposition \ref{conformalenergy}. This follows from a straight forward modification of the proof in \cite{LKerr}.
\end{remark}
The estimates for $K^{X_0}$ and $K^{X_1}$ can be localized to $r\leq\frac{t^*}{2}$ if we control it by the conformal energy:
\begin{proposition}\label{X0}
\begin{enumerate}
\item Localized Estimate for $X_0$
\begin{equation*}
\begin{split}
&\iint_{\mathcal R(\tau',\tau)\cap\{r\leq\frac{t^*}{2}\}}K^{X_0}\left(\Phi\right)\\
\leq& C\left(\tau^{-2}\int_{\Sigma_{\tau'}} J^{Z+N,w^Z}_\mu\left(\Phi\right)n^\mu_{\Sigma_{\tau'}}+ C\int_{\Sigma_{\tau'}\cap\{r\leq r^-_Y\}} J^{N}_\mu\left(\Phi\right)n^\mu_{\Sigma_{\tau'}}\right)\\
&+C\left(\sum_{m=0}^{1}\iint_{\mathcal R(\tau'-1,\tau+1)\cap\{r\leq\frac{9t^*}{10}\}}r^{1+\delta'}\left(\partial_{t^*}^{m}{G}\right)^2+\sup_{t^*\in [\tau'-1,\tau+1]}\int_{\Sigma_{t^*}\cap\{|r-3M|\leq\frac{M}{8}\}\cap\{r\leq\frac{9t^*}{10}\}} G^2\right).
\end{split}
\end{equation*}
\item Localized Estimate for $X_1$
\begin{equation*}
\begin{split}
&\iint_{\mathcal R(\tau',\tau)\cap\{r\leq\frac{t^*}{2}\}}K^{X_1}\left(\Phi\right)\\
\leq& C\left(\tau^{-2}\sum_{m=0}^{1}\int_{\Sigma_{\tau'}} J^{Z+N,w^Z}_\mu\left(\partial_{t^*}^m\Phi\right)n^\mu_{\Sigma_{\tau'}}+ C\sum_{m=0}^{1}\int_{\Sigma_{\tau'}\cap\{r\leq r^-_Y\}} J^{N}_\mu\left(\partial_{t^*}^m\Phi\right)n^\mu_{\Sigma_{\tau'}}\right)\\
&+C\left(\sum_{m=0}^{2}\iint_{\mathcal R(\tau'-1,\tau+1)\cap\{r\leq\frac{9t^*}{10}\}}r^{1+\delta}\left(\partial_{t^*}^{m}{G}\right)^2\right.\\
&\left.+\sup_{t^*\in [\tau'-1,\tau+1]}\sum_{m=0}^{1}\int_{\Sigma_{t^*}\cap\{|r-3M|\leq\frac{M}{8}\}\cap\{r\leq\frac{9t^*}{10}\}} \left(\partial_{t^*}^m G\right)^2\right).
\end{split}
\end{equation*}
\end{enumerate}
\end{proposition}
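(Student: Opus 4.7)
The plan is to deduce both localized estimates from the non-localized $K^{X_0}, K^{X_1}$ estimates (Propositions \ref{bddcom2}, \ref{bddcom3}) applied to $\tilde\Phi := \chi\Phi$, where $\chi = \chi(r,t^*)$ is a smooth cutoff with $\chi \equiv 1$ on $\{r \leq t^*/2\}$, $\mathrm{supp}\,\chi \subset \{r \leq 2t^*/3\}$, and derivative bounds $|\partial_{t^*}^i \partial_r^j \chi| \lesssim (t^*)^{-i-j}$. I would arrange the transition region so that for $t^* \geq 8M$ it is disjoint from the trapped set $\{|r-3M|\leq \frac{M}{8}\}$; the finite-time segment $t^* \leq 8M$ is absorbed via local well-posedness. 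Then $\tilde\Phi$ satisfies $\Box_{g_K}\tilde\Phi = \chi G + [\Box_{g_K},\chi]\Phi$, and since the commutator is supported away from trapping, the entire right-hand side can be treated as the no-loss $G_2$ piece in Proposition \ref{bddcom2}. Pointwise non-negativity of $K^{X_0}$ together with $\chi \equiv 1$ on $\{r \leq t^*/2\}$ give $\iint_{\mathcal R(\tau',\tau) \cap \{r \leq t^*/2\}} K^{X_0}(\Phi) \leq \iint_{\mathcal R(\tau',\tau)} K^{X_0}(\tilde\Phi)$, reducing the localized estimate to bounding the latter via Proposition \ref{bddcom2}.

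Next, I would bound the initial energy $\int_{\Sigma_{\tau'}} J^N(\tilde\Phi)$ pointwise by $\chi^2 J^N(\Phi) + (\partial\chi)^2\Phi^2$, splitting the spatial domain at $r = r^-_Y$. The near-horizon part reproduces $\int_{\Sigma_{\tau'}\cap\{r\leq r^-_Y\}} J^N(\Phi)$ directly, while in the range $r^-_Y \leq r \leq 2\tau'/3$ one has $u \sim v \sim \tau'$, so Proposition \ref{Zlowerbound} dominates $(\tau')^2 J^N(\Phi)$ pointwise by the conformal energy density, yielding $(\tau')^{-2}\int_{\Sigma_{\tau'}} J^{Z+N,w^Z}(\Phi)$. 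The cutoff correction $(\partial\chi)^2\Phi^2 \lesssim (\tau')^{-2}\Phi^2$ is absorbed analogously through the $(u^2+v^2)r^{-2}\Phi^2$ contribution of the conformal energy, using $(u^2+v^2)/r^2 \sim 1$ on the support of $\partial\chi$. For the inhomogeneous data, the $\chi G$ piece immediately produces the claimed $\iint r^{1+\delta'}(\partial_{t^*}^m G)^2$ and $\sup \int G^2$ terms (with derivatives falling on $\chi$ giving only lower-order corrections carrying extra $(t^*)^{-1}$ decay). The commutator satisfies $|[\Box_{g_K},\chi]\Phi|^2 \lesssim (t^*)^{-2}|D\Phi|^2 + (t^*)^{-4}\Phi^2$ and is supported in $\{t^*/2 \leq r \leq 2t^*/3\}$; in that region $r \sim u \sim v \sim t^*$, so combining its pointwise decay with Proposition \ref{Zlowerbound} produces an integrand bounded by $(t^*)^{-3+\delta}$ times the conformal energy flux through $\Sigma_{t^*}$, whose $t^*$-integral contributes at most the $\tau^{-2}\int_{\Sigma_{\tau'}} J^{Z+N,w^Z}(\Phi)$ term.

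The $X_1$ version follows by the identical scheme using Proposition \ref{bddcom3} with one extra $\partial_{t^*}$ commutation, shifting all the indices $m$ up by one. The main obstacle I foresee is the commutator bookkeeping, namely ensuring that the commutator contribution is absorbable into the \emph{initial} conformal energy with the stated $\tau^{-2}$ weight rather than requiring control of intermediate conformal energies. The saving observation is that in the transition region $r \sim t^*$ all of $r$, $u$, and $v$ scale like $t^*$, so the $(t^*)^{-2}$ decay from the cutoff derivatives and the $(t^*)^{-2}$ gain from Proposition \ref{Zlowerbound} compound against the $r^{1+\delta}$ weight to produce sufficient $t^*$-integrability; any residual intermediate conformal-energy dependence is handled by a dyadic iteration in $t^*$ or absorbed against small parameters already present in the bootstrap scheme.
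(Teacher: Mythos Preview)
The paper does not prove this proposition here; it is quoted from \cite{LKerr} along with the other energy estimates of Section~3, so there is no in-paper argument to compare against. Your cutoff-and-reduce strategy is a reasonable reconstruction of how such a localized estimate is obtained from the global ones.

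Two points deserve correction. First, you cannot place $\chi G$ into the ``no-loss $G_2$ piece'' of Proposition~\ref{bddcom2}: that proposition requires $G_2$ to be supported away from $\{|r-3M|\le M/8\}$, and $\chi G$ is not, since $\chi\equiv 1$ there. The target estimate itself carries the loss-of-derivative terms $\sum_{m=0}^1\iint r^{1+\delta'}(\partial_{t^*}^m G)^2$ and the $\sup\int_{\{|r-3M|\le M/8\}}G^2$ term, so the correct reduction is to Proposition~\ref{bddcom} with $\chi G$ as $G_2$ there, reserving the no-loss route (Proposition~\ref{bddcom2}, or the Remark following Proposition~\ref{X0}) only for the commutator $[\Box_{g_K},\chi]\Phi$. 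Your later paragraph in effect does this, but the opening claim is wrong as stated.

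Second, and more substantively, the commutator contribution yields a term of the form $\int_{\tau'}^{\tau}(t^*)^{-3+\delta}\int_{\Sigma_{t^*}}J^{Z+N,w^Z}(\Phi)\,dt^*$, which involves the conformal energy at \emph{intermediate} times rather than at $\tau'$. Closing this is not automatic: one must either invoke Proposition~\ref{conformalenergy} to propagate the conformal energy forward (and then absorb the $K^{X_0},K^{X_1}$ terms it produces, which carry small coefficients), or restrict a priori to dyadic intervals $\tau'\sim\tau$ --- which is in fact the only regime in which the paper ever applies Proposition~\ref{X0} (cf.\ Proposition~\ref{K}). Your ``dyadic iteration or absorption against small parameters'' is the right instinct, but this step needs to be carried out rather than asserted.
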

\begin{remark}
As before, if $G=G_1+G_2$ and $G_2$ is supported outside $\{|r-3M|\leq \frac{M}{8}\}$, we can replace, in Proposition \ref{X0}.1, 
$$\sum_{m=0}^{1}\iint_{\mathcal R(\tau'-1,\tau+1)\cap\{r\leq\frac{9t^*}{10}\}}r^{1+\delta}\left(\partial_{t^*}^{m}{G_2}\right)^2+\sup_{t^*\in [\tau'-1,\tau+1]}\int_{\Sigma_{t^*}\cap\{|r-3M|\leq\frac{M}{8}\}\cap\{r\leq\frac{9t^*}{10}\}} G^2$$
by $$\iint_{\mathcal R(\tau'-1,\tau+1)\cap\{r\leq\frac{9t^*}{10}\}}r^{1+\delta}G_2^2;$$
and replace in Proposition \ref{X0}.2,
$$\sum_{m=0}^{2}\iint_{\mathcal R(\tau'-1,\tau+1)\cap\{r\leq\frac{9t^*}{10}\}}r^{1+\delta}\left(\partial_{t^*}^{m}{G_2}\right)^2+\sum_{m=0}^1\sup_{t^*\in [\tau'-1,\tau+1]}\int_{\Sigma_{t^*}\cap\{|r-3M|\leq\frac{M}{8}\}\cap\{r\leq\frac{9t^*}{10}\}} \left(\partial_{t^*}^m G_2\right)^2$$
by $$\sum_{m=0}^{1}\iint_{\mathcal R(\tau'-1,\tau+1)\cap\{r\leq\frac{9t^*}{10}\}}r^{1+\delta}\left(\partial_{t^*}^{m}{G_2}\right)^2.$$
\end{remark}

\section{The Elliptic Estimates and Hardy Inequality}\label{sectionelliptic}
We have also proved in \cite{LKerr} the following elliptic estimates:
\begin{proposition}\label{elliptic}
Suppose $\Box_{g_K}\Phi=G$. For $m\geq 1$ and for any $\alpha$,
\begin{enumerate}
\item Boundedness of Weighted Energy
\begin{equation*}
\begin{split}
\int_{\Sigma_{\tau}\cap\{r\geq r^-_Y\}} r^\alpha\left(D^m\Phi\right)^2 \leq C_\alpha\left(\sum_{j=0}^{m-1}\int_{\Sigma_{\tau}} r^\alpha J^N_\mu\left(\partial_{t^*}^j\Phi\right)n^\mu_{\Sigma_\tau}+\sum_{j=0}^{m-2}\int_{\Sigma_{\tau}} r^\alpha\left(D^jG\right)^2\right).
\end{split}
\end{equation*}
\item Boundedness of Local Energy\\
For any $0< \gamma<\gamma'$,
$$\int_{\Sigma_{\tau}\cap\{r^-_Y\leq r\leq \gamma t^*\}} r^\alpha\left(D^m\Phi\right)^2 \leq C_\alpha\left(\sum_{j=0}^{m-1}\int_{\Sigma_{\tau}\cap\{r\leq \gamma' t^*\}} r^\alpha J^N_\mu\left(\partial_{t^*}^j\Phi\right)n^\mu_{\Sigma_\tau}+\sum_{j=0}^{m-2}\int_{\Sigma_{\tau}} r^\alpha\left(D^jG\right)^2\right).$$
\end{enumerate}
\end{proposition}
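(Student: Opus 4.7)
The plan is induction on $m$ combined with standard interior elliptic regularity on the spacelike slice $\Sigma_\tau$. The guiding observation is that, in the coordinates $(t^*,r,\theta,\phi^*)$, $\Box_{g_K}$ splits as
$$\Box_{g_K}\Phi = L_{\mathrm{sp}}\Phi + 2g^{t^*a}\partial_{t^*}\partial_a\Phi + g^{t^*t^*}\partial_{t^*}^2\Phi + \text{(first order)},$$
where $L_{\mathrm{sp}}$ is the purely spatial second-order part and $a$ runs over spatial indices. By the construction of $t^*$ in Section \ref{theorem} the slice $\Sigma_\tau$ is spacelike throughout $\{r\geq r^-_Y\}$, so $L_{\mathrm{sp}}$ is uniformly elliptic on this region, and the ellipticity constants are scale-invariant on any annulus $\{R\leq r\leq 2R\}$ after the rescaling $\tilde r=r/R$. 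Rearranging $\Box_{g_K}\Phi=G$ to
$$L_{\mathrm{sp}}\Phi = G - 2g^{t^*a}\partial_{t^*}\partial_a\Phi - g^{t^*t^*}\partial_{t^*}^2\Phi - \text{(first order)},$$
one sees that the right-hand side is pointwise dominated by $|G|$, $|D\partial_{t^*}\Phi|$, and $|D\Phi|$.

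The base case $m=1$ is immediate: by the construction of $N$ in \cite{LKerr}, $N$ is future-directed and uniformly timelike on $\Sigma_\tau$ in $\{r\geq r^-_Y\}$, so $J^N_\mu(\Phi)n^\mu_{\Sigma_\tau}$ is pointwise comparable to $|D\Phi|^2$ there; multiplying by $r^\alpha$ and integrating yields the claim. For the inductive step, I would commute the equation with up to $m-2$ copies of $D$ and apply standard interior elliptic estimates on each annulus $\{R/2\leq r\leq 4R\}$ with $R\geq r^-_Y$, supplemented by a single fixed local estimate on the compact region $\{r^-_Y\leq r\leq 3r^-_Y\}$. The commutators produce at most $m-1$ derivatives of $\Phi$, which are absorbed by the induction hypothesis at order $m-1$, together with at most $m-1$ derivatives of $\partial_{t^*}\Phi$, which after reorganizing and using the base case become $J^N$-energies of $\partial_{t^*}^j\Phi$ for $j\leq m-1$. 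Multiplying by $r^\alpha \sim R^\alpha$ and summing over a dyadic partition of $\{r\geq r^-_Y\}$ then yields part 1.

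For part 2, I would run the same argument with a cutoff $\chi$ equal to $1$ on $\{r\leq \gamma t^*\}$ and supported in $\{r\leq \gamma' t^*\}$; the localization works because the elliptic estimates on each annulus are local, and the contributions from the transition region $\{\gamma t^*\leq r\leq \gamma' t^*\}$ are absorbed into the full integral over $\{r\leq \gamma' t^*\}$ on the right. The step I expect to require the most care is the uniformity of the elliptic constants across the dyadic annuli together with the behavior near $r=r^-_Y$, where $\partial_{t^*}$ is transitioning through being null; it is precisely the choice of $r^-_Y$ close to $r_+$ and the construction of $N$ in \cite{LKerr} that keep $\Sigma_\tau$ spacelike and $L_{\mathrm{sp}}$ uniformly elliptic up to and slightly above $r^-_Y$, so that the local estimate closes and the induction proceeds.
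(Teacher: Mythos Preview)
The paper does not prove this proposition here; it is quoted verbatim from \cite{LKerr} with the sentence ``We have also proved in \cite{LKerr} the following elliptic estimates,'' and no argument is given in the present paper. So there is no in-paper proof to compare against.

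Your sketch is the standard route to such estimates and is essentially what one finds in the cited reference: rewrite $\Box_{g_K}\Phi=G$ as an elliptic equation on $\Sigma_\tau$ with source $G$, $\partial_{t^*}D\Phi$, $\partial_{t^*}^2\Phi$, $D\Phi$; apply scale-invariant interior elliptic regularity on dyadic annuli to pick up the $r^\alpha$ weight; and induct on $m$, trading spatial derivatives for $\partial_{t^*}$ derivatives at each step. One small point of precision: rather than commuting with general $D$'s (which do not commute cleanly with $\Box_{g_K}$), the clean way to run the induction is to apply the elliptic estimate to $\partial_{t^*}^j\Phi$ for $j\le m-2$ and then convert spatial derivatives via the equation; this is what your bookkeeping implicitly does, but phrasing it that way avoids having to control commutators like $[\Box_{g_K},D^{m-2}]$ directly. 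The localization in part 2 via a cutoff supported in $\{r\le\gamma' t^*\}$ is exactly right, and your remark about uniform ellipticity up to $r^-_Y$ is the correct place where the choice of $r^-_Y$ from \cite{LKerr} enters.
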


We need a Hardy-type inequality that is improves the analogous one in \cite{LKerr}:
\begin{proposition}\label{Hardy}
For $R>R'$,
$$\int_{\Sigma_\tau\cap\{r\geq R\}} r^{\alpha-2}\Phi^2 \leq C\int_{\Sigma_\tau\cap\{r\geq R'\}} r^{\alpha}J^N_\mu\left(\Phi\right)n^\mu_{\Sigma_\tau}.$$
\end{proposition}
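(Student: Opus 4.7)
The natural strategy is a weighted integration by parts in the radial direction, coupled with a smooth cutoff that exploits the gap $R > R'$. I would introduce a cutoff $\chi(r)$ with $\chi \equiv 0$ on $[0, R']$, $\chi \equiv 1$ on $[R, \infty)$, and $|\chi'| \leq C/(R - R')$. The base identity is $(\alpha+1) r^\alpha = \partial_r(r^{\alpha+1})$ for $\alpha \neq -1$, with a log weight handling the borderline case $\alpha = -1$. Multiplying by $\chi^2 \Phi^2$ and integrating on $\Sigma_\tau$ with respect to the induced volume form, which is equivalent to $r^2 \sin\theta\, dr\, d\theta\, d\phi^*$ away from the horizon and at infinity, I would integrate by parts in $r$: the boundary at $r = R'$ vanishes thanks to the cutoff, and the one at $r = \infty$ vanishes by the decay at infinity of finite-energy solutions (justified by a density argument on smooth compactly supported test functions).

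The resulting identity has the schematic form
\begin{equation*}
(\alpha+1) \int \chi^2 r^\alpha \Phi^2 = -2\int \chi\chi' r^{\alpha+1}\Phi^2 - 2\int \chi^2 r^{\alpha+1}\Phi\,\partial_r \Phi.
\end{equation*}
Cauchy-Schwarz on the last term produces a contribution $\epsilon \int \chi^2 r^\alpha \Phi^2 + C_\epsilon \int \chi^2 r^{\alpha+2}(\partial_r \Phi)^2$. The first piece is absorbed into the left-hand side (using $\int \chi^2 r^\alpha \Phi^2 \geq \int_{r \geq R} r^\alpha \Phi^2$ to extract the desired left-hand side), and the second piece is controlled by $C \int_{r \geq R'} r^\alpha J^N_\mu n^\mu_{\Sigma_\tau}$, since $(\partial_r\Phi)^2$ is pointwise dominated by $J^N_\mu n^\mu$ modulo the volume factor $r^2$.

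The main technical point is the treatment of the $\chi\chi'$ error term, which is supported in the bounded annulus $\{R' \leq r \leq R\}$ and is bounded pointwise by $C\Phi^2$ with a constant depending on $R, R', \alpha$. To bound $\int_{R' \leq r \leq R}\Phi^2$ by $\int_{r \geq R'} r^\alpha J^N_\mu n^\mu_{\Sigma_\tau}$, I would use a fundamental theorem of calculus argument exploiting the decay of $\Phi$ at infinity: writing $\Phi(r, \omega)^2 = (\int_r^\infty \partial_s\Phi(s, \omega)\, ds)^2$ and applying Cauchy-Schwarz with a weight compatible with the target $r^\alpha$ on the right. For $\alpha + 1 > 0$ one can choose the weight directly. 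For $\alpha \leq -1$ the admissible weights are constrained, and one instead first applies the (non-improved) Hardy inequality from \cite{LKerr} at an intermediate radius $\rho^* \in [R', R]$ selected by pigeonhole so that the resulting trace $\int_{S^2}\Phi(\rho^*, \omega)^2 d\omega$ is controlled by an average over the annulus. Averaging in $\rho^*$ converts this trace into a bulk integral over $\{R' \leq r \leq R\}$, which can then be estimated by the radial derivative energy there using a local Poincar\'e-type inequality in $r$.

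The main obstacle is precisely this handling of the $\chi\chi'$ error for $\alpha \leq -1$, where the standard weighted Hardy naturally produces a boundary trace of $\Phi$ at $r = R$ that cannot in general be bounded by the bulk energy $\int r^\alpha J^N_\mu n^\mu_{\Sigma_\tau}$ alone. The gap $R - R'$ is exactly what allows one to absorb this trace via averaging over intermediate radii, which is the essential content of the improvement over the corresponding Hardy inequality in \cite{LKerr}.
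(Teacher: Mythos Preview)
Your approach is correct in spirit but takes a more circuitous route than the paper's. The paper avoids the cutoff entirely: instead of multiplying by $\chi^2$ and differentiating $r^{\alpha+1}$, it defines $k(r)$ by $k'(r) = r^{\alpha-2}\,\mathrm{vol}$ with boundary condition $k(R',\theta,\phi)=0$, and then integrates by parts directly:
\[
\int_{\Sigma_\tau\cap\{r\ge R\}} r^{\alpha-2}\Phi^2 \le \iiint_{R'}^\infty k'(r)\Phi^2\,dr\,d\theta\,d\phi = -2\iiint k(r)\,\Phi\,\partial_r\Phi\,dr\,d\theta\,d\phi,
\]
the boundary term at $R'$ vanishing because $k(R')=0$. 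A weighted Cauchy--Schwarz with weight $k'(r)$ then closes the estimate in one step, since $k(r)^2/k'(r)\sim r^{\alpha}\,\mathrm{vol}$.

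The difference is that by building the vanishing at $R'$ into the antiderivative $k$ rather than into a cutoff on $\Phi$, the paper never generates your $\chi\chi'$ error term on the annulus $\{R'\le r\le R\}$. What you identify as the ``main obstacle'' --- controlling $\int_{R'\le r\le R}\Phi^2$ by the weighted energy, with the attendant pigeonhole/averaging machinery for $\alpha\le -1$ --- is therefore an artifact of the cutoff choice and simply does not arise in the paper's argument. Your route would work, but the paper's is a one-line integration by parts once the right primitive is chosen.
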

\begin{proof}
Let $k(r)$ be defined by solving $$k'(r,\theta,\phi)=r^{\alpha-2}vol,$$ in the region $r\geq R'$, where $vol=vol\left(r,\theta,\phi\right)$ is the volume density on $\Sigma_\tau$ with $r, \theta,\phi$ coordinates, with boundary condition $k(R',\theta,\phi)=0$.
Now
\begin{equation*}
\begin{split}
\int_{\Sigma_\tau\cap\{r\geq R\}} r^{\alpha-2}\Phi^2=& \iiint_{R'}^{\infty} k'(r)\Phi^2 dr d\theta d\phi\\
\leq& -2\iiint k(r)\Phi\partial_r\Phi dr d\theta d\phi\\
\leq &2\left(\iiint_{R'}^{\infty} \frac{1+k(r)^2}{1+k'(r)}\left(\partial_r \Phi\right)^2 dr d\theta d\phi\right)^{\frac{1}{2}}\left(\iiint_{R'}^{\infty} (1+k'(r))\Phi^2 dr d\theta d\phi\right)^{\frac{1}{2}}
\end{split}
\end{equation*}
Notice that $vol \sim r^2$, $k(r)\sim r^{\alpha+1}$ and $1+k'(r)\sim r^{\alpha}$. Hence $\frac{1+k(r)^2}{1+k'(r)}\sim r^{\alpha}vol$. The lemma follows.
\end{proof}

With the help of this Hardy inequality, we are able to ``localize'' the elliptic estimates for $r\geq R$. 
\begin{proposition}\label{ellipticoutside}
Suppose $\Box_{g_K}\Phi=G$. For $m\geq 1$ and for any $\alpha$,
For any $R>R'$,
$$\int_{\Sigma_{\tau}\cap\{r\geq R\}} r^\alpha\left(D^m\Phi\right)^2 \leq C_{\alpha,R,R'}\left(\sum_{j=0}^{m-1}\int_{\Sigma_{\tau}\cap\{r\geq R'\}} r^\alpha J^N_\mu\left(\partial_{t^*}^j\Phi\right)n^\mu_{\Sigma_\tau}+\sum_{j=0}^{m-2}\int_{\Sigma_{\tau}} r^\alpha\left(D^jG\right)^2\right).$$
\end{proposition}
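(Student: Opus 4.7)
The plan is to prove the proposition by induction on $m$, reducing to Proposition \ref{elliptic}.1 applied to a cutoff $\chi\Phi$ and using Proposition \ref{Hardy} to absorb the resulting cutoff errors. The role of the Hardy inequality is precisely to trade undifferentiated $\Phi^2$ terms, which arise from the commutator $[\Box_{g_K},\chi]\Phi$, for $J^N$-type quantities that still localize to $\{r\geq R'\}$. Without loss of generality I assume $R'>r^-_Y$, since otherwise the claim follows from Proposition \ref{elliptic}.1 directly up to a constant depending on $R,R'$.

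First I pick nested intermediate radii $R'<R_2<R_1<R$ and a smooth cutoff $\chi(r)$ with $\chi\equiv 0$ on $\{r\leq R_1\}$, $\chi\equiv 1$ on $\{r\geq R\}$, and $|\chi'|+|\chi''|\leq C_{R,R'}$. Setting $\tilde\Phi=\chi\Phi$, one has $\Box_{g_K}\tilde\Phi=\chi G+[\Box_{g_K},\chi]\Phi$, where the commutator is supported in $\{R_1\leq r\leq R\}$ and schematically linear in $\Phi$ and $D\Phi$. Since $\tilde\Phi\equiv 0$ for $r\leq R_1$ and $\tilde\Phi\equiv\Phi$ on $\{r\geq R\}$, applying Proposition \ref{elliptic}.1 to $\tilde\Phi$ produces a left-hand side that controls $\int_{\Sigma_\tau\cap\{r\geq R\}}r^\alpha(D^m\Phi)^2$.

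The base case $m=1$ is immediate from the pointwise comparison $J^N_\mu(\Phi)n^\mu_{\Sigma_\tau}\sim(D\Phi)^2$, valid on $\{r\geq R'>r^-_Y\}$ where $N$ coincides with the timelike Killing $\partial_{t^*}$. For the inductive step I process the right-hand side of Proposition \ref{elliptic}.1 term by term. The $J^N$-type terms involve $\partial_{t^*}^j\tilde\Phi=\chi\partial_{t^*}^j\Phi$ (using $[\partial_{t^*},\chi]=0$), and the pointwise bound $J^N_\mu(\chi\psi)n^\mu\lesssim\chi^2 J^N_\mu(\psi)n^\mu+|\chi'|^2\psi^2$ separates them into an $\{r\geq R_1\}$-localized piece that directly obeys the required bound, plus an error supported in $\{R_1\leq r\leq R\}$. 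The error is handled by Proposition \ref{Hardy} applied with the pair $(R_1,R_2)$:
$$\int_{\{R_1\leq r\leq R\}}(\partial_{t^*}^j\Phi)^2\lesssim_{R,R_1}\int_{\{r\geq R_1\}}r^{\alpha-2}(\partial_{t^*}^j\Phi)^2\lesssim \int_{\{r\geq R_2\}}r^\alpha J^N_\mu(\partial_{t^*}^j\Phi)n^\mu_{\Sigma_\tau},$$
and this last integral is bounded by $\int_{\{r\geq R'\}}r^\alpha J^N_\mu(\partial_{t^*}^j\Phi)n^\mu_{\Sigma_\tau}$ since $R_2>R'$. The inhomogeneous contribution $\int r^\alpha(D^j(\chi G))^2$ is dominated trivially by the $\sum_{k\leq m-2}\int r^\alpha(D^k G)^2$ already present on the target right-hand side. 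The only substantive new term is $\int r^\alpha(D^j[\Box_{g_K},\chi]\Phi)^2$ for $j\leq m-2$, which is $\lesssim_{R,R_1}\sum_{k=0}^{m-1}\int_{\{R_1\leq r\leq R\}}(D^k\Phi)^2$; for $k\geq 1$ I invoke the inductive hypothesis at level $k$ with radii $(R_1,R_2)$, and for $k=0$ a direct application of Proposition \ref{Hardy} suffices.

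The main obstacle is the bookkeeping of the nested radii: both the Hardy step and the recursive step must strictly relax the inner radius while keeping it above $R'$, which is exactly what having the two intermediate values $R_2<R_1$ accomplishes. Once the radii are in place, no single step is delicate and the induction closes.
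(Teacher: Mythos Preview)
Your proof is correct and follows exactly the route the paper indicates: the paper merely states that Proposition~\ref{ellipticoutside} follows by combining the Hardy inequality (Proposition~\ref{Hardy}) with the global elliptic estimate (Proposition~\ref{elliptic}), without spelling out details, and your cutoff-plus-induction argument is the natural way to make this precise. The bookkeeping with the nested radii $R'<R_2<R_1<R$ is handled cleanly and closes the induction as intended.
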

Near the event horizon, elliptic estimates have been proved to control all the derivatives if we have control the $\partial_{t^*}$ and the $\hat{Y}$ derivatives \cite{DRK}, \cite{DRL}, \cite{LKerr}:
\begin{proposition}\label{elliptichorizon}
Suppose $\Box_{g_K}\Phi=G$. For every $m\geq 1$,
\begin{equation*}
\begin{split}
\int_{\Sigma_{\tau}\cap\{r\leq r^-_Y\}} \left(D^m\Phi\right)^2 \leq C\left(\sum_{j+k\leq m-1}\int_{\Sigma_{\tau}\cap\{r\leq r^-_Y\}} J^N_\mu\left(\partial_{t^*}^j\hat{Y}^k\Phi\right)n^\mu_{\Sigma_\tau}+\sum_{j=0}^{m-2}\int_{\Sigma_{\tau}\cap\{r\leq r^-_Y\}}\left(D^jG\right)^2\right).
\end{split}
\end{equation*}
\end{proposition}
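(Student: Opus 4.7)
We proceed by induction on $m$, paralleling the analogous elliptic estimates near the horizon in \cite{DRK}, \cite{DRL}, \cite{LKerr}. The base case $m=1$ is immediate: since $N$ is uniformly timelike on the compact region $\{r\leq r^-_Y\}$, $J^N_\mu(\Phi)n^\mu_{\Sigma_\tau}$ is a positive definite quadratic form in $D\Phi$, giving the pointwise bound $|D\Phi|^2\leq CJ^N_\mu(\Phi)n^\mu_{\Sigma_\tau}$. The key geometric input for the inductive step is that in $\{r_+\leq r\leq r^-_Y\}$ the red-shift vector field $\hat Y$ is non-vanishing and transverse to the horizon, so $\{\partial_{t^*},\hat Y,\partial_\theta,\partial_{\phi^*}\}$ spans the tangent space with smooth bounded dual coefficients; in particular $\partial_r$ is a bounded linear combination of these fields. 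Consequently $D^m\Phi$ can be expanded, modulo lower-order commutators absorbed by the inductive hypothesis, as a sum of products $\partial_{t^*}^{j_1}\hat Y^{j_2}\partial_\theta^{j_3}\partial_{\phi^*}^{j_4}\Phi$ with $j_1+j_2+j_3+j_4=m$.

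For terms with $j_1+j_2\geq 1$, set $\psi=\partial_{t^*}^{j_1}\hat Y^{j_2}\Phi$. Since $\partial_{t^*}$ is Killing, $\Box_g\psi=\partial_{t^*}^{j_1}\hat Y^{j_2}G+[\Box_g,\hat Y^{j_2}]\partial_{t^*}^{j_1}\Phi$, and the inductive hypothesis applied to $\psi$ at order $m-j_1-j_2$ bounds $\int|D^{m-j_1-j_2}\psi|^2$ by $J^N$-currents of $\partial_{t^*}^{j_1+j'}\hat Y^{j_2+k'}\Phi$ with combined count $\leq m-1$ (matching the statement), together with $L^2$-norms of $D^\bullet\Box_g\psi$; the $G$-piece of the latter matches the statement directly, while the commutator piece is absorbed by applying the inductive hypothesis at a lower order to $\Phi$ itself. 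For the remaining purely angular terms ($j_1=j_2=0$), we use $\Box_g\Phi=G$ as an elliptic identity: in $(t^*,r,\theta,\phi^*)$ coordinates, the principal part of $\Box_g$ contains the angular Laplacian $\frac{1}{r^2}\hat\Delta_{\mathbb{S}^2}$ (up to $O(a)$-perturbations coupling to $\partial_{t^*},\partial_{\phi^*}$), so solving for the angular piece expresses $\hat\Delta_{\mathbb{S}^2}\Phi$ in terms of $r^2G$, second derivatives involving at least one of $\partial_{t^*},\hat Y,\partial_r$, and lower-order terms. Re-expressing $\partial_r$ in the spanning frame, every such right-hand-side term falls into the previous case; iterating and invoking elliptic regularity on the spheres then controls all purely angular derivatives up to order $m$.

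The main obstacle lies in the bookkeeping of this final step. Each commutation of $\partial_\theta$ or $\partial_{\phi^*}$ with $\Box_g$ generates lower-order terms that must be re-expressed through the allowed frame without worsening the derivative count, and the $O(a)$ cross terms in the Kerr wave operator must be handled so as not to spoil the decomposition. The spanning/transversality of $\hat Y$ near the horizon and the explicit structure of $\Box_g$ in horizon-penetrating coordinates are the essential ingredients that make the induction close.
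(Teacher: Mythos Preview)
The paper does not supply its own proof of this proposition; it simply records it as an elliptic estimate near the horizon already established in \cite{DRK}, \cite{DRL}, \cite{LKerr}. Your inductive outline---base case from the timelikeness of $N$, inductive step by spanning the tangent space with $\{\partial_{t^*},\hat Y,\partial_\theta,\partial_{\phi^*}\}$ and using $\Box_{g_K}\Phi=G$ to recover the purely angular second derivatives---is exactly the argument carried out in those references, so your proposal is correct and aligned with the cited proofs.
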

This is useful together with the follow control for the equation commuted with $\hat{Y}$:
\begin{proposition}\label{commYcontrol}
Suppose $\Box_{g_K}\Phi=G$. For every $k\geq 0$,
\begin{equation*}
\begin{split}
&\int_{\Sigma_\tau\cap\{r\leq r^+_Y\}} J^{N}_\mu\left(\hat{Y}^{k}\Phi\right)n^\mu_{\Sigma_\tau} +\int_{\mathcal H(\tau',\tau)} J^{N}_\mu\left(\hat{Y}^{k}\Phi\right)n^\mu_{\Sigma_\tau}+ \iint_{\mathcal R(\tau',\tau)\cap\{r\leq r^-_Y\}}J_\mu^{N}\left(\hat{Y}^{k}\Phi\right)n^\mu_{\Sigma_{t^*}}\\
\leq &C\left(\sum_{j+m\leq k}\int_{\Sigma_{\tau'}\cap\{r\leq r^+_Y\}} J^N_\mu\left(\partial_{t^*}^j\hat{Y}^m\Phi\right)n^\mu_{\Sigma_{\tau'}}+\sum_{j=0}^k\int_{\Sigma_{\tau}\cap\{r\leq r^+_Y\}} J^N_\mu\left(\partial_{t^*}^j\Phi\right)n^\mu_{\Sigma_{\tau}}\right.\\
&\left.+\sum_{j=0}^k\iint_{\mathcal R(\tau',\tau)\cap\{ r\leq \frac{23M}{8}\}} J^N_\mu\left(\partial_{t^*}^j\Phi\right)n^\mu_{\Sigma_{t^*}}+\sum_{j=0}^{k}\iint_{\mathcal R(\tau',\tau)\cap\{r\leq\frac{23M}{8}\}}\left(D^jG\right)^2\right).
\end{split}
\end{equation*}
\end{proposition}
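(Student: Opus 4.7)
The plan is to argue by induction on $k$. The base case $k=0$ follows from Proposition \ref{bddcom2} applied on the subregion $\{r \leq r^+_Y\}$: the $N$-multiplier energy identity on $\mathcal{R}(\tau',\tau) \cap \{r \leq r^+_Y\}$ produces boundary terms on the inner boundary $\{r = r^+_Y\}$, which sits inside $\{r \leq 23M/8\}$, so they are absorbed into the spacetime $J^N(\Phi)$ integral already appearing on the right-hand side.

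For the inductive step, assume the estimate holds for $0,1,\dots,k-1$. Commute the equation with $\hat{Y}^k$ to obtain
$$\Box_{g_K}(\hat{Y}^k\Phi)=\hat{Y}^k G+[\Box_{g_K},\hat{Y}^k]\Phi.$$
A direct computation as in \cite{DRK,DRL,LKerr} shows that the commutator is a differential operator of order $k+1$ with coefficients supported in $\{r\leq r^+_Y\}$, whose principal part is of the form $\kappa_k(r,\theta,\phi^*)\,\hat{Y}^{k+1}\Phi$, with $\kappa_k$ of favorable sign at the horizon (the $k$-fold iterate of the red-shift), while all remaining terms involve at most $k$ powers of $\hat{Y}$. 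We then apply the enhanced $k$-th order red-shift $N$-multiplier energy identity (as constructed in \cite{DRK,DRL}) to $\hat{Y}^k\Phi$: this produces, in the bulk on the left, a positive contribution dominating $(\hat{Y}^{k+1}\Phi)^2$ near the horizon, which directly absorbs the principal commutator $\kappa_k\hat{Y}^{k+1}\Phi$. The remaining lower-order commutator pieces split into two groups. Terms of the form $\partial_{t^*}^j\hat{Y}^m\Phi$ with $m\leq k-1$ are controlled by the inductive hypothesis, using that $\partial_{t^*}$ commutes with $\Box_{g_K}$. Terms involving tangential (angular and $\partial_r$) derivatives are traded for $\partial_{t^*}$ and $\hat{Y}$ derivatives via the elliptic estimate of Proposition \ref{elliptichorizon} on $\Sigma_\tau$, which is precisely the mechanism producing the current-slice term $\sum_{j\leq k}\int_{\Sigma_\tau\cap\{r\leq r^+_Y\}}J^N_\mu(\partial_{t^*}^j\Phi)\,n^\mu$ on the right-hand side; any residual bulk contributions from these tangential terms fit into the spacetime $\partial_{t^*}^j\Phi$ integral on $\{r\leq 23M/8\}$. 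Since the commutator is supported in $\{r\leq r^+_Y\}\subset\{r\leq 23M/8\}$, disjoint from the trapped set $\{|r-3M|\leq M/8\}$, we are in the hypothesis of Proposition \ref{bddcom2} and the inhomogeneity $\hat{Y}^k G$ produces the $\iint(D^j G)^2$ terms on the right without any loss of derivatives.

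The main obstacle is the handling of the principal commutator $\kappa_k\hat{Y}^{k+1}\Phi$: treating it as a generic source via Proposition \ref{bddcom2} would put on the right-hand side a quantity of the same order as the left-hand side with no small parameter to absorb it, and the induction would not close. The resolution is exactly the positivity of $\kappa_k$ at the horizon, which allows the enhanced red-shift multiplier to generate a good-sign bulk contribution dominating $(\hat{Y}^{k+1}\Phi)^2$; once this step is in place, the bookkeeping of tangential and lower-order $\hat{Y}$-commutator terms via Proposition \ref{elliptichorizon} and the previous cases of the induction is routine, and summing the resulting estimates yields the claim.
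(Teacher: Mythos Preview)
Your proposal is correct and follows the standard Dafermos--Rodnianski red-shift commutation argument. The paper itself does not give a proof of this proposition here; it is stated as one of several estimates imported from \cite{LKerr} (see the opening of Section~\ref{sectionelliptic}), and the argument you outline---induction on $k$, exploiting the favorable sign of the principal commutator term $\kappa_k\hat{Y}^{k+1}\Phi$ near the horizon to absorb it via the enhanced red-shift multiplier, and handling lower-order and tangential terms via Proposition~\ref{elliptichorizon} and the inductive hypothesis---is precisely the mechanism used in \cite{DRK,DRL,LKerr}.
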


\section{Pointwise Estimates}\label{sectionpointwise}
We prove pointwise estimates using Sobolev embedding. We will have different estimates in the region $\{r\geq\frac{t^*}{4}\}$ and $\{r\leq\frac{t^*}{4}\}$. We first consider $\{r\geq\frac{t^*}{4}\}$. For this region, we will prove five different pointwise estimates. First, we prove a boundedness result for $D\Phi$ (Proposition \ref{SE}) using only standard Sobolev Embedding and the elliptic estimates in Proposition \ref{elliptic}. Then we prove decay estimates of $r^{-1}$ for $D^\ell\Phi$ using the $r$ weight in the vector field commutator $\tilde{\Omega}$ and the non-degenerate energy (Proposition \ref{r}). It is crucial that this depends only on the non-degenerate energy but not the conformal energy because we will not be able to prove boundedness of the conformal energy (which already is the case in the \emph{linear} situation, see \cite{DRL}, \cite{AB}, \cite{LKerr}). Notice that Proposition \ref{SE} does not follow from Proposition \ref{r} because the latter requires an extra derivative. This save in derivatives is strictly speaking not necessary for the bootstrap if we have instead assumed an extra derivative of regularity in the initial data. Thirdly, using similar ideas, we will prove the decay of $r^{-1}$ for $\Phi$ using $\tilde{\Omega}$ and the conformal energy (Proposition \ref{rnoderivatives}). Then we prove an extra decay rate of $D\Phi$ using the conformal energy. For any derivatives, we will have an extra decay in the $u$ variable, which degenerates in the wave zone (Proposition \ref{ru}). For the good derivatives, we will have an extra decay in the $v$ variable (Proposition \ref{rv}). This decay rate will be crucial in capturing the good derivative in the null condition. 
\begin{proposition}\label{SE}
For $r\geq\frac{t^*}{4}$\, we have 
\begin{equation*}
\begin{split}
|D\Phi|^2
\leq& C\left(\sum_{k=0}^2\int_{\Sigma_\tau} J^{N}_\mu\left(\partial_{t^*}^k\Phi\right) n^\mu_{\Sigma_\tau}+\sum_{k=0}^1\int_{\Sigma_\tau} \left(D^k\Box_{g_K}\Phi\right)^2\right).
\end{split}
\end{equation*}
\end{proposition}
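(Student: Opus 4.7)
The plan is to combine a standard Sobolev embedding on unit coordinate balls in $\Sigma_\tau$ with the elliptic estimates of Proposition \ref{elliptic}. Since the conclusion only asks for pointwise boundedness of $|D\Phi|$ (with no $r$-decay and no weighted norm), no rescaling and no $\tilde\Omega$-commutator is needed; the ordinary $H^2 \hookrightarrow L^\infty$ embedding in three spatial dimensions suffices. This is exactly the easiest of the pointwise estimates, established to get a derivative-cheap baseline, and the more refined decaying estimates (Propositions \ref{r}, \ref{rnoderivatives}, \ref{ru}, \ref{rv}) will later be obtained by adding angular-momentum or conformal-energy weights.

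First I would fix an arbitrary point $p \in \Sigma_\tau$ with $r(p)\geq t^*/4$, and let $B(p)\subset \Sigma_\tau$ be a unit coordinate ball around $p$. Away from the event horizon the induced Riemannian metric on $\Sigma_\tau$ has uniformly bounded geometry at the unit scale (by the asymptotic flatness of Kerr, combined with compactness for the subregion where $r$ is bounded), so the Sobolev embedding $H^2(B(p))\hookrightarrow L^\infty(B(p))$ holds with a uniform constant, giving
\begin{equation*}
|D\Phi(p)|^2 \leq C\left(\|D\Phi\|_{L^2(B(p))}^2+\|D^2\Phi\|_{L^2(B(p))}^2+\|D^3\Phi\|_{L^2(B(p))}^2\right).
\end{equation*}

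Second, the $L^2$ norms on the right are extended to integrals over $\Sigma_\tau$. The first term is controlled immediately by $\int_{\Sigma_\tau} J^N_\mu(\Phi)n^\mu_{\Sigma_\tau}$, since $N$ is future-directed timelike and uniformly comparable to $\partial_{t^*}$ in the region $\{r\geq t^*/4\}$, so $J^N$ bounds all first derivatives in $L^2$. For the $D^2\Phi$ and $D^3\Phi$ terms, I invoke Proposition \ref{elliptic} with $\alpha=0$ at levels $m=2$ and $m=3$ (using $G=\Box_{g_K}\Phi$), which yields
\begin{equation*}
\int_{\Sigma_\tau}(D^2\Phi)^2 \leq C\left(\sum_{j=0}^{1}\int_{\Sigma_\tau}J^N_\mu(\partial_{t^*}^j\Phi)n^\mu_{\Sigma_\tau}+\int_{\Sigma_\tau}G^2\right),
\end{equation*}
\begin{equation*}
\int_{\Sigma_\tau}(D^3\Phi)^2 \leq C\left(\sum_{j=0}^{2}\int_{\Sigma_\tau}J^N_\mu(\partial_{t^*}^j\Phi)n^\mu_{\Sigma_\tau}+\sum_{j=0}^{1}\int_{\Sigma_\tau}(D^j G)^2\right).
\end{equation*}
Adding these three contributions and absorbing constants gives precisely the claimed bound. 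One harmless technicality is that Proposition \ref{elliptic} is stated in $\{r\geq r^-_Y\}$; in the range of $\tau$ where $t^*/4<r^-_Y$, the missing portion near the horizon is a fixed compact set on which the standard elliptic theory for $\Box_{g_K}$ (combined with Propositions \ref{elliptichorizon} and \ref{commYcontrol}) controls $D^2\Phi, D^3\Phi$ in $L^2$ by the same right-hand side, since the commutators $\hat Y$ and $\partial_{t^*}$ together with $G$ suffice.

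There is no serious obstacle: the main things to verify are (a) uniformity of the Sobolev constant on unit balls in the asymptotic Kerr region, which is immediate from asymptotic flatness; and (b) that the elliptic estimate can be applied globally on $\Sigma_\tau$ at the $m=2,3$ level, which is precisely Proposition \ref{elliptic} (together with horizon estimates for the compact near-horizon piece). In particular the derivative count $\sum_{k\leq 2}$ on the energy side and $\sum_{k\leq 1}$ on the inhomogeneity side is dictated by taking $m=3$ in the elliptic estimate, and matches the statement exactly.
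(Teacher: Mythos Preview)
Your proposal is correct and follows essentially the same approach as the paper: standard three-dimensional Sobolev embedding to bound $|D\Phi|^2$ by $\sum_{k=1}^3\int_{\Sigma_\tau\cap\{r\geq r^-_Y\}}(D^k\Phi)^2$, followed by Proposition~\ref{elliptic} at levels $m=2,3$ to reduce to the stated energies and inhomogeneities. One small remark: your near-horizon contingency invoking Propositions~\ref{elliptichorizon} and~\ref{commYcontrol} would introduce $\hat Y$-commutators that are not present in the claimed right-hand side; the paper simply restricts to $\{r\geq r^-_Y\}$, which is harmless since in the bootstrap one works with $\tau$ large enough that $t^*/4\geq r^-_Y$.
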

\begin{proof}
By standard Sobolev Embedding in three dimensions and Proposition \ref{elliptic},
\begin{equation*}
\begin{split}
|D\Phi|^2
\leq& C\sum_{k=1}^3\int_{\Sigma_\tau\cap\{r\geq r^-_Y\}} (D^k\Phi)^2\\
\leq& C\left(\sum_{k=0}^2\int_{\Sigma_\tau} J^{N}_\mu\left(\partial_{t^*}^k\Phi\right) n^\mu_{\Sigma_\tau}+\sum_{k=0}^1\int_{\Sigma_\tau} \left(D^k\Box_{g_K}\Phi\right)^2\right).
\end{split}
\end{equation*}
\end{proof}
We then prove the decay rate of $r^{-1}$ for $D^\ell\Phi$. The idea here is standard: Making use of the commutator $\tilde{\Omega}$, we use the Sobolev Embedding on the 2-sphere and then integrate along the $r$ direction. 
\begin{proposition}\label{r}
For $r\geq\frac{t^*}{4}$ and $\ell\geq 1$, we have 
\begin{equation*}
\begin{split}
&|D^\ell\Phi|^2\\
\leq& Cr^{-2}\left(\sum_{m=0}^{\ell}\sum_{k=0}^2\int_{\Sigma_\tau} J^{N}_\mu\left(\partial_{t^*}^m\tilde{\Omega}^k\Phi\right) n^\mu_{\Sigma_\tau}+\sum_{j=0}^{\ell-1}\sum_{k=0}^2\int_{\Sigma_\tau\cap\{u'\sim u\}\cap\{r\geq\frac{\tau}{2}\}} \left(D^j\Box_{g_K}\left(\tilde{\Omega}^k\Phi\right)\right)^2\right).
\end{split}
\end{equation*}
\begin{proof}
\begin{equation*}
\begin{split}
&r^{2}|D^\ell\Phi|^2\\\leq &C\int_{\mathbb S^2}\left(\left(D^\ell\Phi\right)^2+\left(\tilde{\Omega}D^\ell\Phi\right)^2+\left(\tilde{\Omega}^2 D^\ell\Phi\right)^2\right)r^{2} dA\\
\leq&C\left(\sum_{k=0}^2\int_{\mathbb S^2(\tilde{r})}\left(\tilde{\Omega}^kD^\ell\Phi\right)^2\tilde{r}^{2} dA+\int^{\tilde{r}}_r\int_{\mathbb S^2(r')}|\partial_r\tilde{\Omega}^kD^\ell\Phi\tilde{\Omega}^kD^\ell\Phi|(r')^{2}+\left(\tilde{\Omega}^kD^\ell\Phi\right)^2r' dAdr'\right).
\end{split}
\end{equation*}
Noticing that $|[D,\tilde{\Omega}]\Phi|\leq C|D\Phi|$, we have
\begin{equation*}
\begin{split}
&r^{2}|D^\ell\Phi|^2\\
\leq&C\left(\sum_{k=0}^2\int_{\mathbb S^2(\tilde{r})}\left(\tilde{\Omega}^kD^\ell\Phi\right)^2\tilde{r}^{2} dA+\int^{\tilde{r}}_r\int_{\mathbb S^2(r')}|\partial_r D^\ell\tilde{\Omega}^k\Phi D^\ell\tilde{\Omega}^k\Phi|(r')^{2}+\left(\tilde{\Omega}^kD^\ell\Phi\right)^2r' dAdr'\right)\\
\leq&C\left(\sum_{k=0}^2\int_{\mathbb S^2(\tilde{r})}\left(D^\ell\tilde{\Omega}^k\Phi\right)^2\tilde{r}^{2} dA+\int^{\tilde{r}}_r\int_{\mathbb S^2(r')}|D^{\ell+1}\tilde{\Omega}^k\Phi D^\ell\tilde{\Omega}^k\Phi|(r')^{2}+\left(D^\ell\tilde{\Omega}^k\Phi\right)^2r' dAdr'\right)\\
\leq&C\left(\sum_{k=0}^2\int_{\mathbb S^2(\tilde{r})}\left(D^\ell\tilde{\Omega}^k\Phi\right)^2\tilde{r}^{2} dA+\int^{\tilde{r}}_r\int_{\mathbb S^2(r')}\left(D^{\ell+1}\tilde{\Omega}^k\Phi \right)^2(r')^{2}+\left(D^\ell\tilde{\Omega}^k\Phi\right)^2(r')^2 dAdr'\right).\\
\end{split}
\end{equation*}
Take $r\leq\tilde{r}\leq r+1$. By Proposition \ref{elliptic},
\begin{equation*}
\begin{split}
&\sum_{k=0}^2\int^{r+1}_r\int_{\mathbb S^2(r')}\left(D^{\ell+1}\tilde{\Omega}^k\Phi \right)^2(r')^{2}+\left(D^\ell\tilde{\Omega}^k\Phi\right)^2(r')^2 dAdr'\\
\leq&C\left(\sum_{m=0}^{\ell}\sum_{k=0}^2\int_{\Sigma_\tau}J^N_\mu\left(\partial_{t^*}^m\Omega^k\Phi\right)n^\mu_{\Sigma_\tau}+\sum_{j=0}^{\ell-1}\sum_{k=0}^2\int_{\Sigma_\tau\cap\{u'\sim u\}\cap\{r\geq\frac{\tau}{2}\}} \left(D^j\Box_{g_K}\left(\tilde{\Omega}^k\Phi\right)\right)^2\right).
\end{split}
\end{equation*}
By pigeonholing on this we also get that for some $\tilde{r}$,
\begin{equation*}
\begin{split}
&\sum_{k=0}^2\int_{\mathbb S^2(\tilde{r})}\left(D^\ell\Omega^k\Phi\right)^2\tilde{r}^{2} dA\\
\leq&C\left(\sum_{m=0}^{\ell}\sum_{k=0}^2\int_{\Sigma_\tau}J^N_\mu\left(\partial_{t^*}^m\Omega^k\Phi\right)n^\mu_{\Sigma_\tau}+\sum_{j=0}^{\ell-1}\sum_{k=0}^2\int_{\Sigma_\tau\cap\{u'\sim u\}\cap\{r\geq\frac{\tau}{2}\}} \left(D^j\Box_{g_K}\left(\tilde{\Omega}^k\Phi\right)\right)^2\right).
\end{split}
\end{equation*}

\end{proof}

\end{proposition}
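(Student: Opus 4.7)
The plan is to combine a Sobolev embedding on each 2-sphere $S^2_r$ with a fundamental-theorem-of-calculus argument in $r$, and to close using the elliptic estimate Proposition \ref{elliptic}. In the regime $r \geq t^*/4$ with $t^*$ sufficiently large, the cutoff in $\tilde{\Omega}$ is inactive, so $\tilde{\Omega} = \Omega$ spans the tangent bundle of each sphere $S^2_r$, and the standard embedding $H^2(S^2) \hookrightarrow L^\infty(S^2)$ on the unit sphere, rescaled to radius $r$, yields
$$r^2\,|D^\ell\Phi|^2 \leq C\sum_{k=0}^{2}\int_{S^2_r}\bigl(\tilde{\Omega}^k D^\ell\Phi\bigr)^2 \, dA_r.$$
This is exactly where the $r^{-2}$ prefactor in the target bound comes from.

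Next I would convert the fixed-sphere integral into a volume integral by the fundamental theorem of calculus in $r$: for any $\tilde r \in [r,r+1]$,
$$\int_{S^2_r} g^2\,dA_r \leq \int_{S^2_{\tilde r}} g^2\,dA_{\tilde r} + C\int_r^{\tilde r}\int_{S^2_{r'}}\bigl(|g\,\partial_r g| + g^2\bigr)\,dA_{r'}\,dr',$$
applied with $g = \tilde{\Omega}^k D^\ell\Phi$. Cauchy--Schwarz on the cross term, together with the commutator bound $|[D,\tilde{\Omega}]\Phi| \leq C|D\Phi|$ (so that $\tilde{\Omega}$ and $D$ can be interchanged modulo absorbable lower-order terms), gives
$$r^2\,|D^\ell\Phi|^2 \leq C\sum_{k=0}^{2}\left(\int_{S^2_{\tilde r}}\bigl(D^\ell\tilde{\Omega}^k\Phi\bigr)^2 dA_{\tilde r} + \int_r^{r+1}\!\!\int_{S^2_{r'}}\bigl((D^{\ell+1}\tilde{\Omega}^k\Phi)^2 + (D^\ell\tilde{\Omega}^k\Phi)^2\bigr)(r')^2\,dA\,dr'\right).$$
A pigeonhole choice of $\tilde r \in [r,r+1]$ eliminates the boundary sphere integral in favor of the slab integral, and the slab $\{r' \in [r,r+1]\}\cap \Sigma_\tau$ sits inside $\{u'\sim u\}\cap\{r \geq \tau/2\}$ given the assumption $r \geq t^*/4$ (for $t^*$ large).

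Finally, I would invoke Proposition \ref{elliptic} with $\alpha = 0$ applied to $\tilde{\Omega}^k\Phi$, which satisfies $\Box_{g_K}(\tilde{\Omega}^k\Phi) = \tilde{\Omega}^k\Box_{g_K}\Phi$ up to commutator terms of lower order that are absorbable. This trades the $\ell+1$ $D$-derivatives in $L^2$ on the slice for the $J^N$-energies of $\partial_{t^*}^m \tilde{\Omega}^k\Phi$ with $m \leq \ell$ and for $L^2$ norms of $D^j\Box_{g_K}(\tilde{\Omega}^k\Phi)$ with $j \leq \ell-1$, which is precisely the right-hand side of the claimed estimate. The only nontrivial obstacle is the commutator bookkeeping between $\tilde{\Omega}$, $D$, and $\Box_{g_K}$: on Kerr with $a \ll M$ the commutator $[\Box_{g_K},\Omega]$ is small (standard from \cite{DRL}, \cite{LKerr}) and the spatial commutator $[D,\tilde{\Omega}]$ is zeroth order in $\tilde{\Omega}$ and first order in $D$, so every commutator contribution fits into the right-hand side without loss.
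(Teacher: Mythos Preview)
Your proposal is correct and follows essentially the same route as the paper: Sobolev embedding on spheres for the $r^{-2}$ gain, the fundamental theorem of calculus in $r$ over a unit slab, the commutator bound $|[D,\tilde\Omega]\Phi|\leq C|D\Phi|$ to interchange $D$ and $\tilde\Omega$, a pigeonhole choice of $\tilde r\in[r,r+1]$ for the boundary sphere, and Proposition~\ref{elliptic} to close. One minor remark: you need not invoke $[\Box_{g_K},\Omega]$ at the end, since the target right-hand side already carries $\Box_{g_K}(\tilde\Omega^k\Phi)$ rather than $\tilde\Omega^k\Box_{g_K}\Phi$, so no further commutation with the wave operator is required.
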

We would also like to prove the pointwise decay in $r$ for $\Phi$. However, we need to use the conformal energy as well as the non-degenerate energy. We note that only the decay in $r$ will be used in the bootstrap argument, the decay in $u$ is proved to achieved the decay rate asserted in Theorem 1.
\begin{proposition}\label{rnoderivatives}
Consider $\Box_{g_K}\Phi=G$. For $r\geq\frac{t^*}{4}$, we have 
\begin{equation*}
\begin{split}
|\Phi|^2
\leq&Cr^{-2}(1+|u|)^{-1}\left(\sum_{k=0}^2\int_{\Sigma_\tau}J^{Z+N,w^Z}_\mu\left(\Omega^k\Phi\right)n^\mu_{\Sigma_\tau}+C\tau^2\sum_{k=0}^2\int_{\Sigma_\tau\cap\{r\leq r^-_Y\}}J^N_\mu\left(\Omega^k\Phi\right)n^\mu_{\Sigma_\tau}\right).
\end{split}
\end{equation*}
\end{proposition}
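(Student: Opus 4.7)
The plan is to combine the $\Omega$-based Sobolev embedding on spheres used in Proposition \ref{r} with a weighted radial fundamental theorem of calculus that pulls in the factor $1+|u|$ using the weights present in the conformal energy. The strategy is a $u$-weighted variant of the trace/Sobolev argument for Proposition \ref{r}.

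Applying Sobolev embedding on $\mathbb{S}^2(r_0)$ to $\Omega^k\Phi$ for $k\leq 2$ yields
$$r_0^2|\Phi|^2(r_0,\theta_0,\phi_0)\leq C\sum_{k=0}^2\int_{\mathbb{S}^2(r_0)}(\Omega^k\Phi)^2r_0^2\,dA,$$
so, writing $\phi_k=\Omega^k\Phi$ and $g_k(r)=\int_{\mathbb{S}^2}(r\phi_k)^2(r,\theta,\phi)\,dA$, it suffices to show for each $k\leq 2$ that $(1+|u(r_0)|)g_k(r_0)$ is bounded by the bracketed quantity on the right-hand side of Proposition \ref{rnoderivatives}. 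Set $\delta=(1+|u(r_0)|)/C_0$ with $C_0$ large enough that $1+|u(r)|\sim 1+|u(r_0)|$ throughout $[r_0,r_0+\delta]$ (this uses $|\partial_r u|\leq C$; one also verifies $[r_0,r_0+\delta]\subset\{r\geq r^-_Y\}$ for $\tau$ large and $r_0\geq\tau/4$). By pigeonholing, choose $\tilde r\in[r_0,r_0+\delta]$ with $g_k(\tilde r)\leq\frac{1}{\delta}\int_{r_0}^{r_0+\delta}g_k(r)\,dr$. The fundamental theorem of calculus applied to $(r\phi_k)^2$, followed by Cauchy--Schwarz and AM--GM, gives
\begin{equation*}
(1+|u(r_0)|)g_k(r_0)\leq C\int_{r_0}^{r_0+\delta}g_k(r)\,dr+C\int_{r_0}^{r_0+\delta}\int_{\mathbb{S}^2}(1+u^2)\bigl(\partial_r(r\phi_k)\bigr)^2\,dA\,dr.
\end{equation*}

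The first integral equals $\int_{\{r_0\leq r\leq r_0+\delta\}}\phi_k^2\,dV_{\Sigma_\tau}$ up to bounded factors and is controlled by $\int\frac{u^2+v^2}{r^2}\phi_k^2$, since $\frac{u^2+v^2}{r^2}\geq c>0$ on $\{r\geq\tau/4\}$. For the second integral, bound $(\partial_r(r\phi_k))^2\leq 2\phi_k^2+2r^2(\partial_r\phi_k)^2$. The $(1+u^2)\phi_k^2$ piece is handled as before, using that $\int u^2\phi_k^2\,dA\,dr=\int\frac{u^2}{r^2}\phi_k^2\,dV_{\Sigma_\tau}$ is a term in the conformal energy. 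For $r^2(\partial_r\phi_k)^2$, the decomposition of $\partial_r$ as a bounded combination of $L$ and $\underline L$ (valid in $\{r\geq\tau/4\}$) gives $r^2(\partial_r\phi_k)^2\leq Cr^2[(L\phi_k)^2+(\underline L\phi_k)^2]$, hence $\int r^2(1+u^2)[(L\phi_k)^2+(\underline L\phi_k)^2]\,dA\,dr=\int(1+u^2)[(L\phi_k)^2+(\underline L\phi_k)^2]\,dV_{\Sigma_\tau}$. The unweighted part is bounded by $C\int J^N_\mu(\phi_k)n^\mu_{\Sigma_\tau}$, absorbed into the $J^{Z+N,w^Z}$ current. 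The $u^2(\underline L\phi_k)^2$ piece is a direct term in the conformal energy. Crucially, for $u^2(L\phi_k)^2$, use the pointwise inequality $|u|\leq v$ valid on $\Sigma_\tau\cap\{r\geq\tau/4\}$ (since $t_S\geq 0$ and $r_S^*\geq 0$ there) to bound $u^2(L\phi_k)^2\leq v^2(L\phi_k)^2$, again a term in the conformal energy. Summing over $k\leq 2$ and invoking Proposition \ref{Zlowerbound}, which supplies the $C\tau^2\int_{r\leq r^-_Y}J^N_\mu n^\mu$ correction, completes the argument.

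The main obstacle is the $u^2(L\phi_k)^2$ contribution produced by the null-frame decomposition of $\partial_r$: the conformal energy controls $v^2(L\phi_k)^2$ but not $u^2(L\phi_k)^2$. The elementary inequality $|u|\leq v$ on $\Sigma_\tau\cap\{r\geq\tau/4\}$ is exactly what closes the estimate; beyond this the argument is a weighted variant of the trace/Sobolev estimate used in Proposition \ref{r}.
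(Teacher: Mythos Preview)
Your argument is correct and is a clean variant of the paper's proof. Both proofs start from the same spherical Sobolev bound and control $r^2|\Phi|^2$ by a boundary sphere term plus a radial integral, then feed in the weights supplied by Proposition~\ref{Zlowerbound}. The difference is in the organization: the paper splits into three cases $|u|\le 1$, $u\ge 1$, $u\le -1$ and in each case selects the auxiliary radius $\tilde r$ in a different range (a unit interval, a fixed bounded interval $[R,R+1]$, and an interval of length $\sim|u|$ near $[-2u,-3u]$, respectively), then bounds the cross term $|\phi_k\,D\phi_k|$ directly by Cauchy--Schwarz against the conformal-energy weights. You instead use a single interval $[r_0,r_0+\delta]$ with $\delta\sim 1+|u(r_0)|$, which makes the pigeonhole and the FTC step uniform; the price is that you must decompose $\partial_r$ in the null frame and then invoke the pointwise inequality $|u|\le v$ to convert the problematic $u^2(L\phi_k)^2$ into the controlled $v^2(L\phi_k)^2$. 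This last observation is in fact also implicit in the paper's case analysis (it is what makes the $u\ge1$ case close), so the two arguments rest on the same mechanism; your version just packages it more uniformly and avoids the case split. One small caveat: the inequality $r_S^*\ge 0$ you need for $|u|\le v$ can fail if $r_S<3M$, but in the region $r\ge\tau/4$ this only happens for bounded $\tau$, where the estimate is trivial.
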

\begin{proof}
Following the proof of Proposition \ref{r}, we have
\begin{equation*}
\begin{split}
r^2|\Phi|^2
\leq&C\left(\sum_{k=0}^2\int_{\mathbb S^2(\tilde{r})}\left(\tilde{\Omega}^k\Phi\right)^2\tilde{r}^{2} dA+|\int^{\tilde{r}}_r\int_{\mathbb S^2(r')}|\tilde{\Omega}^k\Phi D\tilde{\Omega}^k\Phi| (r')^{2}+\left(\tilde{\Omega}^k\Phi\right)^2r' dAdr'|\right).\\
\end{split}
\end{equation*}
We will treat separately the cases $|u|\leq 1$, $u\geq 1 $, $u\leq 1$. For $|u|\leq 1$,
take $r\leq\tilde{r}\leq r+1$. By Proposition \ref{Zlowerbound},
\begin{equation*}
\begin{split}
&\sum_{k=0}^2\int^{r+1}_r\int_{\mathbb S^2(r')}\left(D\tilde{\Omega}^k\Phi \right)^2(r')^{2}+\left(\tilde{\Omega}^k\Phi\right)^2(r')^2 dAdr'\\
\leq&C\left(\sum_{k=0}^2\int_{\Sigma_\tau}J^{Z+N,w^Z}_\mu\left(\Omega^k\Phi\right)n^\mu_{\Sigma_\tau}+C\tau^2\sum_{k=0}^2\int_{\Sigma_\tau\cap\{r\leq r^-_Y\}}J^N_\mu\left(\Omega^k\Phi\right)n^\mu_{\Sigma_\tau}\right).
\end{split}
\end{equation*}
By pigeonholing on this we also get that for some $\tilde{r}$,
\begin{equation*}
\begin{split}
&\sum_{k=0}^2\int_{\mathbb S^2(\tilde{r})}\left(\Omega^k\Phi\right)^2\tilde{r}^{2} dA\\
\leq&C\left(\sum_{k=0}^2\int_{\Sigma_\tau}J^{Z+N,w^Z}_\mu\left(\Omega^k\Phi\right)n^\mu_{\Sigma_\tau}+C\tau^2\sum_{k=0}^2\int_{\Sigma_\tau\cap\{r\leq r^-_Y\}}J^N_\mu\left(\Omega^k\Phi\right)n^\mu_{\Sigma_\tau}\right).
\end{split}
\end{equation*}
For $u\geq 1$, pick a fixed $R$ and let $\tilde{r}\in [R,R+1]$. Then by a pigeonhole argument, there is some $\tilde{r}$ such that 
$$\sum_{k=0}^2\int_{\mathbb S^2(\tilde{r})}\left(\tilde{\Omega}^k\Phi\right)^2\tilde{r}^{2} dA\leq Cu^{-2}\left(\sum_{k=0}^2\int_{\Sigma_\tau}J^{Z+N,w^Z}_\mu\left(\Omega^k\Phi\right)n^\mu_{\Sigma_\tau}+C\tau^2\sum_{k=0}^2\int_{\Sigma_\tau\cap\{r\leq r^-_Y\}}J^N_\mu\left(\Omega^k\Phi\right)n^\mu_{\Sigma_\tau}\right).$$
By Proposition \ref{Zlowerbound},
\begin{equation*}
\begin{split}
&\sum_{k=0}^2\int^{r}_R\int_{\mathbb S^2(r')}|\tilde{\Omega}^k\Phi D\tilde{\Omega}^k\Phi|(r')^2+\left(\tilde{\Omega}^k\Phi\right)^2r' dAdr'\\
\leq&C\left(rt^*u^{-2}+ru^{-2}\right)\left(\sum_{k=0}^2\int_{\Sigma_\tau}J^{Z+N,w^Z}_\mu\left(\Omega^k\Phi\right)n^\mu_{\Sigma_\tau}+C\tau^2\sum_{k=0}^2\int_{\Sigma_\tau\cap\{r\leq r^-_Y\}}J^N_\mu\left(\Omega^k\Phi\right)n^\mu_{\Sigma_\tau}\right).
\end{split}
\end{equation*}
Using the fact that $t^*\leq Cu$ in this region, we have the desired bound in this region.

Finally, for $u\leq 1$, pick $\tilde{r}\in [-2u,-3u]$. Then by a pigeonhole argument, there is an $\tilde{r}$ such that 
$$\sum_{k=0}^2\int_{\mathbb S^2(\tilde{r})}\left(\tilde{\Omega}^k\Phi\right)^2\tilde{r}^{2} dA\leq Cu^{-2}\left(\sum_{k=0}^2\int_{\Sigma_\tau}J^{Z+N,w^Z}_\mu\left(\Omega^k\Phi\right)n^\mu_{\Sigma_\tau}+C\tau^2\sum_{k=0}^2\int_{\Sigma_\tau\cap\{r\leq r^-_Y\}}J^N_\mu\left(\Omega^k\Phi\right)n^\mu_{\Sigma_\tau}\right).$$
By Proposition \ref{Zlowerbound},
\begin{equation*}
\begin{split}
&\sum_{k=0}^2\int^{\infty}_r\int_{\mathbb S^2(r')}|\tilde{\Omega}^k\Phi D\tilde{\Omega}^k\Phi|(r')^2+\left(\tilde{\Omega}^k\Phi\right)^2r' dAdr'\\
\leq&C|u|^{-1}\left(\sum_{k=0}^2\int_{\Sigma_\tau}J^{Z+N,w^Z}_\mu\left(\Omega^k\Phi\right)n^\mu_{\Sigma_\tau}+C\tau^2\sum_{k=0}^2\int_{\Sigma_\tau\cap\{r\leq r^-_Y\}}J^N_\mu\left(\Omega^k\Phi\right)n^\mu_{\Sigma_\tau}\right),
\end{split}
\end{equation*}
which gives the desired bound.
\end{proof}

We would like to use the conformal energy and elliptic estimates to prove decay in the $u$ variable. However, we need to be careful when applying the localized version of the elliptic estimates. In particular, we need to perform a dyadic decomposition in the variable $u$. We remark that we can prove this for any number of derivatives by iterating the cutoff procedure in the proof of the following Proposition. However, as this will not be necessary in the sequel, we will be content with the following Proposition:

\begin{proposition}\label{energyu}
Suppose $\Box_{g_K}\Phi=G$. Let $r\geq\frac{t^*}{4}$, $\ell=1$ or $2$ and $u_0$ be the $u$- coordinate corresponding to the two sphere $(\tau,r_0)$.
\begin{equation*}
\begin{split}
&\int_{r_0}^{r_0+1}\int_{\mathbb S^2(r')} \left(D^\ell\Phi\right)^2 (r')^2 dA dr'\\
 \leq& C\left(1+|u_0|\right)^{-2}\sum_{j=0}^{\ell-1}\left(\int_{\Sigma_\tau}J^{Z+CN}_\mu\left(\partial^j_{t^*}\Phi\right) n^\mu_{\Sigma_\tau}+C\tau^2\int_{\Sigma_\tau\cap\{r\leq r^-_Y\}}J^{N}_\mu\left(\partial^j_{t^*}\Phi\right) n^\mu_{\Sigma_\tau}\right)\\
&+C\sum_{j=0}^{\ell-2}\int_{\Sigma_\tau\cap\{u\sim u_0\}\cap\{r\geq\frac{\tau}{2}\}} \left(D^j G\right)^2.
\end{split}
\end{equation*}
\end{proposition}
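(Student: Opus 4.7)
The plan is to prove $\ell=1$ directly from the conformal-energy lower bound (Proposition \ref{Zlowerbound}), and then obtain $\ell=2$ from the elliptic estimate (Proposition \ref{elliptic}) applied to a cutoff of $\Phi$, closing the resulting lower-order commutator terms with the $\ell=1$ bound just obtained. For the case $\ell=1$, on the slab $r\in[r_0,r_0+1]$ one has $|u-u_0|\leq C$ and, since $v\sim \max(t^*,r)$ in $\{r\geq t^*/4\}$, also $v\geq c(1+|u_0|)$, so that $u^2+v^2\geq c(1+|u_0|)^2$; when $|u_0|\geq 2$ the same lower bound holds for $u^2$. Writing
$$(\underline{L}\Phi)^2=u^{-2}\,u^2(\underline{L}\Phi)^2,\quad (L\Phi)^2=v^{-2}\,v^2(L\Phi)^2,\quad |\nabb\Phi|^2=(u^2+v^2)^{-1}(u^2+v^2)|\nabb\Phi|^2,$$
and bounding each inverse weight by $C(1+|u_0|)^{-2}$ on the slab yields the pointwise inequality $(D\Phi)^2\leq C(1+|u_0|)^{-2}\bigl[u^2(\underline{L}\Phi)^2+v^2(L\Phi)^2+(u^2+v^2)|\nabb\Phi|^2\bigr]$; integrating against $r^2\,dA\,dr$ and invoking Proposition \ref{Zlowerbound} gives the desired bound. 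In the complementary range $|u_0|\leq 2$ the prefactor is bounded below, and I would instead use $\int_{\text{slab}}r^2(D\Phi)^2 \leq C\int J^N_\mu(\Phi)n^\mu_{\Sigma_\tau}\leq C\int J^{Z+CN,w^Z}_\mu(\Phi)n^\mu_{\Sigma_\tau}$, which holds because $J^Z_\mu n^\mu\geq 0$ on $\{r\geq r^-_Y\}$ and the loss near the horizon is absorbed by the $CN$ piece together with the $\tau^2$-weighted horizon term already on the right.

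For $\ell=2$, I would take a smooth cutoff $\zeta=\zeta(r)$ with $\zeta\equiv 1$ on $[r_0,r_0+1]$ and supported in $[r_0-1,r_0+2]$, and set $\psi=\zeta\Phi$, so that $\Box_{g_K}\psi=\zeta G+[\Box_{g_K},\zeta]\Phi$ with the commutator $[\Box_{g_K},\zeta]\Phi=2g^{\mu\nu}\nabla_\mu\zeta\,\nabla_\nu\Phi+(\Box_{g_K}\zeta)\Phi$ supported on the transition annuli $[r_0-1,r_0]\cup[r_0+1,r_0+2]$. Applying Proposition \ref{elliptic} with $m=2$ to $\psi$, and using $\psi=\Phi$ on the inner slab, yields
$$\int_{[r_0,r_0+1]\times\mathbb{S}^2}r^2(D^2\Phi)^2\,dA\,dr\leq C\int r^2J^N_\mu(\psi)n^\mu_{\Sigma_\tau}+C\int r^2J^N_\mu(\partial_{t^*}\psi)n^\mu_{\Sigma_\tau}+C\int r^2(\Box_{g_K}\psi)^2.$$
Since $\partial_{t^*}$ commutes with both $\zeta$ and $D$, the right-hand side is bounded by $\int_{[r_0-1,r_0+2]}r^2\bigl[(D\Phi)^2+(D\partial_{t^*}\Phi)^2+(\partial_{t^*}\Phi)^2+G^2+\Phi^2\bigr]\,dA\,dr$; splitting the enlarged slab into a bounded number of unit subslabs (on each of which $|u|\sim|u_0|$) and applying the $\ell=1$ estimate to $\Phi$ and to $\partial_{t^*}\Phi$ absorbs the first-derivative pieces into $C(1+|u_0|)^{-2}$ times the two conformal energies on the right-hand side of the target inequality, while the $G^2$ contribution sits inside $\{u\sim u_0\}\cap\{r\geq\tau/2\}$, matching the inhomogeneous term in the statement.

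The main obstacle is the zeroth-order commutator contribution $\int r^2\,|(\Box_{g_K}\zeta)\Phi|^2$ on the transition annuli: the naive comparison with the $(u^2+v^2)/r^2\cdot\Phi^2$ term in the conformal lower bound only yields $\int r^2\Phi^2\leq\int(u^2+v^2)\Phi^2$, which is short of the decisive $(1+|u_0|)^{-2}$ factor because $r^2(1+|u_0|)^2/(u^2+v^2)$ fails to be uniformly bounded when $r\sim t^*/4$. The remedy is the Hardy inequality (Proposition \ref{Hardy}) combined with the ``dyadic decomposition in the variable $u$'' alluded to in the paragraph preceding the statement; this is also precisely the step one would iterate in order to extend the estimate to $\ell\geq 3$.
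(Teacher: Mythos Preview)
Your $\ell=1$ argument is fine and matches the paper. The gap is in $\ell=2$: you correctly identify that with a scale-$1$ cutoff the zeroth-order commutator term $\int_{\text{annulus}}\Phi^2$ cannot be bounded by $(1+|u_0|)^{-2}$ times the $\frac{u^2+v^2}{r^2}\Phi^2$ piece of the conformal energy, but your proposed remedy (Hardy plus an unspecified ``dyadic decomposition in $u$'') does not close this. Proposition~\ref{Hardy} trades $\Phi^2$ for $(D\Phi)^2$ with an $r$-weight; it has no $u$-weight and cannot manufacture the missing factor $(1+|u_0|)^{-2}$ on a unit-width annulus.

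The actual mechanism in the paper is to choose the cutoff at scale $|u_0|$, not at scale $1$. With $2^k\leq |u_0|\leq 2^{k+1}$ one takes $\tilde\Phi=\chi\!\left(2^{-k+3}(r-r_0)\right)\Phi$, supported on $\{|r-r_0|\leq 2^{k-2}\}$. Two things happen simultaneously. First, $|\Box_{g_K}\tilde\Phi-G|\leq C\big(2^{-k}|D\Phi|+2^{-2k}|\Phi|\big)$, so the squared commutator already carries the factors $2^{-2k}\sim(1+|u_0|)^{-2}$ (on $|D\Phi|^2$) and $2^{-4k}$ (on $\Phi^2$). Second, on this wider support one still has $|u-u_0|\leq |r-r_0|\leq 2^{k-1}$, hence $|u|\sim 2^k$ throughout; therefore $J^N_\mu(\partial_{t^*}^j\Phi)n^\mu\leq C\,2^{-2k}\big(u^2(\underline L\cdot)^2+v^2(L\cdot)^2+(u^2+v^2)|\nabb\cdot|^2\big)$ and $2^{-2k}\Phi^2\leq C\,2^{-2k}\frac{u^2+v^2}{r^2}\Phi^2$ (using $r\leq Cv$), so both are absorbed into $(1+|u_0|)^{-2}$ times the conformal energy via Proposition~\ref{Zlowerbound}. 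In short, the ``dyadic decomposition in $u$'' is not a post-processing step applied to your scale-$1$ estimate; it dictates the scale of the cutoff itself, and that is what makes the zeroth-order term controllable.
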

\begin{proof}
The $\ell=1$ case is trivial. For $\ell=2$, we consider separately the cases: Case $0$: $|u_0|\leq C_*$, Case $1_k$: $2^k\leq u_0\leq 2^{k+1}$, Case $2_k$: $-2^{k+1}\leq u_0\leq -2^k$, $k \geq \frac{\log C_*}{\log 2}$ for some sufficiently large but fixed $C_*$. In Case $0$, we have $|u|\leq C$ for the range $[r_0,r_0+1]$ and hence the Proposition is obvious as we have $1\leq C\left(1+|u|\right)^{-2}$.

For the other cases, we consider a cutoff function $\chi:\mathbb R \to \mathbb R_{\geq 0}$ which is compactly supported in $[-2, 2]$ and identically $1$ in $[-1,1]$. In case $1_k$ (resp. $2_k$), we consider $\tilde{\Phi}$ to be defined by $\tilde{\Phi}(\tau,r,\theta,\phi)=\chi\left(2^{-k+3}\left(r-r_0\right)\right)\Phi\left(\tau,r,\theta,\phi\right)$. Then $\tilde{\Phi}$ is supported in $[r_0-2^{k-2}, r_0+2^{k-2}]$ and equals $\Phi$ in $[r_0-2^{k-3}, r_0+2^{k-3}]$. On the support of $\tilde{\Phi}$, $|\Box_{g_K}\tilde{\Phi}-G|\leq C\displaystyle\sum_{j=0}^1 2^{-(2-j)k}|D^j\Phi|$. We also have that on the support of $\tilde{\Phi}$, $|u-u_0|\leq \frac{1}{2}|r^*_S-(r^*_0)_S|\leq \frac{1}{2}|r-r_0|+\frac{M}{2}|\log \frac{r-2M}{r_0-2M}|\leq |r-r_0|\leq 2^{k-1}$ for $r_0$ sufficiently large (which we can assume for otherwise $\tau$ and $r$ must both be bounded, in which case we must be in Case $0$ for appropriately chosen $C_*$). Hence $u\sim 2^k$ (resp. $u\sim -2^k$). 

Therefore, by Proposition \ref{elliptic}.1 applied twice, first to $\tilde{\Phi}$ then to $\Phi$, we have,
\begin{equation*}
\begin{split}
&\int_{r_0}^{r_0+1}\int_{\mathbb S^2(r')} \left(D^2\Phi\right)^2 (r')^2 dA dr'\leq \int_{\Sigma_\tau\cap\{r_0\leq r\leq r_0+1\}} \left(D^2\tilde{\Phi}\right)^2 \\
 \leq& C\sum_{j=0}^{1}\int_{\Sigma_\tau\cap\{r_0-2^{k-3}\leq r\leq r_0+2^{k-3}\}}J^{N}_\mu\left(\partial_{t^*}^j\tilde{\Phi}\right) n^\mu_{\Sigma_\tau}+C\sum_{j=0}^1\int_{\Sigma_\tau\cap\{r_0-2^{k-2}\leq r\leq r_0+2^{k-2}\}}2^{-(2-j)2k}\left(D^j\Phi\right)^2\\
 &+C\int_{\Sigma_\tau\cap\{r_0-2^{k-2}\leq r\leq r_0+2^{k-2}\}} G^2\\
\leq& C\sum_{j=0}^{1}\int_{\Sigma_\tau\cap\{r_0-2^{k-2}\leq r\leq r_0+2^{k-2}\}}\left(2^{-2k}\Phi^2+J^{N}_\mu\left(\partial_{t^*}^j\Phi\right) n^\mu_{\Sigma_\tau}\right)+C\int_{\Sigma_\tau\cap\{r_0-2^{k-2}\leq r\leq r_0+2^{k-2}\}}  G^2\\
 \leq& C\left(1+|u_0|\right)^{-2}\sum_{j=0}^{1}\left(\int_{\Sigma_\tau}J^{Z+CN}_\mu\left(\partial^j_{t^*}\Phi\right) n^\mu_{\Sigma_\tau}+C\tau^2\int_{\Sigma_\tau\cap\{r\leq r^-_Y\}}J^{N}_\mu\left(\partial^j_{t^*}\Phi\right) n^\mu_{\Sigma_\tau}\right)+C\int_{\Sigma_\tau\cap\{u\sim u_0\}} G^2.
\end{split}
\end{equation*}
\end{proof}

Using this we can prove more decay in the $u$ variable:
\begin{proposition}\label{ru}
Suppose $\Box_{g_K}\Phi=G$. For $r\geq\frac{t^*}{4}$ and $\ell\geq 1$, we have 
\begin{equation*}
\begin{split}
&|D\Phi|^2\\
\leq& Cr^{-2}\left(1+|u|\right)^{-2}\sum_{j=0}^{1}\sum_{k=0}^2\left(\int_{\Sigma_\tau}J^{Z+CN}_\mu\left(\partial^j_{t^*}\tilde{\Omega}^k\Phi\right) n^\mu_{\Sigma_\tau}+C\tau^2\int_{\Sigma_\tau\cap\{r\leq r^-_Y\}}J^{N}_\mu\left(\partial^j_{t^*}\tilde{\Omega}^k\Phi\right) n^\mu_{\Sigma_\tau}\right)\\
&+Cr^{-2}\sum_{k=0}^2\int_{\Sigma_\tau\cap\{u'\sim u\}\cap\{r\geq\frac{\tau}{2}\}} \left(\Box_{g_K}\left(\tilde{\Omega}^k\Phi\right)\right)^2.
\end{split}
\end{equation*}
\end{proposition}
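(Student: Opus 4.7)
The plan is to mimic the proof of Proposition \ref{r}, replacing the final appeal to Proposition \ref{elliptic} with the $u$-weighted local elliptic bound of Proposition \ref{energyu}. The point is that Proposition \ref{energyu} is precisely designed to import the $u$-decay encoded in the conformal energy into a local $r$-slab estimate for second derivatives, so dropping it into the same Sobolev-on-the-sphere argument used for Proposition \ref{r} will automatically produce the extra factor of $(1+|u|)^{-2}$.

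First I would start with Sobolev embedding on $\mathbb S^2(\tilde r)$ applied to $D\Phi$, using the angular commutators $\tilde \Omega^k$ for $k\leq 2$, together with the commutator bound $|[D,\tilde\Omega]\Phi|\leq C|D\Phi|$ to interchange $D$ with $\tilde\Omega^k$ up to lower-order terms which are absorbed. As in the proof of Proposition \ref{r}, this yields, for any $\tilde r$,
\begin{equation*}
r^2|D\Phi|^2\leq C\sum_{k=0}^2\left(\int_{\mathbb S^2(\tilde r)}(D\tilde\Omega^k\Phi)^2\tilde r^2\,dA+\int_r^{\tilde r}\!\int_{\mathbb S^2(r')}(D^2\tilde\Omega^k\Phi)^2(r')^2+(D\tilde\Omega^k\Phi)^2(r')^2\,dA\,dr'\right),
\end{equation*}
and I would restrict to $\tilde r\in[r,r+1]$.

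Second, I would apply Proposition \ref{energyu} to $\tilde\Omega^k\Phi$ in place of $\Phi$, with $\ell=1$ and $\ell=2$ and $r_0=r$. Since $|\tilde r-r|\leq 1$ we have $|u-u_0|\leq C$, hence $1+|u|\sim 1+|u_0|$ throughout the slab; the $\ell=1$ case follows directly from Proposition \ref{Zlowerbound} (because $v^2,\,u^2\gtrsim(1+|u_0|)^2$ in $\{r\geq\tau/4\}$) while the $\ell=2$ case is exactly the content of Proposition \ref{energyu}. This controls both $r$-integrals above by $(1+|u_0|)^{-2}$ times the conformal energies of $\partial_{t^*}^j\tilde\Omega^k\Phi$ for $j=0,1$, plus the inhomogeneous integral of $(\Box_{g_K}(\tilde\Omega^k\Phi))^2$ over $\{u\sim u_0\}\cap\{r\geq\tau/2\}$. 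A pigeonhole argument on $\tilde r\in[r,r+1]$ then produces a sphere on which the boundary term $\int_{\mathbb S^2(\tilde r)}(D\tilde\Omega^k\Phi)^2\tilde r^2 dA$ is bounded by the same right-hand side.

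Dividing by $r^2$ and using $1+|u|\sim 1+|u_0|$ yields the claimed inequality. The only real obstacle, which is bookkeeping rather than substance, is verifying that commuting $\Box_{g_K}$ past $\tilde\Omega^k$ and moving $D$ across $\tilde\Omega^k$ produce only the terms already present on the right-hand side; this is fine because $\tilde\Omega$ is a smooth vector field with coefficients bounded on the support, and because Proposition \ref{energyu} is applied over a unit-length $r$-slab, so no $r$-powers are lost in any of these commutator manipulations.
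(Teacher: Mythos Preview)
Your proposal is correct and follows essentially the same approach as the paper's own proof: start from the Sobolev-on-the-sphere inequality as in Proposition \ref{r}, restrict to $\tilde r\in[r,r+1]$, invoke Proposition \ref{energyu} (with $\ell=1,2$) on the unit $r$-slab to gain the factor $(1+|u|)^{-2}$, and pigeonhole to control the boundary sphere term. The paper writes the radial integrand in the product form $(D^2\tilde\Omega^k\Phi)(D\tilde\Omega^k\Phi)(r')^2+(D\tilde\Omega^k\Phi)^2 r'$ before applying Cauchy--Schwarz, but this is cosmetically different from your squared version and leads to the same estimate.
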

\begin{proof}
Following the proof of Proposition \ref{r}, we have
\begin{equation*}
\begin{split}
&r^{2}|D\Phi|^2\\
\leq&C\left(\sum_{k=0}^2\int_{\mathbb S^2(\tilde{r})}\left(D\Omega^k\Phi\right)^2\tilde{r}^{2} dA+\int^{\tilde{r}}_r\int_{\mathbb S^2(r')}\left(D^{2}\Omega^k\Phi\right)\left(D\Omega^k\Phi\right)(r')^{2}+\left(D\Omega^k\Phi\right)^2r' dAdr'\right)\\
\end{split}
\end{equation*}
Take $r\leq\tilde{r}\leq r+1$. Then by Proposition \ref{energyu},
\begin{equation*}
\begin{split}
&\sum_{k=0}^2\int^{r+1}_r\int_{\mathbb S^2(r')}\left(D^{2}\Omega^k\Phi\right)\left(D\Omega^k\Phi\right)(r')^{2}+\left(D\Omega^k\Phi\right)^2r' dAdr'\\
 \leq& C\left(1+|u|\right)^{-2}\sum_{j=0}^{1}\sum_{k=0}^2\left(\int_{\Sigma_\tau}J^{Z+CN}_\mu\left(\partial^j_{t^*}\tilde{\Omega}^k\Phi\right) n^\mu_{\Sigma_\tau}+C\tau^2\int_{\Sigma_\tau\cap\{r\leq r^-_Y\}}J^{N}_\mu\left(\partial^j_{t^*}\tilde{\Omega}^k\Phi\right) n^\mu_{\Sigma_\tau}\right)\\
&+C\sum_{k=0}^2\int_{\Sigma_\tau\cap\{u'\sim u\}\cap\{r\geq\frac{\tau}{2}\}}\left(\Box_{g_K}\left(\tilde{\Omega}^k\Phi\right)\right)^2.
\end{split}
\end{equation*}
By pigeonholing on this we also get that for some $\tilde{r}$,
\begin{equation*}
\begin{split}
&\sum_{k=0}^2\int_{\mathbb S^2(\tilde{r})}\left(D^\ell\Omega^k\Phi\right)^2\tilde{r}^{2} dA\\
\leq& C\left(1+|u|\right)^{-2}\sum_{j=0}^{1}\sum_{k=0}^2\left(\int_{\Sigma_\tau}J^{Z+CN}_\mu\left(\partial^j_{t^*}\tilde{\Omega}^k\Phi\right) n^\mu_{\Sigma_\tau}+C\tau^2\int_{\Sigma_\tau\cap\{r\leq r^-_Y\}}J^{N}_\mu\left(\partial^j_{t^*}\tilde{\Omega}^k\Phi\right) n^\mu_{\Sigma_\tau}\right)\\
&+C\sum_{k=0}^2\int_{\Sigma_\tau\cap\{u'\sim u\}\cap\{r\geq\frac{\tau}{2}\}} \left(\Box_{g_K}\left(\tilde{\Omega}^k\Phi\right)\right)^2.
\end{split}
\end{equation*}
\end{proof}

We have a better pointwise decay for a ``good'' derivative:
\begin{proposition}\label{rv}
For $r\geq\frac{t^*}{4}$, we have 
\begin{equation*}
\begin{split}
|\bar{D}\Phi|^2\leq&C r^{-4}\sum_{k=0}^2\sum_{i+j\leq 1}\left(\int_{\Sigma_{\tau}} J^N_\mu\left(S^i\partial_{t^*}^j\Phi\right)n^\mu_{\Sigma_\tau}+\int_{\Sigma_\tau}J^{Z+CN}_\mu\left(\partial^j_{t^*}\tilde{\Omega}^k\Phi\right) n^\mu_{\Sigma_\tau}\right.\\
&\left.\quad\quad\quad\quad\quad\quad+C\tau^2\int_{\Sigma_\tau\cap\{r\leq r^-_Y\}}J^{N}_\mu\left(\partial^j_{t^*}\tilde{\Omega}^k\Phi\right) n^\mu_{\Sigma_\tau}+\int_{\Sigma_{\tau}} \left(\Box_{g_K}\left(\tilde{\Omega}^k\Phi\right)\right)^2\right)\\
&+Cr^{-2}\sum_{k=0}^2\int_{\Sigma_\tau\cap\{r\geq\frac{\tau}{2}\}} \left(\Box_{g_K}\left(\tilde{\Omega}^k\Phi\right)\right)^2.
\end{split}
\end{equation*}
\begin{proof}
\begin{equation*}
\begin{split}
&r^{2}|\bar{D}\Phi|^2\\\leq &C\int_{\mathbb S^2}\left(\left(\bar{D}\Phi\right)^2+\left(\tilde{\Omega}\bar{D}\Phi\right)^2+\left(\tilde{\Omega}^2 \bar{D}\Phi\right)^2\right)r^{2} dA\\
\leq&C\left(\sum_{k=0}^2\int_{\mathbb S^2(\tilde{r})}\left(\tilde{\Omega}^k\bar{D}\Phi\right)^2\tilde{r}^{2} dA+\int^{\tilde{r}}_r\int_{\mathbb S^2(r')}|\partial_r\tilde{\Omega}^k\bar{D}\Phi\tilde{\Omega}^k\bar{D}\Phi|(r')^{2}+\left(\tilde{\Omega}^k\bar{D}\Phi\right)^2r' dAdr'\right).
\end{split}
\end{equation*}
Noticing that $|[D,\tilde{\Omega}]\Phi|\leq C|D\Phi|$, $|[\bar{D},\tilde{\Omega}]\Phi|\leq C\left(|\bar{D}\Phi|+r^{-1}|D\Phi|\right)$ and $|\bar{D},\partial_r\Phi|\leq Cr^{-1}|D\Phi|$, we have
\begin{equation}\label{rv1}
\begin{split}
&r^{2}|\bar{D}\Phi|^2\\
\leq&C\sum_{k=0}^2\left(\int_{\mathbb S^2(\tilde{r})}\left(\tilde{\Omega}^k\bar{D}\Phi\right)^2\tilde{r}^{2} dA\right.\\
&\left.+\int^{\tilde{r}}_r\int_{\mathbb S^2(r')}|\partial_r\bar{D}\tilde{\Omega}^k\Phi\bar{D}\tilde{\Omega}^k\Phi|(r')^{2}+\left(\tilde{\Omega}^k\bar{D}\Phi\right)^2r' dAdr'\right)\\
\leq&C\sum_{k=0}^2\left(\int_{\mathbb S^2(\tilde{r})}\left(\left(\bar{D}\tilde{\Omega}^k\Phi\right)^2+\tilde{r}^{-2}\left(D\tilde{\Omega}^k\Phi\right)^2\right)\tilde{r}^{2} dA\right.\\
&\left.+\int^{\tilde{r}}_r\int_{\mathbb S^2(r')}\left(\left(\bar{D}D\tilde{\Omega}^k\Phi\right)^2+\left(\bar{D}\tilde{\Omega}^k\Phi\right)^2+(r')^{-2}\left(D\tilde{\Omega}^k\Phi\right)^2\right)(r')^2 dAdr'\right).
\end{split}
\end{equation}
The last term already exhibits better decay rate:
\begin{equation*}
\begin{split}
\int^{\tilde{r}}_r\int_{\mathbb S^2(r')}(r')^{-2}\left(D\tilde{\Omega}^k\Phi\right)^2(r')^2 dAdr'\leq &Cr^{-2}\int_{\Sigma_\tau}J^{N}_\mu\left(\tilde{\Omega}^k\Phi\right) n^\mu_{\Sigma_\tau}
\end{split}
\end{equation*}
We will now show that the energy quantities involving $\bar{D}$ obey better decay rates. This is immediate for the term $\int^{\tilde{r}}_r\int_{\mathbb S^2(r')}\left(\bar{D}\tilde{\Omega}^k\Phi\right)^2(r')^2 dAdr'$ using the conformal energy:
\begin{equation*}
\begin{split}
&\int^{\tilde{r}}_r\int_{\mathbb S^2(r')}\left(\bar{D}\tilde{\Omega}^k\Phi\right)^2(r')^2 dAdr'\\
\leq&Cv^{-2}\left(\int_{\Sigma_\tau}J^{Z+CN}_\mu\left(\tilde{\Omega}^k\Phi\right) n^\mu_{\Sigma_\tau}+C\tau^2\int_{\Sigma_\tau\cap\{r\leq r^-_Y\}}J^{N}_\mu\left(\tilde{\Omega}^k\Phi\right) n^\mu_{\Sigma_\tau}\right).
\end{split}
\end{equation*}
However, we note that this cannot be shown directly for the term $\int^{\tilde{r}}_r\int_{\mathbb S^2(r')}\left(\bar{D}D\tilde{\Omega}^k\Phi\right)^2(r')^2 dAdr'$ with the conformal energy because we cannot commute $\Box_{g_K}$ with derivatives of every direction. In order to remedy this, we use the non-degenerate energy for $S\Phi$. In particular, we use the fact that for $r\geq r^-_Y$, $|\bar{D}\Phi|\leq Cv^{-1}\left(|S\Phi|+u|D\Phi|+vr^{-1}|D\Phi|\right)$.
\begin{equation*}
\begin{split}
&\int^{\tilde{r}}_r\int_{\mathbb S^2(r')}\left(\bar{D}D\tilde{\Omega}^k\Phi\right)^2(r')^2dAdr'\\
\leq&C\int^{\tilde{r}}_r\int_{\mathbb S^2(r')}\left((v')^{-2}\left(SD\tilde{\Omega}^k\Phi\right)^2+(u')^2(v')^{-2}\left(D^{2}\tilde{\Omega}^k\Phi\right)+(r')^{-2}\left(D^{2}\tilde{\Omega}^k\Phi\right)^2 \right)(r')^2dAdr'\\
\leq&C\int^{\tilde{r}}_r\int_{\mathbb S^2(r')}\left((v')^{-2}\left(DS\tilde{\Omega}^k\Phi\right)^2+(v')^{-2}\left(D\tilde{\Omega}^k\Phi\right)^2\right.\\
&\left.\quad\quad\quad\quad\quad\quad\quad+(u')^2(v')^{-2}\left(D^{2}\tilde{\Omega}^k\Phi\right)+(r')^{-2}\left(D^{2}\tilde{\Omega}^k\Phi\right)^2 \right)(r')^2dAdr'\\
\end{split}
\end{equation*}
Take $r\leq\tilde{r}\leq r+1$. We have, for the first two terms,
\begin{equation*}
\begin{split}
&\int^{r+1}_r\int_{\mathbb S^2(r')}(v')^{-2}\left(\left(DS\tilde{\Omega}^k\Phi\right)^2+\left(D\tilde{\Omega}^k\Phi\right)^2\right)(r')^2dAdr'\leq Cv^{-2}\sum_{j=0}^1\int_{\Sigma_\tau}J^{N}_\mu\left(S^j\tilde{\Omega}^k\Phi\right) n^\mu_{\Sigma_\tau}.
\end{split}
\end{equation*}
The third term can be estimated by Proposition \ref{energyu},
\begin{equation*}
\begin{split}
&\int^{r+1}_r\int_{\mathbb S^2(r')}(u')^2(v')^{-2}\left(D^{2}\tilde{\Omega}^k\Phi\right)(r')^2dAdr'\\
\leq& Cv^{-2}\sum_{j=0}^{1}\left(\int_{\Sigma_\tau}J^{Z+CN}_\mu\left(\partial^j_{t^*}\tilde{\Omega}^k\Phi\right) n^\mu_{\Sigma_\tau}+C\tau^2\int_{\Sigma_\tau\cap\{r\leq r^-_Y\}}J^{N}_\mu\left(\partial^j_{t^*}\tilde{\Omega}^k\Phi\right) n^\mu_{\Sigma_\tau}\right)\\
&+C\int_{\Sigma_\tau\cap\{u'\sim u\}\cap\{r\geq\frac{\tau}{2}\}} \left(\Box_{g_K}\left(\tilde{\Omega}^k\Phi\right)\right)^2.
\end{split}
\end{equation*}
The fourth term can be estimated elliptically by Proposition \ref{elliptic}:
\begin{equation*}
\begin{split}
&\int^{r+1}_r\int_{\mathbb S^2(r')}(r')^{-2}\left(D^{2}\tilde{\Omega}^k\Phi\right)^2(r')^2dAdr'\\
\leq& C r^{-2}\left(\sum_{j=0}^{1}\int_{\Sigma_{\tau}} J^N_\mu\left(\partial_{t^*}^j\tilde{\Omega}^k\Phi\right)n^\mu_{\Sigma_\tau}+\int_{\Sigma_{\tau}} \left(\Box_{g_K}\left(\tilde{\Omega}^k\Phi\right)\right)^2\right).
\end{split}
\end{equation*}
Collecting all the above estimates and noting that $r\geq\frac{\tau}{2}$, we get
\begin{equation}\label{rv2}
\begin{split}
&\sum_{k=0}^2\int^{r+1}_r\int_{\mathbb S^2(r')}\left(\left(\bar{D}D\tilde{\Omega}^k\Phi\right)^2+\left(\bar{D}\tilde{\Omega}^k\Phi\right)^2+(r')^{-2}\left(D\tilde{\Omega}^k\Phi\right)^2\right)(r')^2 dAdr'\\
\leq&C v^{-2}\sum_{k=0}^2\sum_{i+j\leq 1}\left(\int_{\Sigma_{\tau}} J^N_\mu\left(S^i\partial_{t^*}^j\Phi\right)n^\mu_{\Sigma_\tau}+\int_{\Sigma_\tau}J^{Z+CN}_\mu\left(\partial^j_{t^*}\tilde{\Omega}^k\Phi\right) n^\mu_{\Sigma_\tau}\right.\\
&\left.\quad\quad\quad\quad\quad\quad+C\tau^2\int_{\Sigma_\tau\cap\{r\leq r^-_Y\}}J^{N}_\mu\left(\partial^j_{t^*}\tilde{\Omega}^k\Phi\right) n^\mu_{\Sigma_\tau}+\int_{\Sigma_{\tau}} \left(\Box_{g_K}\left(\tilde{\Omega}^k\Phi\right)\right)^2\right)\\
&+C\sum_{k=0}^2\int_{\Sigma_\tau\cap\{r\geq\frac{\tau}{2}\}} \left(\Box_{g_K}\left(\tilde{\Omega}^k\Phi\right)\right)^2.
\end{split}
\end{equation}
By pigeonholing on this we also get that for some $\tilde{r}$,
\begin{equation}\label{rv3}
\begin{split}
&\sum_{k=0}^2\left(\int_{\mathbb S^2(\tilde{r})}\left(\bar{D}\tilde{\Omega}^k\Phi\right)^2+\tilde{r}^{-2}\left(D\tilde{\Omega}^k\Phi\right)^2\right)\tilde{r}^{2} dA\\
\leq&C v^{-2}\sum_{k=0}^2\sum_{i+j\leq 1}\left(\int_{\Sigma_{\tau}} J^N_\mu\left(S^i\partial_{t^*}^j\Phi\right)n^\mu_{\Sigma_\tau}+\int_{\Sigma_\tau}J^{Z+CN}_\mu\left(\partial^j_{t^*}\tilde{\Omega}^k\Phi\right) n^\mu_{\Sigma_\tau}\right.\\
&\left.\quad\quad\quad\quad\quad\quad+C\tau^2\int_{\Sigma_\tau\cap\{r\leq r^-_Y\}}J^{N}_\mu\left(\partial^j_{t^*}\tilde{\Omega}^k\Phi\right) n^\mu_{\Sigma_\tau}+\int_{\Sigma_{\tau}} \left(\Box_{g_K}\left(\tilde{\Omega}^k\Phi\right)\right)^2\right)\\
&+C\sum_{k=0}^2\int_{\Sigma_\tau\cap\{r\geq\frac{\tau}{2}\}} \left(\Box_{g_K}\left(\tilde{\Omega}^k\Phi\right)\right)^2.
\end{split}
\end{equation}
(\ref{rv1}), (\ref{rv2}) and (\ref{rv3}) together imply the Proposition.
\end{proof}
\end{proposition}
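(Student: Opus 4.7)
The plan is to mimic the Sobolev-on-the-sphere plus radial integration approach of Proposition~\ref{r}, but to extract the extra factor of $v^{-2} \sim r^{-2}$ (responsible for the improvement from $r^{-2}$ to $r^{-4}$ decay) from the fact that $\bar D$ is a ``good'' derivative, weighted by $v^2$ in the conformal energy. Concretely, start with
\[
r^2 |\bar D \Phi|^2 \leq C \sum_{k\leq 2} \int_{\mathbb S^2} (\tilde\Omega^k \bar D \Phi)^2 \, r^2\, dA,
\]
and for each $k$ write the sphere integral at radius $r$ as a sphere integral at a nearby $\tilde r \in [r,r+1]$ (to be chosen by pigeonhole) plus a radial integral of $|\partial_r \tilde\Omega^k \bar D \Phi\,\tilde\Omega^k\bar D\Phi|$ and $(\tilde\Omega^k\bar D\Phi)^2$.

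Next, commute $\bar D$ past $\tilde\Omega$ and $\partial_r$. The key observation is that $[\tilde\Omega,\bar D]\Phi$ and $[\partial_r,\bar D]\Phi$ are controlled by $|\bar D\Phi| + r^{-1}|D\Phi|$; the $r^{-1}|D\Phi|$ pieces automatically yield an extra $r^{-2}$ after the radial integration and give the ``easy'' contributions of the form $r^{-4}\int J^N(\tilde\Omega^k\Phi)$. For the remaining terms one is reduced to estimating, at a pigeonholed $\tilde r\in [r,r+1]$,
\[
\int_r^{r+1}\!\int_{\mathbb S^2(r')} \bigl( (\bar D \tilde\Omega^k\Phi)^2 + (\bar D D \tilde\Omega^k\Phi)^2 \bigr)(r')^2\, dA\, dr'.
\]
The first summand is handled directly by Proposition~\ref{Zlowerbound}: the conformal energy bounds $\int v^2(\bar D\Phi)^2$, so one obtains the desired $v^{-2}\sim r^{-2}$ gain up to the standard error supported in $\{r\leq r_Y^-\}$.

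The main obstacle is the term $\int (\bar D D\tilde\Omega^k\Phi)^2 (r')^2 dA\,dr'$, where one cannot directly invoke the conformal energy of $D\tilde\Omega^k\Phi$ because $\Box_{g_K}$ does not commute with a general $D$. The remedy is the pointwise identity $|\bar D \Phi|\leq Cv^{-1}(|S\Phi| + u|D\Phi| + vr^{-1}|D\Phi|)$ valid for $r\geq r_Y^-$: apply it with $\Phi$ replaced by $D\tilde\Omega^k\Phi$ to reduce the integrand to $v^{-2}(DS\tilde\Omega^k\Phi)^2 + v^{-2}(D\tilde\Omega^k\Phi)^2 + u^2v^{-2}(D^2\tilde\Omega^k\Phi)^2 + r^{-2}(D^2\tilde\Omega^k\Phi)^2$, where we have further commuted $S$ past $D$ at the cost of lower order terms $v^{-2}(D\tilde\Omega^k\Phi)^2$. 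The first two terms are absorbed into $v^{-2}\sum_{i+j\leq 1}\int J^N(S^i\partial_{t^*}^j\Phi)$; the $u$-weighted $D^2$ term is precisely what Proposition~\ref{energyu} delivers, producing the conformal energies of $\partial_{t^*}^j\tilde\Omega^k\Phi$ together with the final $r^{-2}\int(\Box_{g_K}\tilde\Omega^k\Phi)^2$ term in the statement; and the last $r^{-2}(D^2)^2$ term is handled by the ordinary elliptic estimate Proposition~\ref{elliptic}, contributing $r^{-2}\int J^N(\partial_{t^*}^j\tilde\Omega^k\Phi)$ and another $r^{-2}\int(\Box_{g_K}\tilde\Omega^k\Phi)^2$.

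Finally, pigeonhole to choose $\tilde r\in[r,r+1]$ on which the sphere integral $\int_{\mathbb S^2(\tilde r)}(\bar D\tilde\Omega^k\Phi)^2 \tilde r^2\,dA$ (and the $\tilde r^{-2}(D\tilde\Omega^k\Phi)^2$ piece) is controlled by the same right-hand side, and assemble the three pieces. Noting that $r\geq t^*/4$ gives $v\sim r$, so the $v^{-2}$ factors become $r^{-2}$ and combine with the $r^{-2}$ from Sobolev embedding to yield the claimed $r^{-4}$ prefactor, while the $\int(\Box_{g_K}\tilde\Omega^k\Phi)^2$ term appears both with the $r^{-4}$ weight (from the integrals localized in $r'$) and with the $r^{-2}$ weight (from Proposition~\ref{energyu} and Proposition~\ref{elliptic}, where the inhomogeneity is integrated over all of $\{r\geq \tau/2\}$).
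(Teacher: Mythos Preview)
Your proposal is correct and follows essentially the same approach as the paper: Sobolev on the sphere plus radial integration, commuting $\bar D$ past $\tilde\Omega$ and $\partial_r$, handling the resulting $(\bar D\tilde\Omega^k\Phi)^2$ term via the conformal energy, and resolving the critical $(\bar D D\tilde\Omega^k\Phi)^2$ term by the pointwise identity $|\bar D\Phi|\leq Cv^{-1}(|S\Phi|+u|D\Phi|+vr^{-1}|D\Phi|)$ together with Proposition~\ref{energyu} and Proposition~\ref{elliptic}, then pigeonholing. The paper does exactly this, with the same decomposition into four terms and the same assignment of each to the appropriate energy or elliptic estimate.
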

We now turn to the region $r\leq\frac{t^*}{4}$. We first show a simple Sobolev embedding result.
\begin{proposition}\label{sSobolev}
Suppose $\Box_{g_K}\Phi=G$. For $\ell\geq 1$ and $r\leq\frac{t^*}{4}$, 
\begin{equation*}
|D^\ell\Phi|^2\leq C\left(\sum_{j+m\leq \ell+1}\int_{\Sigma_\tau\cap\{r\leq\frac{t^*}{2}\}}J^{N}_\mu\left(\partial_{t^*}^m\hat{Y}^j\Phi\right)n^\mu_{\Sigma_\tau}+\sum_{j=0}^{\ell}\int_{\Sigma_\tau}(D^jG)^2\right)
\end{equation*}
\end{proposition}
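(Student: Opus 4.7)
The plan is to combine standard three-dimensional Sobolev embedding with the elliptic estimates already proved (Propositions \ref{elliptic} and \ref{elliptichorizon}), essentially following the template of Proposition \ref{SE} but now also handling the near-horizon portion of the region $\{r\leq t^*/4\}$ via $\hat{Y}$-commutation.

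First, for any point $x\in\Sigma_\tau\cap\{r\leq t^*/4\}$, a unit coordinate ball $B(x,1)$ centered at $x$ is contained in $\Sigma_\tau\cap\{r\leq t^*/2\}$ once $t^*$ is larger than an absolute constant (for small $t^*$ the region is compact and trivially handled). By standard $H^2\hookrightarrow L^\infty$ Sobolev embedding in three dimensions applied to the function $D^\ell\Phi$ on $B(x,1)$, and using that $D$ is a non-degenerate frame up to and across the horizon (since near $r=r_+$ the singular factor $\frac{1}{1-\mu}\underline L$ is equivalent to a smooth linear combination of $\partial_{t^*}$ and $\hat Y$ in the $(t^*,r,\theta,\phi^*)$ coordinates), one obtains
\begin{equation*}
|D^\ell\Phi(x)|^2\ \leq\ C\sum_{m=\ell}^{\ell+2}\int_{\Sigma_\tau\cap\{r\leq t^*/2\}}(D^m\Phi)^2.
\end{equation*}

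Second, for each $m\in\{\ell,\ell+1,\ell+2\}$ split the integral into the near-horizon piece $\{r\leq r^-_Y\}$ and the bulk piece $\{r^-_Y\leq r\leq t^*/2\}$. On the near-horizon piece apply Proposition \ref{elliptichorizon} to bound $\int(D^m\Phi)^2$ by $\sum_{j+k\leq m-1}\int J^N_\mu(\partial_{t^*}^j\hat Y^k\Phi)n^\mu_{\Sigma_\tau}+\sum_{j=0}^{m-2}\int(D^jG)^2$. On the bulk piece apply Proposition \ref{elliptic}.2 with $\gamma=1/2$ (and some $\gamma'\in(1/2,1)$) to bound $\int(D^m\Phi)^2$ by $\sum_{j=0}^{m-1}\int_{r\leq \gamma't^*}J^N_\mu(\partial_{t^*}^j\Phi)n^\mu_{\Sigma_\tau}+\sum_{j=0}^{m-2}\int(D^jG)^2$. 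The bulk contribution sits inside the $\hat Y^0\partial_{t^*}^j\Phi$ terms of the target estimate.

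Taking $m=\ell+2$ produces the worst term, yielding exactly $j+k\leq \ell+1$ commutator derivatives and $\sum_{j=0}^{\ell}(D^jG)^2$, while $m=\ell,\ell+1$ contribute strictly lower order terms that are absorbed into the same sums. Summing and taking the supremum over $x$ gives the stated inequality. The main (minor) obstacle is simply bookkeeping — checking that the unit Sobolev ball always stays inside $\{r\leq t^*/2\}$ and that one may use $D$-derivatives uniformly across the horizon — but both are immediate from the geometry of $(t^*,r,\theta,\phi^*)$ coordinates and the regularity of the frame once the $\hat Y$ direction is thrown in at the horizon; the rest is a direct application of the cited elliptic estimates.
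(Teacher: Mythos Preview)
Your approach is correct and is precisely the argument the paper has in mind: the paper states this proposition without proof, calling it ``a simple Sobolev embedding result,'' and your combination of three-dimensional $H^2\hookrightarrow L^\infty$ with Propositions \ref{elliptic}.2 and \ref{elliptichorizon} is exactly that. One small bookkeeping slip: since the target right-hand side lives on $\{r\leq t^*/2\}$, you should take the unit Sobolev ball around $x\in\{r\leq t^*/4\}$ to lie in, say, $\{r\leq t^*/3\}$ and then apply Proposition \ref{elliptic}.2 with $\gamma=1/3$, $\gamma'=1/2$ rather than $\gamma=1/2$, $\gamma'\in(1/2,1)$; otherwise your elliptic estimate spills onto a region slightly larger than $\{r\leq t^*/2\}$.
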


We can capture better estimates in $r$ if we use an extra derivative.
\begin{proposition}\label{SEDinside}
For $\ell\geq 1$ and $r\leq\frac{t^*}{4}$, 
\begin{equation*}
|D^\ell\Phi|^2\leq Cr^{-2}\left(\sum_{j+m+k\leq \ell+2}\int_{\Sigma_\tau\cap\{r\leq\frac{t^*}{2}\}}J^{N}_\mu\left(\partial_{t^*}^m\hat{Y}^j\tilde{\Omega}^k\Phi\right)n^\mu_{\Sigma_\tau}+\sum_{j=0}^{\ell+1-k}\sum_{k=0}^2\int_{\Sigma_\tau}\left(D^j\Box_{g_K}\left(\tilde{\Omega}^k\Phi\right)\right)^2\right)
\end{equation*}
\end{proposition}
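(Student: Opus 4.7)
The plan is to mirror the exterior-region strategy used in Propositions \ref{r}--\ref{rv}, with modifications to accommodate the fact that the region $\{r\leq t^*/4\}$ reaches down to the horizon, where the commutator $\tilde\Omega$ degenerates. Concretely, I would split into two overlapping subregions: a near-horizon region $\{r\leq R_\Omega+2\}$, on which $r^{-2}$ is bounded below by an absolute constant, and an intermediate region $\{R_\Omega+1\leq r\leq t^*/4\}$, on which $\tilde\Omega=\Omega$. In the near-horizon region the estimate is immediate from Proposition \ref{sSobolev} applied to $\Phi$ itself, which already supplies the $k=0$ contribution with $j+m\leq\ell+1\leq\ell+2$ and inhomogeneous term $\sum_{j=0}^{\ell}\int_{\Sigma_\tau}(D^j\Box_{g_K}\Phi)^2$ matching the $k=0$ slot of the double sum $\sum_{k=0}^2\sum_{j=0}^{\ell+1-k}$.

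In the intermediate region (nonempty only when $t^*$ is large enough that $r+1\leq t^*/2$), I would first apply the standard Sobolev embedding on the sphere $\mathbb{S}^2(r)$ using two $\tilde\Omega$ commutators to gain the factor $r^{-2}$, and then use the fundamental theorem of calculus in the $r$-direction together with the commutator bound $|[D,\tilde\Omega]\Phi|\leq C|D\Phi|$, exactly as in the proof of Proposition \ref{r}, to reduce to a thin slab in $r$. A standard pigeonhole in $\tilde r\in[r,r+1]$ absorbs the boundary sphere term into the slab integral, leaving
$$r^{2}|D^\ell\Phi|^{2}\leq C\sum_{k=0}^{2}\int_{r}^{r+1}\!\!\int_{\mathbb{S}^2(r')}\bigl[(D^{\ell+1}\tilde\Omega^k\Phi)^{2}+(D^{\ell}\tilde\Omega^k\Phi)^{2}\bigr](r')^{2}\,dA\,dr'.$$

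To convert this slab integral to the energy form claimed in the Proposition, I would then apply the localized elliptic estimate Proposition \ref{elliptic}.2 (with $\alpha=0$, $\gamma=\tfrac14$, $\gamma'=\tfrac12$) to each $\tilde\Omega^k\Phi$; since $r+1\leq t^*/2$ the slab lies inside the admissible region, and the estimate trades $\ell+1$ spatial derivatives for an energy with at most $\ell$ $\partial_{t^*}$-derivatives on $\Sigma_\tau\cap\{r\leq t^*/2\}$ plus $\ell-1$ derivatives of $\Box_{g_K}\tilde\Omega^k\Phi$ on $\Sigma_\tau$. Summing over $k\in\{0,1,2\}$ (with $j=0$ throughout this region) yields the claimed bound, since then $j+m+k\leq\ell+2$ and $m\leq\ell-1\leq\ell+1-k$. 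Combining with the near-horizon contribution, which supplies the $\hat Y$-derivative terms corresponding to $k=0$, completes the estimate.

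The main obstacle, as in the exterior Propositions, is keeping the derivative count tight. In particular one must use Proposition \ref{elliptic}.2 rather than Proposition \ref{elliptichorizon} in the intermediate region, since the latter would introduce extra $\hat Y$ commutators on $\tilde\Omega^k\Phi$ that cannot be distributed back to $\Phi$ without further derivative loss; and one must commute all of the $D$-derivatives through $\tilde\Omega^k$ before invoking the elliptic estimate, so that the inhomogeneous terms genuinely involve $\Box_{g_K}(\tilde\Omega^k\Phi)$ in the form stated, rather than terms like $\tilde\Omega^k\Box_{g_K}\Phi$ that would only be equivalent after absorbing further lower-order commutator errors.
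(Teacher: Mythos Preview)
Your approach is essentially the paper's: reduce the near-horizon case $r\leq R_\Omega+C$ to Proposition \ref{sSobolev}, then for $r\geq R_\Omega+C$ use Sobolev on $\mathbb S^2$ with two $\tilde\Omega$'s, the fundamental theorem in $r$, a pigeonhole for the boundary sphere, and finish with the localized elliptic estimate. Two small remarks. First, the paper integrates over the \emph{inward} slab $\tilde r\in[r-1,r]$ rather than $[r,r+1]$; this keeps the slab inside $\{r\leq t^*/4\}$ so that Proposition \ref{elliptic}.2 applies directly with $\gamma=\tfrac14$, whereas your choice of $[r,r+1]$ with $\gamma=\tfrac14$ is off by one (you would need $\gamma$ slightly larger than $\tfrac14$, or simply flip the slab). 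Second, your caution about Proposition \ref{elliptichorizon} is unnecessary: the paper in fact cites both \ref{elliptic}.2 and \ref{elliptichorizon}, and this causes no derivative loss because $\hat Y$ is supported in $\{r\leq r^+_Y\}$ while $\tilde\Omega^k\Phi$ for $k\geq 1$ is supported in $\{r\geq R_\Omega\}$, so $\hat Y^j\tilde\Omega^k\Phi=0$ whenever $j\geq 1$ and $k\geq 1$; the $\hat Y$ terms in the statement therefore only carry content for $k=0$, exactly the near-horizon contribution coming from Proposition \ref{sSobolev}.
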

\begin{proof}
We only need to consider the situation when $r\geq R_\Omega +C$. For otherwise, this Proposition is implied by Proposition \ref{sSobolev} since $r$ is finite. We assume from now on that $r\geq R_\Omega +C$.
Following the proof of Proposition \ref{r}, we have
\begin{equation*}
\begin{split}
&r^2|D^\ell\Phi|^2\\
\leq&C\left(\sum_{k=0}^2\int_{\mathbb S^2(\tilde{r})}\left(D^\ell\tilde{\Omega}^k\Phi\right)^2\tilde{r}^{2} dA+\int_{\tilde{r}}^r\int_{\mathbb S^2(r')}\left(D^{\ell+1}\tilde{\Omega}^k\Phi \right)^2(r')^{2}+\left(D^\ell\tilde{\Omega}^k\Phi\right)^2(r')^2 dAdr'\right)\\
\end{split}
\end{equation*}
Take $r-1\leq\tilde{r}\leq r$. By Proposition \ref{elliptic}.2 and \ref{elliptichorizon},
\begin{equation*}
\begin{split}
&\sum_{k=0}^2\int^{r}_{r-1}\int_{\mathbb S^2(r')}\left(D^{\ell+1}\tilde{\Omega}^k\Phi \right)^2(r')^{2}+\left(D^{\ell}\tilde{\Omega}^k\Phi\right)^2(r')^2 dAdr'\\
\leq&C\left(\sum_{j+m\leq \ell}\sum_{k=0}^2\int_{\Sigma_\tau\cap\{r\leq\frac{t^*}{2}\}}J^{N}_\mu\left(\partial_{t^*}^m\hat{Y}^j\Omega^k\Phi\right)n^\mu_{\Sigma_\tau}+\sum_{j=0}^{\ell-1}\sum_{k=0}^2\int_{\Sigma_\tau}\left(D^j\Box_{g_K}\left(\tilde{\Omega}^k\Phi\right)\right)^2\right).
\end{split}
\end{equation*}
By pigeonholing on this we also get that for some $\tilde{r}$ with $r-1\leq\tilde{r}\leq r$,
\begin{equation*}
\begin{split}
&\sum_{k=0}^2\int_{\mathbb S^2(\tilde{r})}\left(D\Omega^k\Phi\right)^2\tilde{r}^{2} dA\\
\leq&C\left(\sum_{j+m\leq \ell}\sum_{k=0}^2\int_{\Sigma_\tau\cap\{r\leq\frac{t^*}{2}\}}J^{N}_\mu\left(\partial_{t^*}^m\hat{Y}^j\Omega^k\Phi\right)n^\mu_{\Sigma_\tau}+\sum_{j=0}^{\ell-1}\sum_{k=0}^2\int_{\Sigma_\tau}\left(D^j\Box_{g_K}\left(\tilde{\Omega}^k\Phi\right)\right)^2\right).
\end{split}
\end{equation*}
\end{proof}
We also have pointwise estimates for $\Phi$ instead of $D\Phi$ if we use the conformal energy.
\begin{proposition}\label{SEinside}
Suppose $\Box_{g_K}\Phi=0$. For $r\leq\frac{t^*}{4}$,
\begin{equation*}
\begin{split}
|\Phi|^2\leq&C\tau^{-2}\left(\sum_{i+j\leq 2}\int_{\Sigma_\tau}J^{Z+N,w^Z}_\mu\left(\hat{Y}^i\partial_{t^*}^j\Phi\right)n^\mu_{\Sigma_\tau}+C\tau^2\sum_{i+j\leq 2}\int_{\Sigma_\tau\cap\{r\leq r^-_Y\}}J^N_\mu\left(\hat{Y}^i\partial_{t^*}^j\Phi\right)n^\mu_{\Sigma_\tau}\right).
\end{split}
\end{equation*}
\end{proposition}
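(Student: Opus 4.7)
The goal is a pointwise bound $|\Phi(p)|^2 \leq C\tau^{-2}$ times the conformal energy (plus a near-horizon correction) at every $p$ with $r(p)\leq t^*/4$. The plan is to combine three-dimensional Sobolev embedding on $\Sigma_\tau$ with the conformal energy lower bound from Proposition \ref{Zlowerbound}, which supplies the $\tau^{-2}$ factor: in the region $\{r\leq t^*/4\}\cap\{r\geq r^-_Y\}$ one has $u\geq c\tau$, hence $u^2+v^2\sim\tau^2$, so $\int J^{Z+N,w^Z}_\mu(\Phi)n^\mu$ controls $\tau^2\int(D\Phi)^2$ and $\tau^2\int\Phi^2/r^2$. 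The elliptic estimates, Propositions \ref{elliptic} and \ref{elliptichorizon}, are then used to pass from $L^2$ norms of $D^k\Phi$ to energies carrying the $\hat{Y}^i\partial_{t^*}^j$ commutators appearing in the statement, using $\Box_{g_K}\Phi=0$ to trade spatial derivatives for $\partial_{t^*}$ derivatives in the non-horizon regime.

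The argument naturally splits at a fixed radius $R_*>R_\Omega$. For $r(p)\leq R_*$, I apply standard three-dimensional Sobolev on a unit ball $B(p,1)\subset\Sigma_\tau$ to get $|\Phi(p)|^2\leq C\sum_{k=0}^2\int_{B(p,1)}(D^k\Phi)^2$, and split each integral according to $\{r\leq r^-_Y\}$ versus $\{r\geq r^-_Y\}$. On the non-horizon side, the bound $\Phi^2\leq (R_*+1)^2\Phi^2/r^2$ converts the $k=0$ term to $C\tau^{-2}\int J^{Z+N,w^Z}_\mu(\Phi)n^\mu$ via Proposition \ref{Zlowerbound}, while Proposition \ref{elliptic} together with the same conformal energy lower bound handles $k=1,2$. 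On the near-horizon side, Proposition \ref{elliptichorizon} controls the $k\geq 1$ derivative terms and produces the required $\hat{Y}^i\partial_{t^*}^j$ commutators, while $\int_{\{r\leq r^-_Y\}}\Phi^2$ is dealt with by a Hardy/Poincar\'e argument, integrating $\partial_r\Phi$ inward from the sphere $r=r^-_Y+\frac{1}{2}$; the boundary contribution is absorbed into the non-horizon conformal energy bound, and the $\int(\partial_r\Phi)^2$ contribution is absorbed into the $\int_{\{r\leq r^-_Y\}}J^N_\mu(\Phi)n^\mu$ term of the statement.

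For $R_*\leq r(p)\leq t^*/4$, direct 3D Sobolev on a unit ball fails to produce $\tau^{-2}$ decay for $\int\Phi^2$ because the conversion $\Phi^2\leq(r(p)+1)^2\Phi^2/r^2$ can eat up the entire $\tau^{-2}$ factor. I instead mirror the proof of Proposition \ref{rnoderivatives}: Sobolev on the 2-sphere at radius $r(p)$ gives $r(p)^2|\Phi(p)|^2\leq C\sum_{k=0}^2\int_{\mathbb{S}^2(r(p))}(\tilde{\Omega}^k\Phi)^2 r(p)^2\,d\omega$; a pigeonhole in $r\in[r(p),r(p)+1]$ together with the fundamental theorem of calculus in $r$ reduces this to an annular integral of $(\tilde{\Omega}^k\Phi)^2+(D\tilde{\Omega}^k\Phi)^2$ weighted by $r'^2$. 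In this annulus, lying in $\{r^-_Y\leq r\leq t^*/2\}$, the conformal energy controls $\int\Phi^2/r^2$ and $\int(D\Phi)^2$ by $C\tau^{-2}$ times $\int J^{Z+N,w^Z}_\mu(\tilde{\Omega}^k\Phi)n^\mu$, so dividing by $r(p)^2\geq R_*^2$ yields the desired $\tau^{-2}$ bound. Finally, since $\Box_{g_K}\Phi=0$, Proposition \ref{elliptic} allows one to trade $\tilde{\Omega}^k$ commutators for $\partial_{t^*}^j$ commutators with $j\leq 2$, matching the form $J^{Z+N,w^Z}(\hat{Y}^i\partial_{t^*}^j\Phi)$ with $i=0$ in the statement.

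The main obstacle is this large-$r$ subcase, where naive Sobolev on a unit ball wastes the $\tau^{-2}$ factor and one is forced to use the spherical pigeonhole trick familiar from Proposition \ref{rnoderivatives}. A secondary subtlety is the Hardy/Poincar\'e argument near the horizon that produces the near-horizon contribution with the correct $C\tau^2$ normalization after cancellation against the overall $\tau^{-2}$ prefactor in the statement.
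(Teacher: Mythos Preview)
There is a genuine gap in your final step. Your spherical-Sobolev route in the large-$r$ subcase correctly produces a bound of the form $|\Phi(p)|^2\leq C\tau^{-2}\sum_{k\leq 2}\int_{\Sigma_\tau} J^{Z+N,w^Z}_\mu(\tilde{\Omega}^k\Phi)n^\mu_{\Sigma_\tau}$, but the claim that Proposition~\ref{elliptic} then lets you replace $\tilde{\Omega}^k$ by $\partial_{t^*}^j$ does not go through. Proposition~\ref{elliptic} bounds $\int r^\alpha(D^m\Phi)^2$ by $\sum_j\int r^\alpha J^N_\mu(\partial_{t^*}^j\Phi)n^\mu_{\Sigma_\tau}$: it is an $N$-energy estimate with a pure $r$-power weight, whereas the conformal energy carries the distinguished $u^2$ and $v^2$ weights on the separate null derivatives $\underline{L}\Psi$, $L\Psi$. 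There is no elliptic mechanism in the paper converting $\int J^{Z+N,w^Z}_\mu(\tilde{\Omega}^k\Phi)n^\mu_{\Sigma_\tau}$ into $\sum_j\int J^{Z+N,w^Z}_\mu(\partial_{t^*}^j\Phi)n^\mu_{\Sigma_\tau}$, and an attempt to do so pointwise via $|\tilde{\Omega}^k\Phi|\leq Cr^k|D^k\Phi|$ followed by Proposition~\ref{elliptic} picks up extra powers of $r$ (up to $r^4$ from the $k=2$ term) that exactly cancel the $\tau^{-2}$ gain when $r\sim t^*/4$. What your argument actually establishes is the proposition with $\tilde{\Omega}^k\Phi$ in place of $\hat{Y}^i\partial_{t^*}^j\Phi$ on the right-hand side---a valid and useful inequality in the spirit of Proposition~\ref{rnoderivatives}---but not the statement as written.

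For comparison, the paper's own proof is much shorter and does not split into radial cases: it applies three-dimensional Sobolev embedding directly to obtain $|\Phi|^2\leq C\sum_{k=0}^2\int_{\{r\leq t^*/4\}}(D^k\Phi)^2$, uses the elliptic estimates (Propositions~\ref{elliptic}.2 and~\ref{elliptichorizon}) to reduce the $k\geq 1$ terms to $\sum_{i+j\leq 1}\int_{\{r\leq t^*/2\}}J^N_\mu(\hat{Y}^i\partial_{t^*}^j\Phi)n^\mu_{\Sigma_\tau}$, keeps the zeroth-order term $\int\Phi^2$, and then invokes Proposition~\ref{Zlowerbound}. No $\tilde{\Omega}$ appears at any stage.
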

\begin{proof}
By Sobolev Embedding in three dimensions, for $r\leq\frac{t^*}{4}$,
\begin{equation*}
\begin{split}
|\Phi|^2\leq&C\sum_{k=0}^{2}\int_{\Sigma_\tau\cap\{r\leq\frac{t^*}{4}\}} \left(D^k\Phi\right)^2.
\end{split}
\end{equation*}
Then, using the elliptic estimates in Propositions \ref{elliptic}.2 and \ref{elliptichorizon}, we have
\begin{equation*}
\begin{split}
|\Phi|^2\leq&C\sum_{i+j\leq 2}\int_{\Sigma_\tau\cap\{r\leq\frac{t^*}{2}\}} \Phi^2+J^N_\mu\left(\hat{Y}^i\partial_{t^*}^j\Phi\right)n^\mu_{\Sigma_\tau}.
\end{split}
\end{equation*}
Using Proposition \ref{Zlowerbound}, we can conclude the Proposition.
\end{proof}

We proceed to show that the pointwise estimate is better if we use the vector field commutator $S$. To this end, we first show that we can control a fixed $t^*$ quantity by an integrated quantity. The proof follows ideas in \cite{LS}, \cite{LKerr} and applies an integration in the direction of $S$.

\begin{proposition}\label{extradecay}
For any sufficiently regular $\Phi$, not necessarily satisfying any differential equations, and $\alpha_0$ a constant,
\begin{equation*}
\begin{split}
&\int_{\Sigma_{\tau}\cap\{r \leq\frac{\tau}{4}\}} r^{\alpha_0-2}\Phi^2\\
\leq &C\tau^{-1}\left(\iint_{\mathcal R((1.1)^{-1}\tau,\tau)\cap\{r \leq\frac{t^*}{3}\}} r^{\alpha_0-2}\Phi^2 +\iint_{\mathcal R((1.1)^{-1}\tau,\tau)\cap\{r \leq\frac{t^*}{3}\}} r^{\alpha_0-2} \left(S\Phi\right)^2\right).
\end{split}
\end{equation*}
\end{proposition}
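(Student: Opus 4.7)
The intuition is that $S = t^*\partial_{t^*} + h(r_S)\partial_r$ has a weight of $t^*\sim \tau$ in the $\partial_{t^*}$ direction, so integration along integral curves of $S$ over a unit parameter interval translates into integration along $\partial_{t^*}$ over an interval of length $\sim\tau$. Reversing this trade should convert a slice integral at $t^*=\tau$ into a spacetime integral over a dyadic slab $t^*\in[\tau/1.1,\tau]$ with an overall gain of $\tau^{-1}$. The hypothesis-free statement (no PDE assumed) tells us that only the fundamental theorem of calculus plus a change of variables is needed.

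The first step is to set up the flow. Let $\phi_s$ be the flow of $S$, so $t^*(\phi_s(p)) = t^*(p)e^s$ and $r(\phi_s(p))$ is nondecreasing in $s$ since $h\ge 0$. For $p\in\Sigma_\tau\cap\{r\le\tau/4\}$ and $\sigma\in[0,\log 1.1]$, the fundamental theorem gives
\begin{equation*}
\Phi^2(p) = \Phi^2(\phi_{-\sigma}(p)) + 2\int_{-\sigma}^0 (\Phi\,S\Phi)(\phi_{s'}(p))\,ds'.
\end{equation*}
Averaging over $\sigma\in[0,\log 1.1]$ and exchanging the order of integration,
\begin{equation*}
\Phi^2(p) \le \frac{1}{\log 1.1}\int_0^{\log 1.1}\Phi^2(\phi_{-\sigma}(p))\,d\sigma + 2\int_{-\log 1.1}^0 |\Phi\,S\Phi|(\phi_{s'}(p))\,ds'.
\end{equation*}
A direct check shows that backward flow preserves the relevant region: $t^*(\phi_{-\sigma}(p))\in[\tau/1.1,\tau]$ and $r(\phi_{-\sigma}(p))\le r(p)\le\tau/4$, so $r/t^*\le (1.1)/4 < 1/3$. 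Moreover, because $|h(r)|\le Cr$ (with the appropriate behavior near $r_+$), $r$ changes by at most a bounded multiplicative factor along the flow over parameter length $\log 1.1$, so $r^{\alpha_0-2}(p)$ is comparable to $r^{\alpha_0-2}(\phi_{s'}(p))$ with constants depending only on $\alpha_0$.

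The second step is the change of variables. Multiply the averaged inequality by $r(p)^{\alpha_0-2}$ and integrate in $p$ over $\Sigma_\tau\cap\{r\le\tau/4\}$. For any measurable $g\ge 0$ supported in the flowed-out region, one has
\begin{equation*}
\int_{\Sigma_\tau\cap\{r\le\tau/4\}}\int_{-\log 1.1}^0 g(\phi_{s'}(p))\,ds'\,d\mathrm{vol}_{\Sigma_\tau}(p) \le \frac{C}{\tau}\iint_{\mathcal R((1.1)^{-1}\tau,\tau)\cap\{r\le t^*/3\}} g\,d\mathrm{vol},
\end{equation*}
because the Jacobian of the map $(p,s')\mapsto \phi_{s'}(p)$ relating $ds'\,d\mathrm{vol}_\Sigma$ to $d\mathrm{vol}$ is of size $|S(t^*)|=t^*\sim\tau$ (equivalently, $dt^* = t^*\,ds'$ along the flow). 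An analogous identity handles the $\Phi^2(\phi_{-\sigma}(p))$ term. Applying both, together with AM--GM in the form $2|\Phi\,S\Phi|\le \Phi^2 + (S\Phi)^2$, gives exactly the claimed estimate.

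The only genuinely delicate point is the Jacobian/volume computation in the change of variables, since the volume form on $\Sigma_{t^*}$, the spacetime volume form, and the parametrization by $S$-flow must all be matched consistently; the multiplicative factor $t^*$ is the source of the decisive $\tau^{-1}$ on the right-hand side. The geometric verifications (backward flow keeps us inside $\{r\le t^*/3\}$, and $r^{\alpha_0-2}$ is comparable along orbits) are straightforward consequences of $dt^*/ds = t^*$ and $|h(r)|\le Cr$.
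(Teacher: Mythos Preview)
Your proposal is correct and follows essentially the same strategy as the paper: integrate along integral curves of $S$ and exploit that $dt^* = t^*\,ds$ to convert a slice integral into a dyadic spacetime integral with a gain of $\tau^{-1}$. The only cosmetic difference is that the paper parametrizes by $(\sigma,\rho)=(t^*,\text{const along }S)$ and uses a pigeonhole-then-reapply step to select a good slice $\tilde\tau$, whereas you average in the flow parameter directly; both devices serve the same purpose, and your observation that $r^{\alpha_0-2}$ and the spatial Jacobian are comparable along orbits of length $\log 1.1$ is exactly what the paper's explicit computation of $\frac{\exp((1+\alpha)\int dr'_S/h)}{h}$ versus the volume density verifies.
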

\begin{proof}
To use the estimates for $S\Phi$, we need to integrate along integral curves of $S$. The following argument imitates that for proving improved decay for the homogeneous equation in \cite{LKerr}. We first find the integral curves by solving the ordinary differential equation
$$\frac{dr_S}{dt^*_S}=\frac{h(r_S)}{t^*_S}$$
where $h(r_S)$ is as in the definition of $S$. Hence the integral curves are given by
$$\frac{\exp\left(\int_{(r_S)_0}^{r_S} \frac{dr'_S}{h(r'_S)}\right)}{t^*_S}=\mbox{constant},$$
where $(r_S)_0>2M$ can be chosen arbitrarily.
Let $\sigma=t^*$, $\rho=\frac{\exp\left(\int_{(r_S)_0}^{r_S} \frac{dr_S'}{h(r_S')}\right)}{t_S^*}$ and consider $(\sigma,\rho, x^A, x^B)$ as a new system of coordinates. Notice that
$$\partial_\sigma=\frac{h(r_S)}{t_S^*}\partial_{r_S}+\partial_{t_S^*}=\frac{1}{t_S^*}S.$$
Now for each fixed $\rho$, we have
$$\Phi^2(\tau)\leq \Phi^2(\tau')+|\int_{\tau'}^\tau \frac{1}{\sigma}S(\Phi^2) d\sigma|.$$
Multiplying by $\rho^{\alpha}$ and integrating along a finite region of $\rho$, we get:
$$\int_{\rho_1}^{\rho_2} \Phi^2(\tau)\rho^{\alpha} d\rho\leq \int_{\rho_1}^{\rho_2} \Phi^2(\tau')\rho^{\alpha} d\rho+\int_{\rho_1}^{\rho_2} \int_{\tau'}^\tau |\frac{2\rho^{\alpha}}{\sigma}\Phi S\Phi |d\sigma d\rho.$$
We choose $\alpha$ so that $\alpha=0$ for $r\leq r^-_Y$ and $\alpha=\alpha_0$ for $r \geq R$ and smooth depending on $r$ in between.
We would like to change coordinates back to $(t^*_S,r_S, x_S^A, x_S^B)$. Notice that since $h(r_S)$ is everywhere positive, $(\rho,\tau)$ would correspond to a point with a larger value of $r_S$ than $(\rho,\tau')$. Therefore,
\begin{equation*}
\begin{split}
&\int_{2M}^{(r_S)_2} \Phi^2(\tau)\frac{\exp\left((1+\alpha)\int_{2M}^{r_S} \frac{dr_S'}{h(r_S')}\right)}{\tau h(r_S)}dr_S\\
\leq & \int_{2M}^{(r_S)_2} \Phi^2(\tau')\frac{\exp\left((1+\alpha)\int_{(r_S)_0}^{r_S} \frac{dr_S'}{h(r_S')}\right)}{\tau' h(r_S)}dr_S+\int_{\tau'}^\tau \int_{2M}^{(r_S)_2} |\frac{2}{\sigma}\Phi S\Phi |\frac{\exp\left((1+\alpha)\int_{(r_S)_0}^{r_S} \frac{dr_S'}{h(r_S')}\right)}{t^* h(r_S)}dr_S dt^*.
\end{split}
\end{equation*}
We have to compare $\frac{\exp\left((1+\alpha(r_S))\int_{(r_S)_0}^{r_S} \frac{dr'_S}{h(r'_S)}\right)}{h(r_S)}$ with the volume form. Very close to the horizon, $h(r_S)=r_S-2M$ and $\alpha(r)=0$. Hence 
$$\frac{\exp\left((1+\alpha)\int_{(r_S)_0}^{r_S} \frac{dr'_S}{h(r'_S)}\right)}{h(r_S)}=e^{\int_{(r_S)_0}^{r_S}\frac{dr'_S}{h(r'_S)}}\left(\frac{1}{r_S-2M}\right)\sim 1.$$
On the other hand, for $r\geq R$, $h(r_S)=(r_S+2M\log(r_S-2M)-3M-2M \log M )(1-\mu )$ and $\alpha(r_S)=\alpha_0$. In particular, for a sufficiently large choice of $R$, $h(r_S)\sim r_S$. Hence
$$\frac{\exp\left((1+\alpha)\int_{(r_S)_0}^{r_S} \frac{dr'_S}{h(r'_S)}\right)}{h(r_S)}\sim \frac{\exp\left((1+\alpha)\int_{(r_S)_0}^{r_S} \frac{dr'_S}{h(r'_S)}\right)}{r_S}\sim\left(\frac{r_S^{\alpha_0}}{R}\right)\sim r^{\alpha_0-2}.$$
The corresponding expression on the compact set $[r^-_Y,R]$ is obviously bounded. Hence, since the volume density both on a slice and on a spacetime region is $\sim r^2$, we have
\begin{equation*}
\begin{split}
&\int_{\Sigma_{\tau}\cap\{r <r_2\}} \frac{\Phi^2(\tau)}{\tau}r^{\alpha_0-2}\leq C\left(\int_{\Sigma_{\tau'}\cap\{r <r_2\}} \frac{\Phi^2(\tau')}{\tau'}r^{\alpha_0-2}+\iint_{\mathcal R(\tau',\tau)\cap\{r <r_2\}} r^{\alpha_0-2}|\frac{2}{(t^*)^2}\Phi S\Phi |\right).
\end{split}
\end{equation*}
This easily implies the following improved decay for the non-degenerate energy for $\tau'\in[(1.1)^{-1}\tau,\tau]$:
\begin{equation}\label{intS}
\begin{split}
&\int_{\Sigma_{\tau}\cap\{r <\frac{\tau}{4}\}} r^{\alpha_0-2}\Phi^2\leq C\tau^{-1}\left(\int_{\Sigma_{\tau'}\cap\{r <\frac{\tau'}{3}\}} r^{\alpha_0-2}\Phi^2 +\iint_{\mathcal R((1.1)^{-1}\tau,\tau)\cap\{r <\frac{t^*}{3}\}} r^{\alpha_0-2} \left(S\Phi\right)^2\right).
\end{split}
\end{equation}
By choosing an appropriate $\tilde{\tau}$, we have
$$\int_{\Sigma_{\tilde{\tau}}\cap\{r <\frac{\tilde{\tau}}{3}\}} r^{\alpha_0-2}\Phi^2\leq C\tau^{-1}\iint_{\mathcal R((1.1)^{-1}\tau,\tau)\cap\{r <\frac{t^*}{3}\}} r^{\alpha_0-2}\Phi^2.$$
Now, apply (\ref{intS}) with $\tau'=\tilde{\tau}$, we have 
\begin{equation*}
\begin{split}
&\int_{\Sigma_{\tau}\cap\{r \leq\frac{\tau}{4}\}} r^{-1-\delta}\Phi^2\\
\leq &C\tau\left(\int_{\Sigma_{\tilde{\tau}}\cap\{r <\frac{\tilde{\tau}}{3}\}} \frac{\Phi^2}{\tilde{\tau}}r^{\alpha_0-2}+\iint_{\mathcal R(\tilde{\tau},\tau)\cap\{r <\frac{t^*}{3}\}} r^{\alpha_0-2}|\frac{2}{(t^*)^2}\Phi S\Phi |\right)\\
\leq &C\tau^{-1}\left(\iint_{\mathcal R((1.1)^{-1}\tau,\tau)\cap\{r \leq\frac{t^*}{3}\}} r^{\alpha_0-2}\Phi^2 +\iint_{\mathcal R((1.1)^{-1}\tau,\tau)\cap\{r \leq\frac{t^*}{3}\}} r^{\alpha_0-2} \left(S\Phi\right)^2\right),
\end{split}
\end{equation*}
using Cauchy-Schwarz for the second term.
\end{proof}

By Sobolev Embedding, this would give an improved decay estimate in $t^*$ in the region $\{r\leq\frac{t^*}{4}\}$. For the application, we also need an improved decay in $r$, which we get by commuting with the angular momentum $\tilde{\Omega}$.
\begin{proposition}\label{inside1}
Suppose $\Box_{g_K}\Phi=G$. For $r\leq\frac{t^*}{4}$ and $\ell\geq 1$, we have 
\begin{equation*}
\begin{split}
|D^\ell\Phi|^2\leq &C(t^*)^{-1}r^{-1+\delta}\sum_{i+j\leq \ell-1}\sum_{k=0}^2\iint_{\mathcal R((1.1)^{-1}t^*,t^*)\cap\{r\leq\frac{t^*}{2}\}}\left(K^{X_1}\left(\hat{Y}^i\partial_{t^*}^{j}\tilde{\Omega}^k\Phi\right)+K^{X_1}\left(S\hat{Y}^i\partial_{t^*}^{j}\tilde{\Omega}^k\Phi\right)\right)\\
&+C(t^*)^{-1}r^{-1+\delta}\sum_{j=0}^{\ell-1}\sum_{k=0}^2\iint_{\mathcal R((1.1)^{-1}t^*,t^*)\cap\{r \leq\frac{t^*}{2}\}}r^{-1-\delta}\left(D^j\Box_{g_K}\left(\tilde{\Omega}^k\Phi\right)\right)^2.\\
\end{split}
\end{equation*}

\end{proposition}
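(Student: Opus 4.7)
The plan is to combine the spherical-Sobolev technique underpinning Propositions~\ref{r} and~\ref{SEDinside} with the $S$-flow integration of Proposition~\ref{extradecay}. The commutator $\tilde{\Omega}$ supplies the power $r^{-1+\delta}$, integration along integral curves of $S$ supplies the extra $(t^*)^{-1}$, and the assumption $r\leq t^*/4$ ensures both tools can be deployed simultaneously in a region where Proposition~\ref{extradecay} applies.

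First I would reproduce the spherical-Sobolev-plus-radial-fundamental-theorem-of-calculus chain from the proof of Proposition~\ref{r}, commuting with $\tilde{\Omega}$ up to twice and using $|[D,\tilde{\Omega}]\Phi|\leq C|D\Phi|$. After picking $\tilde r\in[r,r+1]$ and pigeonholing, this yields
\[
r^{2}|D^\ell\Phi|^2\leq C\sum_{k=0}^{2}\int_{\Sigma_\tau\cap\{r\leq r'\leq r+1\}}\bigl(|D^{\ell+1}\tilde{\Omega}^k\Phi|^2+|D^\ell\tilde{\Omega}^k\Phi|^2\bigr).
\]
On this thin shell $r'\sim r$, so I insert a factor $r^{-1-\delta}\cdot r^{1+\delta}$ to recast the right-hand side in the $K^{X_1}$-compatible weight. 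Since $r\leq t^*/4$, the shell lies inside $\{r\leq t^*/2\}$ and I can apply the localized elliptic estimate Proposition~\ref{elliptic}.2 in $\{r\geq r^-_Y\}$ together with the horizon elliptic estimate Proposition~\ref{elliptichorizon} in $\{r\leq r^-_Y\}$ to convert $D^{\ell+1}\tilde{\Omega}^k\Phi$ into $\hat{Y}^i\partial_{t^*}^j\tilde{\Omega}^k\Phi$-quantities of lower order plus inhomogeneous terms $(D^j\Box_{g_K}(\tilde{\Omega}^k\Phi))^2$ with $j\leq\ell-1$. Using $r^{-1-\delta}J^N_\mu(\Psi)n^\mu_{\Sigma_\tau}\leq K^{X_1}(\Psi)$, this produces the intermediate fixed-$\tau$ bound
\[
|D^\ell\Phi|^2\leq Cr^{-1+\delta}\left(\sum_{\substack{i+j\leq\ell-1\\ k\leq 2}}\int_{\Sigma_\tau\cap\{r\leq t^*/2\}}K^{X_1}\bigl(\hat{Y}^i\partial_{t^*}^j\tilde{\Omega}^k\Phi\bigr)+\sum_{\substack{j\leq\ell-1\\ k\leq 2}}\int_{\Sigma_\tau\cap\{r\leq t^*/2\}}r^{-1-\delta}\bigl(D^j\Box_{g_K}(\tilde{\Omega}^k\Phi)\bigr)^2\right).
\]

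To promote this to the desired spacetime estimate with the extra factor $(t^*)^{-1}$, I would apply Proposition~\ref{extradecay} with $\alpha_0=1-\delta$ to every scalar quantity appearing on the right. For the energy pieces, with $\Psi=\hat{Y}^i\partial_{t^*}^j\tilde{\Omega}^k\Phi$, one applies the Proposition component-by-component to $D\Psi$ and handles the resulting spacetime term $\iint r^{-1-\delta}(SD\Psi)^2$ via $(SD\Psi)^2\leq 2(DS\Psi)^2+C(D\Psi)^2$ (using the boundedness of $[S,D]$, $[S,\hat{Y}]$, $[S,\tilde{\Omega}]$) and then by $r^{-1-\delta}(DS\Psi)^2\leq K^{X_1}(S\Psi)$; for the inhomogeneous pieces, Proposition~\ref{extradecay} is applied directly to $D^j\Box_{g_K}(\tilde{\Omega}^k\Phi)$, and the resulting $S$-commutator is reduced, via the schematic identity $[S,\Box_{g_K}]\sim\Box_{g_K}$ plus lower-order terms, to quantities already on the right-hand side. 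The principal obstacle is the derivative book-keeping: the statement allows only $i+j\leq\ell-1$ and $j\leq\ell-1$, so one must arrange the elliptic step and the $S$-commutations so that every error term stays within this budget, absorbing all commutator contributions into $K^{X_1}(\Psi)$, $K^{X_1}(S\Psi)$, or the inhomogeneous terms of the same order.
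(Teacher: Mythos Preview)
Your proposal is correct and uses the same three ingredients as the paper: the spherical Sobolev/radial fundamental-theorem argument with $\tilde\Omega$, Proposition~\ref{extradecay}, and the elliptic estimates of Propositions~\ref{elliptic}.2 and~\ref{elliptichorizon}. The only difference is the order: the paper applies Proposition~\ref{extradecay} \emph{before} the elliptic step (so the $S$--commutation and elliptic reduction are done on the spacetime integrand), whereas you do the elliptic reduction on the fixed $\tau$--slice first and then invoke Proposition~\ref{extradecay} term by term; the paper's order is slightly cleaner because the region cutoffs line up automatically (extradecay outputs $\{r\leq t^*/3\}$, which then feeds into the localized elliptic estimate on $\{r\leq t^*/2\}$), while in your order you must take care that the region produced by the elliptic step is still small enough to sit on the left-hand side of Proposition~\ref{extradecay}.
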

\begin{proof}
Using a similar argument as before, except for choosing $\tilde{r}\leq r$, we have 
\begin{equation*}
\begin{split}
&r^{1-\delta}|D^\ell\Phi|^2\\
\leq &C\sum_{k=0}^{2}\int_{\mathbb S^2}\left(\tilde{\Omega}^kD^\ell\Phi\right)^2r^{1-\delta}dA\\
\leq&C\left(\sum_{k=0}^2\int_{\mathbb S^2(\tilde{r})}\left(D^\ell\Omega^k\Phi\right)^2\tilde{r}^{1-\delta} dA+\int_{\tilde{r}}^r\int_{\mathbb S^2(r')}\left(\left(D^{\ell+1}\Omega^k\Phi\right)^2+\left(D^\ell\Omega^k\Phi\right)^2\right)(r')^{1-\delta} dAdr'\right).\\
\end{split}
\end{equation*}
Using Proposition \ref{extradecay}, we have
\begin{equation*}
\begin{split}
&\int_{\tilde{r}}^r\int_{\mathbb S^2(r')}\left(D^\ell\Omega^k\Phi\right)^2(r')^{1-\delta} dAdr'\\
\leq&C\int_{\Sigma_\tau\cap\{r\leq\frac{\tau}{4}\}}r^{-1-\delta}\left(D^\ell\Omega^k\Phi\right)^2\\
\leq&C\tau^{-1}\left(\iint_{\mathcal R((1.1)^{-1}\tau,\tau)\cap\{r \leq\frac{t^*}{3}\}} r^{-1-\delta}\left(D^\ell\Omega^k\Phi\right)^2 +\iint_{\mathcal R((1.1)^{-1}\tau,\tau)\cap\{r \leq\frac{t^*}{3}\}} r^{-1-\delta} \left(SD^\ell\Omega^k\Phi\right)^2\right)
\end{split}
\end{equation*}

By first commuting $[D,S]$ and then using Proposition \ref{elliptic}.2 and \ref{elliptichorizon} on each fixed $t^*$ slice in the integral, we have
\begin{equation*}
\begin{split}
&\iint_{\mathcal R((1.1)^{-1}\tau,\tau)\cap\{r \leq\frac{t^*}{3}\}} r^{-1-\delta}\left(D^\ell\Omega^k\Phi\right)^2 +\iint_{\mathcal R((1.1)^{-1}\tau,\tau)\cap\{r \leq\frac{t^*}{3}\}} r^{-1-\delta} \left(SD^\ell\Omega^k\Phi\right)^2\\
\leq&C\sum_{i+j\leq\ell-1}\iint_{\mathcal R((1.1)^{-1}\tau,\tau)\cap\{r \leq\frac{t^*}{2}\}}r^{-1-\delta}\left(J^N_\mu\left(Y^i\partial_{t^*}^{j-i}\tilde{\Omega}^k\Phi\right)n^\mu_{\Sigma_\tau}+J^N_\mu\left(SY^i\partial_{t^*}^{j-i}\tilde{\Omega}^k\Phi\right)n^\mu_{\Sigma_\tau}\right)\\
&+C\sum_{j=0}^{\ell-1}\iint_{\mathcal R((1.1)^{-1}\tau,\tau)\cap\{r \leq\frac{t^*}{2}\}}r^{-1-\delta}\left(D^j\Box_{g_K}\left(\tilde{\Omega}^k\Phi\right)\right)^2\\
\leq&C\sum_{i+j\leq\ell-1}\iint_{\mathcal R((1.1)^{-1}\tau,\tau)\cap\{r \leq\frac{t^*}{2}\}}\left(K^{X_1}\left(Y^i\partial_{t^*}^{j-i}\tilde{\Omega}^k\Phi\right)+K^{X_1}\left(SY^i\partial_{t^*}^{j-i}\tilde{\Omega}^k\Phi\right)\right)\\
&+C\sum_{j=0}^{\ell-1}\iint_{\mathcal R((1.1)^{-1}\tau,\tau)\cap\{r \leq\frac{t^*}{2}\}}r^{-1-\delta}\left(D^j\Box_{g_K}\left(\tilde{\Omega}^k\Phi\right)\right)^2.\\
\end{split}
\end{equation*}
Therefore, we have
\begin{equation*}
\begin{split}
&r^{1-\delta}|D^\ell\Phi|^2\\
\leq&C\tau^{-1}\sum_{i+j\leq \ell-1}\sum_{k=0}^2\iint_{\mathcal R((1.1)^{-1}\tau,\tau)\cap\{r\leq\frac{t^*}{2}\}}\left(K^{X_1}\left(\hat{Y}^i\partial_{t^*}^{j}\tilde{\Omega}^k\Phi\right)+K^{X_1}\left(S\hat{Y}^i\partial_{t^*}^{j}\tilde{\Omega}^k\Phi\right)\right)\\
&+C\tau^{-1}\sum_{j=0}^{\ell-1}\sum_{k=0}^2\iint_{\mathcal R((1.1)^{-1}\tau,\tau)\cap\{r \leq\frac{t^*}{2}\}}r^{-1-\delta}\left(D^j\Box_{g_K}\left(\tilde{\Omega}^k\Phi\right)\right)^2.\\
\end{split}
\end{equation*}
\end{proof}
Similar ideas can be used to prove decay of $\Phi$ without derivatives, except for a loss in powers of $r$. This will not be used for the bootstrap argument, but will be used to prove the decay for $\Phi$ in the statement of Theorem 1.
\begin{proposition}\label{inside2}
Suppose $\Box_{g_K}\Phi=G$. For $r\leq\frac{t^*}{4}$, we have 
\begin{equation*}
\begin{split}
|\Phi|^2\leq &C(t^*)^{-1}r^{\delta}\sum_{k=0}^2\iint_{\mathcal R((1.1)^{-1}t^*,t^*)\cap\{r\leq\frac{t^*}{3}\}}\left(K^{X_1}\left(\tilde{\Omega}^k\Phi\right)+K^{X_1}\left(S\tilde{\Omega}^k\Phi\right)\right).
\end{split}
\end{equation*}

\end{proposition}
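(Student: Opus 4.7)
The plan is to mirror the proof of Proposition \ref{inside1}, with $\Phi$ in place of $D^\ell\Phi$, exploiting the fact that $K^{X_1}$ controls $r^{-3-\delta}\Phi^2$ in addition to $r^{-1-\delta}J^N_\mu n^\mu$. Because no derivative of $\Phi$ needs to be generated pointwise, neither elliptic estimates nor the source $G$ appear in the conclusion; this is what makes the right-hand side cleaner than in Proposition \ref{inside1}. The price is a loss of $r^{2}$ in the spatial weight, since the zeroth-order part of $K^{X_1}$ comes with only $r^{-3-\delta}$ rather than the $r^{-1-\delta}$ that the first-order piece affords.

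First I would run Sobolev embedding on the unit $2$-sphere with the commutator $\tilde{\Omega}$, giving $|\Phi|^2\leq C\sum_{k=0}^{2}\int_{S^2}(\tilde{\Omega}^k\Phi)^2\,dA$ in the regime $r\geq R_\Omega+1$; in the remaining bounded region $r\leq R_\Omega+1$ a plain three-dimensional Sobolev gives $|\Phi|^2\lesssim\int_{\Sigma_\tau\cap\{r\leq R_\Omega+2\}}(\Phi^2+|D\Phi|^2+|D^2\Phi|^2)$, and since $r$ is bounded, the asserted $r^\delta$ weight is free there. Next I would combine the spherical bound with the radial fundamental theorem of calculus weighted by $r^{-\delta}$: for any $\tilde r\in[r-1,r]$ and Cauchy--Schwarz on the cross term,
$$r^{-\delta}|\Phi|^2\leq C\sum_{k=0}^{2}\left[\tilde r^{-\delta}\!\!\int_{S^2(\tilde r)}\!\!(\tilde{\Omega}^k\Phi)^2\,dA+\int_{\tilde r}^{r}\!\!\int_{S^2(r')}\!\!\Bigl(r'^{-\delta-1}(\tilde{\Omega}^k\Phi)^2+r'^{1-\delta}(D\tilde{\Omega}^k\Phi)^2\Bigr)\,dA\,dr'\right].$$
Converting to the three-dimensional volume form $r^2\,dA\,dr$, the two integrands become $r^{-3-\delta}(\tilde{\Omega}^k\Phi)^2$ and $r^{-1-\delta}(D\tilde{\Omega}^k\Phi)^2$, which together constitute $K^{X_1}(\tilde{\Omega}^k\Phi)$. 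Pigeonholing $\tilde r\in[r-1,r]$ absorbs the boundary term into the same slice integral, producing
$$r^{-\delta}|\Phi|^2\leq C\sum_{k=0}^{2}\int_{\Sigma_\tau\cap\{r\leq t^*/4\}}K^{X_1}(\tilde{\Omega}^k\Phi).$$

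Finally I would convert this slice estimate into the asserted spacetime estimate via Proposition \ref{extradecay}, invoked twice: once with $\alpha_0=-1-\delta$ applied to $\tilde{\Omega}^k\Phi$, producing $r^{-3-\delta}(\tilde{\Omega}^k\Phi)^2$ and $r^{-3-\delta}(S\tilde{\Omega}^k\Phi)^2$; and once with $\alpha_0=1-\delta$ applied to the function $D\tilde{\Omega}^k\Phi$, producing $r^{-1-\delta}(D\tilde{\Omega}^k\Phi)^2$ and $r^{-1-\delta}(SD\tilde{\Omega}^k\Phi)^2$. The first three are immediately pieces of $K^{X_1}(\tilde{\Omega}^k\Phi)$ or $K^{X_1}(S\tilde{\Omega}^k\Phi)$; the fourth is handled by commuting $[S,D]$, whose commutator is lower order and reabsorbs into $K^{X_1}(\tilde{\Omega}^k\Phi)$. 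Combined with the $\tau^{-1}$ gain of Proposition \ref{extradecay}, this yields exactly the claimed bound, with the region of integration $\mathcal{R}((1.1)^{-1}t^*,t^*)\cap\{r\leq t^*/3\}$ inherited from that proposition.

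I expect the main obstacle not to be any individual inequality but the weight bookkeeping when switching between unit-sphere measure, area on $S^2(r)$, and three-dimensional volume form: these exponents dictate the precise $\alpha_0$ choices for Proposition \ref{extradecay}, and an off-by-one in $r$-powers would either miss a piece of $K^{X_1}$ or leave the $r^{\delta}$ factor unable to absorb the residual weight.
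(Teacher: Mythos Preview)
There is a genuine gap in your treatment of the boundary term. When you pigeonhole $\tilde r$ over the unit interval $[r-1,r]$, the boundary contribution $\tilde r^{-\delta}\int_{\mathbb S^2(\tilde r)}(\tilde\Omega^k\Phi)^2\,dA$ is bounded only by
\[
\int_{r-1}^{r}(r')^{-\delta}\int_{\mathbb S^2(r')}(\tilde\Omega^k\Phi)^2\,dA\,dr'
=\int_{\Sigma_\tau\cap\{r-1\le r'\le r\}}(r')^{-2-\delta}(\tilde\Omega^k\Phi)^2,
\]
and the weight here is $r^{-2-\delta}$, not the $r^{-3-\delta}$ that sits inside $K^{X_1}$. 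That is a full power of $r$ off, so your intermediate slice bound $r^{-\delta}|\Phi|^2\le C\sum_k\int_{\Sigma_\tau}K^{X_1}(\tilde\Omega^k\Phi)$ does not follow. Carrying the extra $r$ through Proposition~\ref{extradecay} leaves you with either $r^{1+\delta}\tau^{-1}$ or (after trading $r\le\tau/4$) with $r^{\delta}$ but no $\tau^{-1}$; in either case you miss the stated $(t^*)^{-1}$ factor.

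The paper fixes exactly this by pigeonholing $\tilde r$ over the \emph{long} interval $[R,\tau/5]$. That gains an extra $\tau^{-1}$, so the boundary term becomes $\tau^{-1}\int_{\Sigma_\tau}r^{-2-\delta}(\tilde\Omega^k\Phi)^2$. The bad weight is then repaired via $r^{-2-\delta}\le C\tau\,r^{-3-\delta}$ on $\{r\le\tau/4\}$, and finally Proposition~\ref{extradecay} supplies a further $\tau^{-1}$. The net of these three steps (gain $\tau^{-1}$, lose $\tau$, gain $\tau^{-1}$) is precisely the $(t^*)^{-1}$ in the statement. Your closing remark that the main obstacle is weight bookkeeping was on point, but the critical choice is the length of the pigeonhole interval, not the $\alpha_0$'s fed into Proposition~\ref{extradecay}; a unit interval simply cannot manufacture the missing $r^{-1}$ for the zeroth-order term.
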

\begin{proof}
Fix $R$. Take $\tilde{r}\in [R, \frac{\tau}{5}]$, we have 
\begin{equation*}
\begin{split}
&r^{-\delta}|\Phi|^2\\
\leq &C\sum_{k=0}^{2}\int_{\mathbb S^2}\left(\tilde{\Omega}^k\Phi\right)^2r^{-\delta}dA\\
\leq&C\left(\sum_{k=0}^2\int_{\mathbb S^2(\tilde{r})}\left(\Omega^k\Phi\right)^2\tilde{r}^{-\delta} dA+|\int_{\tilde{r}}^r\int_{\mathbb S^2(r')}|\Omega^k\Phi D\Omega^k\Phi|^2(r')^{-\delta}+\left(\Omega^k\Phi\right)^2(r')^{-1-\delta} dAdr'|\right).\\
\end{split}
\end{equation*}
There exists $\tilde{r}\in [R, \frac{\tau}{5}]$ such that 
$$\int_{\mathbb S^2(\tilde{r})}\left(\Omega^k\Phi\right)^2\tilde{r}^{-\delta} dA\leq \tau^{-1}\int_{r_+}^{\frac{\tau}{4}}\int_{\mathbb S^2(r')}\left(\Omega^k\Phi\right)^2(r')^{-\delta} dAdr'.$$
Using Proposition \ref{extradecay}, we have
\begin{equation*}
\begin{split}
&\int_{r_+}^{\frac{\tau}{4}}\int_{\mathbb S^2(\tilde{r})}\left(\Omega^k\Phi\right)^2(r')^{-\delta} dAdr'\\
\leq&C\tau\int_{\Sigma_\tau\cap\{r\leq\frac{\tau}{4}\}}r^{-3-\delta}\left(\Omega^k\Phi\right)^2\\
\leq&C\left(\iint_{\mathcal R((1.1)^{-1}\tau,\tau)\cap\{r \leq\frac{t^*}{3}\}} r^{-3-\delta}\left(\Omega^k\Phi\right)^2 +\iint_{\mathcal R((1.1)^{-1}\tau,\tau)\cap\{r \leq\frac{t^*}{3}\}} r^{-3-\delta} \left(S\Omega^k\Phi\right)^2\right) \\
\leq&C\iint_{\mathcal R((1.1)^{-1}\tau,\tau)\cap\{r \leq\frac{t^*}{3}\}}\left(K^{X_1}\left(\tilde{\Omega}^k\Phi\right)+K^{X_1}\left(S\tilde{\Omega}^k\Phi\right)\right).
\end{split}
\end{equation*}
Using Proposition \ref{extradecay}, we also have
\begin{equation*}
\begin{split}
&|\int_{\tilde{r}}^r\int_{\mathbb S^2(r')}|\Omega^k\Phi D\Omega^k\Phi|^2(r')^{-\delta}+\left(\Omega^k\Phi\right)^2(r')^{-1-\delta} dAdr'|\\
\leq&C\int_{\Sigma_\tau\cap\{r\leq\frac{\tau}{4}\}}r^{-3-\delta}\left(\Omega^k\Phi\right)^2+\int_{\Sigma_\tau\cap\{r\leq\frac{\tau}{4}\}}r^{-1-\delta}\left(D\Omega^k\Phi\right)^2\\
\leq&C\tau^{-1}\left(\iint_{\mathcal R((1.1)^{-1}\tau,\tau)\cap\{r \leq\frac{t^*}{3}\}} r^{-3-\delta}\left(\Omega^k\Phi\right)^2 +\iint_{\mathcal R((1.1)^{-1}\tau,\tau)\cap\{r \leq\frac{t^*}{3}\}} r^{-3-\delta} \left(S\Omega^k\Phi\right)^2\right) \\
&+C\tau^{-1}\left(\iint_{\mathcal R((1.1)^{-1}\tau,\tau)\cap\{r \leq\frac{t^*}{3}\}} r^{-1-\delta}\left(D\Omega^k\Phi\right)^2 +\iint_{\mathcal R((1.1)^{-1}\tau,\tau)\cap\{r \leq\frac{t^*}{3}\}} r^{-1-\delta} \left(SD\Omega^k\Phi\right)^2\right)\\
\leq&C\tau^{-1}\iint_{\mathcal R((1.1)^{-1}\tau,\tau)\cap\{r \leq\frac{t^*}{3}\}}\left(K^{X_1}\left(\tilde{\Omega}^k\Phi\right)+K^{X_1}\left(S\tilde{\Omega}^k\Phi\right)\right).
\end{split}
\end{equation*}

Therefore, we have
\begin{equation*}
\begin{split}
|\Phi|^2
\leq&C\tau^{-1}r^{\delta}\sum_{k=0}^2\iint_{\mathcal R((1.1)^{-1}\tau,\tau)\cap\{r\leq\frac{t^*}{2}\}}\left(K^{X_1}\left(\tilde{\Omega}^k\Phi\right)+K^{X_1}\left(S\tilde{\Omega}^k\Phi\right)\right).\\
\end{split}
\end{equation*}
\end{proof}
\section{Bootstrap}\label{bootstrap}
Bootstrap assumptions (J): We first introduce the bootstrap assumptions corresponding to energy quantities on a fixed $t^*$ slice.
\begin{equation}\label{BA1}
  \sum_{i+j=16}A_j^{-1}\int_{\Sigma_\tau}J^{N}_\mu\left(\partial_{t^*}^i\tilde{\Omega}^j\Phi\right) n^{\mu}_{\Sigma_\tau}+\sum_{i+k=16}A_Y^{-1}\int_{\Sigma_\tau}J^{N}_\mu\left(\hat{Y}^k\partial_{t^*}^i\Phi\right) n^{\mu}_{\Sigma_\tau}\leq \epsilon \tau^{\eta_{16}}.
\end{equation}
\begin{equation}\label{BA2}
\begin{split}
 \sum_{i+j=15}A_j^{-1}\left(\int_{\Sigma_\tau} J^{Z+N,w^Z}_\mu\left(\partial_{t^*}^i\tilde{\Omega}^j\Phi\right) n^{\mu}_{\Sigma_\tau} +C\tau^2\int_{\Sigma_\tau\cap\{r\leq \frac{9\tau}{10}\}} J^{N}_\mu\left(\partial_{t^*}^i\tilde{\Omega}^j\Phi\right) n^{\mu}_{\Sigma_\tau}\right)&\\
+\sum_{i+k=15}A_{Y}^{-1}\tau^2\int_{\Sigma_\tau\cap\{r\leq r^+_Y\}}J^{N}_\mu\left(\hat{Y}^k\partial_{t^*}^i\Phi\right) n^{\mu}_{\Sigma_\tau}&\leq \epsilon \tau^{1+\eta_{15}}.
\end{split}
\end{equation}
\begin{equation}\label{BA3}
 \begin{split}
 \sum_{i+j\leq 14}A_j^{-1}\left(\int_{\Sigma_\tau} J^{Z+N,w^Z}_\mu\left(\partial_{t^*}^i\tilde{\Omega}^j\Phi\right) n^{\mu}_{\Sigma_\tau} +C\tau^2\int_{\Sigma_\tau\cap\{r\leq \frac{9\tau}{10}\}} J^{N}_\mu\left(\partial_{t^*}^i\tilde{\Omega}^j\Phi\right) n^{\mu}_{\Sigma_\tau}\right)&\\
+\sum_{i+k\leq 14}A_{Y}^{-1}\tau^2\int_{\Sigma_\tau\cap\{r\leq r^+_Y\}}J^{N}_\mu\left(\hat{Y}^k\partial_{t^*}^i\Phi\right) n^{\mu}_{\Sigma_\tau}&\leq \epsilon \tau^{\eta_{14}}.
\end{split}
\end{equation}
\begin{equation}\label{BA4}
  \sum_{i+j\leq 15}A_j^{-1}\int_{\Sigma_\tau} J^{N}_\mu\left(\partial_{t^*}^i\tilde{\Omega}^j\Phi\right) n^{\mu}_{\Sigma_\tau}\leq \epsilon .
\end{equation}
\begin{equation}\label{BA5}
  \sum_{i+j=13}A_{S,j}^{-1}\int_{\Sigma_\tau} J^{N}_\mu\left(S\partial_{t^*}^i\tilde{\Omega}^j\Phi\right) n^{\mu}_{\Sigma_\tau}+\sum_{i+k=13}A_{S,Y}^{-1}\int_{\Sigma_\tau}J^{N}_\mu\left(\hat{Y}^kS\partial_{t^*}^i\Phi\right) n^{\mu}_{\Sigma_\tau}\leq \epsilon \tau^{\eta_{S,13}}.
\end{equation}
\begin{equation}\label{BA6}
\begin{split}
 \sum_{i+j=12}A_{S,j}^{-1}\left(\int_{\Sigma_\tau} J^{Z+N,w^Z}_\mu\left(S\partial_{t^*}^i\tilde{\Omega}^j\Phi\right) n^{\mu}_{\Sigma_\tau} +C\tau^2\int_{\Sigma_\tau\cap\{r\leq \frac{9\tau}{10}\}} J^{N}_\mu\left(S\partial_{t^*}^i\tilde{\Omega}^j\Phi\right) n^{\mu}_{\Sigma_\tau}\right)&\\
+\sum_{i+k=12}A_{S,Y}^{-1}\tau^2\int_{\Sigma_\tau\cap\{r\leq r^+_Y\}}J^{N}_\mu\left(\hat{Y}^kS\partial_{t^*}^i\Phi\right) n^{\mu}_{\Sigma_\tau}&\leq \epsilon \tau^{1+\eta_{S,12}}.
\end{split}
\end{equation}
\begin{equation}\label{BA7}
\begin{split} 
\sum_{i+j\leq 11}A_{S,j}^{-1}\left(\int_{\Sigma_\tau} J^{Z+N,w^Z}_\mu\left(S\left(\partial_{t^*}^i\tilde{\Omega}^j\Phi\right)\right) n^{\mu}_{\Sigma_\tau} +C\tau^2\int_{\Sigma_\tau\cap\{r\leq \frac{9\tau}{10}\}} J^{N}_\mu\left(S\left(\partial_{t^*}^i\tilde{\Omega}^j\Phi\right)\right) n^{\mu}_{\Sigma_\tau}\right)&\\
+\sum_{i+k\leq 12}A_{S,Y}^{-1}\tau^2\int_{\Sigma_\tau\cap\{r\leq r^+_Y\}}J^{N}_\mu\left(\hat{Y}^kS\partial_{t^*}^i\Phi\right) n^{\mu}_{\Sigma_\tau}&\leq \epsilon \tau^{\eta_{S,11}}
\end{split}
\end{equation}
\begin{equation}\label{BA8}
  A_{S,j}^{-1}\sum_{j=0}^{12}\int_{\Sigma_\tau} J^{N}_\mu\left(S\partial_{t^*}^i\tilde{\Omega}^j\Phi\right) n^{\mu}_{\Sigma_\tau}\leq \epsilon .
\end{equation}
Bootstrap Assumptions (K): We also need bootstrap assumptions for the energy quantities in a spacetime slab.
\begin{equation}\label{BAK1}
\begin{split} 
\sum_{i+j=16}A_{X,j}^{-1}\iint_{\mathcal R(\tau_0,\tau)} \left(K^{X_0}\left(\partial_{t^*}^i\tilde{\Omega}^j\Phi\right)+K^{N}\left(\partial_{t^*}^i\tilde{\Omega}^j\Phi\right)\right)\leq \epsilon\tau^{\eta_{16}}.
\end{split}
\end{equation}
\begin{equation}\label{BAK1.4}
\sum_{i+j=15}A_{X,j}^{-1}\iint_{\mathcal R(\tau_0,\tau)} K^{X_1}\left(\partial_{t^*}^i\tilde{\Omega}^j\Phi\right)\leq \epsilon\tau^{\eta_{16}}.
\end{equation}
\begin{equation}\label{BAK1.5}
 \sum_{i+j\leq 15}A_{X,j}^{-1}\iint_{\mathcal R(\tau_0,\tau)} K^{X_0}\left(\partial_{t^*}^i\tilde{\Omega}^j\Phi\right)\leq \epsilon.
\end{equation}
\begin{equation}\label{BAK1.6}
 \sum_{i+j\leq 14}A_{X,j}^{-1}\iint_{\mathcal R(\tau_0,\tau)} K^{X_1}\left(\partial_{t^*}^i\tilde{\Omega}^j\Phi\right)\leq \epsilon.
\end{equation}
\begin{equation}\label{BAK2}
\begin{split}
 \sum_{i+j\leq 15}A_{X,j}^{-1}\iint_{\mathcal R((1.1)^{-1}\tau,\tau)\cap\{r\leq\frac{t^*}{2}\}}\left(K^{X_0}\left(\partial_{t^*}^{i}\tilde{\Omega}^j\Phi\right)+K^{N}\left(\partial_{t^*}^{i}\tilde{\Omega}^j\Phi\right)\right)\leq\epsilon\tau^{-1+\eta_{15}}.
\end{split}
\end{equation}
\begin{equation}\label{BAK3}
\sum_{i+j\leq 14}A_{X,j}^{-1}\iint_{\mathcal R((1.1)^{-1}\tau,\tau)\cap\{r\leq\frac{t^*}{2}\}} K^{X_1}\left(\partial_{t^*}^{i}\tilde{\Omega}^j\Phi\right)\leq\epsilon\tau^{-1+\eta_{15}}.
\end{equation}
\begin{equation}\label{BAK4}
\begin{split}
 \sum_{i+j\leq 14}A_{X,j}^{-1}\iint_{\mathcal R((1.1)^{-1}\tau,\tau)\cap\{r\leq\frac{t^*}{2}\}}\left(K^{X_0}\left(\partial_{t^*}^{i}\tilde{\Omega}^j\Phi\right)+K^{N}\left(\partial_{t^*}^{i}\tilde{\Omega}^j\Phi\right)\right)\leq\epsilon\tau^{-2+\eta_{14}}.
\end{split}
\end{equation}
\begin{equation}\label{BAK5}
 \sum_{i+j\leq 13}A_{X,j}^{-1}\iint_{\mathcal R((1.1)^{-1}\tau,\tau)\cap\{r\leq\frac{t^*}{2}\}}K^{X_1}\left(\partial_{t^*}^{i}\tilde{\Omega}^j\Phi\right)\leq\epsilon\tau^{-2+\eta_{14}}.
\end{equation}
\begin{equation}\label{BAK6}
\begin{split}
 \sum_{i+j=13}A_{S,X,j}^{-1}\iint_{\mathcal R(\tau_0,\tau)} K^{X_0}\left(S\partial_{t^*}^i\tilde{\Omega}^j\Phi\right)\leq \epsilon\tau^{\eta_{S,13}}.
\end{split}
\end{equation}
\begin{equation}\label{BAK6.5}
\begin{split}
 \sum_{i+j\leq 12}A_{S,X,j}^{-1}\iint_{\mathcal R(\tau_0,\tau)} K^{X_0}\left(S\partial_{t^*}^i\tilde{\Omega}^j\Phi\right)\leq \epsilon.
\end{split}
\end{equation}
\begin{equation}\label{BAK7}
\begin{split}
 \sum_{i+j+k\leq 12}A_{S,X,j}^{-1}\iint_{\mathcal R((1.1)^{-1}\tau,\tau)\cap\{r\leq\frac{t^*}{2}\}}K^{X_0}\left(S\partial_{t^*}^{i}\tilde{\Omega}^j\Phi\right)\leq\epsilon\tau^{-1+\eta_{S,12}}.
\end{split}
\end{equation}
\begin{equation}\label{BAK8}
\sum_{i+j\leq 11}A_{S,X,j}^{-1}\iint_{\mathcal R((1.1)^{-1}\tau,\tau)\cap\{r\leq\frac{t^*}{2}\}} K^{X_1}\left(S\partial_{t^*}^{i}\tilde{\Omega}^j\Phi\right)\leq\epsilon\tau^{-1+\eta_{S,12}}.
\end{equation}
\begin{equation}\label{BAK9}
\begin{split}
 \sum_{i+j\leq 11}A_{S,X,j}^{-1}\iint_{\mathcal R((1.1)^{-1}\tau,\tau)\cap\{r\leq\frac{t^*}{2}\}}K^{X_0}\left(S\partial_{t^*}^{i}\tilde{\Omega}^j\Phi\right)\leq\epsilon\tau^{-2+\eta_{S,11}}.
\end{split}
\end{equation}
\begin{equation}\label{BAK10}
 \sum_{i+j\leq 10}A_{S,X,j}^{-1}\iint_{\mathcal R((1.1)^{-1}\tau,\tau)\cap\{r\leq\frac{t^*}{2}\}}K^{X_1}\left(S\partial_{t^*}^{i}\tilde{\Omega}^j\Phi\right)\leq\epsilon\tau^{-2+\eta_{S,11}}.
\end{equation}

Bootstrap Assumptions (P): We also introduce bootstrap assumptions for the pointwise behavior.
For $r\geq\frac{t^*}{4}$,
\begin{equation}\label{BAP1}
\sum_{j=0}^{13}|\Gamma^j\Phi|^2\leq BA\epsilon r^{-2}(t^*)^{1+\eta_{14}}.
\end{equation}
\begin{equation}\label{BAP1.5}
\sum_{j=0}^{13}|D\Gamma^j\Phi|^2\leq BA\epsilon.
\end{equation}
\begin{equation}\label{BAP2}
\sum_{\ell=1}^{13-j}\sum_{j=0}^{12}|D^\ell\Gamma^j\Phi|^2\leq BA\epsilon r^{-2}.
\end{equation}
\begin{equation}\label{BAP3}
\sum_{j=0}^8|D\Gamma^j\Phi|^2\leq BA\epsilon r^{-2}(t^*)^{\eta_{14}}(1+|u|)^{-2}.
\end{equation}
\begin{equation}\label{BAP4}
\sum_{j=0}^8|\bar{D}\Gamma^{j}\Phi|^2\leq BA\epsilon r^{-2}(t^*)^{-2+\eta_{14}}.
\end{equation}
\begin{equation}\label{BAP5}
\sum_{j=0}^{6}|S\Gamma^j\Phi|^2\leq B_SA\epsilon r^{-2}(t^*)^{\eta_{S,11}}.
\end{equation}
\begin{equation}\label{BAP6}
\sum_{j=0}^8|DS\Gamma^j\Phi|^2\leq B_SA\epsilon r^{-2}.
\end{equation}
\begin{equation}\label{BAP7}
\sum_{j=0}^6|DS\Gamma^j\Phi|^2\leq B_SA\epsilon r^{-2}(t^*)^{\eta_{S,11}}(1+|u|)^{-2}.
\end{equation}
For $r\leq\frac{t^*}{4}$,
\begin{equation}\label{BAPI1}
\sum_{j=0}^{13}|\Gamma^j\Phi|^2\leq BA\epsilon (t^*)^{-1+\eta_{14}}.
\end{equation}
\begin{equation}\label{BAPI1.1}
\sum_{\ell=1}^{14-j}\sum_{j=0}^{13}|D^\ell\Gamma^j\Phi|^2\leq BA\epsilon (t^*)^{-1+\eta_{14}}.
\end{equation}
\begin{equation}\label{BAPI1.2}
\sum_{\ell=1}^{13-j}\sum_{j=0}^{12}|D^\ell\Gamma^j\Phi|^2\leq BA\epsilon (t^*)^{-2+\eta_{14}}.
\end{equation}
\begin{equation}\label{BAPI2}
\sum_{\ell=1}^{9-j}\sum_{j=0}^8|D^\ell\Gamma^j\Phi|^2\leq BA\epsilon (t^*)^{-3+\eta_{S,11}}r^{-1+\delta}.
\end{equation}
\begin{equation}\label{BAPI3}
\sum_{j=0}^{6}|S\Gamma^j\Phi|^2\leq B_SA\epsilon (t^*)^{-2+\eta_{S,11}}.
\end{equation}
\begin{equation}\label{BAPI4}
\sum_{\ell=1}^{7-j}\sum_{j=0}^{6}|D^\ell S\Gamma^j\Phi|^2\leq B_SA\epsilon r^{-2}(t^*)^{-2+\eta_{S,11}}.
\end{equation}
\begin{remark}
Notice that in general, for most of the bootstrap assumptions on $\Phi$, there is a corresponding one on $S\Phi$. The argument to retrieve these assumptions are quite similar, we only have to estimate the commutator term (in a manner similar to \cite{LS}, \cite{LKerr}) and track the appropriate constants. 
\end{remark}
\begin{remark}
Notice that all this assumption are satisfied initially by the assumption of the Theorem.
\end{remark}

\begin{remark}
We will bootstrap to improve the constants $A_j$, $A_{X,j}$, $A_S$, $A_{S,X,j}$, $A_Y$, $A_{S,Y}$ and $B$. The constant $B$ is only used for the bootstrap of the pointwise estimates. The constants satisfy
$$1\ll B\sim B_S\ll A_0\ll A_{X,0}\ll A_1\ll ...\ll A_{16}\ll A_{X,16}\ll A_Y\ll A_{S,0}\ll A_{S,X,0}\ll ... \ll A_{S,X,13}\ll A_{S,Y}.$$
We will use $A$ as a shorthand to denote the maximum of all these constants, i.e., $A_{S,Y}$. We will always assume by taking $\epsilon$ small that 
$$A\epsilon\ll 1.$$
Moreover, we set the constants so that
$$\frac{A_{j-1}}{A_j}\ll\frac{A_{X,j-1}}{A_j}\ll\delta'\eta^{-1}.$$
$$\delta\sim\delta'\ll\frac{A_j}{A_{X,j}}.$$
The $\eta$'s, on the other hand, satisfy
$$\delta\sim\delta'\ll\eta_{16}\ll\eta_{15}\ll\eta_{14}\ll\eta_{S,13}\ll\eta_{S,12}\ll\eta_{S,11}.$$
The $\eta$'s are chosen so that
$$\frac{A_j}{A_{X,j}}\ll\eta_{14}\ll 1\quad\mbox{for all }j.$$
$\epsilon$ will be much smaller than any combinations of the other constants.
\end{remark}
We will use energy estimates and decay estimates to eventually close the bootstrap. In order to derive the estimates, we consider equations for $\Gamma^k\Phi$. We now introduce the notations that will facilitate the discussion below.
\begin{definition}
Define $G_k=\displaystyle\sum_{|j|=k}|\Box_{g_K}\left(\Gamma^j\Phi\right)|$, $U_k=\displaystyle\sum_{|j|=k}|[\Box_{g_K},\Gamma^j]\Phi|$ and $N_k=\displaystyle\sum_{|j|=k}|\Gamma^j\left(\Box_{g_K}\Phi\right)|$. 
\end{definition}
\begin{definition}
Define $G_{\leq k}=\displaystyle\sum_{|j|\leq k}|\Box_{g_K}\left(\Gamma^j\Phi\right)|$, $U_{\leq k}=\displaystyle\sum_{|j|\leq k}|[\Box_{g_K},\Gamma^j]\Phi|$ and $N_{\leq k}=\displaystyle\sum_{|j|\leq k}|\Gamma^j\left(\Box_{g_K}\Phi\right)|$.
\end{definition}
In order to keep track of the constants, we define also
\begin{definition}
Define $U_{k,j}=|[\Box_{g_K},\partial_{t^*}^{k-j}\tilde{\Omega}^j]\Phi|$ and $U_{\leq k,\leq j}=\displaystyle\sum_{j'\leq j, k'\leq k} U_{k',j'}$.
\end{definition}

\begin{remark}
 We will refer to $G$ as the inhomogeneous term, $U$ as the commutator term and $N$ as the nonlinear term. Clearly we have $G_k\leq U_k+N_k$ and $G_{\leq k}\leq U_{\leq k}+N_{\leq k}$
\end{remark}

We now estimate the inhomogeneous terms that will appear in the analysis several times below. It is necessary to study the commutator terms and the nonlinear terms together because the estimates for each depend on the estimates for the other when we use elliptic estimates.
\begin{proposition}\label{Uestprop}
$U_k$ satisfies the following estimates:
\begin{equation}\label{Uest}
\begin{split}
&\int_{\Sigma_\tau}r^{\alpha}\left(D^\ell U_{k,j}\right)^2\\
\leq& C\left(\sum_{m=0}^{k+\ell-i}\sum_{i=0}^{j-1}\int_{\Sigma_\tau\cap\{r\geq R_\Omega-1\}}r^{\alpha-4}J^{N}_\mu\left(\partial_{t^*}^m\tilde{\Omega}^i\Phi\right)n^\mu_{\Sigma_\tau}+\sum_{m=0}^{k+\ell-i-1}\sum_{i=0}^{j-1}\int_{\Sigma_\tau}r^{\alpha-4}\left(D^m N_i\right)^2\right)\quad\mbox{for }\alpha\leq 4,
\end{split}
\end{equation}
where it is understood that $\displaystyle\sum_{i=0}^{-1}=0$.
\end{proposition}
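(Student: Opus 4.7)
The key observation is that $\partial_{t^*}$ is a Killing vector field, so $[\Box_{g_K},\partial_{t^*}]=0$ and one may write $U_{k,j}=|\partial_{t^*}^{k-j}[\Box_{g_K},\tilde{\Omega}^j]\Phi|$. Expanding the nested commutator by the Leibniz rule, $[\Box_{g_K},\tilde{\Omega}^j]=\sum_{i=0}^{j-1}\tilde{\Omega}^{j-1-i}[\Box_{g_K},\tilde{\Omega}]\tilde{\Omega}^i$, reduces matters to a pointwise estimate for the single commutator $[\Box_{g_K},\tilde{\Omega}]$. Writing $\tilde{\Omega}=\chi(r)\Omega$ and using that $\Omega$ is Killing with respect to the companion Schwarzschild metric $g_S$, I split $[\Box_{g_K},\tilde{\Omega}]=\chi(r)[\Box_{g_K}-\Box_{g_S},\Omega]+[\Box_{g_K},\chi(r)]\Omega$. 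The second term is first-order and supported in the compact annulus $\{R_\Omega\le r\le R_\Omega+1\}$, while the coefficients of the first term inherit the $O(a^2/r^2)$ decay of $g_K-g_S$ in the asymptotically flat region. Together these give the pointwise bound $|[\Box_{g_K},\tilde{\Omega}]\Psi|\le Cr^{-2}|D^2\Psi|+Cr^{-3}|D\Psi|$ on $\{r\ge R_\Omega-1\}$, which upon squaring and weighting by $r^\alpha$ with $\alpha\le 4$ produces the crucial $r^{\alpha-4}$ factor.

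Iterating this estimate inside the Leibniz expansion and applying the outer $\partial_{t^*}^{k-j}D^\ell$ (using $[D,\tilde{\Omega}]=O(1)D$ on the relevant region) leads to a pointwise bound of the form $(D^\ell U_{k,j})^2\le Cr^{-4}\mathbbm{1}_{\{r\ge R_\Omega-1\}}\sum_{i=0}^{j-1}|D^{k+\ell-i+1}\tilde{\Omega}^i\Phi|^2$ plus lower-order contributions. I would then apply the weighted elliptic estimate of Proposition~\ref{elliptic} to each factor $D^{k+\ell-i+1}\tilde{\Omega}^i\Phi$, viewing $\tilde{\Omega}^i\Phi$ as a solution of $\Box_{g_K}(\tilde{\Omega}^i\Phi)=\tilde{\Omega}^i(\Box_{g_K}\Phi)-[\Box_{g_K},\tilde{\Omega}^i]\Phi$. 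The $J^N$ contributions from the elliptic estimate yield exactly the $J^N_\mu(\partial_{t^*}^m\tilde{\Omega}^i\Phi)$ terms in the stated inequality, with index $m\le k+\ell-i$; the first inhomogeneity on the right-hand side above is precisely $N_i$ and generates the $(D^m N_i)^2$ terms with $m\le k+\ell-i-1$; and the second inhomogeneity is a commutator $[\Box_{g_K},\tilde{\Omega}^{i'}]$-expression with $i'\le i<j$, which is absorbed by the inductive hypothesis.

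The full argument is thus an induction on $j$: the base case $j=0$ is trivial since $U_{k,0}=0$, and the inductive step works as just described. I expect the main difficulty to be the derivative bookkeeping---ensuring that each application of the elliptic estimate and each commutator reduction lowers the relevant index in a way that fits within the ranges $m\le k+\ell-i$ and $m\le k+\ell-i-1$ stated in the proposition---while a separate but straightforward point is that the cutoff-region contribution $[\Box_{g_K},\chi(r)]\Omega\Phi$ is supported in a compact annulus of $r$ and so trivially accommodates the $r^{\alpha-4}$ weight.
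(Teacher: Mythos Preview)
Your proposal is correct and follows essentially the same approach as the paper: both obtain the pointwise bound $|D^\ell U_{k,j}|\le C r^{-2}\sum_{i=0}^{j-1}\sum_m |D^m\tilde{\Omega}^i\Phi|$ on $\{r\ge R_\Omega\}$, apply the weighted elliptic estimate (the paper uses Proposition~\ref{ellipticoutside}, the version localized to large $r$), and close by induction on $j$ with base case $U_{k,0}=0$. The only substantive difference is expository---you derive the $r^{-2}$ commutator decay explicitly via the Schwarzschild comparison $[\Box_{g_K}-\Box_{g_S},\Omega]$ and the cutoff term, whereas the paper simply cites \cite{LS} for this bound.
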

\begin{proof}
The commutator terms are estimated in \cite{LS}. Notice that since $\tilde{\Omega}$ is supported away from the trapped set, there is no loss of derivatives in using the integrated decay estimate. We have that $U_{k,j}$ supported in $\{r\geq R_\Omega\}$ and that
$$|U_{k,j}|\leq C\sum_{i=0}^j|\partial_{t^*}^{k-j}[\Box_{g_K},\tilde{\Omega}^i]\Phi|\leq C\sum_{i=0}^{j-1}r^{-2}\left(|D^2\partial_{t^*}^{k-j}\tilde{\Omega}^i\Phi|+|D\partial_{t^*}^{k-j}\tilde{\Omega}^i\Phi|\right),$$
and therefore
$$|D^\ell U_{k,j}|\leq C\sum_{m=1}^{\ell+2}\sum_{i=0}^{j-1}r^{-2}|D^m\partial_{t^*}^{k-j}\tilde{\Omega}^i\Phi|\leq C\sum_{m=1}^{k+\ell-j+2}\sum_{i=0}^{j-1}r^{-2}|D^m\tilde{\Omega}^i\Phi|,$$
where, as in the statement of the Proposition it is understood that the sum vanishes if $j=0$.
Hence, using the elliptic estimate for $\{r\geq R_\Omega\}$, i.e., Proposition \ref{ellipticoutside},
\begin{equation}\label{Uinduct}
\begin{split}
&\int_{\Sigma_\tau}r^{\alpha}\left(D^\ell U_{k,j}\right)^2\\
\leq&C\sum_{m=1}^{k+\ell-j+2}\sum_{i=0}^{j-1}\int_{\Sigma_\tau\cap\{r\geq R_\Omega\}}r^{\alpha-4}\left(D^m\tilde{\Omega}^i\Phi\right)^2\\
\leq&C\sum_{m=0}^{k+\ell-i}\sum_{i=0}^{j-1}\int_{\Sigma_\tau\cap\{r\geq R_\Omega-1\}}r^{\alpha-4}J^N_\mu\left(\partial_{t^*}^m\tilde{\Omega}^i\Phi\right)n^\mu_{\Sigma_{\tau}}\\
&+C\sum_{m=0}^{k+\ell-i-1}\sum_{i=0}^{j-1}\int_{\Sigma_\tau\cap\{r\geq R_\Omega-1\}}r^{\alpha-4}\left(\left(D^{m}U_{i,i}\right)^2+\left(D^{m}N_{i}\right)^2\right).\\
\end{split}
\end{equation}
Now we can estimate $U_k$ by induction:  Fix any $k$ and we will induct on $j$. By definition, $U_{k,0}=0$. Now, assume that for all $k+\ell\leq 16$ and for some $j_0\geq 1$, we have
\begin{equation*}
\begin{split}
&\sum_{k+\ell\leq M, j\leq min\{j_0-1,k\}}\int_{\Sigma_\tau}r^{\alpha}\left(D^\ell U_{k,j}\right)^2\\
\leq& C\left(\sum_{m=0}^{M-i}\sum_{i=0}^{j_0-2}\int_{\Sigma_\tau\cap\{r\geq R_\Omega-1\}}r^{\alpha-4}J^{N}_\mu\left(\partial_{t^*}^m\tilde{\Omega}^i\Phi\right)n^\mu_{\Sigma_\tau}+\sum_{m=0}^{M-1-i}\sum_{i=0}^{j_0-1}\int_{\Sigma_\tau}r^{\alpha-4}\left(D^m N_i\right)^2\right).
\end{split}
\end{equation*}
for all $\alpha\leq 4$. Then, using (\ref{Uinduct}), we have that for $k+\ell\leq 16$, and $j_0\leq k$,
\begin{equation*}
\begin{split}
&\int_{\Sigma_\tau}r^{\alpha}\left(D^\ell U_{k,j_0}\right)^2\\
\leq& C\left(\sum_{m=0}^{k+\ell-i}\sum_{i=0}^{j_0-1}\int_{\Sigma_\tau\cap\{r\geq R_\Omega-1\}}r^{\alpha-4}J^{N}_\mu\left(\partial_{t^*}^m\tilde{\Omega}^i\Phi\right)n^\mu_{\Sigma_\tau}+\sum_{m=0}^{k+\ell-i-1}\sum_{i=0}^{j_0-1}\int_{\Sigma_\tau}r^{\alpha-4}\left(D^m N_i\right)^2\right).
\end{split}
\end{equation*}
Hence, (\ref{Uest}) holds.
\end{proof}

We now estimate the nonlinear term $N_k$. Since $N_k$ is at least quadratic, we do not need to be precise about the constants $A$ and we will always estimate with the maximum $A$.
\begin{proposition}\label{Nestprop}
$N_k$ satisfies the following estimates for fixed $t^*=\tau$:
\begin{equation*}
\begin{split}
\sum_{k+\ell= 16}\int_{\Sigma_\tau}\left(D^\ell N_k\right)^2\leq CBA^2\epsilon^2\tau^{-2+\eta_{16}},
\end{split}
\end{equation*}
\begin{equation*}
\begin{split}
\sum_{k+\ell\leq 15}\int_{\Sigma_\tau}\left(D^\ell N_k\right)^2\leq CBA^2\epsilon^2\tau^{-2}.
\end{split}
\end{equation*}
$N_k$ also satisfy the following estimates for when integrated over $t^*\in [(1.1)^{-1}\tau,\tau]$:
\begin{equation*}
\begin{split}
\sum_{k+\ell= 15}\iint_{\mathcal R((1.1)^{-1}\tau,\tau)}r^{-1-\delta}\left(D^\ell N_k\right)^2\leq CBA^2\epsilon^2\tau^{-2+\eta_{16}},
\end{split}
\end{equation*}
\begin{equation*}
\begin{split}
\sum_{k+\ell\leq 14}\iint_{\mathcal R((1.1)^{-1}\tau,\tau)}r^{-1-\delta}\left(D^\ell N_k\right)^2\leq CBA^2\epsilon^2\tau^{-2}.
\end{split}
\end{equation*}
\end{proposition}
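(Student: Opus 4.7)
The plan is to expand $\Gamma^j F$ via the Leibniz rule applied to the null-condition decomposition
$F = \Lambda_0 D\Phi\,\overline{D}\Phi + \Lambda_1 (D\Phi)^2 + \mathcal C$,
and then to estimate each resulting quadratic product by Hölder's inequality, placing the higher-order factor in $L^2$ (bounded via the energy bootstrap assumptions BA1--BA8 together with the elliptic estimates of Propositions \ref{elliptic}--\ref{elliptichorizon}) and the lower-order factor in $L^\infty$ (bounded via the pointwise bootstrap assumptions BAP1--BAP7 and BAPI1--BAPI4). Since $F$ is at least quadratic and the total order after Leibniz is at most $k+\ell\leq 16$, the factor going into $L^\infty$ can always be chosen to carry at most $8$ derivatives, well within the ranges prescribed by BAP4, BAP2 in the exterior $r\geq t^*/4$ and by BAPI1.2, BAPI2 in the interior. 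The coefficients $D^{\ast}\Gamma^{k_0}\Lambda_j$ are uniformly bounded by the null-condition hypothesis and contribute only harmless constants.

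The central point is that the null structure supplies the decisive extra factor of $(t^*)^{-2}$ beyond the generic pointwise bound. For the $\Lambda_0 D\Phi\,\overline{D}\Phi$ contribution in the exterior, I will put an $\overline{D}$-factor $\Gamma^{k_2}\overline{D}\Phi$ with $k_2\leq 8$ in $L^\infty$ and use BAP4 to get $|\Gamma^{k_2}\overline{D}\Phi|^2 \leq CBA\epsilon\, r^{-2}(t^*)^{-2+\eta_{14}} \leq CBA\epsilon (t^*)^{-4+\eta_{14}}$; pairing with the $L^2$ bound $\epsilon$ from BA4 (or $\epsilon\tau^{\eta_{16}}$ from BA1 at top order), and using $\eta_{14}\ll\eta_{16}$, this yields the claimed $\tau^{-2}$ (respectively $\tau^{-2+\eta_{16}}$) with margin. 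For the $\Lambda_1(D\Phi)^2$ contribution, BAP2 alone already delivers $|D\Gamma^{k_1}\Phi|^2 \leq CBA\epsilon\, r^{-2} \leq CBA\epsilon (t^*)^{-2}$ in $r\geq t^*/4$, so the extra $r^{-1}$ weight on $\Lambda_1$ in the wave zone is not actually required to close this piece. In the interior $r\leq t^*/4$, the BAPI assumptions give pointwise decay of at least $(t^*)^{-2+\eta_{14}}$ for each $D\Phi$, which is even stronger. The cubic and higher terms in $\mathcal C$ are strictly better, since each extra $D\Phi$ placed in $L^\infty$ contributes another factor of $CBA\epsilon (t^*)^{-2}$.

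The spacetime-integrated estimates (the third and fourth bounds) then follow from integrating the pointwise-at-$t^*$ estimates in $t^*$ over $[(1.1)^{-1}\tau,\tau]$; the $r^{-1-\delta}$ weight ensures integrability at infinity, and the time interval of length $\sim\tau$ is absorbed by the surplus in the $(t^*)^{-2}$ decay. I expect the main obstacle to be the top-order bookkeeping: one must verify that in each Leibniz term with total order $16$, the single $L^2$ factor incurs only the $\tau^{\eta_{16}}$ loss of BA1 while the complementary $L^\infty$ factor of order $\leq 8$ is of the good $\overline{D}$-type whenever the null structure requires it, so that the product closes. When the $L^2$ factor carries more than one $D$-derivative, I will apply Proposition \ref{elliptic} to trade extra $D$-derivatives for $\partial_{t^*}$-derivatives at the cost of a $\Box_{g_K}$ inhomogeneity of strictly lower order, which is absorbed by induction on the total order $k+\ell$ within the same proposition.
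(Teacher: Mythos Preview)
Your approach to the two fixed-$\tau$ estimates is essentially correct and matches the paper's, though you overcomplicate it: the paper explicitly notes that ``there is no need to distinguish between the good and bad derivatives'' here. The factor $\tau^{-2}$ comes simply from $|D^{j+1}\Gamma^i\Phi|^2\le CBA\epsilon\,r^{-2}$ (BAP2) in $r\ge t^*/4$ and $|D^{j+1}\Gamma^i\Phi|^2\le CBA\epsilon\,(t^*)^{-2+\eta_{14}}$ (BAPI1.2) in $r\le t^*/4$, without ever invoking the $\overline D$ structure or BAP4. The null condition is exploited only later, in the refined estimates on $\{r\ge 9t^*/10\}$ (Proposition~\ref{NO}), not in the present proposition.

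Your treatment of the two spacetime-integrated bounds, however, has a genuine gap. You propose to obtain them by integrating the fixed-$t^*$ estimate over $t^*\in[(1.1)^{-1}\tau,\tau]$, claiming ``the time interval of length $\sim\tau$ is absorbed by the surplus in the $(t^*)^{-2}$ decay.'' There is no such surplus: the fixed-time bound is exactly $C(t^*)^{-2}$ (up to $\eta$-losses), and integrating over an interval of length $\sim\tau$ yields only $\tau^{-1}$, a full power short of the required $\tau^{-2}$. The $r^{-1-\delta}$ weight does not help in $\{r\le t^*/4\}$, where $r$ is bounded below by $r_+$ and the weight is merely $O(1)$. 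The paper instead redoes the H\"older splitting directly in spacetime: the $L^\infty$ factor still contributes $\tau^{-2}$, but the $L^2$ factor is now $\iint_{\mathcal R((1.1)^{-1}\tau,\tau)} r^{-1-\delta}J^N_\mu(\partial_{t^*}^m\Gamma^j\hat Y^i\Phi)n^\mu$, which is controlled by $\iint K^{X_1}$ and hence by the \emph{integrated} bootstrap assumptions (\ref{BAK1.4}) and (\ref{BAK1.6}), giving $O(\epsilon\tau^{\eta_{16}})$ or $O(\epsilon)$ rather than $O(\epsilon\tau)$. That is the missing ingredient.
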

\begin{proof}
Here, there is no need to distinguish between the good and bad derivatives.

$$|D^\ell N_k|\leq |D^\ell\Gamma^k\left(\Lambda_i D\Phi D\Phi\right)|+|\Gamma^k \mathcal C|.$$
We claim that the most important terms will be those that are quadratic in $D^{j+1}\Gamma^i\Phi$ or cubic of the form $$\left(D^{j_1+1}\Gamma^{i_1}\Phi \right)\left(D^{j_2+1}\Gamma^{i_2}\Phi\right)\left(\Gamma^{i_3}\Phi\right)$$ with $i_1+j_1,i_2+j_2\leq 8$. For by the assumptions every term has the form
$$\left(D^{j_1}\Gamma^{i_1}\Phi \right)\left(D^{j_2}\Gamma^{i_2}\Phi\right)\left(D^{j_3}\Gamma^{i_3}\Phi\right)\left(D^{j_4}\Gamma^{i_4}\Phi\right)...\left(D^{j_r}\Gamma^{i_r}\Phi\right),$$
with $r\geq 2$, at least two $j$'s $\geq 1$ and $i+j\leq 9$ for all but at most one factor. If all factors $i+j\leq 9$ or the factor that $i+j>9$ has $i\geq 1$, we can reduce to the case $D^{j_1+1}\Gamma^{i_1}\Phi D^{j_2+1}\Gamma^{i_2}\Phi$ by putting all other factors in $L^\infty$ using bootstrap assumptions (\ref{BAP1}), (\ref{BAP1.5}), (\ref{BAPI1}) and (\ref{BAPI1.1}). If the factor that $i+j>9$ has $i=0$ we reduce to 
$$\left(D^{j_1+1}\Gamma^{i_1}\Phi \right)\left(D^{j_2+1}\Gamma^{i_2}\Phi\right)\left(\Gamma^{i_3}\Phi\right)$$
again by putting all other factors in $L^\infty$ using bootstrap assumptions (\ref{BAP1}), (\ref{BAP1.5}), (\ref{BAPI1}) and (\ref{BAPI1.1}). 
\begin{equation*}
\begin{split}
&\int_{\Sigma_\tau}\left(D^\ell N_k\right)^2\\
\leq &C\left(\sup\sum_{i_1+j_1\leq 8}|D^{j_1+1}\Gamma^{i_1}\Phi|^2\right)\sum_{j_2=0}^\ell\sum_{i_2=0}^k\int_{\Sigma_\tau}| D^{j_2+1}\Gamma^{i_2}\Phi|^2\\
&+C\left(\sup\sum_{i_1+j_1\leq 8}|D^{j_1+1}\Gamma^{i_1}\Phi|^2\right)\left(\sup\sum_{i_2+j_2\leq 8}r^2| D^{j_2+1}\Gamma^{i_2}\Phi|^2\right)\sum_{i_3=0}^k\int_{\Sigma_\tau}r^{-2}| \Gamma^{i_3}\Phi|^2\\
\leq&CBA\epsilon\tau^{-2}\sum_{i=1}^{\ell+1}\sum_{j=0}^k\int_{\Sigma_\tau} \left(D^i\Gamma^j\Phi\right)^2\\
&\quad\mbox{using Hardy's inequality Proposition \ref{Hardy}}\\
\leq&CBA\epsilon\tau^{-2}\sum_{i+m=0}^{\ell}\sum_{j=0}^k\int_{\Sigma_\tau} J^N_\mu\left(\partial_{t^*}^m\Gamma^j\hat{Y}^i\Phi\right)n^\mu_{\Sigma_{\tau}}+CBA\epsilon\tau^{-2}\sum_{i+j\leq k+\ell-1}\int_{\Sigma_\tau}\left(\left(D^iU_{\leq j}\right)^2+\left(D^iN_{\leq j}\right)^2\right)\\
&\quad\mbox{using the elliptic estimates in Propositions \ref{elliptic}, \ref{elliptichorizon}}\\
\leq&CBA\epsilon\tau^{-2}\sum_{i=0}^{\ell}\sum_{j=0}^k\int_{\Sigma_\tau} J^N_\mu\left(\partial_{t^*}^m\Gamma^j\hat{Y}^i\Phi\right)n^\mu_{\Sigma_{\tau}}+CBA\epsilon\tau^{-2}\sum_{i+j\leq k+\ell-1}\int_{\Sigma_\tau}\left(D^iN_{\leq j}\right)^2,\\
\end{split}
\end{equation*}

where we have used Proposition \ref{Uestprop} in the last step.
Now, a simple induction would show that 
\begin{equation*}
\begin{split}
\sum_{k+\ell= 16}\int_{\Sigma_\tau}\left(D^\ell N_k\right)^2\leq CBA^2\epsilon^2\tau^{-2+\eta},\quad\mbox{and}
\end{split}
\end{equation*}
\begin{equation*}
\begin{split}
\sum_{k+\ell\leq 15}\int_{\Sigma_\tau}\left(D^\ell N_k\right)^2\leq CBA^2\epsilon^2\tau^{-2}.
\end{split}
\end{equation*}
We now move on the the terms integrated over $t^*\in [(1.1)^{-1}\tau,\tau]$. Arguing as before, and noticing that the elliptic estimate in Proposition \ref{elliptic} also allows weights in $r$, we have
\begin{equation*}
\begin{split}
&\iint_{\mathcal R((1.1)^{-1}\tau,\tau)}r^{-1-\delta}\left(D^\ell N_k\right)^2\\
\leq&CBA\epsilon\tau^{-2}\sum_{i+m=0}^{\ell}\sum_{j=0}^k\iint_{\mathcal R((1.1)^{-1}\tau,\tau)} r^{-1-\delta}J^N_\mu\left(\partial_{t^*}^m\Gamma^j\hat{Y}^i\Phi\right)n^\mu_{\Sigma_{\tau}}\\
&+CBA\epsilon\tau^{-2}\sum_{i+j\leq k+\ell-1}\iint_{\mathcal R((1.1)^{-1}\tau,\tau)}r^{-1-\delta}\left(\left(D^iU_{\leq j}\right)^2+\left(D^iN_{\leq j}\right)^2\right)\\
\leq&CBA\epsilon\tau^{-2}\sum_{i=0}^{\ell}\sum_{j=0}^k\iint_{\mathcal R((1.1)^{-1}\tau,\tau)} r^{-1-\delta}J^N_\mu\left(\partial_{t^*}^m\Gamma^j\hat{Y}^i\Phi\right)n^\mu_{\Sigma_{\tau}}\\
&+CBA\epsilon\tau^{-2}\sum_{i+j\leq k+\ell-1}\iint_{\mathcal R((1.1)^{-1}\tau,\tau)}r^{-1-\delta}\left(D^iN_{\leq j}\right)^2\\
\leq&CBA\epsilon\tau^{-2}\sum_{i=0}^{\ell}\sum_{j=0}^k\iint_{\mathcal R((1.1)^{-1}\tau,\tau)} K^{X_1}\left(\partial_{t^*}^m\Gamma^j\hat{Y}^i\Phi\right)\\
&+CBA\epsilon\tau^{-2}\sum_{i+j\leq k+\ell-1}\iint_{\mathcal R((1.1)^{-1}\tau,\tau)}r^{-1-\delta}\left(D^iN_{\leq j}\right)^2\\
\end{split}
\end{equation*}
Now, the bootstrap assumptions (\ref{BAK1.4}), (\ref{BAK1.6}) and an induction on $k+\ell$ would conclude the Proposition.
\end{proof}

Now, the estimates for $N_k$ will also give improved estimates for $U_k$ via Proposition \ref{Uestprop}:
\begin{proposition}\label{U}
The following estimates for $U_k$ on a fixed $t^*$ slice hold for $\alpha\leq 2$:
$$\sum_{k+\ell=16}\int_{\Sigma_\tau}r^{\alpha}\left(D^\ell U_{k,j}\right)^2\leq CA_{j-1}\epsilon\tau^{\eta_{16}},$$
$$\sum_{k+\ell=15}\int_{\Sigma_\tau}r^{\alpha}\left(D^\ell U_{k,j}\right)^2\leq CA_{j-1}\epsilon\tau^{-1+\eta_{15}},$$
$$\sum_{k+\ell\leq 14}\int_{\Sigma_\tau}r^{\alpha}\left(D^\ell U_{k,j}\right)^2\leq CA_{j-1}\epsilon\tau^{-2+\eta_{14}}.$$
The following estimates for $U_k$ integrated on $[(1.1)^{-1}\tau,\tau]$ also hold for $\alpha\leq 1+\delta$:
$$\sum_{k+\ell=16}\iint_{\mathcal R((1.1)^{-1}\tau,\tau)}r^{\alpha}\left(D^\ell U_{k,j}\right)^2\leq CA_{X,j-1}\epsilon\tau^{\eta_{16}},$$
$$\sum_{k+\ell=15}\iint_{\mathcal R((1.1)^{-1}\tau,\tau)}r^{\alpha}\left(D^\ell U_{k,j}\right)^2\leq CA_{X,j-1}\epsilon\tau^{-1+\eta_{15}},$$
$$\sum_{k+\ell\leq 14}\iint_{\mathcal R((1.1)^{-1}\tau,\tau)}r^{\alpha}\left(D^\ell U_{k,j}\right)^2\leq CA_{X,j-1}\epsilon\tau^{-2+\eta_{14}}.$$
\end{proposition}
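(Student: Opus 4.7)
Both parts of the proposition follow from combining Proposition \ref{Uestprop} with the bootstrap hierarchy (J), (K) and the nonlinear bounds of Proposition \ref{Nestprop}. The overall plan is to split the spatial integration in $r$ into an inner region (where the weight $r^{\alpha-4}$ is essentially bounded on the support of $U_{k,j}$, but $J^N$ enjoys improved decay coming from the $\tau^{2}$-weighted piece of the conformal-energy bootstrap, respectively from $K^{X_1}$ in the integrated version) and an outer region (where the $r$-weight itself supplies decay). The key structural observation is the constraint $i \leq j-1$ in Proposition \ref{Uestprop}, which forces the bootstrap constant that appears on the right-hand side to be $A_{j-1}$ (resp.\ $A_{X,j-1}$), not $A_j$.

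For the fixed-slice estimates ($\alpha \leq 2$): first apply Proposition \ref{Uestprop} to reduce $\int_{\Sigma_\tau} r^\alpha (D^\ell U_{k,j})^2$ to a sum of weighted fluxes $\int r^{\alpha-4} J^N(\partial^m_{t^*}\tilde\Omega^i\Phi)$ with $i\le j-1$ and $m+i\le k+\ell$, plus nonlinear remainders. Split $\Sigma_\tau\cap\{r\geq R_\Omega-1\}$ into $\{r\leq 9\tau/10\}$ and $\{r\geq 9\tau/10\}$. On the inner piece $r^{\alpha-4}\leq C$, so the weighted flux is controlled by the plain flux, and the $\tau^2$-weighted local $J^N$ terms contained in the bootstrap assumptions (\ref{BA1}), (\ref{BA2}), (\ref{BA3}) furnish exactly the rates $\tau^{\eta_{16}}$, $\tau^{-1+\eta_{15}}$ and $\tau^{-2+\eta_{14}}$ at orders $16$, $15$ and $\leq 14$ respectively. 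On the outer piece $r^{\alpha-4}\leq C\tau^{-2}$, and combining with the undecaying bound (\ref{BA4}) on $\int J^N$ gives a contribution that is at least as small as claimed. The nonlinear terms $N_i$ satisfy by Proposition \ref{Nestprop} an estimate with an extra factor $B A\epsilon$ and are therefore absorbed into the linear contribution.

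For the spacetime-integrated estimates ($\alpha \leq 1+\delta$) over $\mathcal R((1.1)^{-1}\tau,\tau)$, the same split works. In the inner region $r\leq t^*/2$ one writes $r^{\alpha-4}\leq C\,r^{-1-\delta}$ (valid for $r$ bounded below since $\alpha-4\leq -1-\delta$ for small $\delta$), so that the weighted flux is bounded by $K^{X_1}$, and the integrated bootstrap assumptions (\ref{BAK1.4}), (\ref{BAK3}), (\ref{BAK5}) deliver the three advertised decay rates. In the outer region $r\geq t^*/2$, the weight satisfies $r^{\alpha-4}\leq C (t^*)^{-3+\delta}$; pairing with the undecaying $\int J^N$ bound and integrating over the slab of length $\sim\tau$ gives a contribution with rate $\sim \tau^{-2+\delta}$, well below any of the three claimed bounds. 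The nonlinear contributions are again absorbed using the integrated version of Proposition \ref{Nestprop}.

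The main obstacle is purely bookkeeping: one has to verify, at each order $k+\ell$ and in each region, that the correct piece of the bootstrap hierarchy is invoked so that both the $\tau$-power and the constant $A_{j-1}$ (respectively $A_{X,j-1}$) come out right, and that the nonlinear contributions, which carry a factor $B A^2\epsilon^2$, are indeed dominated by $A_{j-1}\epsilon$ under the standing assumption $BA\epsilon\ll 1$. No additional analytic ingredient beyond Propositions \ref{Uestprop}, \ref{Nestprop} and the bootstrap assumptions is needed.
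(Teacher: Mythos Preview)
Your overall strategy matches the paper's: apply Proposition \ref{Uestprop}, absorb the $N_i$ terms via Proposition \ref{Nestprop}, and split the $J^N$ flux into inner/outer regions. The fixed-slice part is essentially the paper's argument (the paper splits at $r=\tau/2$ rather than $9\tau/10$, and for $k+\ell=16$ it does not split at all but simply uses $r^{\alpha-4}\leq C$ together with (\ref{BA1}); your invocation of (\ref{BA4}) on the outer piece at top order is not quite right since (\ref{BA4}) only goes up to $15$, but this is harmless because (\ref{BA1}) already handles the full slice).

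There is, however, a genuine bookkeeping gap in your integrated argument. You propose to bound $r^{\alpha-4}J^N$ by $K^{X_1}$ and then invoke (\ref{BAK1.4}), (\ref{BAK3}), (\ref{BAK5}). But those $K^{X_1}$ assumptions sit at derivative levels $15$, $\leq 14$, $\leq 13$ respectively, whereas after Proposition \ref{Uestprop} you need control at levels $16$, $15$, $\leq 14$. The $K^{X_1}$ hierarchy is systematically one order below what you require. The paper avoids this by using $K^{X_0}$ instead: since $U_{k,j}$ is supported in $\{r\geq R_\Omega\}$, far from the trapping set $\{|r-3M|\leq M/8\}$, the indicator in the definition of $K^{X_0}$ is identically $1$ there, and $K^{X_0}$ controls $r^{-1-\delta}J^N$. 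One then invokes (\ref{BAK1}) for $k+\ell=16$, (\ref{BAK2}) and (\ref{BAK1.5}) for $k+\ell=15$, and (\ref{BAK4}) and (\ref{BAK1.5}) for $k+\ell\leq 14$, with an extra $r^{-2+2\delta}$ gained on the outer piece $\{r\geq t^*/2\}$ (and $2\delta\leq\eta_{14}$ closes the last case). Once you replace $K^{X_1}$ by $K^{X_0}$ in your inner estimate and use the $K^{X_0}$ bootstrap assumptions at the correct levels, your argument goes through and coincides with the paper's.
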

\begin{proof}
We first prove the Proposition for the estimates for the constant in $\tau$ terms. By Proposition \ref{Uestprop},
\begin{equation*}
\begin{split}
&\int_{\Sigma_\tau}r^{\alpha}\left(D^\ell U_{k,j}\right)^2\\
\leq& C\left(\sum_{m=0}^{k+\ell-i}\sum_{i=0}^{j-1}\int_{\Sigma_\tau\cap\{r\geq R_\Omega-1\}}r^{\alpha-4}J^{N}_\mu\left(\partial_{t^*}^m\tilde{\Omega}^i\Phi\right)n^\mu_{\Sigma_\tau}+\sum_{m=0}^{k+\ell-1-i}\sum_{i=0}^{j-1}\int_{\Sigma_\tau}r^{\alpha-4}\left(D^m N_i\right)^2\right)\quad\mbox{for }\alpha\leq 4.
\end{split}
\end{equation*}
The second term satisfy the required estimate by Proposition \ref{Nestprop}.
We estimate the first term. By (\ref{BA1}),
$$\sum_{m=0}^{16-i}\sum_{i=0}^{j-1}\int_{\Sigma_\tau\cap\{r\geq R_\Omega-1\}}r^{\alpha-4}J^{N}_\mu\left(\partial_{t^*}^m\tilde{\Omega}^i\Phi\right)n^\mu_{\Sigma_\tau}\leq CA_{j-1}\epsilon\tau^{\eta_{16}}.$$
By (\ref{BA2}) and Proposition \ref{Zlowerbound},
\begin{equation*}
\begin{split}
&\sum_{m=0}^{15-i}\sum_{i=0}^{j-1}\int_{\Sigma_\tau\cap\{r\geq R_\Omega-1\}}r^{\alpha-4}J^{N}_\mu\left(\partial_{t^*}^m\tilde{\Omega}^i\Phi\right)n^\mu_{\Sigma_\tau}\\
\leq&C\left(\sum_{m=0}^{15-i}\sum_{i=0}^{j-1}\int_{\Sigma_\tau\cap\{r\leq \frac{\tau}{2}\}}J^{N}_\mu\left(\partial_{t^*}^m\tilde{\Omega}^i\Phi\right)n^\mu_{\Sigma_\tau}+\sum_{m=0}^{15-i}\sum_{i=0}^{j-1}\tau^{-2}\int_{\Sigma_\tau\cap\{r\geq \frac{\tau}{2}\}}J^{N}_\mu\left(\partial_{t^*}^m\tilde{\Omega}^i\Phi\right)n^\mu_{\Sigma_\tau}\right)\\
\leq&CA_{j-1}\epsilon\tau^{-1+\eta_{15}}.
\end{split}
\end{equation*}
By (\ref{BA3}) and Proposition \ref{Zlowerbound}, 
\begin{equation*}
\begin{split}
&\sum_{m=0}^{14-i}\sum_{i=0}^{j-1}\int_{\Sigma_\tau\cap\{r\geq R_\Omega-1\}}r^{\alpha-4}J^{N}_\mu\left(\partial_{t^*}^m\tilde{\Omega}^i\Phi\right)n^\mu_{\Sigma_\tau}\\
\leq&C\left(\sum_{m=0}^{14-i}\sum_{i=0}^{j-1}\int_{\Sigma_\tau\cap\{r\leq \frac{\tau}{2}\}}J^{N}_\mu\left(\partial_{t^*}^m\tilde{\Omega}^i\Phi\right)n^\mu_{\Sigma_\tau}+\sum_{m=0}^{14-i}\sum_{i=0}^{j-1}\tau^{-2}\int_{\Sigma_\tau\cap\{r\geq \frac{\tau}{2}\}}J^{N}_\mu\left(\partial_{t^*}^m\tilde{\Omega}^i\Phi\right)n^\mu_{\Sigma_\tau}\right)\\
\leq&CA_{j-1}\epsilon\tau^{-2+\eta_{14}}.
\end{split}
\end{equation*}
For the integrated terms, we similarly have, by Proposition \ref{Uestprop},
\begin{equation*}
\begin{split}
&\iint_{\mathcal R((1.1)^{-1}\tau,\tau)}r^{\alpha}\left(D^\ell U_{k,j}\right)^2\\
\leq& C\left(\sum_{m=0}^{k+\ell-i}\sum_{i=0}^{j-1}\iint_{\mathcal R((1.1)^{-1}\tau,\tau)\cap\{r\geq R_\Omega-1\}}r^{\alpha-4}J^{N}_\mu\left(\partial_{t^*}^m\tilde{\Omega}^i\Phi\right)n^\mu_{\Sigma_\tau}\right.\\
&\left.\quad\quad+\sum_{m=0}^{k+\ell-1-i}\sum_{i=0}^{j-1}\iint_{\mathcal R((1.1)^{-1}\tau,\tau)}r^{\alpha-4}\left(D^m N_i\right)^2\right)\quad\mbox{for }\alpha\leq 4.
\end{split}
\end{equation*}
The second term can be estimated by Proposition \ref{Nestprop}. Notice that $r^{-1+\delta}J^N_\mu\left(\Phi\right)n^\mu_{\Sigma_\tau}\leq K^{X_0}\left(\Phi\right)$. Hence, following the argument above for the fixed $\tau$ case, we would have proved the Proposition for the case $\alpha\leq 1-\delta$. Nevertheless, with more care, we can improve to $\alpha\leq 1+\delta$. 
\begin{equation*}
\begin{split}
&\sum_{m=0}^{k+\ell-i}\sum_{i=0}^{j-1}\iint_{\mathcal R((1.1)^{-1}\tau,\tau)\cap\{r\geq R_\Omega\}}r^{\alpha-4}J^{N}_\mu\left(\partial_{t^*}^m\tilde{\Omega}^i\Phi\right)n^\mu_{\Sigma_{t^*}}\\
\leq&C\sum_{m=0}^{k+\ell-i}\sum_{i=0}^{j-1}\iint_{\mathcal R((1.1)^{-1}\tau,\tau)\cap\{r\leq \frac{t^*}{2}\}}K^{X_0}\left(\partial_{t^*}^m\tilde{\Omega}^i\Phi\right) \\
&+C\sum_{m=0}^{k+\ell-i}\sum_{i=0}^{j-1}\iint_{\mathcal R((1.1)^{-1}\tau,\tau)\cap\{r\geq \frac{t^*}{2}\}}r^{-2+2\delta}K^{X_0}\left(\partial_{t^*}^m\tilde{\Omega}^i\Phi\right).
\end{split}
\end{equation*}
For $k+\ell=16$, this is bounded by $CA_{X,j-1}\epsilon\tau^{\eta_{16}}$ by (\ref{BAK1}). For $k+\ell=15$, this is bounded by $CA_{X,j-1}\epsilon\tau^{-1+\eta_{15}}$ by (\ref{BAK2}) and (\ref{BAK1.5}). For $k+\ell\leq 14$, this is bounded by $CA_{X,j-1}\epsilon\tau^{-2+\eta_{14}}$ by (\ref{BAK4}) and (\ref{BAK1.5}) since $2\delta\leq \eta_{14}$.
\end{proof}

While the above is sufficient to recover the bootstrap assumptions for the pointwise bounds, we will need improvements to achieve the energy bounds. For the improvements, we study separately the region $r\leq\frac{t^*}{4}$, $\frac{t^*}{4}\leq r\leq\frac{9t^*}{10}$ and $r\geq\frac{9t^*}{10}$. For the region $r\geq\frac{t^*}{4}$, we will only show the improvement for $N_k$ instead of the derivatives of $N_k$. Various complications would arise in estimating the derivatives of $N_k$. For the region $r\leq\frac{t^*}{4}$, however, we will estimate also the derivatives of $N_k$ as they will be necessary to estimate the error terms arising from commuting with the red-shift vector field.

\begin{proposition}
\begin{equation*}
\sum_{\ell+k=16}\int_{\Sigma_\tau\cap\{r\leq\frac{\tau}{4}\}}r^{1-\delta}\left(D^\ell N_k\right)^2 \leq CBA^2\epsilon^2\tau^{-3+\eta_{S,{11}}+\eta_{16}}.
\end{equation*}
\begin{equation*}
\sum_{\ell+k=15}\int_{\Sigma_\tau\cap\{r\leq\frac{\tau}{4}\}}r^{1-\delta}\left(D^\ell N_k\right)^2 \leq CBA^2\epsilon^2\tau^{-4+\eta_{S,{11}}+\eta_{15}}.
\end{equation*}
\begin{equation*}
\sum_{\ell+k\leq 14}\int_{\Sigma_\tau\cap\{r\leq\frac{\tau}{4}\}}r^{1-\delta}\left(D^\ell N_k\right)^2 \leq CBA^2\epsilon^2\tau^{-5+\eta_{S,{11}}+\eta_{14}}.
\end{equation*}
\end{proposition}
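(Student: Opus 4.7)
The strategy is to mimic the proof of Proposition \ref{Nestprop}, but exploit the improved interior pointwise decay for derivatives of $\Phi$ available in $\{r\leq t^*/4\}$, namely bootstrap assumption (\ref{BAPI2}). As in Proposition \ref{Nestprop}, applying $D^\ell\Gamma^k$ to $F = \Lambda_0 D\Phi\overline{D}\Phi + \Lambda_1 D\Phi D\Phi + \mathcal C$ and using the decay of $\Lambda_j$ and derivative counting, every term in $D^\ell N_k$ is a product of factors of the form $D^{j_a+1}\Gamma^{i_a}\Phi$ (together with lower-order $\Gamma^{i}\Phi$ factors from $\mathcal C$), where at least two derivative factors are present, $\sum(i_a+j_a)\leq k+\ell$, and all but at most one factor has $i_a+j_a\leq 8$.

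The dominant contribution is the quadratic one, $(D^{j_1+1}\Gamma^{i_1}\Phi)(D^{j_2+1}\Gamma^{i_2}\Phi)$ with $i_1+j_1\leq 8$. I place the first factor in $L^\infty(\{r\leq\tau/4\})$ using (\ref{BAPI2}):
$$|D^{j_1+1}\Gamma^{i_1}\Phi|^2 \leq BA\epsilon\,\tau^{-3+\eta_{S,11}} r^{-1+\delta}.$$
The $r^{1-\delta}$ weight in the integrand cancels the $r^{-1+\delta}$ exactly, so that
$$\int_{\Sigma_\tau\cap\{r\leq\tau/4\}} r^{1-\delta}|D^\ell N_k|^2 \leq CBA\epsilon\,\tau^{-3+\eta_{S,11}}\sum_{i_2+j_2\leq k+\ell-1}\int_{\Sigma_\tau\cap\{r\leq\tau/4\}}|D^{j_2+1}\Gamma^{i_2}\Phi|^2 + (\text{cubic}).$$
I then use the interior and horizon elliptic estimates (Propositions \ref{elliptic}.2 and \ref{elliptichorizon}) to dominate the remaining $L^2$ integral by non-degenerate energies of $\partial_{t^*}^m\hat{Y}^p\tilde{\Omega}^q\Phi$ on $\{r\leq\tau/2\}$, modulo lower-order $N$- and $U$-terms which are absorbed by the inductive hypothesis and by Propositions \ref{Nestprop} and \ref{U}.

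These energies are then read off from the bootstrap assumptions with their localization to $\{r\leq\tfrac{9\tau}{10}\}$ (which contains $\{r\leq\tau/4\}$): at the top level $k+\ell=16$, (\ref{BA1}) yields $\epsilon A_j\tau^{\eta_{16}}$; at $k+\ell=15$, the $\tau^2\int_{\{r\leq 9\tau/10\}}J^N$ term in (\ref{BA2}) delivers the improved bound $\epsilon A_j\tau^{-1+\eta_{15}}$; at $k+\ell\leq 14$, (\ref{BA3}) delivers $\epsilon A_j\tau^{-2+\eta_{14}}$. Multiplying by the $L^\infty$ gain $BA\epsilon\tau^{-3+\eta_{S,11}}$ produces exactly $BA^2\epsilon^2\tau^{-3+\eta_{S,11}+\eta_{16}}$, $BA^2\epsilon^2\tau^{-4+\eta_{S,11}+\eta_{15}}$, and $BA^2\epsilon^2\tau^{-5+\eta_{S,11}+\eta_{14}}$ respectively, as claimed. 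The cubic and higher terms in $\mathcal C$ are strictly better: the extra factor $\Gamma^{i_3}\Phi$ can be placed in $L^\infty$ via (\ref{BAPI1}) at cost $BA\epsilon\tau^{-1+\eta_{14}}$, yielding an additional $\tau^{-1+\eta_{14}}$ gain. No null structure is needed in this region since in $\{r\leq\tau/4\}$ the coefficient $\Lambda_1$ does not benefit from the $r\geq 9t^*/10$ improvement.

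The main bookkeeping obstacle is handling the lower-order $N$- and $U$-terms produced by applying the elliptic estimates: I close this by induction on $k+\ell$, noting that at every level both the previously established Proposition \ref{Nestprop} (used to bound the $N$-terms on the right-hand side) and Proposition \ref{U} (used for the $U$-terms) provide strictly better powers of $\tau$ than what we seek to establish, since the present bound carries the extra $\tau^{-3+\eta_{S,11}}$ interior gain that those global estimates do not have. The only mildly delicate point is to verify, case by case, that the derivative-count assignment $i_1+j_1\leq 8$ is always achievable so that (\ref{BAPI2}) applies; this is identical to the combinatorial check done in Proposition \ref{Nestprop}, since $2(8+1)<16$.
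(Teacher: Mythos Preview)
Your overall strategy is correct and essentially the paper's: use (\ref{BAPI2}) to put one derivative factor in $L^\infty$ with the $r^{1-\delta}$ weight, apply the local elliptic estimates, and then read off the interior energy from (\ref{BA1}), (\ref{BA2}), (\ref{BA3}) at the three levels, closing by induction on $k+\ell$.

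There is one genuine gap in your treatment of the cubic remainder. The reduction in Proposition \ref{Nestprop} produces the cubic form $(D^{j_1+1}\Gamma^{i_1}\Phi)(D^{j_2+1}\Gamma^{i_2}\Phi)(\Gamma^{i_3}\Phi)$ precisely when the \emph{undifferentiated} factor carries the large index; in particular $i_3$ can range up to $k$, hence up to $16$ at the top level. Your claim that $\Gamma^{i_3}\Phi$ can be controlled in $L^\infty$ via (\ref{BAPI1}) fails once $i_3>13$. The paper handles this the other way round: since in the cubic form both derivative factors satisfy $i_1+j_1,\,i_2+j_2\leq 8$, it places \emph{both} of them in $L^\infty$ using (\ref{BAPI2}), obtaining $(BA\epsilon\,\tau^{-3+\eta_{S,11}})^2$, and then controls $\int_{\{r\leq\tau/4\}} r^{-2}(\Gamma^{i_3}\Phi)^2$ by Hardy's inequality (Proposition \ref{Hardy}) and the nondegenerate energy. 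This yields a term decaying like $\tau^{-6+2\eta_{S,11}+2\delta}$ times $\sum_{i\leq k}\int (D\Gamma^i\Phi)^2$, which is strictly better than needed and is then dropped. With this one correction your argument goes through and coincides with the paper's.
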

\begin{proof}
As before, we only have to estimate terms quadratic in $D^j\Gamma^i\Phi$ with $j\geq 1$ or cubic of the form $$\left(D^{j_1+1}\Gamma^{i_1}\Phi \right)\left(D^{j_2+1}\Gamma^{i_2}\Phi\right)\left(\Gamma^{i_3}\Phi\right)$$ with $i_1+j_1,i_2+j_2\leq 8$.
\begin{equation}\label{Nlocal}
\begin{split}
&\int_{\Sigma_\tau}r^{1-\delta}\left(D^\ell N_k\right)^2\\
\leq &C\left(\sup_{r\leq\frac{\tau}{4}}\sum_{i_1+j_1\leq 8}r^{1-\delta}|D^{j_1+1}\Gamma^{i_1}\Phi|^2\right)\sum_{j_2=0}^\ell\sum_{i_2=0}^k\int_{\Sigma_\tau\cap\{r\leq\frac{\tau}{4}\}}| D^{j_2+1}\Gamma^{i_2}\Phi|^2\\
&+C\left(\sup_{r\leq\frac{\tau}{4}}\sum_{i_1+j_1\leq 8}r^{1-\delta}|D^{j_1+1}\Gamma^{i_1}\Phi|^2\right)^2\sum_{i_3=0}^k\tau^{2\delta}\int_{\Sigma_\tau\cap\{r\leq\frac{\tau}{4}\}}r^{-2}| \Gamma^{i_3}\Phi|^2\\
\leq&CBA\epsilon\tau^{-3+\eta_{S,11}}\sum_{i=1}^{\ell+1}\sum_{j=0}^k\int_{\Sigma_\tau\cap\{r\leq\frac{\tau}{4}\}} \left(D^i\Gamma^j\Phi\right)^2+CBA^2\epsilon^2\tau^{-6+2\eta_{S,11}+2\delta}\sum_{i_3=0}^k\int_{\Sigma_\tau}\left( D\Gamma^{i_3}\Phi\right)^2\\
&\mbox{by Hardy's inequality. Notice now that the second term has more decay than we need,}\\
&\mbox {so we will drop it from now on.}\\
\leq&CBA\epsilon\tau^{-3+\eta_{S,11}}\sum_{i+m=0}^{\ell}\sum_{j=0}^k\int_{\Sigma_\tau\cap\{r\leq\frac{\tau}{2}\}} J^N_\mu\left(\partial_{t^*}^m\Gamma^j\hat{Y}^i\Phi\right)n^\mu_{\Sigma_{\tau}}\\
&+CBA\epsilon\tau^{-3+\eta_{S,11}}\sum_{i=0}^{\ell-1}\sum_{j=0}^k\int_{\Sigma_\tau}\left(\left(D^iU_{j}\right)^2+\left(D^iN_{j}\right)^2\right)\\
\leq&CBA\epsilon\tau^{-3+\eta_{S,11}}\sum_{i+j\leq k+\ell}\int_{\Sigma_\tau} J^N_\mu\left(\Gamma^j\hat{Y}^i\Phi\right)n^\mu_{\Sigma_{\tau}}+CBA\epsilon\tau^{-3+\eta_{S,11}}\sum_{i+j\leq k+\ell-1}\int_{\Sigma_\tau}\left(D^iN_{j}\right)^2,\\
\end{split}
\end{equation}
The Proposition would follow from an induction on $k+\ell$ and the bootstrap assumptions (\ref{BA1}), (\ref{BA2}), (\ref{BA3}). The $k+\ell=0$ case also follows from the above computation as we have adopted the notation that $\displaystyle\sum_{i+j\leq -1} =0$.
\end{proof}
We now move to the region $\{\frac{t^*}{4}\leq r\leq\frac{9t^*}{10}\}$. In this region, $u\sim t^*$, and therefore we can exploit the decay in the variable $u$ given by the estimates from the conformal energy.
\begin{proposition}
$$\int_{\Sigma_\tau\cap\{\frac{\tau}{4}\leq r\leq\frac{9\tau}{10}\}}N_{16}^2\leq CBA^2\epsilon^2\tau^{-4+\eta_{14}+\eta_{16}},$$
$$\int_{\Sigma_\tau\cap\{\frac{\tau}{4}\leq r\leq\frac{9\tau}{10}\}}N_{15}^2\leq CBA^2\epsilon^2\tau^{-5+\eta_{14}+\eta{15}},$$
and
$$\sum_{j=0}^{14}\int_{\Sigma_\tau\cap\{\frac{\tau}{4}\leq r\leq\frac{9\tau}{10}\}}N_{j}^2\leq CBA^2\epsilon^2\tau^{-6+2\eta_{14}}.$$
\end{proposition}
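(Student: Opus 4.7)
The plan is to adapt the strategy of Proposition \ref{Nestprop}, with the new ingredient being the extra $u$-decay available in the wave zone. Expanding $\Gamma^k F(D\Phi)$ by Leibniz and chain rule and using that $F$ is at least quadratic in $D\Phi$, one reduces $N_k$ to a sum of quadratic terms of the form $(D\Gamma^{k_1}\Phi)(D\Gamma^{k_2}\Phi)$ with $k_1+k_2\leq k$ and bounded $\Lambda$-type coefficients (the extra $r^{-1}$ decay of $\Lambda_1$ required by the null condition in $\{r\geq\frac{9t^*}{10}\}$ is not needed here), together with cubic and higher terms that will be shown to be strictly lower order. The key observation is that in $\{\frac{\tau}{4}\leq r\leq\frac{9\tau}{10}\}$ we have $u\sim r\sim \tau$, so the pointwise bootstrap (\ref{BAP3}) yields, for every $i\leq 8$,
$$|D\Gamma^i\Phi|^2\leq CBA\epsilon\, r^{-2}\tau^{\eta_{14}}(1+|u|)^{-2}\leq CBA\epsilon\,\tau^{-4+\eta_{14}}.$$

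For each quadratic term, arrange $k_1\leq\lfloor k/2\rfloor\leq 8$, put $(D\Gamma^{k_1}\Phi)^2$ in $L^\infty$ via the display above, and control the remaining $L^2$ factor via the appropriate energy estimate. For $k=16$, the elementary bound $\int_{\Sigma_\tau}(D\Gamma^{k_2}\Phi)^2\leq C\int_{\Sigma_\tau} J^{N}_\mu(\Gamma^{k_2}\Phi)n^\mu_{\Sigma_\tau}$ together with (\ref{BA1}) gives $\leq CA\epsilon\tau^{\eta_{16}}$, producing the claimed $\tau^{-4+\eta_{14}+\eta_{16}}$. For $k=15$, apply Proposition \ref{Zlowerbound} restricted to $\{r\leq\frac{9\tau}{10}\}$, where the weights $u^2$, $v^2$, and $u^2+v^2$ are all $\gtrsim\tau^2$, to convert the conformal-energy assumption (\ref{BA2}) into $\int_{\Sigma_\tau\cap\{r\leq\frac{9\tau}{10}\}}(D\Gamma^{k_2}\Phi)^2\leq C\tau^{-2}\cdot A\epsilon\tau^{1+\eta_{15}}=CA\epsilon\tau^{-1+\eta_{15}}$, yielding $\tau^{-5+\eta_{14}+\eta_{15}}$. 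For $k\leq 14$, the analogous use of Proposition \ref{Zlowerbound} together with (\ref{BA3}) supplies the $L^2$ factor $CA\epsilon\tau^{-2+\eta_{14}}$, producing $\tau^{-6+2\eta_{14}}$.

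The cubic and higher-order terms are handled exactly as in Proposition \ref{Nestprop}: put two derivative factors in $L^\infty$ via (\ref{BAP3}) and the remaining undifferentiated factor $\Gamma^{i_3}\Phi$ in $L^2$ using Proposition \ref{Hardy} to trade the missing derivative for a weight $r^{-2}$ and hence a factor $\tau^2$ in our region. Each additional factor beyond the quadratic one then contributes at most an extra $CBA\epsilon\tau^{-2+\eta_{14}}$, so these contributions are strictly lower order by at least $BA\epsilon$.

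The main obstacle is essentially bookkeeping: ensuring that every split can be arranged so that $k_1\leq 8$ (which is why the pointwise bound at order $8$ in (\ref{BAP3}) exactly suffices for $k\leq 16$), and matching the $k_2$-factor to the appropriate energy bound ((\ref{BA1}) for $k_2=16$, (\ref{BA2}) for $k_2=15$, (\ref{BA3}) for $k_2\leq 14$). No elliptic reduction or commutator analysis enters, since $N_k$ is by definition $\Gamma^k F(D\Phi)$ and not $[\Box_{g_K},\Gamma^k]\Phi$, so the chain rule produces only quadratic and higher polynomials in derivatives of $\Gamma^j\Phi$.
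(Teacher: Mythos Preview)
Your proposal is correct and follows essentially the same approach as the paper: reduce to quadratic and cubic terms, put the low-index derivative factor in $L^\infty$ via (\ref{BAP3}) using $u\sim r\sim\tau$ in this region, and control the remaining $L^2$ factor by the appropriate bootstrap energy; the cubic terms are handled via Hardy as in Proposition \ref{Nestprop}. The only minor difference is that where you invoke Proposition \ref{Zlowerbound} to extract the $\tau^{-1+\eta_{15}}$ and $\tau^{-2+\eta_{14}}$ bounds on $\int_{\{\frac{\tau}{4}\leq r\leq\frac{9\tau}{10}\}}(D\Gamma^{k_2}\Phi)^2$, the paper instead reads these off directly from the localized $\tau^2\int_{\{r\leq\frac{9\tau}{10}\}}J^N$ terms already built into the bootstrap assumptions (\ref{BA2}) and (\ref{BA3}); this is slightly more direct but amounts to the same thing.
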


\begin{proof}
Arguing as before, we see that the main terms for the nonlinearity are those that are quadratic in $D\Phi$ or those that are cubic with the form $\Gamma^{i_3}\Phi D\Gamma^{i_1}\Phi D\Gamma^{i_2}\Phi$ with $i_1, i_2\leq 8$. The quadratic terms can be estimated:
\begin{equation*}
\begin{split}
&\sum_{i_1=0}^{\lfloor\frac{k}{2}\rfloor}\sum_{i_2=0}^{k}\int_{\Sigma_\tau\cap\{\frac{\tau}{4}\leq r\leq\frac{9\tau}{10}\}}|{D}\Gamma^{i_1}\Phi D\Gamma^{i_2}\Phi|^2\\
\leq &C \left(\sum_{i_1=0}^8 \sup_{\frac{\tau}{4}\leq r\leq\frac{9\tau}{10}}|{D}\Gamma^{i_1}\Phi|^2\right)\sum_{i_2=0}^{k}\int_{\Sigma_\tau\cap\{\frac{\tau}{4}\leq r\leq\frac{9\tau}{10}\}}| D\Gamma^{i_2}\Phi|^2\\
\leq&CAB\epsilon\tau^{-4+\eta_{14}}\sum_{i=0}^{k}\int_{\Sigma_\tau\cap\{\frac{\tau}{4}\leq r\leq\frac{9\tau}{10}\}}| D\Gamma^{i}\Phi|^2.
\end{split}
\end{equation*}
The particular cubic term can be estimated as follows: 
\begin{equation*}
\begin{split}
&\sum_{i_1, i_2=0}^{\lfloor\frac{k}{2}\rfloor}\sum_{i_3=0}^{k}\int_{\Sigma_\tau\cap\{\frac{\tau}{4}\leq r\leq\frac{9\tau}{10}\}}|\Gamma^{i_3}\Phi D\Gamma^{i_1}\Phi D\Gamma^{i_2}\Phi|^2\\
\leq &C \left(\sum_{i_1=0}^8 \sup_{\frac{\tau}{4}\leq r\leq\frac{9\tau}{10}}|{D}\Gamma^{i_1}\Phi|^2\right)^2\sum_{i_3=0}^{k}\int_{\Sigma_\tau\cap\{\frac{\tau}{4}\leq r\leq\frac{9\tau}{10}\}}| \Gamma^{i_3}\Phi|^2\\
\leq&CA^2B^2\epsilon^2\tau^{-8+2\eta_{14}}\sum_{i=0}^{k}\int_{\Sigma_\tau\cap\{\frac{\tau}{4}\leq r\leq\frac{9\tau}{10}\}}| \Gamma^{i}\Phi|^2.
\end{split}
\end{equation*}
In principle, for $k\leq 15$, we can then control the last term using the conformal energy. For $k=16$, however, conformal energy is not available, and we need to use Hardy's inequality.
\begin{equation*}
\begin{split}
&A\epsilon\tau^{-8+2\eta_{14}}\sum_{i=0}^{k}\int_{\Sigma_\tau\cap\{\frac{\tau}{4}\leq r\leq\frac{9\tau}{10}\}}| \Gamma^{i}\Phi|^2\\
\leq &CA\epsilon\tau^{-6+2\eta_{14}}\sum_{i=0}^{k}\int_{\Sigma_\tau}r^{-2}| \Gamma^{i}\Phi|^2\\
\leq &CA\epsilon\tau^{-6+2\eta_{14}}\sum_{i=0}^{k}\int_{\Sigma_\tau} J^N_\mu\left(\Gamma^i\Phi\right) n^\mu_{\Sigma_\tau}.
\end{split}
\end{equation*}
The estimates now follow from the Bootstrap Assumptions (\ref{BA1}), (\ref{BA2}), (\ref{BA3}).
\end{proof}
For many applications, we only need a much weaker estimate on $N_k$. We write down the following Proposition which corresponds to the estimates that will be proved for the quantities involving $S$. This would allow a unified approach in dealing with many estimates with or without $S$.
\begin{proposition}\label{NI}
\begin{equation*}
\begin{split}
\int_{\Sigma_\tau\cap\{r\leq\frac{9\tau}{10}\}}r^{1-\delta}N_{16}^2\leq CA^2\epsilon^2\tau^{-3+\eta_{S,11}+\eta_{16}}
\end{split}
\end{equation*}
\begin{equation*}
\begin{split}
\int_{\Sigma_\tau\cap\{r\leq\frac{9\tau}{10}\}}r^{1-\delta}N_{\leq 15}^2\leq CA^2\epsilon^2\tau^{-4+\eta_{S,11}+\eta_{15}}
\end{split}
\end{equation*}
\end{proposition}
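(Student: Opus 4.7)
The plan is to deduce this proposition directly from the two preceding results by splitting the integration region at $r = \tau/4$, and in each piece invoking the corresponding already-proved bound. Nothing new is needed: this proposition is essentially the union of the inner and annular estimates, specialized to zero $D$-derivatives on $N$ so that the $r^{1-\delta}$ weight can be tolerated on the larger region.

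First, on the inner region $\{r \leq \tau/4\}$, the bound is immediate by taking $\ell=0$ (and $k=16$ or $k\leq 15$, respectively) in the preceding proposition which controls $\sum_{k+\ell=16}\int_{\Sigma_\tau\cap\{r\leq \tau/4\}} r^{1-\delta}(D^\ell N_k)^2$ by $CBA^2\epsilon^2\tau^{-3+\eta_{S,11}+\eta_{16}}$, and analogously $\sum_{k+\ell\leq 15}$ by $CBA^2\epsilon^2\tau^{-4+\eta_{S,11}+\eta_{15}}$. These are precisely the target bounds restricted to the inner region.

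Second, on the annular region $\{\tau/4 \leq r \leq 9\tau/10\}$, I would use the trivial pointwise inequality $r^{1-\delta} \leq C\tau^{1-\delta}$ to extract the weight from the integral. Then the previous proposition gives $\int_{\Sigma_\tau\cap\{\tau/4\leq r\leq 9\tau/10\}} N_{16}^2 \leq CBA^2\epsilon^2\tau^{-4+\eta_{14}+\eta_{16}}$ and, summed over $k\leq 15$, $\int N_{\leq 15}^2 \leq CBA^2\epsilon^2 \tau^{-5+\eta_{14}+\eta_{15}}$ (using that the $k\leq 14$ contribution $\tau^{-6+2\eta_{14}}$ is strictly better). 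Multiplying by $\tau^{1-\delta}$ yields the exponents $-3-\delta+\eta_{14}+\eta_{16}$ and $-4-\delta+\eta_{14}+\eta_{15}$. By the convention on the constants $\delta \ll \eta_{14} \ll \eta_{S,11}$, so $-3-\delta+\eta_{14}+\eta_{16} \leq -3+\eta_{S,11}+\eta_{16}$ and likewise at level $15$, absorbing the losses into $\eta_{S,11}$.

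There is no real obstacle here: the only mild point is to verify that the choice of small parameters makes $\delta + \eta_{14}$ strictly smaller than $\eta_{S,11}$, which is exactly the ordering stipulated in the remark on the bootstrap constants. Adding the two regional estimates yields the stated bounds and completes the proof.
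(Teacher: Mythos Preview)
Your proposal is correct and matches the paper's intent: the paper states this proposition without proof, explicitly describing it as a weaker consequence of the two preceding propositions (the inner estimate on $\{r\leq \tau/4\}$ with the $r^{1-\delta}$ weight, and the annular estimate on $\{\tau/4\leq r\leq 9\tau/10\}$). Your splitting at $r=\tau/4$, using $r^{1-\delta}\leq C\tau^{1-\delta}$ on the annulus, and absorbing the resulting exponent via $\eta_{14}\ll\eta_{S,11}$ is exactly the routine verification the paper omits.
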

We now move to the estimates for $N_k$ in the region $\{r\geq \frac{9t^*}{10}\}$. Here, we need to exploit the null condition:
\begin{proposition}\label{NO}
For $\alpha=0$ or $2$,
$$\int_{\Sigma_\tau\cap\{r\geq\frac{9\tau}{10}\}}N_{16}^2\leq CBA^2\epsilon^2\tau^{-2+\eta_{16}},$$
$$\int_{\Sigma_\tau\cap\{r\geq\frac{9\tau}{10}\}}r^{\alpha}N_{15}^2\leq CBA^2\epsilon^2\tau^{-3+\alpha+\eta_{15}},$$
and
$$\sum_{j=0}^{14}\int_{\Sigma_\tau\cap\{r\geq\frac{9\tau}{10}\}}r^{\alpha}N_j^2\leq CBA^2\epsilon^2\tau^{-4+\alpha+\eta_{14}}.$$
\end{proposition}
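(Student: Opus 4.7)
The null condition decomposes the quadratic part of $F$ in the region $r\geq 9t^*/10$ as $\Lambda_0\,D\Phi\,\overline{D}\Phi + \Lambda_1\,D\Phi\,D\Phi$ with $|\Lambda_1|\leq Cr^{-1}$, so every quadratic summand carries either a good derivative $\overline{D}$ or an extra $r^{-1}$ factor, while the remainder $\mathcal C$ is at least cubic in $D\Phi$. Because $r\sim t^*$ throughout this region, $r^{-1}$ plays the same decay role as the pointwise gain $\tau^{-2+\eta_{14}}$ supplied by (\ref{BAP4}). After commuting $\Gamma^k$ (the commutators $[\Gamma,D]$ and $[\Gamma,\overline{D}]$ producing only terms of the same structural type, with at worst an $r^{-1}$ loss when $\tilde\Omega$ passes $\overline{D}$, harmless in this region), every summand of $\Gamma^k F$ is schematically of the form $(D\Gamma^a\Phi)(\overline{D}\Gamma^c\Phi)$, $r^{-1}(D\Gamma^a\Phi)(D\Gamma^b\Phi)$, or cubic-or-higher in $D\Gamma^i\Phi$, with total order at most $k$; coefficient derivatives are absorbed into constants.

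Each quadratic summand will be bounded by an $L^\infty\cdot L^2$ estimate. Since $a+c\leq k\leq 16$, at least one of $a,c$ is $\leq 8$ and that factor goes into $L^\infty$. In the \emph{good case} ($c\leq 8$), (\ref{BAP4}) gives $\sup|\overline{D}\Gamma^c\Phi|^2\leq CBA\epsilon\,r^{-2}\tau^{-2+\eta_{14}}\leq CBA\epsilon\tau^{-4+\eta_{14}}$, which paired with any $L^2$-bound on $D\Gamma^a\Phi$ available from (\ref{BA1})--(\ref{BA4}) is strictly better than the claim. In the \emph{bad case} ($a\leq 8$), (\ref{BAP2}) gives $\sup|D\Gamma^a\Phi|^2\leq CBA\epsilon\tau^{-2}$, and since $v\sim r\sim\tau$ in our region, Proposition \ref{Zlowerbound} yields
$$\int_{r\geq 9\tau/10} r^\alpha(\overline{D}\Gamma^c\Phi)^2\leq C\tau^{\alpha-2}\int_{\Sigma_\tau}\left(v^2(L\Gamma^c\Phi)^2+(u^2+v^2)|\nabb\Gamma^c\Phi|^2\right)\leq C\tau^{\alpha-2}\int J^{Z,w^Z}_\mu(\Gamma^c\Phi)n^\mu_{\Sigma_\tau},$$
which is at most $A\epsilon\tau^{\alpha-1+\eta_{15}}$ for $c=15$ by (\ref{BA2}) and $A\epsilon\tau^{\alpha-2+\eta_{14}}$ for $c\leq 14$ by (\ref{BA3}). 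Multiplying yields exactly $BA^2\epsilon^2\tau^{-3+\alpha+\eta_{15}}$ and $BA^2\epsilon^2\tau^{-4+\alpha+\eta_{14}}$, matching the claims for $k=15$ and $k\leq 14$ respectively. The $\Lambda_1$ contributions are handled identically, with $r^{-1}\leq C\tau^{-1}$ absorbed into the pointwise factor; cubic-or-higher terms from $\mathcal C$ gain an additional pointwise $|D\Gamma^i\Phi|^2\leq CBA\epsilon\tau^{-2}$ and lie safely below threshold.

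The borderline case $k=16$ has no conformal energy available at the highest order and is closed purely by the bad case with $a=0$: the pointwise bound $\sup|D\Phi|^2\leq CBA\epsilon\tau^{-2}$ combined with $\int(\overline{D}\Gamma^{16}\Phi)^2\leq\int J^N_\mu(\Gamma^{16}\Phi)n^\mu_{\Sigma_\tau}\leq A\epsilon\tau^{\eta_{16}}$ from (\ref{BA1}) multiply to exactly $BA^2\epsilon^2\tau^{-2+\eta_{16}}$. All other distributions of $\Gamma^{16}$ allow either (\ref{BAP4}) in the good case or the intermediate energy bound $\int J^N_\mu(\Gamma^{16-a}\Phi)n^\mu_{\Sigma_\tau}\leq A\epsilon$ from (\ref{BA4}) for $a\geq 1$, both producing strictly more decay. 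This tight balance, where the $\tau^{\eta_{16}}$ loss in the top-order energy is matched exactly by the $\tau^{-2}$ pointwise decay with no slack, is the main technical point and the principal obstacle; the remainder of the argument is organizational bookkeeping of the Leibniz expansion of $\Gamma^k F$ against the null decomposition.
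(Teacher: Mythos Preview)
Your proposal is correct and follows the same strategy as the paper: exploit the null structure so that each quadratic summand carries either a $\overline{D}$ or an $r^{-1}$, split $L^\infty\times L^2$ according to which index is $\leq 8$, use the pointwise bounds (\ref{BAP2}), (\ref{BAP4}) on the low-order factor, and invoke Proposition~\ref{Zlowerbound} to extract $v^2\sim\tau^2$ from the conformal energy on the $\overline{D}$ factor in $L^2$ (or just $J^N$ at top order $k=16$). The paper organizes the $r^\alpha$ weight slightly differently---placing $r^2$ with the $L^\infty$ factor via $\sup r^2|\overline{D}\Gamma^{i_2}\Phi|^2$ rather than leaving it on the integral---but this is cosmetic.

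One point you should not sweep under ``coefficient derivatives are absorbed into constants'': because $\Lambda_0,\Lambda_1$ depend on $\Phi$, the Leibniz expansion produces genuine cubic terms of the form $(\Gamma^{i_3}\Phi)(\overline{D}\Gamma^{i_1}\Phi)(D\Gamma^{i_2}\Phi)$ with an \emph{undifferentiated} factor $\Gamma^{i_3}\Phi$, which are not cubic in $D\Gamma^i\Phi$ and so are not covered by your dismissal of $\mathcal C$. When $i_3\leq 13$ this factor goes into $L^\infty$ via (\ref{BAP1}); when $i_3>13$ (forcing $i_1,i_2\leq 2$) the paper handles it by pairing both derivative factors in $L^\infty$ and controlling $\int r^{\alpha-4}(\Gamma^{i_3}\Phi)^2$ via Hardy's inequality (Proposition~\ref{Hardy}) and the $J^N$ energy. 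This is routine once noticed, but should be stated.
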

\begin{proof}
Following the argument before, we reduce to quadratic and cubic terms. This time, however, the null condition plays a crucial role. For the quadratic terms, we need to consider
$$\left(\bar{D}\Gamma^{i_1}\Phi D\Gamma^{i_2}\Phi\right),\left(D\Gamma^{i_1}\Phi \bar{D}\Gamma^{i_2}\Phi\right),r^{-1} \left(D\Gamma^{i_1}\Phi D\Gamma^{i_2}\Phi\right),$$
where $i_1\geq i_2$.
For the cubic terms, we need to consider
$$\left({D}\Gamma^{i_1}\Phi D\Gamma^{i_2}\Phi D\Gamma^{i_3}\Phi\right), \left(\bar{D}\Gamma^{i_1}\Phi D\Gamma^{i_2}\Phi \Gamma^{i_3}\Phi\right).$$
Notice that for the first cubic term can be dominated pointwise by quadratic terms of the third type listed above using the bootstrap assumptions (\ref{BAP2}) and (\ref{BAPI2}). The second cubic term can also be dominated by the first two types of quadratic terms if $i_3\leq 13$ by (\ref{BAP1}) and (\ref{BAPI1}). We can thus assume $i_3>13$ and hence $i_1, i_2\leq 8$. We now estimate the quadratic terms
\begin{equation*}
\begin{split}
&\sum_{i_2=0}^{\lfloor\frac{k}{2}\rfloor}\sum_{i_1=0}^{k-j}\int_{\Sigma_\tau\cap\{r\geq\frac{9\tau}{10}\}}r^{\alpha}\left(|\bar{D}\Gamma^{i_1}\Phi D\Gamma^{i_2}\Phi|^2+|D\Gamma^{i_1}\Phi \bar{D}\Gamma^{i_2}\Phi|^2+r^{-2} |D\Gamma^{i_1}\Phi D\Gamma^{i_2}\Phi|^2\right)\\
\leq&C\left(\sup_{r\geq\frac{9\tau}{10}}\sum_{i_2=0}^{8}r^2|D\Gamma^{i_2}\Phi|^2\right)\left(\sum_{i_1=0}^{k}\int_{\Sigma_\tau\cap\{r\geq\frac{9\tau}{10}\}} r^{\alpha-2}|\bar{D}\Gamma^{i_1}\Phi|^2\right)\\
&+C\left(\sup_{r\geq\frac{9\tau}{10}}\sum_{i_2=0}^{8}r^2|\bar{D}\Gamma^{i_2}\Phi|^2\right)\left(\sum_{i_1=0}^{k}\int_{\Sigma_\tau\cap\{r\geq\frac{9\tau}{10}\}}r^{\alpha-2}| D\Gamma^{i_1}\Phi|^2\right)\\
&+C\tau^{-2}\left(\sup_{r\geq\frac{9\tau}{10}}\sum_{i_2=0}^{8}r^2|D\Gamma^{i_2}\Phi|^2\right)\left(\sum_{i_1=0}^{k}\int_{\Sigma_\tau\cap\{r\geq\frac{9\tau}{10}\}} r^{\alpha-2}|{D}\Gamma^{i_1}\Phi|^2\right)\\
\leq &CAB\epsilon\left(\sum_{i=0}^{k}\int_{\Sigma_\tau\cap\{r\geq\frac{9\tau}{10}\}} r^{\alpha-2}|\bar{D}\Gamma^{i}\Phi|^2\right)+CAB\epsilon\tau^{-2+\eta_{14}}\left(\sum_{i=0}^{k}\int_{\Sigma_\tau\cap\{r\geq\frac{9\tau}{10}\}} r^{\alpha-2}|{D}\Gamma^{i}\Phi|^2\right).
\end{split}
\end{equation*}
We then estimate the particular cubic term:
\begin{equation*}
\begin{split}
&\sum_{i_3=0}^{k}\sum_{i_1, i_2=0}^{8}\int_{\Sigma_\tau\cap\{r\geq\frac{9\tau}{10}\}}r^{\alpha}\left(\bar{D}\Gamma^{i_1}\Phi D\Gamma^{i_2}\Phi \Gamma^{i_3}\Phi\right)^2\\
\leq&C\left(\sup_{r\geq\frac{9\tau}{10}}\sum_{i_1=0}^{8}r^2|D\Gamma^{i_2}\Phi|^2\right)\left(\sup_{r\geq\frac{9\tau}{10}}\sum_{i_2=0}^{8}r^2|\bar{D}\Gamma^{i_2}\Phi|^2\right)\left(\sum_{i_3=0}^{k}\int_{\Sigma_\tau\cap\{r\geq\frac{9\tau}{10}\}} r^{\alpha-4}|\Gamma^{i_1}\Phi|^2\right)\\
\leq &CA^2B^2\epsilon^2\tau^{-2+\eta_{14}}\left(\sum_{i=0}^{k}\int_{\Sigma_\tau\cap\{r\geq\frac{9\tau}{10}\}} r^{\alpha-4}|\Gamma^{i}\Phi|^2\right).
\end{split}
\end{equation*}
Therefore,
\begin{equation*}
\begin{split}
&\int_{\Sigma_\tau\cap\{r\geq\frac{9\tau}{10}\}}N_{16}^2\\
\leq &CAB\epsilon\tau^{-2}\left(\sum_{i=0}^{16}\int_{\Sigma_\tau\cap\{r\geq\frac{9\tau}{10}\}} |\bar{D}\Gamma^{i}\Phi|^2\right)+CAB\epsilon\tau^{-4+\eta_{14}}\left(\sum_{i=0}^{16}\int_{\Sigma_\tau\cap\{r\geq\frac{9\tau}{10}\}} |{D}\Gamma^{i}\Phi|^2\right)\\
&+CA^2B^2\epsilon^2\tau^{-4+\eta_{14}}\left(\sum_{i=0}^{16}\int_{\Sigma_\tau\cap\{r\geq\frac{9\tau}{10}\}} r^{-2}|\Gamma^{i}\Phi|^2\right)\\
\leq &\left(CAB\epsilon\tau^{-2}+\left(CAB\epsilon+CA^2B^2\epsilon^2\right)\tau^{-4+\eta_{14}}\right)\left(\sum_{i=0}^{16}\int_{\Sigma_\tau\cap\{r\geq\frac{9\tau}{10}\}} |{D}\Gamma^{i}\Phi|^2\right)\\
\leq &CA^2B\epsilon^2\tau^{-2+\eta_{16}}.
\end{split}
\end{equation*}
and
\begin{equation*}
\begin{split}
&\int_{\Sigma_\tau\cap\{r\geq\frac{9\tau}{10}\}}N_k^2\\
\leq &CAB\epsilon\tau^{-4}\left(\sum_{i=0}^{k}\int_{\Sigma_\tau\cap\{r\geq\frac{9\tau}{10}\}} \tau^2|\bar{D}\Gamma^{i}\Phi|^2\right)+CAB\epsilon\tau^{-4+\eta_{14}}\left(\sum_{i=0}^{k}\int_{\Sigma_\tau\cap\{r\geq\frac{9\tau}{10}\}} |{D}\Gamma^{i}\Phi|^2\right)\\
&+CA^2B^2\epsilon^2\tau^{-4+\eta_{14}}\left(\sum_{i=0}^{k}\int_{\Sigma_\tau\cap\{r\geq\frac{9\tau}{10}\}} r^{-2}|\Gamma^{i}\Phi|^2\right)\\
\leq &CAB\epsilon\tau^{-4}\left(\sum_{i=0}^{k}\int_{\Sigma_\tau\cap\{r\geq\frac{9\tau}{10}\}} \tau^2|\bar{D}\Gamma^{i}\Phi|^2\right)+\left(CAB\epsilon+CA^2B^2\epsilon^2\right)\tau^{-4+\eta_{14}}\left(\sum_{i=0}^{k}\int_{\Sigma_\tau} |{D}\Gamma^{i}\Phi|^2\right)\\
\end{split}
\end{equation*}
and
\begin{equation*}
\begin{split}
&\int_{\Sigma_\tau\cap\{r\geq\frac{9\tau}{10}\}}r^{2}N_k^2\\ 
\leq &CAB\epsilon\tau^{-2}\left(\sum_{i=0}^{k}\int_{\Sigma_\tau\cap\{r\geq\frac{9\tau}{10}\}} \tau^2|\bar{D}\Gamma^{i}\Phi|^2\right)+CAB\epsilon\tau^{-2+\eta_{14}}\left(\sum_{i=0}^{k}\int_{\Sigma_\tau\cap\{r\geq\frac{9\tau}{10}\}} |{D}\Gamma^{i}\Phi|^2\right)\\
&+CA^2B^2\epsilon^2\tau^{-2+\eta_{14}}\left(\sum_{i=0}^{k}\int_{\Sigma_\tau\cap\{r\geq\frac{9\tau}{10}\}} r^{-2}|\Gamma^{i}\Phi|^2\right)\\
\leq &CAB\epsilon\tau^{-2}\left(\sum_{i=0}^{k}\int_{\Sigma_\tau\cap\{r\geq\frac{9\tau}{10}\}} \tau^2|\bar{D}\Gamma^{i}\Phi|^2\right)+\left(CAB\epsilon+CA^2B^2\epsilon^2\right)\tau^{-2+\eta_{14}}\left(\sum_{i=0}^{k}\int_{\Sigma_\tau} |{D}\Gamma^{i}\Phi|^2\right)\\
\end{split}
\end{equation*}
The conclusion follows from Proposition \ref{Zlowerbound} and the bootstrap assumptions (\ref{BA1}), (\ref{BA2}), (\ref{BA3}).
\end{proof}

From the the estimates for $U_k$ and $N_k$ and the $L^2-L^\infty$ estimates in the last section, we get the following pointwise bounds:
\begin{proposition}\label{pointwise}
For $r\geq\frac{t^*}{4}$,
\begin{equation}\label{P1}
\sum_{j=0}^{13}|\Gamma^j\Phi|^2\leq \frac{B}{2}A\epsilon r^{-2}(t^*)^{1+\eta_{15}}.
\end{equation}
\begin{equation}\label{P1.5}
\sum_{j=0}^{13}|D\Gamma^j\Phi|^2\leq \frac{B}{2}A\epsilon.
\end{equation}
\begin{equation}\label{P2}
\sum_{\ell=1}^{13-j}\sum_{j=0}^{12}|D^\ell\Gamma^j\Phi|^2\leq \frac{B}{2}A\epsilon r^{-2}.
\end{equation}
\begin{equation}\label{P3}
\sum_{j=0}^8|D\Gamma^j\Phi|^2\leq \frac{B}{2}A\epsilon r^{-2}(t^*)^{\eta_{14}}(1+|u|)^{-2}.
\end{equation}
\begin{equation}\label{P4}
\sum_{j=0}^8|\bar{D}\Gamma^{j}\Phi|^2\leq \frac{B}{2}A\epsilon r^{-2}(t^*)^{-2+\eta_{14}}.
\end{equation}
For $r\leq\frac{t^*}{4}$,
\begin{equation}\label{PI1}
\sum_{j=0}^{13}|\Gamma^j\Phi|^2\leq \frac{B}{2}A\epsilon (t^*)^{-1+\eta_{15}}.
\end{equation}
\begin{equation}\label{PI1.1}
\sum_{\ell=1}^{14-j}\sum_{j=0}^{13}|D^\ell\Gamma^j\Phi|^2\leq BA\epsilon (t^*)^{-1+\eta_{15}}.
\end{equation}
\begin{equation}\label{PI1.2}
\sum_{\ell=1}^{13-j}\sum_{j=0}^{12}|D^\ell\Gamma^j\Phi|^2\leq BA\epsilon (t^*)^{-2+\eta_{14}}.
\end{equation}
\begin{equation}\label{PI2}
\sum_{\ell=1}^{9-j}\sum_{j=0}^8|D^\ell\Gamma^j\Phi|^2\leq \frac{B}{2}A\epsilon r^{-1+\delta}(t^*)^{-3+\eta_{S,11}}.
\end{equation}
\end{proposition}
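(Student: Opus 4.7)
The plan is to derive each of the nine pointwise estimates by applying the corresponding $L^2$--$L^\infty$ estimate of Section \ref{sectionpointwise} to the commuted field $\Gamma^j\Phi$, then dominate the right-hand sides using the energy bootstrap assumptions (J) and (K) together with the inhomogeneous-term estimates of Propositions \ref{U}, \ref{Nestprop}, \ref{NI} and \ref{NO}. Concretely, $\Box_{g_K}(\Gamma^j\Phi)=[\Box_{g_K},\Gamma^j]\Phi+\Gamma^j(\Box_{g_K}\Phi)\le U_{\le j}+N_{\le j}$, so every squared-$\Box$ term appearing on the right-hand side of a $L^2$--$L^\infty$ estimate will be split into a ``commutator'' piece (handled by Proposition \ref{U}, which on its own produces a bound with linear constant $A_{j-1}\epsilon$) and a ``nonlinear'' piece (handled by the appropriate quadratic $N$-estimate, producing $BA^2\epsilon^2$). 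In every case the linear piece is $\ll A\epsilon$ by the constant hierarchy, and the nonlinear piece gains a factor of $A\epsilon\ll 1$ over the bootstrap right-hand side, which is precisely why one improves the constant from $B$ to $B/2$.

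In the exterior region $\{r\ge t^*/4\}$ I would argue as follows. For (P1), apply Proposition \ref{rnoderivatives} to $\Gamma^j\Phi$ and bound the resulting conformal-plus-$N$ energy at total level $j+2\le 15$ by \eqref{BA2}. For (P1.5), apply Proposition \ref{SE} to $\Gamma^j\Phi$, using \eqref{BA4} for the $J^N$ term and Proposition \ref{Nestprop} for $\int(D^kG)^2$. For (P2), apply Proposition \ref{r}, controlling the non-degenerate energy by \eqref{BA4} and the spacetime squared-$\Box$ term by the first two estimates of Proposition \ref{Nestprop} together with Proposition \ref{U}. For (P3), apply Proposition \ref{ru}; because $j\le 8$, the conformal-energy right-hand side is at level $\le 10$ and is controlled by \eqref{BA3}. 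For (P4), apply Proposition \ref{rv}, where the distinctive feature is that the right-hand side contains $J^N(S\cdots)$, controlled by \eqref{BA8} since $j+k\le 10\le 12$.

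In the interior region $\{r\le t^*/4\}$ the plan is analogous. (PI1) comes from Proposition \ref{SEinside} applied to $\Gamma^j\Phi$ (the statement there is for $\Box_{g_K}\Phi=0$, but the proof adapts verbatim once the $\int G^2$ contributions are added and controlled by Proposition \ref{Nestprop}); the $J^{Z+N,w^Z}$ energy at level $j+2\le 15$ is bounded by \eqref{BA2}. For (PI1.1) and (PI1.2), one uses Proposition \ref{sSobolev}, invoking \eqref{BA2}, \eqref{BA3} for the required $\int J^N(\partial^m \hat Y^j\tilde\Omega^k\Phi)$ at levels $\le 15$ and $\le 14$ respectively, plus Proposition \ref{Nestprop} for the $\int (D^jG)^2$ error. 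Finally, (PI2) is the most delicate estimate: I would apply Proposition \ref{inside1}, using the spacetime bootstrap assumptions \eqref{BAK3}, \eqref{BAK5}, \eqref{BAK8}, \eqref{BAK10} to control the two $K^{X_1}$ integrals at levels $\le 10$ for $\Phi$ and $\le 10$ for $S\Phi$, while the $\iint r^{-1-\delta}(D^jG)^2$ term is dominated by Propositions \ref{U} and \ref{Nestprop}.

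The main obstacle, as usual at a borderline decay rate, lies in the estimates (P1) and (PI1), because they demand the conformal energy at level $15$ (only growing like $\tau^{1+\eta_{15}}$) rather than at the level where we merely have boundedness; this is the sole reason the bootstrap hierarchy \eqref{BA1}--\eqref{BA4} (and the analogue \eqref{BA5}--\eqref{BA8} for $S$-commuted fields, needed in (P4) via Proposition \ref{rv}) was set up in cascading fashion. A secondary subtlety is that Proposition \ref{rv} applied to $\Gamma^j\Phi$ with $j\le 8$ demands control of $S\Gamma^j\Phi$ in the non-degenerate energy at a total of $j+k\le 10$ derivatives, which sits exactly at the boundary of \eqref{BA8}; this must be checked carefully when writing the full argument but introduces no new difficulty beyond counting derivatives.
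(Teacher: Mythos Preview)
Your proposal is correct and follows essentially the same route as the paper: each of the nine estimates is obtained by applying the matching $L^2$--$L^\infty$ estimate from Section~\ref{sectionpointwise} to $\Gamma^j\Phi$, then closing via the bootstrap assumptions (J)/(K) together with the inhomogeneous bounds in Propositions~\ref{Nestprop}, \ref{U}, \ref{NI}, \ref{NO}. Two small remarks: for (PI2) the paper invokes only \eqref{BAK5} and \eqref{BAK10} (your extra citations of \eqref{BAK3}, \eqref{BAK8} are harmless but unnecessary), and for (P4) the derivative count needed from \eqref{BA8} is at most $10$, well inside its range of $12$, so there is slack rather than a borderline case.
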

\begin{proof}
(\ref{P1}) is immediate from Proposition \ref{rnoderivatives} and the bootstrap assumptions (\ref{BA2}) and (\ref{BA3}).

By Proposition \ref{SE},
\begin{equation*}
\begin{split}
\sum_{j=0}^{13}|D\Gamma^j\Phi|^2
\leq& C\left(\sum_{k=0}^{15}\int_{\Sigma_\tau} J^{N}_\mu\left(\Gamma^k\Phi\right) n^\mu_{\Sigma_\tau}+\sum_{k=0}^1\int_{\Sigma_\tau} \left(D^kG_{\leq 13}\right)^2\right).
\end{split}
\end{equation*}
Hence we get (\ref{P1.5}) by bootstrap assumption (\ref{BA4}) and Propositions \ref{Nestprop} and \ref{U}. The constant is improved since $A\epsilon\ll 1$ and $C\ll B$.

By Proposition \ref{r}, 
\begin{equation*}
\begin{split}
\sum_{\ell=1}^{13-j}\sum_{j=0}^{12}|D^\ell\Gamma^j\Phi|^2\leq &Cr^{-2}\left(\sum_{m=0}^{13-j}\sum_{k=0}^2\sum_{j=0}^{12}\int_{\Sigma_\tau} J^{N}_\mu\left(\partial_{t^*}^m\Omega^k\Gamma^j\Phi\right) n^\mu_{\Sigma_\tau}+\sum_{m+k\leq 10}\int_{\Sigma_\tau}\left(D^m G_{\leq k}\right)^2\right)\\
\leq&Cr^{-2}\left(\sum_{j=0}^{11}\int_{\Sigma_\tau} J^{N}_\mu\left(\Gamma^j\Phi\right) n^\mu_{\Sigma_\tau}+\sum_{m+k\leq 10}\int_{\Sigma_\tau}\left(D^m G_{\leq k}\right)^2\right)\\
\end{split}
\end{equation*}
We hence get (\ref{P2}) by bootstrap assumption (\ref{BA4}) and Propositions \ref{Nestprop} and \ref{U}. The constant is improved since $A\epsilon\ll 1$ and $C\ll B$.

By Proposition \ref{ru},
for $r\geq\frac{t^*}{4}$, we have 
\begin{equation*}
\begin{split}
&\sum_{j=0}^8|D\Gamma^j\Phi|^2\\
\leq& Cr^{-2}\left(1+|u|\right)^{-2}\sum_{m=0}^{1}\sum_{k=0}^2\sum_{j=0}^8\left(\int_{\Sigma_\tau}J^{Z+CN}_\mu\left(\partial^m_{t^*}\tilde{\Omega}^k\Gamma^j\Phi\right) n^\mu_{\Sigma_\tau}+C\tau^2\int_{\Sigma_\tau\cap\{r\leq r^-_Y\}}J^{N}_\mu\left(\partial^m_{t^*}\tilde{\Omega}^k\Gamma^j\Phi\right) n^\mu_{\Sigma_\tau}\right)\\
&+Cr^{-2}\sum_{k=0}^2\sum_{j=0}^8\int_{\Sigma_\tau\cap\{u'\sim u\}\cap\{r\geq\frac{\tau}{4}\}} \left(\Box_{g_K}\left(\tilde{\Omega}^k\Gamma^j\Phi\right)\right)^2\\
\leq& CA\epsilon\tau^{\eta_{14}}r^{-2}\left(1+|u|\right)^{-2}+Cr^{-2}\int\int_{\Sigma_\tau\cap\{u'\sim u\}\cap\{r\geq\frac{\tau}{4}\}} G_{\leq 10}^2\\
&\quad\mbox{by bootstrap assumption (\ref{BA3})}\\
\leq& CA\epsilon\tau^{\eta_{14}}r^{-2}\left(1+|u|\right)^{-2}+CA\epsilon(t^*)^{-2+\eta_{14}}r^{-2}+CA^2B\epsilon^2(t^*)^{-2+\eta_{14}}r^{-2}\\
&\quad\mbox{by Propositions \ref{NI}, \ref{NO} and \ref{U}}\\
\leq &\frac{B}{2}A\epsilon r^{-2}(t^*)^{\eta_{14}}(1+|u|)^{-2}.
\end{split}
\end{equation*}
Hence we have proved (\ref{P3}).

By Proposition \ref{rv}, for $r\geq\frac{t^*}{4}$, we have
\begin{equation*}
\begin{split}
\sum_{j=0}^8|\bar{D}\Gamma^j\Phi|^2\leq&C r^{-4}\sum_{k=0}^2\sum_{j=0}^8\sum_{i+m\leq 1}\left(\int_{\Sigma_{\tau}} J^N_\mu\left(S^i\partial_{t^*}^m\Gamma^j\Phi\right)n^\mu_{\Sigma_\tau}+\int_{\Sigma_\tau}J^{Z+CN}_\mu\left(\partial^m_{t^*}\tilde{\Omega}^k\Gamma^j\Phi\right) n^\mu_{\Sigma_\tau}\right.\\
&\left.\quad\quad\quad\quad\quad\quad+C\tau^2\int_{\Sigma_\tau\cap\{r\leq r^-_Y\}}J^{N}_\mu\left(\partial^m_{t^*}\tilde{\Omega}^k\Gamma^j\Phi\right) n^\mu_{\Sigma_\tau}+\int_{\Sigma_{\tau}} \left(\Box_{g_K}\left(\tilde{\Omega}^k\Gamma^j\Phi\right)\right)^2\right)\\
&+Cr^{-2}\sum_{k=0}^2\sum_{j=0}^8\int_{\Sigma_\tau\cap\{r\geq\frac{\tau}{2}\}} \left(\Box_{g_K}\left(\tilde{\Omega}^k\Gamma^j\Phi\right)\right)^2\\
\leq &CA\epsilon r^{-4}(t^*)^{\eta_{14}}+CA\epsilon(t^*)^{-2+\eta_{14}}r^{-2}+CA^2B\epsilon^2(t^*)^{-2+\eta_{14}}r^{-2}\\
\leq &\frac{B}{2}A\epsilon r^{-2}(t^*)^{-2+\eta_{14}}.
\end{split}
\end{equation*}
Hence we have proved (\ref{P4}) and completed the proof for $r\geq\frac{t^*}{4}$. We now move to the pointwise estimates in the region $r\leq\frac{t^*}{4}$. (\ref{PI1}) follows directly from Proposition \ref{SEinside} and the bootstrap assumptions (\ref{BA2}) and (\ref{BA3}). By Proposition \ref{sSobolev},
\begin{equation*}
\begin{split}
\sum_{\ell=1}^{14-j}\sum_{j=0}^{13}|D^\ell\Gamma^j\Phi|^2\leq C\left(\sum_{i+j\leq 15}\int_{\Sigma_\tau\cap\{r\leq\frac{t^*}{2}\}}J^{N}_\mu\left(\hat{Y}^i \Gamma^j\Phi\right)n^\mu_{\Sigma_\tau}+\sum_{\ell=1}^{14-j}\sum_{j=0}^{13}\int_{\Sigma_\tau}(D^\ell G_{\leq j})^2\right),
\end{split}
\end{equation*}
where we have used the fact that $[\hat{Y},\Gamma]=0$. Hence (\ref{PI1.1}) follows from the bootstrap assumptions (\ref{BA2}) and (\ref{BA3}) and Propositions \ref{Nestprop} and \ref{U}. The proof of (\ref{PI1.2}) follows similarly as (\ref{PI1.1}): by Proposition \ref{sSobolev}, 
\begin{equation*}
\begin{split}
\sum_{\ell=1}^{13-j}\sum_{j=0}^{12}|D^\ell\Gamma^j\Phi|^2\leq C\left(\sum_{i+j\leq 14}\int_{\Sigma_\tau\cap\{r\leq\frac{t^*}{2}\}}J^{N}_\mu\left(\hat{Y}^i \Gamma^j\Phi\right)n^\mu_{\Sigma_\tau}+\sum_{\ell=1}^{13-j}\sum_{j=0}^{13}\int_{\Sigma_\tau}(D^\ell G_{\leq j})^2\right),
\end{split}
\end{equation*}
Hence (\ref{PI1.2}) follows from the bootstrap assumptions (\ref{BA3}) and Propositions \ref{Nestprop} and \ref{U}.

Finally, by Proposition \ref{extradecay}, for $r\leq\frac{t^*}{4}$, we have
\begin{equation*}
\begin{split}
\sum_{\ell=1}^{9-j}\sum_{j=0}^8|D^\ell\Gamma^j\Phi|^2\leq &C(t^*)^{-1}r^{-1+\delta}\sum_{i+j\leq 10}\iint_{\mathcal R((1.1)^{-1}\tau,\tau)\cap\{r\leq\frac{t^*}{2}\}}\left(K^{X_1}\left(Y^i\Gamma^j\Phi\right)+K^{X_1}\left(SY^i\Gamma^j\Phi\right)\right)\\
&+C(t^*)^{-1}r^{-1+\delta}\sum_{i+j\leq 10}\iint_{\mathcal R((1.1)^{-1}\tau,\tau)\cap\{r \leq\frac{t^*}{2}\}}r^{-1-\delta}\left(D^iG_j\right)^2\\
\leq&CA\epsilon\tau^{-3+\eta_{S,11}}r^{-1+\delta}+CA^2\epsilon^2\tau^{-3+\eta_{14}}r^{-1+\delta}\\
\leq&\frac{B}{2}A\epsilon\tau^{-3+\eta_{S,11}}r^{-1+\delta},
\end{split}
\end{equation*}
where in the third line we have used the bootstrap assumptions (\ref{BAK5}) and (\ref{BAK10}) and Propositions \ref{NI}, \ref{NO} and \ref{U}. The only caveat is that when using (\ref{BAK10}), the vector fields $\hat{Y}$ and $S$ are in different order. However, since $[S,\hat{Y}]\sim D$, we can estimate the commutator term by (\ref{BAK5}).
\end{proof}
Now, we have proved the $L^{\infty}$ bounds, we will replace the constant $B$ in the bootstrap assumption (\ref{BAP1}), (\ref{BAP1.5}), (\ref{BAP2}), (\ref{BAP3}), (\ref{BAP4}), (\ref{BAPI1}), (\ref{BAPI2}) by $C$ in the sequel. Notice that we have originally assumed $B\ll A_0$ and therefore $C\ll A_0$ still holds. We now proceed to recover the bootstrap assumptions (K) that do not involve the commutators $Y$ or $S$. We first retrieve (\ref{BAK2})-(\ref{BAK5}).  Notice also that we will retrieve (\ref{BAK1}) and (\ref{BAK1.5}) later together with (\ref{BA1}) and (\ref{BA4}).

\begin{proposition}\label{K}
\begin{equation}\label{K2}
 \sum_{i+j\leq 15}A_{X,j}^{-1}\iint_{\mathcal R((1.1)^{-1}\tau,\tau)\cap\{r\leq\frac{t^*}{2}\}}\left(K^{X_0}\left(\partial_{t^*}^{i}\tilde{\Omega}^j\Phi\right)+K^{N}\left(\partial_{t^*}^{i}\tilde{\Omega}^j\Phi\right)\right)\leq\frac{\epsilon}{2}\tau^{-1+\eta_{15}}.
\end{equation}
\begin{equation}\label{K3}
\sum_{i+j\leq 14}A_{X,j}^{-1}\iint_{\mathcal R((1.1)^{-1}\tau,\tau)\cap\{r\leq\frac{t^*}{2}\}} K^{X_1}\left(\partial_{t^*}^{i}\tilde{\Omega}^j\Phi\right)\leq\frac{\epsilon}{2}\tau^{-1+\eta_{15}}.
\end{equation}
\begin{equation}\label{K4}
 \sum_{i+j\leq 14}A_{X,j}^{-1}\iint_{\mathcal R((1.1)^{-1}\tau,\tau)\cap\{r\leq\frac{t^*}{2}\}}\left(K^{X_0}\left(\partial_{t^*}^{i}\tilde{\Omega}^j\Phi\right)+K^{N}\left(\partial_{t^*}^{i}\tilde{\Omega}^j\Phi\right)\right)\leq\frac{\epsilon}{2}\tau^{-2+\eta_{14}}.
\end{equation}
\begin{equation}\label{K5}
 \sum_{i+j\leq 13}A_{X,j}^{-1}\iint_{\mathcal R((1.1)^{-1}\tau,\tau)\cap\{r\leq\frac{t^*}{2}\}}K^{X_1}\left(\partial_{t^*}^{i}\tilde{\Omega}^j\Phi\right)\leq\frac{\epsilon}{2}\tau^{-2+\eta_{14}}.
\end{equation}
\end{proposition}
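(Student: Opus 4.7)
The plan is to apply the localized estimates Propositions \ref{X0}.1 and \ref{X0}.2 (together with the associated Remark, which saves a derivative off the trapped set) to the commuted quantity $\partial_{t^*}^i \tilde\Omega^j \Phi$. Writing
$$\Box_{g_K}\bigl(\partial_{t^*}^i \tilde\Omega^j \Phi\bigr) = N_{k,j} + U_{k,j}$$
with $k = i+j$, one observes that $\tilde\Omega$ is supported in $\{r > R_\Omega\}$, with $R_\Omega$ chosen so that $R_\Omega > 3M + M/8$, and so $U_{k,j}$ is supported away from the trapped set $\{|r-3M|\leq M/8\}$ (with $U_{k,0} = 0$ since $\partial_{t^*}$ is Killing). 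Setting $G_1 = N_{k,j}$ and $G_2 = U_{k,j}$, the Remark applies and it remains only to control each term on the right-hand sides of Propositions \ref{X0}.1 and \ref{X0}.2.

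The conformal-energy input $\tau^{-2}\int_{\Sigma_{(1.1)^{-1}\tau}} J^{Z+N,w^Z}_\mu(\partial_{t^*}^i \tilde\Omega^j \Phi) n^\mu$ is bounded via bootstrap assumption (\ref{BA2}) by $CA_j\epsilon\tau^{-1+\eta_{15}}$ for $i+j \leq 15$, and via (\ref{BA3}) by $CA_j\epsilon\tau^{-2+\eta_{14}}$ for $i+j \leq 14$; the local-energy input $\int_{\Sigma_{(1.1)^{-1}\tau}\cap\{r\leq r_Y^-\}} J^N_\mu n^\mu$ is controlled identically. Dividing by $A_{X,j}$ and summing, the gap $A_j/A_{X,j} \ll 1$ from the constant hierarchy absorbs these contributions comfortably into $\tfrac{\epsilon}{4}\tau^{-1+\eta_{15}}$ or $\tfrac{\epsilon}{4}\tau^{-2+\eta_{14}}$, respectively. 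The same assumptions handle the extra $\partial_{t^*}$-derivative demanded by Proposition \ref{X0}.2, so (\ref{K3}) and (\ref{K5}) close with one fewer derivative on $\Phi$ than (\ref{K2}) and (\ref{K4}) respectively.

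The nonlinear inhomogeneity $\sum_m\iint r^{1+\delta}(\partial_{t^*}^m N_{k,j})^2$ is subdominant: on the localized region $\{r\leq 9t^*/10\}$ one writes $r^{1+\delta} \leq (t^*)^{2\delta}r^{1-\delta}$ and invokes Proposition \ref{NI} (at one order higher), obtaining after the $t^*$-integration a bound of the form $CA^2\epsilon^2\tau^{-2+2\delta+\eta_{S,11}+\eta_{16}}$, much stronger than the target once divided by $A_{X,j}$ and absorbed by $A\epsilon\ll 1$. The supremum boundary term is handled pointwise in $t^*$ by Proposition \ref{Nestprop}. The commutator inhomogeneity $\iint r^{1+\delta}U_{k,j}^2$ is bounded by Proposition \ref{U} by $CA_{X,j-1}\epsilon\tau^{-1+\eta_{15}}$ (respectively $CA_{X,j-1}\epsilon\tau^{-2+\eta_{14}}$ at lower order), so dividing by $A_{X,j}$ leaves a factor $A_{X,j-1}/A_{X,j}$ that is arbitrarily small by construction of the hierarchy.

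Finally, the $K^N$ contribution in (\ref{K2}) and (\ref{K4}) reduces to the $K^{X_0}$ contribution by pointwise comparison: $\pi^N$ vanishes outside a neighborhood of the horizon, so $K^N$ is supported in $\{r \leq r_Y^+\}$, and for $r_Y^+ < 3M - M/8$ the indicator in the definition of $K^{X_0}$ is $1$ on this support, whence $|K^N| \leq C\,J^N_\mu n^\mu \leq C\,K^{X_0}$. The principal difficulty, as is typical for such bootstrap estimates, is the careful tracking of the hierarchy of constants $A_j \ll A_{X,j}$ alongside the decay indices $\delta \ll \eta_{16} \ll \eta_{15} \ll \eta_{14} \ll \eta_{S,11}$, and verifying that each of the contributions listed above is strictly absorbed into $\tfrac{\epsilon}{2}$ on the right-hand sides of (\ref{K2})--(\ref{K5}).
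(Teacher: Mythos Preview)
Your proposal is correct and follows essentially the same route as the paper: apply Proposition~\ref{X0}.1 (resp.\ \ref{X0}.2) together with the Remark allowing $G_2=U_{k,j}$ to be treated without an extra $\partial_{t^*}$, then feed in bootstrap assumptions (\ref{BA2})--(\ref{BA3}) for the conformal/local-energy inputs, Proposition~\ref{NI} (and \ref{Nestprop}) for the nonlinear terms, and Proposition~\ref{U} for the commutator terms, using the hierarchy $A_j\ll A_{X,j}$ and $A_{X,j-1}\ll A_{X,j}$ to absorb constants. The only point on which you are slightly more explicit than the paper is the $K^N$ contribution: the paper simply writes $K^{X_0}+K^N$ on the left of Proposition~\ref{X0}.1 without comment (presumably relying on the fuller statement in \cite{LKerr}), whereas you reduce $K^N$ to $K^{X_0}$ by the pointwise bound $|K^N|\leq C\,J^N_\mu n^\mu \leq C\,K^{X_0}$ on the support of $\pi^N$---a clean and valid alternative.
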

\begin{proof}
We first prove the estimates involving $X_0$, i.e.,  (\ref{K2}) and (\ref{K4}). By Proposition \ref{X0}.1 and the Remark following it and the fact that $|\partial_{t^*}^mN_k|\leq |N_{\leq k+m}|$, we have
\begin{equation*}
\begin{split}
& \sum_{i+j\leq 15}A_{X,j}^{-1}\iint_{\mathcal R((1.1)^{-1}\tau,\tau)\cap\{r\leq\frac{t^*}{2}\}}\left(K^{X_0}\left(\partial_{t^*}^{i}\tilde{\Omega}^j\Phi\right)+K^{N}\left(\partial_{t^*}^{i}\tilde{\Omega}^j\Phi\right)\right)\\
\leq& C\sum_{i+j\leq 15}A_{X,j}^{-1}\left(\tau^{-2}\int_{\Sigma_{(1.1)^{-1}\tau}} J^{Z+N,w^Z}_\mu\left(\partial_{t^*}^{i}\tilde{\Omega}^j\Phi\right)n^\mu_{\Sigma_{(1.1)^{-1}\tau}}+ C\int_{\Sigma_{(1.1)^{-1}\tau}\cap\{r\leq r^-_Y\}} J^{N}_\mu\left(\partial_{t^*}^{i}\tilde{\Omega}^j\Phi\right)n^\mu_{\Sigma_{(1.1)^{-1}\tau}}\right)\\
&+C\sum_{i+j\leq 15}A_{X,j}^{-1}\left(\iint_{\mathcal R((1.1)^{-1}\tau-1,\tau+1)\cap\{r\leq\frac{9t^*}{10}\}}r^{1+\delta}N_{\leq 16}^2+\sup_{t^*\in [(1.1)^{-1}\tau-1,\tau+1]}\int_{\Sigma_{t^*}\cap\{|r-3M|\leq\frac{M}{8}\}} N_{\leq 16}^2\right.\\
&\left.+\iint_{\mathcal R((1.1)^{-1}\tau-1,\tau+1)\cap\{r\leq\frac{9t^*}{10}\}}r^{1+\delta}\left( U_{\leq 15,\leq j}\right)^2\right)\\
\leq& C\sum_{i+j\leq 15}\left(A_{X,j}^{-1}A_j\epsilon\tau^{-1+\eta_{15}}+A_{X,j}^{-1}A^2\epsilon^2\tau^{-2+\eta_{S,11}+\eta_{16}+2\delta}+CA_{X,j}^{-1}A_{X,j-1}\epsilon\tau^{-1+\eta_{15}}\right)\\
\leq&\frac{\epsilon}{2}\tau^{-1+\eta_{15}},
\end{split}
\end{equation*}
by Propositions \ref{NI}, \ref{NO} and \ref{U}. Notice that our integrated estimates for $U$ in Proposition \ref{U} is only for $[(1.1)^{-1}\tau,\tau]$. Nevertheless, for the region $[(1.1)^{-1}\tau-1,(1.1)^{-1}\tau]\cap[\tau,\tau+1]$, we can use the integrate over the fixed $\tau$ estimate in the same Proposition. By Proposition \ref{X0}.1 and the Remark following it and the fact that $|\partial_{t^*}^mN_k|\leq |N_{\leq k+m}|$, we have
\begin{equation*}
\begin{split}
& \sum_{i+j\leq 14}A_{X,j}^{-1}\iint_{\mathcal R((1.1)^{-1}\tau,\tau)\cap\{r\leq\frac{t^*}{2}\}}\left(K^{X_0}\left(\partial_{t^*}^{i}\tilde{\Omega}^j\Phi\right)+K^{N}\left(\partial_{t^*}^{i}\tilde{\Omega}^j\Phi\right)\right)\\
\leq& C\sum_{i+j\leq 14}A_{X,j}^{-1}\left(\tau^{-2}\int_{\Sigma_{(1.1)^{-1}\tau}} J^{Z+N,w^Z}_\mu\left(\partial_{t^*}^{i}\tilde{\Omega}^j\Phi\right)n^\mu_{\Sigma_{(1.1)^{-1}\tau}}+ C\int_{\Sigma_{(1.1)^{-1}\tau}\cap\{r\leq r^-_Y\}} J^{N}_\mu\left(\partial_{t^*}^{i}\tilde{\Omega}^j\Phi\right)n^\mu_{\Sigma_{(1.1)^{-1}\tau}}\right)\\
&+C\sum_{i+j\leq 14}A_{X,j}^{-1}\left(\iint_{\mathcal R((1.1)^{-1}\tau-1,\tau+1)\cap\{r\leq\frac{9t^*}{10}\}}r^{1+\delta}N_{\leq 15}^2+\sup_{t^*\in [(1.1)^{-1}\tau-1,\tau+1]}\int_{\Sigma_{t^*}\cap\{|r-3M|\leq\frac{M}{8}\}} N_{\leq 15}^2\right.\\
&\left.+\iint_{\mathcal R((1.1)^{-1}\tau-1,\tau+1)\cap\{r\leq\frac{9t^*}{10}\}}r^{1+\delta}\left(U_{\leq 14,\leq j}\right)^2\right)\\
\leq& C\sum_{i+j\leq 14}\left(A_{X,j}^{-1}A_j\epsilon\tau^{-2+\eta_{14}}+A_{X,j}^{-1}A^2\epsilon^2\tau^{-3+\eta_{S,11}+\eta_{15}+2\delta}+A_{X,j}^{-1}A_{X,j-1}\epsilon\tau^{-2+\eta_{14}}\right)\\
\leq&\frac{\epsilon}{2}\tau^{-2+\eta_{14}}.
\end{split}
\end{equation*}
The proof of (\ref{K3}) and (\ref{K5}) proceeds in an identical manner. Notice that using Proposition \ref{X0}.2, the right hand side when we estimate (\ref{K3}) (respectively (\ref{K5})) is identical to that when we estimate (\ref{K2}) (respectively (\ref{K4})).
\end{proof}

Now we move on to retrieving the bootstrap assumptions (J) with better constants:
\begin{proposition}\label{J}
\begin{equation}\label{J1}
  \sum_{i+j=16}A_j^{-1}\int_{\Sigma_\tau}J^{N}_\mu\left(\partial_{t^*}^i\tilde{\Omega}^j\Phi\right) n^{\mu}_{\Sigma_\tau}\leq \frac{\epsilon}{4} \tau^{\eta_{16}}.
\end{equation}
\begin{equation}\label{J4}
  \sum_{i+j\leq 15}A_j^{-1}\int_{\Sigma_\tau} J^{N}_\mu\left(\partial_{t^*}^i\tilde{\Omega}^j\Phi\right) n^{\mu}_{\Sigma_\tau}\leq \frac{\epsilon}{2}.
\end{equation}
\begin{equation}\label{K1.0.1}
 \sum_{i+j=16}A_{X,j}^{-1} \left(\iint_{\mathcal R(\tau_0,\tau)}K^{X_0}\left(\partial_{t^*}^i\tilde{\Omega}^j\Phi\right)+\iint_{\mathcal R(\tau_0,\tau)\cap\{r\leq r^-_Y\}}K^{N}\left(\partial_{t^*}^i\tilde{\Omega}^j\Phi\right)\right)\leq \frac{\epsilon}{2}\tau^{\eta_{16}}.
\end{equation}
\begin{equation}\label{K1.4}
\sum_{i+j=15}A_{X,j}^{-1}\iint_{\mathcal R(\tau_0,\tau)} K^{X_1}\left(\partial_{t^*}^i\tilde{\Omega}^j\Phi\right)\leq \frac{\epsilon}{2}\tau^{\eta_{16}}.
\end{equation}
\begin{equation}\label{K1.5}
 \sum_{i+j\leq 15}A_{X,j}^{-1}\iint_{\mathcal R(\tau_0,\tau)} K^{X_0}\left(\partial_{t^*}^i\tilde{\Omega}^j\Phi\right)\leq \frac{\epsilon}{2}.
\end{equation}
\begin{equation}\label{K1.6}
 \sum_{i+j\leq 14}A_{X,j}^{-1}\iint_{\mathcal R(\tau_0,\tau)} K^{X_1}\left(\partial_{t^*}^i\tilde{\Omega}^j\Phi\right)\leq \epsilon.
\end{equation}
\end{proposition}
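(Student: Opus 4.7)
The plan is to prove all six estimates together by commuting the equation with $\partial_{t^*}^i\tilde\Omega^j$ (for $i+j$ in the appropriate range) and applying the linear energy-flux estimates of Proposition \ref{bddcom2} (for (\ref{J1}), (\ref{J4}), (\ref{K1.0.1}), (\ref{K1.5})) and Proposition \ref{bddcom3} (for (\ref{K1.4}), (\ref{K1.6})) to the commuted field. Since $\Box_{g_K}(\partial_{t^*}^i\tilde\Omega^j\Phi)$ is pointwise bounded by $N_{i+j}+U_{i+j,j}$, the task reduces to choosing a decomposition of the right-hand side $G=G_1+G_2$ that fits the hypotheses of these propositions. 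Because the cut-off in $\tilde\Omega$ is supported in $\{r\geq R_\Omega\}$ with $R_\Omega$ chosen beyond the trapping set, the commutator $U_{i+j,j}$ is automatically supported away from $\{|r-3M|\leq M/8\}$ and may be absorbed into the no-loss slot $G_2$ in every application; only the nonlinear term $N_{i+j}$ needs to be split by spatial region.

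For the top-order bounds (\ref{J1}) and (\ref{K1.0.1}), a subtlety arises because Proposition \ref{NO} supplies only the unweighted bound $\int_{r\geq 9\tau/10}N_{16}^2\lesssim A^2\epsilon^2\tau^{-2+\eta_{16}}$, whereas the $G_2$ slot in Proposition \ref{bddcom2} carries the weight $r^{1+\delta}$, which is incompatible with an unbounded outer region. We therefore let $G_1$ absorb the part of $N_{16}$ supported both near the trapping set and in the outer region $\{r\geq 9\tau/10\}$, and set
\[G_2=N_{16}\,\chi_{\{|r-3M|>M/8,\,r<9\tau/10\}}+U_{16,j}.\]
The pieces of $G_1$ enter through the unweighted terms $\bigl(\int(\int G_1^2)^{1/2}\,dt^*\bigr)^2$ and $\iint G_1^2$: the trapping piece is bounded via the $r\leq\tau/4$ nonlinear estimate (on which $r\sim 1$), while the outer piece is bounded via the unweighted bound from Proposition \ref{NO}; the total nonlinear contribution is of size $A^2\epsilon^2\tau^{\eta_{16}}$. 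For the $G_2$ nonlinear piece, the constraint $r<9\tau/10$ converts the $r^{1+\delta}$ weight to a $\tau^{1+\delta}$ factor, which is absorbed by the improved time decay from Proposition \ref{NI} and the intermediate-region bound; the commutator piece is controlled by the integrated $U$-estimate of Proposition \ref{U}, contributing $CA_{X,j-1}\epsilon\tau^{\eta_{16}}$.

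The lower-order bounds (\ref{J4}) and (\ref{K1.5}) are obtained by the same method at $i+j\leq 15$, where Proposition \ref{NO} provides the $r^2$-weighted estimates, so one can simply take $G_1$ to be the trapping contribution and $G_2$ to be everything else; the better time decay produces the uniform $O(\epsilon)$ bound. The bounds (\ref{K1.4}) and (\ref{K1.6}) for $K^{X_1}$ follow analogously from Proposition \ref{bddcom3}, applied at $i+j=15$ and $i+j\leq 14$ respectively; the additional $\partial_{t^*}$-derivative demanded on the right-hand side pushes one up a level, but all the needed estimates for $N_k$ and $U_{k,j}$ up to $k=16$ are already in hand. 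The main bookkeeping obstacle is that the commutator contribution $CA_{X,j-1}\epsilon\tau^{\eta_{16}}$ must be absorbed into the target $\tfrac{\epsilon}{4}A_j\tau^{\eta_{16}}$; this is possible precisely because the bootstrap hierarchy $A_{X,j-1}\ll A_j$ was arranged exactly for this purpose, while the quadratic nonlinear contributions of size $A^2\epsilon^2$ beat $A_j\epsilon$ by virtue of $A\epsilon\ll 1$.
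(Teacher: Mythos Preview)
Your proposal is correct and follows the same overall strategy as the paper: apply Propositions~\ref{bddcom2} and~\ref{bddcom3} to the commuted fields, placing the commutator $U_{k,j}$ (supported in $\{r\geq R_\Omega\}$, hence away from trapping) into the $G_2$ slot, and controlling the resulting right-hand sides via the nonlinear and commutator estimates already established.

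The one noteworthy difference is that your spatial splitting of $N_{16}$ is more elaborate than necessary. The paper simply takes $G_1=N_{16}$ \emph{globally} and $G_2=U_{16,j}$; there is no need to route any part of $N_{16}$ through the $G_2$ slot. The ``subtlety'' you identify---that Proposition~\ref{NO} gives only an unweighted bound at top order---is self-imposed: the $G_1$ slot in Proposition~\ref{bddcom2} carries no $r$-weight, and Proposition~\ref{Nestprop} already supplies the global unweighted estimate $\int_{\Sigma_\tau}N_{16}^2\leq CA^2\epsilon^2\tau^{-2+\eta_{16}}$, from which
\[
\Bigl(\int_{\tau_0}^{\tau}\Bigl(\int_{\Sigma_{t^*}}N_{16}^2\Bigr)^{1/2}dt^*\Bigr)^2
\leq CA^2\epsilon^2\eta_{16}^{-2}\tau^{\eta_{16}},\qquad
\iint_{\mathcal R(\tau_0,\tau)}N_{16}^2\leq CA^2\epsilon^2,
\]
both acceptable. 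Your finer decomposition works, but buys nothing here; the paper's cruder split $G_1=N$, $G_2=U$ is enough at every level. The same remark applies at $i+j\leq 15$: rather than invoking the $r^2$-weighted bound from Proposition~\ref{NO} to justify placing the outer part of $N_{\leq 15}$ into $G_2$, one may again just use the global unweighted bound from Proposition~\ref{Nestprop} and keep all of $N_{\leq 15}$ in $G_1$.
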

\begin{proof}
We will prove the slightly stronger statements with $A_{X,j}$ replaced by $A_j$. Using Proposition \ref{bddcom} and \ref{bddcom2}, we have
\begin{equation*}
\begin{split}
&\sum_{i+j=16}A_{j}^{-1}\left(\int_{\Sigma_\tau}J^{N}_\mu\left(\partial_{t^*}^i\tilde{\Omega}^j\Phi\right) n^{\mu}_{\Sigma_\tau} +\iint_{\mathcal R(\tau_0,\tau)} K^{X_0}\left(\partial_{t^*}^i\tilde{\Omega}^j\Phi\right)+\iint_{\mathcal R(\tau_0,\tau)\cap\{r\leq r^-_Y\}}K^{N}\left(\partial_{t^*}^i\tilde{\Omega}^j\Phi\right)\right)\\
\leq &C\sum_{i+j=16}A_{j}^{-1}\left(\int_{\Sigma_{\tau_0}} J^{N}_\mu\left(\partial_{t^*}^{i}\tilde{\Omega}^j\Phi\right)n^\mu_{\Sigma_{\tau'}}+\left(\int_{\tau_0-1}^{\tau+1}\left(\int_{\Sigma_{t^*}} N_{16}^2\right)^{\frac{1}{2}}dt^*\right)^2+\iint_{\mathcal R(\tau_0-1,\tau+1)}N_{16}^2\right.\\
&\left.+\iint_{\mathcal R(\tau_0-1,\tau+1)}r^{1+\delta}U_{16,j}^2+\sup_{t^*\in [\tau_0-1,\tau+1]}\int_{\Sigma_{t^*}\cap\{|r-3M|\leq\frac{M}{8}\}} U_{16,j}^2\right)\\
\leq &C\sum_{i+j=16}A_{j}^{-1}\left(\epsilon+A^2\epsilon^2\eta_{16}^{-1}\tau^{\eta_{16}}+A_{X,j-1}\epsilon\tau^{\eta_{16}}\right)\\
\leq &\frac{\epsilon}{4}\tau^{\eta_{16}}.
\end{split}
\end{equation*}

We now turn to the estimates for $\displaystyle\sum_{j=0}^{15}|\Gamma^j\Phi|$. We have
\begin{equation*}
\begin{split}
&\sum_{i+j\leq 15}A_{j}^{-1}\left(\int_{\Sigma_\tau}J^{N}_\mu\left(\partial_{t^*}^i\tilde{\Omega}^j\Phi\right) n^{\mu}_{\Sigma_\tau} +\iint_{\mathcal R(\tau_0,\tau)} K^{X_0}\left(\partial_{t^*}^i\tilde{\Omega}^j\Phi\right)\right)\\
\leq &C\sum_{i+j\leq 15}A_j^{-1}\left(\int_{\Sigma_{\tau_0}} J^{N}_\mu\left(\partial_{t^*}^{i}\tilde{\Omega}^j\Phi\right)n^\mu_{\Sigma_{\tau_0}}+\left(\int_{\tau_0-1}^{\tau+1}\left(\int_{\Sigma_{t^*}} N_{\leq 15}^2\right)^{\frac{1}{2}}dt^*\right)^2+\iint_{\mathcal R(\tau_0-1,\tau+1)}N_{\leq 15}^2\right.\\
&\left.+\iint_{\mathcal R(\tau_0-1,\tau+1)}r^{1+\delta}U_{\leq 15,\leq j}^2+\sup_{t^*\in [\tau_0-1,\tau+1]}\int_{\Sigma_{t^*}\cap\{|r-3M|\leq\frac{M}{8}\}} U_{\leq 15,\leq j}^2\right)\\
\leq& C\sum_{i+j\leq 15}A_j^{-1}\left(\epsilon+A^2\epsilon^2+CA_{X,j-1}\epsilon\right)\\
\leq& \frac{\epsilon}{2}.
\end{split}
\end{equation*}
It now remains to show (\ref{K1.4}) and (\ref{K1.6}). By Proposition \ref{bddcom3} they can be estimated by exactly the same terms as (\ref{K1.0.1}) and (\ref{K1.5}) respectively. The Proposition hence follows.

\end{proof}
We now move on to control the conformal energy and close the part of the bootstrap assumption (\ref{BA2}) without $\hat{Y}$.
\begin{proposition}\label{Z1}
\begin{equation}\label{J2}
\begin{split}
 \sum_{i+j=15}A_j^{-1}\left(\int_{\Sigma_\tau} J^{Z+N,w^Z}_\mu\left(\partial_{t^*}^i\tilde{\Omega}^j\Phi\right) n^{\mu}_{\Sigma_\tau} +C\tau^2\int_{\Sigma_\tau\cap\{r\leq \frac{9\tau}{10}\}} J^{N}_\mu\left(\partial_{t^*}^i\tilde{\Omega}^j\Phi\right) n^{\mu}_{\Sigma_\tau}\right)\leq \frac{\epsilon}{4} \tau^{1+\eta_{15}}.
\end{split}
\end{equation}
\end{proposition}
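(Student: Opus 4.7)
The plan is to apply the conformal energy estimate Proposition \ref{conformalenergy} to each commuted solution $\Phi_{(i,j)} := \partial_{t^*}^i \tilde{\Omega}^j \Phi$ with $i+j=15$ (taking $\gamma = 9/10$), then divide by $A_j$ and sum. The equation satisfied is
$$\Box_{g_K} \Phi_{(i,j)} = G_1 + G_2, \qquad G_1 := \partial_{t^*}^i \tilde{\Omega}^j F, \qquad G_2 := [\Box_{g_K}, \partial_{t^*}^i\tilde{\Omega}^j]\Phi,$$
where $G_2 = U_{15,j}$ in the notation introduced above. Since $\tilde{\Omega}$ is supported in $\{r \geq R_\Omega\}$ with $R_\Omega$ chosen far outside the photon sphere, $G_2$ is supported away from $\{|r-3M|\leq M/8\}$, so the "save a derivative" remark following Proposition \ref{conformalenergy} applies to $G_2$.

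First I would handle the bulk terms on the right-hand side of the conformal energy estimate by a dyadic-in-$t^*$ decomposition of $[\tau_0,\tau]$ into slabs $[(1.1)^{-k-1}\tau,(1.1)^{-k}\tau]$, on each of which $t^* \sim (1.1)^{-k}\tau$, so that $(t^*)^2$ may be pulled out. The contribution of $\delta' \iint (t^*)^2 K^{X_0}$ restricted to $\{r \leq t^*/2\}$ is controlled on each slab by (\ref{BAK2}), yielding after summation of the geometric series a bound $\lesssim \delta' A_{X,j}\epsilon\tau^{1+\eta_{15}}$; the horizon bulk $(\delta'+a)\iint (t^*)^2 K^N$ is handled analogously using the corresponding sub-assumption embedded in (\ref{BA2})--(\ref{BA3}). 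The term $\iint t^* r^{-1+\delta}K^{X_1}(\Phi_{(i,j)})$ is bounded via (\ref{BAK1.4}) by $A_{X,j}\epsilon\tau^{1+\eta_{16}}$, which is absorbable since $\eta_{16}\ll\eta_{15}$. Dividing by $A_j$ and using the constant hierarchy $\delta'A_{X,j}/A_j \ll 1$ and $A_{X,j}/A_j$ bounded allows each of these three terms to be absorbed into $\tfrac{\epsilon}{4}\tau^{1+\eta_{15}}$ with room.

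Next I would estimate the inhomogeneous contributions of $G_1$ and $G_2$. For $G_1 = N_{\leq 15}$ (after accounting for derivative losses in the trapped region), I split the $t^*$-slices into $\{r\leq t^*/4\}$, $\{t^*/4 \leq r \leq 9t^*/10\}$, and $\{r\geq 9t^*/10\}$, applying Propositions \ref{NI} and \ref{NO}. The term $(\delta')^{-1}\bigl(\int (\int r^2 N_{15}^2)^{1/2}\, dt^*\bigr)^2$ coming from $\{r\geq t^*/2\}$ is where the null condition is essential: Proposition \ref{NO} with $\alpha=2$ gives $\int_{\{r\geq 9\tau/10\}} r^2 N_{15}^2 \lesssim BA^2\epsilon^2 \tau^{-1+\eta_{15}}$, whose square root integrated in $t^*$ and squared yields $BA^2\epsilon^2 \tau^{1+\eta_{15}}$; since $A^2\epsilon \ll A_j$ this is absorbed. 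The interior and trapped-region contributions from Proposition \ref{NI} similarly yield $A^2\epsilon^2$ factors with sufficient $\tau$-decay to close. For $G_2 = U_{15,j}$, Proposition \ref{U} bounds the $r^{1+\delta}$-weighted spacetime integral by $A_{X,j-1}\epsilon\tau^{-1+\eta_{15}}$; multiplying by $(t^*)^2\sim\tau^2$ and dividing by $A_j$ produces $(A_{X,j-1}/A_j)\epsilon\tau^{1+\eta_{15}}$, absorbed since $A_{X,j-1}/A_j \ll 1$. The initial data contribution is bounded by $\epsilon$ directly from the hypothesis, and is negligible after division by $A_j$ (since $\tau \geq \tau_0 \geq 1$).

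The main obstacle is the constant bookkeeping. Every single term on the right-hand side must produce a coefficient strictly smaller than the target $\frac{1}{4c}$ (where $c$ is the constant from Proposition \ref{Zlowerbound}), after normalization by $A_j^{-1}$ and summation over $i+j=15$. The hierarchy
$$\delta\sim\delta' \ll \frac{A_j}{A_{X,j}}, \qquad \frac{A_{X,j-1}}{A_j} \ll \delta'\eta_{15}^{-1}, \qquad \eta_{16}\ll\eta_{15},$$
together with the overall smallness $A\epsilon \ll 1$, is chosen precisely so that each absorption goes through. A secondary technical point is that the $(\delta'+a)$ coefficient on the horizon bulk $\iint(t^*)^2 K^N$ in Proposition \ref{conformalenergy} must strictly dominate the constant by which the same quantity appears on the left of Proposition \ref{Z1}; taking $a$ sufficiently small (as permitted in the Main Theorem) resolves this.
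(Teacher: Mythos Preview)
Your proposal is correct and follows essentially the same approach as the paper: apply Proposition \ref{conformalenergy}, dyadically decompose $[\tau_0,\tau]$ to handle the weighted $K^{X_0}$, $K^N$, and $K^{X_1}$ bulk terms via the bootstrap assumptions (K), and control the inhomogeneous $N_{15}$ and $U_{15,j}$ contributions via Propositions \ref{NI}, \ref{NO}, and \ref{U}, with the constant hierarchy doing the absorption. Two minor corrections: the horizon bulk $\iint (t^*)^2 K^N$ is controlled by (\ref{BAK2}) (which packages both $K^{X_0}$ and $K^N$), not by the fixed-time assumptions (\ref{BA2})--(\ref{BA3}); and your closing ``secondary technical point'' is confused---there is no $K^N$ term on the left side of Proposition \ref{Z1} to be dominated, the $(\delta'+a)\iint(t^*)^2 K^N$ bulk is simply absorbed via $\delta', a \ll A_j/A_{X,j}$ exactly as you handled $K^{X_0}$.
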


\begin{proof}
By Proposition \ref{conformalenergy},
\begin{equation*}
\begin{split}
&\sum_{i+j= 15}A_j^{-1}\left(\int_{\Sigma_{\tau}} J^{Z,w^Z}_\mu\left(\partial_{t^*}^{i}\tilde{\Omega}^j\Phi\right)n^\mu_{\Sigma_{{\tau}}}+C\tau^2\int_{\Sigma_{\tau}\cap\{r\leq \frac{9\tau}{10}\}} J^{N}_\mu\left(\partial_{t^*}^{i}\tilde{\Omega}^j\Phi\right)n^\mu_{\Sigma_{\tau}}\right)\\
\leq &C\sum_{i+j= 15}A_j^{-1}\left(\int_{\Sigma_{\tau_0}} J^{Z+CN,w^Z}_\mu\left(\partial_{t^*}^{i}\tilde{\Omega}^j\Phi\right)n^\mu_{\Sigma_{{\tau_0}}}+\delta'\iint_{\mathcal R(\tau_0,\tau)\cap\{r\leq \frac{t^*}{2}\}} (t^*)^2K^{X_0}\left(\partial_{t^*}^{i}\tilde{\Omega}^j\Phi\right)\right.\\
&\left. +\left(\delta'+a\right)\iint_{\mathcal R(\tau_0,\tau)\cap\{r\leq r^-_Y\}}(t^*)^2K^N\left(\partial_{t^*}^{i}\tilde{\Omega}^j\Phi\right)+(\delta')^{-1}\iint_{\mathcal R(\tau_0,\tau)} t^* r^{-1+\delta}K^{X_1}\left(\partial_{t^*}^{i}\tilde{\Omega}^j\Phi\right)\right.\\
&\left.+(\delta')^{-1}\sum_{m=0}^1\iint_{\mathcal R(\tau_0,\tau)\cap\{r\leq\frac{9t^*}{10}\}} (t^*)^2r^{1+\delta}\left(\partial_{t^*}^m N_{ 15}\right)^2+(\delta')^{-1}\iint_{\mathcal R(\tau_0,\tau)\cap\{r\leq\frac{9t^*}{10}\}} (t^*)^2r^{1+\delta} U_{ 15,  j}^2\right. \\
&\left.+(\delta')^{-1}\left(\int_{\tau_0}^{\tau}\left(\int_{\Sigma_{t^*}\cap\{r\geq \frac{t^*}{2}\}}r^2 G_{ 15, j}^2 \right)^{\frac{1}{2}}dt^*\right)^2+(\delta')^{-1}\sup_{t^*\in [\tau_0,\tau]}\int_{\Sigma_{t^*}\cap\{r^-_Y\leq r\leq \frac{25M}{8}\}} (t^*)^2 N_{15}^2\right). \\
\end{split}
\end{equation*}
We will estimate the terms one by one.  First, the term with the initial data, i.e., the very first term, is clearly bounded by $C(\sum_j A_j^{-1})\epsilon$. Second, we consider the two $(t^*)^2K$ terms on the second line. To this end, we define as before $\tau_0\leq \tau_1\leq ...\leq \tau_n=\tau$ with $\tau_{i+1}\leq (1.1)\tau_i$ and $n\sim\log(\tau-\tau_0)$ is the minimum such that this can be done. Thus, these two terms can be bounded, using the bootstrap assumption (\ref{BAK2}),
\begin{equation*}
\begin{split}
 &\delta'\sum_{i+j=15}A_j^{-1}\iint_{\mathcal R(\tau_0,\tau)\cap\{r\leq \frac{t^*}{2}\}} (t^*)^2K^{X_0}\left(\partial_{t^*}^{i}\tilde{\Omega}^j\Phi\right)+\left(\delta'+a\right)\iint_{\mathcal R(\tau_0,\tau)\cap\{r\leq r^-_Y\}}(t^*)^2K^N\left(\partial_{t^*}^{i}\tilde{\Omega}^j\Phi\right)\\
\leq &C\sum_{i+j=15}A_j^{-1}\sum_{k=0}^{n-1}\left(\delta'\tau_k^2\iint_{\mathcal R(\tau_k,\tau_{k+1})\cap\{r\leq \frac{t^*}{2}\}} K^{X_0}\left(\partial_{t^*}^{i}\tilde{\Omega}^j\Phi\right)+\left(\delta'+a\right)\tau_k^2\iint_{\mathcal R(\tau_0,\tau)\cap\{r\leq r^-_Y\}}K^N\left(\partial_{t^*}^{i}\tilde{\Omega}^j\Phi\right)\right)\\
\leq& C\left(\sum_j\frac{A_{X,j}}{A_j}\right)\epsilon\left(2\delta'+a\right)\tau^{1+\eta_{15}}.
\end{split}
\end{equation*}
This is acceptable since $a,\delta'\ll\frac{A_j}{A_{X,j}}$.
Third, the term with $t^* r^{-1+\delta}K$ can be bounded using the bootstrap assumption (\ref{BAK1.4}),
\begin{equation*}
\begin{split}
&(\delta')^{-1}\sum_{i+j=15}A_j^{-1}\iint_{\mathcal R(\tau_0,\tau)} t^*  r^{-1+\delta}K^{X_1}\left(\partial_{t^*}^{i}\tilde{\Omega}^j\Phi\right) \\
\leq & C(\delta')^{-1}\sum_{i+j=15}A_j^{-1}\sum_{k=0}^{n-1}\tau_k\iint_{\mathcal R(\tau_k,\tau_{k+1})} K^{X_1}\left(\partial_{t^*}^{i}\tilde{\Omega}^j\Phi\right)
\leq C(\delta')^{-1}\left(\sum_j\frac{A_{X,j}}{A_j}\right)\tau^{1+\eta_{16}}.
\end{split}
\end{equation*}
This is acceptable since $\eta_{16}\ll\eta_{15}$ and therefore the constant can be improved for $\tau$ large.
Fourth, the integrals involving $N_{15}$ can be bounded using Propositions \ref{NI} and \ref{NO}:
\begin{equation*}
\begin{split}
&C(\delta')^{-1}A_0^{-1}\sum_{m=0}^1\iint_{\mathcal R(\tau_0,\tau)\cap\{r\leq\frac{9t^*}{10}\}} (t^*)^2r^{1+\delta}\left(\partial_{t^*}^m N_{15}\right)^2 \\
\leq &CA^2A_0^{-1}\epsilon^2(\delta')^{-1}\int_{\tau_0}^\tau \left(t^*\right)^{-1+\eta_{S,11}+\eta_{15}+2\delta}dt^*\leq CA^2A_0^{-1}\epsilon^2(\delta')^{-1}\tau^{\eta_{S,11}+\eta_{15}+2\delta},
\end{split}
\end{equation*}
$$C(\delta')^{-1}A_0^{-1}\left(\int_{\tau_0}^{\tau}\left(\int_{\Sigma_{t^*}\cap\{r\geq \frac{t^*}{2}\}}r^2 N_{15}^2 \right)^{\frac{1}{2}}dt^*\right)^2\leq CA^2A_0^{-1}\epsilon^2(\delta')^{-1}\tau^{1+\eta_{15}},$$
$$C(\delta')^{-1}A_0^{-1}\sup_{t^*\in [\tau_0,\tau]}\int_{\Sigma_{t^*}\cap\{r^-_Y\leq r\leq \frac{25M}{8}\}} (t^*)^2 N_{15}^2\leq CA^2A_0^{-1}\epsilon^2(\delta')^{-1}.$$
These are all acceptable since $\epsilon$ would beat all the constants.
Fifth, for the commutator terms $U_{15,j}^2$, we estimate by Proposition \ref{U}, 
$$(\delta')^{-1}A_j^{-1}\iint_{\mathcal R(\tau_0,\tau)} (t^*)^2r^{1+\delta}\left(U_{15,j}\right)^2\leq C\sum_{i=0}^{n-1}(\delta')^{-1}\iint_{\mathcal R(\tau_i,\tau_{i+1})} \tau_i^2r^{1+\delta}\left(U_{ 15, j}\right)^2\leq C(\delta')^{-1}\frac{A_{X,j-1}}{A_j}\tau^{1+\eta_{15}},$$
$$(\delta')^{-1}A_j^{-1}\left(\int_{\tau_0}^{\tau}\left(\int_{\Sigma_{t^*}\cap\{r\geq \frac{t^*}{2}\}}r^2 U_{ 15,j}^2 \right)^{\frac{1}{2}}dt^*\right)^2\leq C(\delta')^{-1}\frac{A_{j-1}}{A_j}\tau.$$
Since 
$$\frac{A_{j-1}}{A_j}\ll\frac{A_{X,j-1}}{A_j}\ll\delta',$$
all terms are acceptable.
\end{proof}

With $14$ or less derivatives, the conformal energy behaves better. We now close the part of the bootstrap assumption (\ref{BA3}) without $\hat{Y}$.
\begin{proposition}\label{Z2}
\begin{equation}\label{J3}
 \begin{split}
 \sum_{i+j\leq 14}A_j^{-1}\left(\int_{\Sigma_\tau} J^{Z+N,w^Z}_\mu\left(\partial_{t^*}^i\tilde{\Omega}^j\Phi\right) n^{\mu}_{\Sigma_\tau} +C\tau^2\int_{\Sigma_\tau\cap\{r\leq \frac{9\tau}{10}\}} J^{N}_\mu\left(\partial_{t^*}^i\tilde{\Omega}^j\Phi\right) n^{\mu}_{\Sigma_\tau}\right)\leq \frac{\epsilon}{4} \tau^{\eta_{14}}.
\end{split}
\end{equation}
\end{proposition}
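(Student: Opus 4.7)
My plan is to apply Proposition~\ref{conformalenergy} to each commuted quantity $\partial_{t^*}^i\tilde{\Omega}^j\Phi$ with $i+j\leq 14$, writing the inhomogeneous term as $\Box_{g_K}(\partial_{t^*}^i\tilde{\Omega}^j\Phi)=G_1+G_2$ with $G_1=N_{\leq 14}$ and $G_2=U_{\leq 14,\leq j}$. Since $\tilde{\Omega}^j$ is cut off at $r\leq R_\Omega$, the commutator piece $G_2$ is supported away from $\{|r-3M|\leq M/8\}$, so the remark following Proposition~\ref{conformalenergy} applies and we lose no derivative on the commutator. I will then sum over $i+j\leq 14$ with weights $A_j^{-1}$ and estimate each term on the right-hand side, using the sharper bootstrap assumptions (\ref{BAK1.6}), (\ref{BAK3}), (\ref{BAK4}), (\ref{BAK5}) available at this derivative count, together with the improved bounds on $N_{\leq 14}$ from Propositions~\ref{NI}, \ref{NO} (and the pointwise-in-$\tau$ estimates that precede them) and on $U_{\leq 14,\leq j}$ from Proposition~\ref{U}.

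The initial data term is $\leq C\epsilon$. For the $(t^*)^2 K^{X_0}$ and $(t^*)^2 K^N$ terms a dyadic decomposition $\tau_0\leq\tau_1\leq\cdots\leq\tau_n=\tau$ with $\tau_{k+1}\leq (1.1)\tau_k$ combined with (\ref{BAK4}) gives
$$(\delta'+a)\sum_k \tau_k^2\iint_{\mathcal R(\tau_k,\tau_{k+1})\cap\{r\leq t^*/2\}}K^{X_0}(\partial_{t^*}^i\tilde{\Omega}^j\Phi)\leq C(\delta'+a)A_{X,j}\epsilon\sum_k\tau_k^{\eta_{14}}\lesssim C(\delta'+a)A_{X,j}\epsilon\tau^{\eta_{14}},$$
which after division by $A_j$ is absorbed by $\delta'+a\ll A_j/A_{X,j}$. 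For the $(\delta')^{-1}t^* r^{-1+\delta}K^{X_1}$ term I split spatially: on $\{r\leq t^*/2\}$ the factor $r^{-1+\delta}$ is bounded and (\ref{BAK3}) gives $A_{X,j}\epsilon\tau_k^{-1+\eta_{15}}$ on each dyadic slab, producing $(\delta')^{-1}(A_{X,j}/A_j)\epsilon\tau^{\eta_{15}}$; on $\{r\geq t^*/2\}$ one bounds $r^{-1+\delta}\leq C(t^*)^{-1+\delta}$ and uses (\ref{BAK1.6}) directly, producing $(\delta')^{-1}(A_{X,j}/A_j)\epsilon\tau^\delta$; since $\eta_{15},\delta\ll\eta_{14}$ both contributions are absorbed into $\tau^{\eta_{14}}$ for $\tau$ large. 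The nonlinear terms $\iint (t^*)^2 r^{1+\delta}(\partial_{t^*}^m N_{\leq 14})^2$ for $m=0,1$ are handled by interpolating $r^{1+\delta}\leq r^{1-\delta}\tau^{2\delta}$ on $\{r\leq 9\tau/10\}$ in Proposition~\ref{NI} applied to $N_{\leq 15}$; the resulting exponent in $t^*$ is strictly less than $-1$, so the $t^*$-integral is absolutely convergent. The time-integrated $r^2 N_{\leq 14}^2$ term on $\{r\geq t^*/2\}$ is bounded via Proposition~\ref{NO} with $\alpha=2$, and the sup term near the photon sphere is bounded via Proposition~\ref{Nestprop}. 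Every such nonlinear contribution carries an extra factor $A^2\epsilon^2$ which the smallness of $\epsilon$ absorbs. The commutator contributions $\iint (t^*)^2 r^{1+\delta}U_{\leq 14,\leq j}^2$ and $(\int(\int r^2 U^2)^{1/2}dt^*)^2$ are controlled by Proposition~\ref{U}, yielding respectively $C(\delta')^{-1}(A_{X,j-1}/A_j)\epsilon\tau^{\eta_{14}}$ and $C(A_{j-1}/A_j)\epsilon\tau^{\eta_{14}}$, both absorbed by $A_{j-1},A_{X,j-1}\ll A_j\delta'$.

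The principal obstacle is the $K^{X_1}$ term, because (\ref{BAK1.6}) offers only a $\tau^0$ bound over the full slab, while the conformal-energy estimate couples $K^{X_1}$ to a $t^*$ weight. The resolution rests on the spatial split combined with the hierarchy $\eta_{15},\delta\ll\eta_{14}$: the localized bootstrap (\ref{BAK3}) effectively restores a $\tau^{-1}$ pointwise decay on $\{r\leq t^*/2\}$, while outside $t^*/2$ the factor $r^{-1+\delta}$ itself supplies the missing inverse power of $t^*$. Once this balancing is performed, the remainder of the argument parallels the proof of Proposition~\ref{Z1} but with the strictly better decay rates available at $i+j\leq 14$, and the total is bounded by $\epsilon\tau^{\eta_{14}}/4$ as required.
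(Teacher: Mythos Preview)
Your proposal is correct and follows essentially the same route as the paper's proof: apply Proposition~\ref{conformalenergy} with the commutator piece placed in the ``supported away from trapping'' slot, dyadically decompose the $(t^*)^2K^{X_0}$ and $(t^*)^2K^N$ integrals and invoke (\ref{BAK4}), split the $t^*r^{-1+\delta}K^{X_1}$ term spatially at $r=t^*/2$ using (\ref{BAK3}) inside and (\ref{BAK1.6}) outside, and control the nonlinear and commutator contributions by Propositions~\ref{NI}, \ref{NO}, \ref{Nestprop} and \ref{U}. The paper makes the same spatial split for $K^{X_1}$ and relies on the same hierarchy $\delta,\eta_{15}\ll\eta_{14}$ to absorb the resulting $\tau^{\eta_{15}}$ and $\tau^{\delta}$ into $\tau^{\eta_{14}}$; your identification of this as the principal obstacle is apt.
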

\begin{proof}
By Proposition \ref{conformalenergy}, and noticing that $U$ is supported away from $\{|r-3M|\leq\frac{M}{8}\}$, we have
\begin{equation*}
\begin{split}
&\sum_{i+j\leq 14}A_j^{-1}\left(\int_{\Sigma_\tau} J^{Z+N,w^Z}_\mu\left(\partial_{t^*}^i\tilde{\Omega}^j\Phi\right) n^{\mu}_{\Sigma_\tau} +C\tau^2\int_{\Sigma_\tau\cap\{r\leq \frac{9\tau}{10}\}} J^{N}_\mu\left(\partial_{t^*}^i\tilde{\Omega}^j\Phi\right) n^{\mu}_{\Sigma_\tau}\right)\\
\leq &C\sum_{i+j\leq 14}A_j^{-1}\left(\int_{\Sigma_{\tau_0}} J^{Z+CN,w^Z}_\mu\left(\partial_{t^*}^{i}\tilde{\Omega}^j\Phi\right)n^\mu_{\Sigma_{{\tau_0}}}+\delta'\iint_{\mathcal R(\tau_0,\tau)\cap\{r\leq \frac{t^*}{2}\}} (t^*)^2K^{X_0}\left(\partial_{t^*}^{i}\tilde{\Omega}^j\Phi\right)\right.\\
&\left. +\left(\delta'+a\right)\iint_{\mathcal R(\tau_0,\tau)\cap\{r\leq r^-_Y\}}(t^*)^2K^N\left(\partial_{t^*}^{i}\tilde{\Omega}^j\Phi\right)+(\delta')^{-1}\iint_{\mathcal R(\tau_0,\tau)} t^* r^{-1+\delta}K^{X_1}\left(\partial_{t^*}^{i}\tilde{\Omega}^j\Phi\right)\right.\\
&\left.+(\delta')^{-1}\sum_{m=0}^1\iint_{\mathcal R(\tau_0,\tau)\cap\{r\leq\frac{9t^*}{10}\}} (t^*)^2r^{1+\delta}\left(\partial_{t^*}^m N_{\leq 15}\right)^2+(\delta')^{-1}\iint_{\mathcal R(\tau_0,\tau)\cap\{r\leq\frac{9t^*}{10}\}} (t^*)^2r^{1+\delta} U_{\leq 15, \leq  j}^2\right. \\
&\left.+(\delta')^{-1}\left(\int_{\tau_0}^{\tau}\left(\int_{\Sigma_{t^*}\cap\{r\geq \frac{t^*}{2}\}}r^2 G_{\leq 15,\leq j}^2 \right)^{\frac{1}{2}}dt^*\right)^2+(\delta')^{-1}\sup_{t^*\in [\tau_0,\tau]}\int_{\Sigma_{t^*}\cap\{r^-_Y\leq r\leq \frac{25M}{8}\}} (t^*)^2 N_{\leq 15}^2\right). \\
\end{split}
\end{equation*}
As before, we estimate each term one by one. First, the term with initial data is clearly bounded by $C\sum_jA_j\epsilon$. Second, the $(t^*)^2K$ terms can be bounded, using (\ref{BAK4}), and dividing the interval into $\tau_0< \tau_1<...<\tau_n=\tau$ as before by 
$$C\frac{A_{X,j}}{A_j}\epsilon\left(2\delta'+a\right)\sum_{i=0}^{n-1}\tau_i^{\eta_{14}}\leq C\frac{A_{X,j}}{A_j}\epsilon\eta_{14}^{-1}\left(2\delta+a\right)\tau^{\eta_{14}}.$$ 
This is acceptable since $a,\delta'\ll\frac{A_j}{A_{X,j}}$. Third, the $t^*r^{-1+\delta}K$ term can be bounded, using the bootstrap assumptions (\ref{BAK1.6}) and (\ref{BAK3}), by 
\begin{equation*}
\begin{split}
&(\delta')^{-1}\sum_{i+j\leq 14}A_j^{-1}\iint_{\mathcal R(\tau_0,\tau)} t^*  r^{-1+\delta}K^{X_1}\left(\partial_{t^*}^{i}\tilde{\Omega}^j\Phi\right) \\
\leq & (\delta')^{-1}\sum_{i+j\leq 14}A_j^{-1}\left(\iint_{\mathcal R(\tau_0,\tau)\cap\{r\leq\frac{t^*}{2}\}} t^*  K^{X_1}\left(\partial_{t^*}^{i}\tilde{\Omega}^j\Phi\right)+\iint_{\mathcal R(\tau_0,\tau)\cap\{r\geq \frac{t^*}{2}\}} t^*  r^{-1+\delta}K^{X_1}\left(\partial_{t^*}^{i}\tilde{\Omega}^j\Phi\right)\right) \\
\leq & C(\delta')^{-1}\sum_{i+j\leq 14}A_j^{-1}\sum_{k=0}^{n-1}\left(\tau_k\iint_{\mathcal R(\tau_k,\tau_{k+1})\cap\{r\leq\frac{t^*}{2}\}} K^{X_1}\left(\partial_{t^*}^{i}\tilde{\Omega}^j\Phi\right)+\tau_k^\delta\iint_{\mathcal R(\tau_k,\tau_{k+1})\cap\{r\geq \frac{t^*}{2}\}} K^{X_1}\left(\partial_{t^*}^{i}\tilde{\Omega}^j\Phi\right) \right) \\
\leq& C(\delta')^{-1}\sum_j\frac{A_{X,j}}{A_j}\eta_{15}^{-1}\tau^{\eta_{15}},
\end{split}
\end{equation*}
which is acceptable for $\tau$ large since $\eta_{15}\ll\eta_{14}$.

Fourth, the integrals involving $N_{\leq 14}$ can be bounded using Propositions \ref{NI} and \ref{NO} by noticing that $|\partial_{t^*}N_{\leq 14}|\leq CN_{\leq 15}$:
\begin{equation*}
\begin{split}
&C(\delta')^{-1}A_0^{-1}\sum_{m=0}^1\iint_{\mathcal R(\tau_0,\tau)\cap\{r\leq\frac{9t^*}{10}\}} (t^*)^2r^{1+\delta}\left(\partial_{t^*}^m N_{\leq 14}\right)^2 \\
\leq &CA^2A_0^{-1}\epsilon^2(\delta')^{-1}\int_{\tau_0}^\tau \left(t^*\right)^{-2+\eta_{S,11}+\eta_{15}+2\delta}dt^*\leq CA^2A_0^{-1}\epsilon^2(\delta')^{-1},
\end{split}
\end{equation*}
$$C(\delta')^{-1}A_0^{-1}\left(\int_{\tau_0}^{\tau}\left(\int_{\Sigma_{t^*}\cap\{r\geq \frac{t^*}{2}\}}r^2 N_{15}^2 \right)^{\frac{1}{2}}dt^*\right)^2\leq CA^2A_0^{-1}\epsilon^2(\delta')^{-1}\eta_{15}^{-1}\tau^{\eta_{15}},$$
$$C(\delta')^{-1}A_0^{-1}\sup_{t^*\in [\tau_0,\tau]}\int_{\Sigma_{t^*}\cap\{r^-_Y\leq r\leq \frac{25M}{8}\}} (t^*)^2 N_{15}^2\leq CA^2A_0^{-1}\epsilon^2(\delta')^{-1},$$
which is acceptable since $\epsilon\ll A\delta\eta_{15}^{-1}$.
Fifth, for the commutator terms $U_{\leq 14,\leq j}^2$, we estimate by Proposition \ref{U}, 
$$(\delta')^{-1}A_j^{-1}\iint_{\mathcal R(\tau_0,\tau)} (t^*)^2r^{1+\delta}\left(U_{\leq 14,\leq j}\right)^2\leq C\sum_{i=0}^{n-1}(\delta')^{-1}A_{X,j-1} \tau_i^2\tau_i^{-2+\eta_{14}}\leq C(\delta')^{-1}\eta_{14}^{-1}\frac{A_{X,j-1}}{A_j}\tau^{\eta_{14}},$$
$$(\delta')^{-1}A_j^{-1}\left(\int_{\tau_0}^{\tau}\left(\int_{\Sigma_{t^*}\cap\{r\geq \frac{t^*}{2}\}}r^2 U_{\leq 14,j}^2 \right)^{\frac{1}{2}}dt^*\right)^2\leq C(\delta')^{-1}\frac{A_{j-1}}{A_{j}}\eta_{14}^{-1}\tau.$$
Since 
$$\frac{A_{j-1}}{A_j}\ll\frac{A_{X,j-1}}{A_j}\ll\delta'\eta_{14}^{-1},$$
all terms are acceptable.
\end{proof}

We now consider terms involving commutation with $\hat{Y}$ and recover the bootstrap assumptions (\ref{BA1}), (\ref{BA2}) and (\ref{BA3}).
\begin{proposition}\label{Yest}
\begin{equation*}
  \sum_{i+k=16}\int_{\Sigma_\tau}J^{N}_\mu\left(\hat{Y}^k\partial_{t^*}^i\Phi\right) n^{\mu}_{\Sigma_\tau}\leq \frac{A_Y}{4}\epsilon \tau^{\eta_{16}},
\end{equation*}
and
\begin{equation*}
\begin{split}
 &\sum_{i+k=15}\tau^2\int_{\Sigma_\tau\cap\{r\leq r^-_Y\}}J^{N}_\mu\left(\hat{Y}^k\partial_{t^*}^i\Phi\right) n^{\mu}_{\Sigma_\tau} \leq \frac{A_Y}{4}\epsilon \tau^{1+\eta_{15}},
\end{split}
\end{equation*}
and
\begin{equation*}
 \begin{split}
 \sum_{i+k\leq 14}\tau^2\int_{\Sigma_\tau\cap\{r\leq r^-_Y\}}J^{N}_\mu\left(\hat{Y}^k\partial_{t^*}^i\Phi\right) n^{\mu}_{\Sigma_\tau}&\leq \frac{A_Y}{4}\epsilon \tau^{\eta_{14}}.
\end{split}
\end{equation*}
\end{proposition}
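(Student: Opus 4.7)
The plan is to deduce all three estimates of Proposition~\ref{Yest} from Proposition~\ref{commYcontrol} applied to $\partial_{t^*}^i\Phi$ (which satisfies $\Box_{g_K}(\partial_{t^*}^i\Phi)=\partial_{t^*}^iF$), reducing each $\hat Y^k\partial_{t^*}^i$-commuted energy to quantities already controlled by Propositions~\ref{J}, \ref{K}, \ref{Z1}, \ref{Z2}. Since $\hat Y$ is supported in $\{r\leq r_Y^+\}$, the case $k=0$ in the top-level estimate is immediately covered by \eqref{J1}, and for $k\geq 1$ the slice integral localizes to the near-horizon region on which Proposition~\ref{commYcontrol} directly applies.

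For the top-level estimate ($i+k=16$), I take $\tau'=\tau_0$ in Proposition~\ref{commYcontrol} and bound the four resulting terms as follows: the initial data term by $C\epsilon$ using the hypothesis on initial data; the local-energy term $\sum_j\int_{\Sigma_\tau\cap\{r\leq r_Y^+\}}J^N_\mu(\partial_{t^*}^{i+j}\Phi)n^\mu$ by $CA_{16}\epsilon\tau^{\eta_{16}}$ using \eqref{J1}--\eqref{J4}; the spacetime integral $\sum_j\iint_{\mathcal R(\tau_0,\tau)\cap\{r\leq 23M/8\}}J^N_\mu(\partial_{t^*}^{i+j}\Phi)n^\mu$ by $CA_{X,16}\epsilon\tau^{\eta_{16}}$ using \eqref{K1.0.1}--\eqref{K1.5} (the point being that $J^N\lesssim K^{X_0}+K^N$ on the compact set $\{r\leq 23M/8\}$); and the nonlinear term $\sum_j\iint(D^j\partial_{t^*}^iF)^2$ by $CA^2\epsilon^2$ via the $N_k$-estimates of Propositions~\ref{NI}--\ref{NO}, after using the elliptic estimate of Proposition~\ref{elliptichorizon} to trade each generic $D$ for $\partial_{t^*}^a\hat Y^b$ in the compact region. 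The constant hierarchy $1,A_{16},A_{X,16},A^2\epsilon\ll A_Y$ then yields the claimed bound $\tfrac{A_Y}{4}\epsilon\tau^{\eta_{16}}$.

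For the 14- and 15-level weighted estimates I again apply Proposition~\ref{commYcontrol} and multiply by $\tau^2$. The local-energy term $\tau^2\sum_j\int_{\Sigma_\tau\cap\{r\leq r_Y^+\}}J^N_\mu(\partial_{t^*}^{i+j}\Phi)n^\mu$ is exactly the conformal-type quantity controlled by Propositions~\ref{Z1}--\ref{Z2}, giving the expected growth $\tau^{1+\eta_{15}}$ or $\tau^{\eta_{14}}$ with constant $\ll A_Y$. A naive application with $\tau'=\tau_0$ would, however, make the initial-data contribution $\tau^2\epsilon$ and the spacetime integral $\tau^2\cdot A_{X,0}\epsilon$ both grow like $\tau^2$, which is too fast. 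To fix this I partition $[\tau_0,\tau]$ dyadically by $\tau_0=\tilde\tau_0<\tilde\tau_1<\cdots<\tilde\tau_n=\tau$ with $\tilde\tau_{j+1}\leq 1.1\tilde\tau_j$ and apply Proposition~\ref{commYcontrol} on each slab $[\tilde\tau_j,\tilde\tau_{j+1}]$, feeding the initial-data contribution at $\tilde\tau_j$ into the bootstrap assumption on the previous slab and using the localized spacetime bounds \eqref{BAK2}--\eqref{BAK5} together with their $K^N$-analogues (a consequence of the recovered \eqref{K1.0.1}--\eqref{K1.5}) to supply the extra $\tau^{-2+\eta}$ decay on each slab. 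Propagating the bound through the dyadic sequence and summing, the $\tau^2$-weights exactly cancel against the slab-localized decay rates and the constants line up as $A_0,A_{X,0},A^2\epsilon\ll A_Y$.

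The principal obstacle is precisely this last point: obtaining the $\tau^2$ factor on the left of Proposition~\ref{commYcontrol} without incurring $\tau^2$ growth from the time-independent pieces on the right. The mechanism that makes it work is the quantitative time-decay of the local spacetime energies built into the bootstrap assumptions \eqref{BAK2}--\eqref{BAK5}, combined with the fact that the bootstrap assumptions on the $\hat Y^k$-commuted local energy themselves decay like $\tau^{-2+\eta_{14}}$, so that propagating the bound across a single dyadic slab costs only a universal factor that can be absorbed into the constant hierarchy.
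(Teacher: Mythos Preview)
Your argument for the top-order estimate ($i+k=16$) is correct and in fact simpler than the paper's: applying Proposition~\ref{commYcontrol} directly from $\tau'=\tau_0$ and invoking the already-recovered bounds \eqref{J1}, \eqref{J4}, \eqref{K1.0.1}, \eqref{K1.5} (all carrying constants $A_0$ or $A_{X,0}$, which are $\ll A_Y$) closes immediately. The paper instead runs a pigeonhole argument even at this level, but your shortcut is legitimate.

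For the two $\tau^2$-weighted estimates, however, your dyadic-propagation scheme has a genuine gap. When you apply Proposition~\ref{commYcontrol} on the final slab $[\tilde\tau_{n-1},\tau]$, the ``initial data'' term on the right is
\[
\sum_{j+m\leq k}\int_{\Sigma_{\tilde\tau_{n-1}}\cap\{r\leq r_Y^+\}} J^N_\mu\big(\partial_{t^*}^{i+j}\hat Y^m\Phi\big)n^\mu,
\]
which contains the full tower of $\hat Y^m$-commuted energies. The only control you have on this is the bootstrap assumption \eqref{BA2} or \eqref{BA3}, carrying the constant $A_Y$. Hence the output of one application is $\leq C\cdot A_Y\epsilon\tau^{-2+\eta}+\text{(smaller)}$, and since $C>1$ this does \emph{not} improve to $\tfrac14A_Y$. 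Iterating through $n\sim\log\tau$ slabs only compounds the loss. The phrase ``the constants line up as $A_0,A_{X,0},A^2\epsilon\ll A_Y$'' overlooks precisely this $A_Y$-term.

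The mechanism the paper uses to escape this is different and essential: Proposition~\ref{commYcontrol} also bounds the \emph{spacetime} integral $\iint_{\mathcal R(\tau',\tau)\cap\{r\leq r_Y^-\}}J^N_\mu(\hat Y^k\partial_{t^*}^i\Phi)n^\mu$ on its left-hand side (this is the red-shift effect). One applies the proposition with $\tau'=\tau-A_0$; the right side is then $\leq CA_Y\epsilon\tau^{-2+\eta}+\text{(smaller)}$, and since the spacetime integral over an interval of length $A_0$ is bounded by this, pigeonhole produces a slice $\tilde\tau\in[\tau-A_0,\tau]$ on which the energy is at most $CA_YA_0^{-1}\epsilon\tau^{-2+\eta}$. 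Because $A_0\gg C^2$, reapplying Proposition~\ref{commYcontrol} from $\tilde\tau$ to $\tau$ now closes with room to spare. Your dyadic approach never exploits the spacetime bound on the left of Proposition~\ref{commYcontrol}, and without that gain of $A_0^{-1}$ there is no way to beat the universal constant $C$.
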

\begin{proof}
The idea is to use Proposition 13 and use the fact that it gives the control of an integrated in time quantity. From this we can extract a good slice to improve the constant.
By Proposition \ref{commYcontrol},
\begin{equation*}
\begin{split}
&\sum_{i+k=16}\left(\int_{\Sigma_\tau\cap\{r\leq r^+_Y\}} J^{N}_\mu\left(\hat{Y}^{k}\partial_{t^*}^i\Phi\right)n^\mu_{\Sigma_\tau} + \iint_{\mathcal R(\tau',\tau)\cap\{r\leq r^-_Y\}}J_\mu^{N}\left(\hat{Y}^{k}\partial_{t^*}^i\Phi\right)n^\mu_{\Sigma_{t^*}}\right)\\
\leq &C\left(\sum_{j+m\leq 16}\int_{\Sigma_{\tau'}\cap\{r\leq r^+_Y\}} J^N_\mu\left(\partial_{t^*}^j\hat{Y}^m\Phi\right)n^\mu_{\Sigma_{\tau'}}+\sum_{j=0}^{16}\int_{\Sigma_{\tau}\cap\{r\leq r^+_Y\}} J^N_\mu\left(\partial_{t^*}^j\Phi\right)n^\mu_{\Sigma_{\tau}}\right.\\
&\left.+\sum_{j=0}^{16}\iint_{\mathcal R(\tau',\tau)\cap\{ r\leq \frac{23M}{8}\}} J^N_\mu\left(\partial_{t^*}^j\Phi\right)n^\mu_{\Sigma_{t^*}}+\sum_{i+j\leq 16}\iint_{\mathcal R(\tau',\tau)\cap\{r\leq\frac{23M}{8}\}}\left(D^iG_{\leq j,0}\right)^2\right)\\
\leq&CA_Y(\tau')^{\eta_{16}}+CA_0\tau^{\eta_{16}}+CA^2\epsilon^2(\tau')^{-1+\eta_{16}},\\
\end{split}
\end{equation*}
by (\ref{BA1}), (\ref{BAK1}) and Proposition \ref{Nestprop}. Take $\tau'=\tau-A_0$. Then
\begin{equation*}
\begin{split}
&\sum_{i+k=16}\iint_{\mathcal R(\tau-A_0,\tau)\cap\{r\leq r^-_Y\}}J_\mu^{N}\left(\hat{Y}^{k}\partial_{t^*}^i\Phi\right)n^\mu_{\Sigma_{t^*}}\leq CA_Y\tau^{\eta_{16}}+CA_0\tau^{\eta_{16}}+CA^2\epsilon^2\tau^{-1+\eta_{16}}.\\
\end{split}
\end{equation*}
Hence there is some $\tilde{\tau}\in [\tau-A_0,\tau]$ such that 
\begin{equation*}
\begin{split}
\sum_{i+k=16}\int_{\Sigma_{\tilde{\tau}}\cap\{r\leq r^-_Y\}} J^{N}_\mu\left(\hat{Y}^{k}\partial_{t^*}^i\Phi\right)n^\mu_{\Sigma_{\tilde{\tau}}}\leq CA_YA_0^{-1}\tau^{\eta_{16}}+C\tau^{\eta_{16}}+CA^2\epsilon^2\tau^{-1+\eta_{16}}.\\
\end{split}
\end{equation*}
We also have by (\ref{BA1}) and the elliptic estimates in Proposition \ref{elliptic},
\begin{equation*}
\begin{split}
&\sum_{i+k=16}\int_{\Sigma_{\tilde{\tau}}\cap\{r^-_Y\leq r\leq r^+_Y\}} J^{N}_\mu\left(\hat{Y}^{k}\partial_{t^*}^i\Phi\right)n^\mu_{\Sigma_{\tilde{\tau}}} \\
\leq&C \sum_{i=0}^{16}\int_{\Sigma_{\tilde{\tau}}\cap\{r^-_Y\leq r\leq \frac{t^*}{2}\}} J^{N}_\mu\left(\partial_{t^*}^i\Phi\right)n^\mu_{\Sigma_{\tilde{\tau}}}+\sum_{i+j\leq 15}\int_{\Sigma_{\tilde{\tau}}}(D^iG_{\leq j,0})^2\\
\leq &CA_0\tau^{\eta_{16}}+CA^2\epsilon^2\tau^{-1+\eta_{16}}.\\
\end{split}
\end{equation*}
Now reapply Proposition \ref{commYcontrol}, from $\tilde{\tau}$ to $\tau$, we get 
\begin{equation*}
\begin{split}
&\sum_{i+k=16}\int_{\Sigma_\tau\cap\{r\leq r^+_Y\}} J^{N}_\mu\left(\hat{Y}^{k}\partial_{t^*}^i\Phi\right)n^\mu_{\Sigma_\tau}\\
\leq &C\sum_{j+m\leq 16}\int_{\Sigma_{\tilde{\tau}}\cap\{r\leq r^+_Y\}} J^N_\mu\left(\partial_{t^*}^j\hat{Y}^m\Phi\right)n^\mu_{\Sigma_{\tilde{\tau}}}+CA_0\tau^{\eta_{16}}+CA^2\epsilon^2\\
\leq&CA_YA_0^{-1}\tau^{\eta_{16}}+CA_0\tau^{\eta_{16}}+CA^2\epsilon^2\tau^{-1+\eta_{16}}.\\
\end{split}
\end{equation*}
Since $C\ll A_0\ll A_Y$, we get the first statement in the Proposition.
The derivations for the other bounds are identical, with the constants and exponents replaced appropriately.
\end{proof}
From this we can also derive some integrated estimates for $Y^k\Gamma^j\Phi$. This will be useful in controlling the commutator $[\Box_{g_K},S]$.
\begin{proposition}\label{Yintest}
\begin{equation*}
  \sum_{i+k=16}\iint_{\mathcal R((1.1)^{-1}\tau,\tau)\cap\{r\leq r^-_Y\}}J^{N}_\mu\left(\hat{Y}^k\partial_{t^*}^i\Phi\right) n^{\mu}_{\Sigma_\tau}\leq A_Y\epsilon \tau^{\eta_{16}},
\end{equation*}
and
\begin{equation*}
\begin{split}
 &\sum_{i+k=15}\tau^2\iint_{\mathcal R((1.1)^{-1}\tau,\tau)\cap\{r\leq r^-_Y\}}J^{N}_\mu\left(\hat{Y}^k\partial_{t^*}^i\Phi\right) n^{\mu}_{\Sigma_\tau} \leq A_Y\epsilon \tau^{1+\eta_{15}},
\end{split}
\end{equation*}
and
\begin{equation*}
 \begin{split}
 \sum_{i+k\leq 14}\tau^2\iint_{\mathcal R((1.1)^{-1}\tau,\tau)\cap\{r\leq r^-_Y\}}J^{N}_\mu\left(\hat{Y}^k\partial_{t^*}^i\Phi\right) n^{\mu}_{\Sigma_\tau}&\leq A_Y\epsilon \tau^{\eta_{14}}.
\end{split}
\end{equation*}
\end{proposition}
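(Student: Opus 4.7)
The plan is to apply Proposition \ref{commYcontrol} to $\partial_{t^*}^i\Phi$ in place of $\Phi$, with $\tau'=(1.1)^{-1}\tau$, using the commutativity $[\hat Y,\partial_{t^*}]=0$. The left-hand side of Proposition \ref{commYcontrol} already contains precisely the spacetime integral $\iint_{\mathcal R((1.1)^{-1}\tau,\tau)\cap\{r\leq r^-_Y\}}J^N_\mu(\hat Y^k\partial_{t^*}^i\Phi)n^\mu_{\Sigma_{t^*}}$ that we wish to control, so---in contrast with the proof of Proposition \ref{Yest}---no dyadic-slice selection is needed, and the entire argument reduces to estimating the four classes of terms on the right-hand side by the claimed rate.

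The initial-slice term at $\Sigma_{(1.1)^{-1}\tau}$ with mixed commutators $\partial_{t^*}^{i+j}\hat Y^m$ ($j+m\leq k$) is bounded on $\{r\leq r_Y^-\}$ directly by Proposition \ref{Yest}; on the compact annular strip $\{r_Y^-\leq r\leq r_Y^+\}$, which is separated from the horizon, Proposition \ref{elliptic}.2 converts each $\hat Y$ into $\partial_{t^*}$ derivatives whose energies are then bounded by (\ref{BA1})--(\ref{BA3}). The two $\partial_{t^*}$-only terms, namely the slice energy at $\Sigma_\tau$ and the spacetime integral over $\mathcal R((1.1)^{-1}\tau,\tau)\cap\{r\leq 23M/8\}$, are controlled by (\ref{BA1})--(\ref{BA3}) together with (\ref{BAK1}), (\ref{BAK1.5}) and (\ref{BAK4}), noting that on any compact region away from both the horizon and the trapped set, $J^N$ is comparable to $K^N$ or $K^{X_0}$ up to constants. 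Finally the $(D^jG)^2$ term with $G=\Box_{g_K}\Phi$ coincides with $N_{\leq i+j}^2$ on the region of integration and is controlled by the integrated version of Proposition \ref{Nestprop}.

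Adding these contributions, the dominant one is the initial-slice term. In the top-derivative case $i+k=16$, it yields $CA_Y\epsilon\tau^{\eta_{16}}$ as desired. In the $i+k=15$ and $i+k\leq 14$ cases, dividing the statement of Proposition \ref{Yest} by $\tau^2$ gives initial-slice bounds of $\tfrac{A_Y}{4}\epsilon\tau^{-1+\eta_{15}}$ and $\tfrac{A_Y}{4}\epsilon\tau^{-2+\eta_{14}}$ respectively, which upon re-multiplying by the $\tau^2$ appearing in the statement of the present proposition reproduce the required rates $A_Y\epsilon\tau^{1+\eta_{15}}$ and $A_Y\epsilon\tau^{\eta_{14}}$. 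All other contributions carry constants of size $A_j$, $A_{X,j}$ or the quadratic-in-data factor $A^2\epsilon$, each of which is much smaller than $A_Y$ in our constant hierarchy and is therefore absorbable. I do not anticipate any serious obstacle: the proof is essentially bookkeeping built on the integrated estimate already encoded in Proposition \ref{commYcontrol}, with the rates arranged to match those of Proposition \ref{Yest}.
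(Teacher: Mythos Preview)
Your approach is correct and essentially the same as the paper's: apply Proposition~\ref{commYcontrol} with $\tau'=(1.1)^{-1}\tau$, then bound the initial-slice term by Proposition~\ref{Yest}, the $\partial_{t^*}$-only slice and spacetime terms by the bootstrap assumptions, and the inhomogeneous term by Proposition~\ref{Nestprop}. Your explicit invocation of the spacetime bootstrap assumptions (\ref{BAK1}), (\ref{BAK2}), (\ref{BAK4}) for the $\iint_{\{r\leq 23M/8\}}J^N_\mu(\partial_{t^*}^j\Phi)$ term is in fact more precise than the paper's one-line proof, which cites only (\ref{BA1})--(\ref{BA3}); the integrated estimates are indeed what is needed there (as is already visible in the paper's own proof of Proposition~\ref{Yest}, which uses (\ref{BAK1})).
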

\begin{proof}
This is a direct consequence of Proposition \ref{commYcontrol}, \ref{Nestprop}, \ref{Yest}, as well as the bootstrap assumptions (\ref{BA1}), (\ref{BA2}) and (\ref{BA3}).
\end{proof}

We will finally proceed to the quantities associated to the vector field $S$. Recall from \cite{LKerr} that for large values of $r$
\begin{equation*}
\begin{split}
&|[\Box_{g_K},S]\Phi-\left(2+\frac{r^*\mu}{r}\right)\Box_g \Phi -\frac{2}{r}\left(\frac{r^*}{r}-1-\frac{2r^*\mu}{r}\right)\partial_{r^*}\Phi-2\left(\left(\frac{r^*}{r}-1\right)-\frac{3r^*\mu}{2r}\right)\lapp\Phi|\\
\leq &Ca r^{-2}(\sum_{k=1}^2|D^k\Phi|).
\end{split}
\end{equation*}
and that for finite values of $r$, we have
\begin{equation*}
\begin{split}
&|[\Box_{g_K},S]\Phi|\leq C(\sum_{k=1}^2|D^k\Phi|).
\end{split}
\end{equation*}
Moreover, all the coefficients in the commutator term obey the same estimates (with a different constant) upon differentiation. 
Therefore, 
$$\Box_{g_K}\left(S\Gamma^k\Phi\right)=V_k+S(U_k)+S(N_k),$$
where $$\left(D^\ell V_k\right)^2\leq C r^{-4}\left(\log r\right)^2\left(\sum_{j=1}^{\ell+1}\left(D^j\Gamma^{k+1}\Phi\right)^2+ \sum_{j=1}^{\ell+2} \left(D^j\Gamma^k\Phi\right)^2\right).$$
We would now estimate these three terms separately. We first estimate the $V_k$ terms:
\begin{proposition}\label{V}
For $\alpha\leq 2$,
\begin{equation*}
\begin{split}
\sum_{\ell+k\leq 13}\int_{\Sigma_\tau}r^\alpha \left(D^\ell V_{\leq k}\right)^2\leq CA_{Y}\epsilon\tau^{-2+\eta_{14}+\delta}.
\end{split}
\end{equation*}
For $\alpha\leq 1+\delta$,
\begin{equation*}
\begin{split}
\sum_{\ell+k=13}\iint_{\mathcal R((1.1)^{-1}\tau,\tau)}r^\alpha \left(D^\ell V_{\leq k}\right)^2\leq CA_{Y}\epsilon\tau^{-1+\eta_{15}+\delta}.
\end{split}
\end{equation*}
\begin{equation*}
\begin{split}
\sum_{\ell+k\leq 12}\iint_{\mathcal R((1.1)^{-1}\tau,\tau)}r^\alpha \left(D^\ell V_{\leq k}\right)^2\leq CA_{Y}\epsilon\tau^{-2+\eta_{14}+\delta}.
\end{split}
\end{equation*}
\end{proposition}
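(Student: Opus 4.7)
The plan is to start from the pointwise estimate
$$r^\alpha(D^\ell V_k)^2 \leq Cr^{\alpha-4}(\log r)^2\left(\sum_{j=1}^{\ell+1}(D^j\Gamma^{k+1}\Phi)^2+\sum_{j=1}^{\ell+2}(D^j\Gamma^k\Phi)^2\right),$$
observe that for any small $\delta'>0$ one has $r^{\alpha-4}(\log r)^2\leq Cr^{-2+\delta'}$ when $\alpha\leq 2$ and $r^{\alpha-4}(\log r)^2\leq Cr^{-3+\delta'}$ when $\alpha\leq 1+\delta$, and then reduce the $L^2$ norms of $D^j\Gamma^{k+i}\Phi$ (for $i=0,1$ and $j+i+k\leq\ell+k+2$) to known energy quantities via the elliptic estimates (Propositions \ref{elliptic}, \ref{ellipticoutside}, \ref{elliptichorizon}), paying for the $\hat Y$-commutations near the event horizon with the $A_Y$-weighted bounds from Propositions \ref{Yest} and \ref{Yintest}. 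This explains the appearance of $A_Y$ on the right-hand side.

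For the fixed-$\tau$ bound with $\ell+k\leq 13$, I would split the integral at $r=\frac{9\tau}{10}$. In the near region I write $r^{-2+\delta'}\leq r^{-2}\tau^{\delta'}$ and apply the weighted elliptic estimate with weight $-2$, thereby reducing to $\int_{\{r\leq\tau\}}r^{-2}J^N_\mu(\partial_{t^*}^m\hat Y^n\Gamma^{k'}\Phi)n^\mu_{\Sigma_\tau}$ (plus nonlinear errors estimated by Proposition \ref{Nestprop}) at a total derivative count at most $15$. Using $r^{-2}\leq C$ where $r$ is bounded below and then the conformal-energy bootstrap (\ref{BA3}) at derivative level $\leq 14$ yields the key gain $\int_{\{r\leq 9\tau/10\}}J^N\lesssim\epsilon\tau^{-2+\eta_{14}}$; the remaining top-level pieces (arising when $j+k+1=15$) are handled by falling back on (\ref{BA2}) at level $15$ combined with the $r^{-2}$ weight. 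In the far region I instead estimate $r^{-2+\delta'}\leq C\tau^{-2+\delta'}$, invoke the large-$r$ elliptic estimate (Proposition \ref{ellipticoutside}), and bound the remaining energies at level $\leq 15$ by (\ref{BA1}), which gives $\tau^{\eta_{16}}\leq\tau^{\eta_{14}}$. The two contributions combine to $CA_Y\epsilon\tau^{-2+\eta_{14}+\delta}$ after renaming $\delta'\to\delta$.

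The two integrated estimates over $\mathcal R((1.1)^{-1}\tau,\tau)$ follow the same near/far split: in the near region I use the stronger decay $r^{-3+\delta'}\leq Cr^{-1-\delta}$ (for $r\geq 1$, by choosing $\delta'$ appropriately) and then bound the resulting $\int\!\int r^{-1-\delta}J^N(\cdot)n^\mu$ expressions by the $K^{X_0},K^{X_1}$ bootstrap assumptions (\ref{BAK1.4}), (\ref{BAK1.6}), (\ref{BAK3}), (\ref{BAK5}) at appropriate derivative levels, invoking the integrated $\hat Y$-bounds of Proposition \ref{Yintest} for the near-horizon error terms; in the far region I integrate the fixed-$\tau$ estimate already proved (or directly bound $r^{-3+\delta'}\leq C\tau^{-3+\delta'}$ and multiply by the length of the time interval $\sim\tau$). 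For $\ell+k=13$ this produces $\tau^{-1+\eta_{15}+\delta}$, and for $\ell+k\leq 12$ this produces $\tau^{-2+\eta_{14}+\delta}$.

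The main obstacle is the bookkeeping at the top derivative level: at $\ell+k=13$ one really is pushed to 15 derivatives, where the conformal energy itself is only bounded by $\tau^{1+\eta_{15}}$ rather than decaying, so it is essential that the commutator structure of $V_k$ supplies the factor $r^{-4}$ (hence $r^{-2}$ beyond what is needed to apply the Hardy/conformal-energy mechanism), turning the $\tau$-growth of the top-level conformal energy into the required $\tau^{-2+\eta_{14}+\delta}$ decay. Carefully keeping track of how the $\hat Y$ commutators propagate through the elliptic estimates, and making sure the constant produced is consistent with $A_Y$ (and no larger), is the other point that requires attention.
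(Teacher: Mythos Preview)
Your approach is essentially the same as the paper's: start from the pointwise bound on $D^\ell V_k$, absorb the $r^{-4}(\log r)^2$ into an $r^{\alpha-4+\delta}$ weight, apply the weighted elliptic estimates (Propositions \ref{elliptic}, \ref{elliptichorizon}), split near/far, and invoke (\ref{BA3})--(\ref{BA4}) for the fixed-$\tau$ estimate and (\ref{BAK3}), (\ref{BAK5}), (\ref{BAK1.5}) together with Proposition \ref{Yintest} for the integrated estimates.

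There is one miscount you should correct. After the elliptic step, the $J^N$ terms land at total commutator order at most $\ell+k+1\leq 14$, \emph{not} $15$: the term $D^j\Gamma^{k+1}\Phi$ with $j\leq\ell+1$ reduces to $J^N(\partial_{t^*}^{p}\hat Y^{q}\Gamma^{k+1}\Phi)$ with $p+q\leq j-1\leq\ell$, hence $p+q+(k+1)\leq 14$; similarly for $D^j\Gamma^k\Phi$ with $j\leq\ell+2$. So (\ref{BA3}) (and its $\hat Y$ part, which is where $A_Y$ enters) applies directly in the near region, giving $\tau^{-2+\eta_{14}}$ without any ``top-level'' leftover. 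Your proposed fallback to (\ref{BA2}) at level $15$ is therefore unnecessary --- and, had it actually been needed, it would only give $\tau^{-1+\eta_{15}}$ in $\{r\leq\frac{9\tau}{10}\}$, which does not meet the stated $\tau^{-2+\eta_{14}+\delta}$. The argument closes precisely because the elliptic gain drops the order to $\leq 14$.

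A second minor point: for the far-region contribution to the integrated bounds, integrating the fixed-$\tau$ estimate over $[(1.1)^{-1}\tau,\tau]$ gives $\tau^{-1+\eta_{14}+\delta}$, which for $\ell+k=13$ is slightly worse than the claimed $\tau^{-1+\eta_{15}+\delta}$ (recall $\eta_{15}\ll\eta_{14}$). Your alternative --- bounding $r^{-3+\delta'}\leq C\tau^{-2+\delta'}r^{-1-\delta}$ and invoking (\ref{BAK1.5}) --- is the correct route and is what the paper does.
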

\begin{proof}
By the elliptic estimates in Propositions \ref{elliptic} and \ref{elliptichorizon}, we have, for $\alpha\leq 2$,
\begin{equation*}
\begin{split}
&\sum_{\ell+k\leq 13}\int_{\Sigma_\tau}r^\alpha \left(D^\ell V_{\leq k}\right)^2\\
\leq &C \sum_{i+j\leq 12}\int_{\Sigma_\tau}r^{\alpha-4+\delta}J^N_\mu\left(\hat{Y}^i\Gamma^{j}\Phi\right)n^\mu_{\Sigma_\tau}+C\sum_{i+j\leq 11}\int_{\Sigma_\tau}r^{\alpha-4+\delta} \left(D^i G_{\leq j}\right)^2\\
\leq&C\left(A_{Y}\epsilon+A^2\epsilon^2\right)\tau^{-2+\eta_{14}+\delta},
\end{split}
\end{equation*}
where we have used Propositions \ref{Nestprop} and \ref{U}, the bootstrap assumptions (\ref{BA3}) (for $r\leq \frac{9t^*}{10}$) and (\ref{BA4}) (for $r\geq \frac{9t^*}{10}$).

By the elliptic estimates in Propositions \ref{elliptic} and \ref{elliptichorizon}, we have
\begin{equation*}
\begin{split}
&\iint_{\mathcal R((1.1)^{-1}\tau,\tau)}r^\alpha \left(D^\ell V_{\leq k}\right)^2\\
\leq &C\sum_{i+j=0}^{\ell+k+1}\iint_{\mathcal R((1.1)^{-1}\tau,\tau)}r^{\alpha-4+\delta}J^N_\mu\left(\hat{Y}^i\Gamma^{j}\Phi\right)n^\mu_{\Sigma_\tau}+\sum_{i+j=0}^{\ell+k}\iint_{\mathcal R((1.1)^{-1}\tau,\tau)}r^{\alpha-4+\delta} \left(D^i G_{\leq j}\right)^2\\
\end{split}
\end{equation*}
We first consider the case $\ell+k=13$. For the first term, we divide into $r\leq\frac{t^*}{2}$ (which we estimate by (\ref{BAK3}) and Proposition \ref{Yintest}) and $r\geq\frac{t^*}{2}$ (which we estimate using the extra decay in $r$ by (\ref{BAK1.5})). The second term contains the $U_k$ and the $N_k$ part. The $U_k$ part can be estimated by Proposition \ref{U}. The $N_k$ part can be estimated by Proposition \ref{Nestprop}. The $\ell+k\leq 12$ case in completely analogous, replacing the bootstrap assumption (\ref{BAK3}) by (\ref{BAK5}).
\end{proof}

We then proceed to the estimates for $S(U_k)$. Notice that when we prove the estimates for the derivatives for $S(U_k)$, the derivatives for $S(N_k)$ will be involved. Like the proof of the estimates for $U_k$, we will first prove estimates for the derivatives of $S(U_k)$ depending on $S(N_k)$, and close the estimates after we control $S(N_k)$.

\begin{proposition}\label{SUprop}
The following estimates for $S(U_k)$ on a fixed $t^*$ slice hold for $\alpha\leq 2$:
\begin{equation*}
\begin{split}
\sum_{k+\ell=13}\int_{\Sigma_\tau}r^\alpha \left(D^\ell\left(S(U_{k,j})\right)\right)^2\leq CA_{S,j-1}\epsilon\tau^{\eta_{S,13}}+\sum_{m=1}^{13-j}\int_{\Sigma_\tau}(D^mS(N_{\leq j-1}))^2.
\end{split}
\end{equation*}
\begin{equation*}
\begin{split}
\sum_{k+\ell=12}\int_{\Sigma_\tau}r^\alpha \left(D^\ell\left(S(U_{k,j})\right)\right)^2\leq CA_{S,j-1}\epsilon\tau^{-1+\eta_{S,12}}+\sum_{m=1}^{12-j}\int_{\Sigma_\tau}(D^mS(N_{\leq j-1}))^2.
\end{split}
\end{equation*}
\begin{equation*}
\begin{split}
\sum_{k+\ell\leq 11}\int_{\Sigma_\tau}r^\alpha \left(D^\ell\left(S(U_{k,j})\right)\right)^2\leq CA_{S,j-1}\epsilon\tau^{-2+\eta_{S,11}}+\sum_{m=1}^{11-j}\int_{\Sigma_\tau}(D^mS(N_{\leq j-1}))^2.
\end{split}
\end{equation*}
The following estimates for $S(U_k)$ integrated on $[(1.1)^{-1}\tau,\tau]$ also hold for $\alpha\leq 1+\delta$:
\begin{equation*}
\begin{split}
\sum_{k+\ell=13}\iint_{\mathcal R((1.1)^{-1}\tau,\tau)}r^\alpha \left(D^\ell\left(S(U_{k,j})\right)\right)^2\leq CA_{S,X,j-1}\epsilon\tau^{\eta_{S,13}}+\sum_{m=1}^{13-j}\iint_{\mathcal R((1.1)^{-1}\tau,\tau)}r^{-2}(D^mS(N_{\leq j-1}))^2.
\end{split}
\end{equation*}
\begin{equation*}
\begin{split}
\sum_{k+\ell=12}\iint_{\mathcal R((1.1)^{-1}\tau,\tau)}r^\alpha \left(D^\ell\left(S(U_{k,j})\right)\right)^2\leq CA_{S,X,j-1}\epsilon\tau^{-1+\eta_{S,12}}+\sum_{m=1}^{12-j}\iint_{\mathcal R((1.1)^{-1}\tau,\tau)}r^{-2}(D^mS(N_{\leq j-1}))^2.
\end{split}
\end{equation*}
\begin{equation*}
\begin{split}
\sum_{k+\ell\leq 11}\iint_{\mathcal R((1.1)^{-1}\tau,\tau)}r^\alpha \left(D^\ell\left(S(U_{k,j})\right)\right)^2\leq CA_{S,X,j-1}\epsilon\tau^{-2+\eta_{S,11}}+\sum_{m=1}^{11-j}\iint_{\mathcal R((1.1)^{-1}\tau,\tau)}r^{-2}(D^mS(N_{\leq j-1}))^2.
\end{split}
\end{equation*}
\end{proposition}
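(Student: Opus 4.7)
The plan is to follow the strategy of Proposition \ref{Uestprop}, with extra bookkeeping for the action of $S$. First I would derive a pointwise expansion for $S(U_{k,j})$. Since $U_{k,j} = [\Box_{g_K}, \partial_{t^*}^{k-j}\tilde{\Omega}^j]\Phi$, we have
\[ S(U_{k,j}) = [\Box_{g_K}, \partial_{t^*}^{k-j}\tilde{\Omega}^j](S\Phi) + \big[S, [\Box_{g_K}, \partial_{t^*}^{k-j}\tilde{\Omega}^j]\big]\Phi. \]
Because $\tilde{\Omega}$ is supported in $\{r \geq R_\Omega\}$ and satisfies $[S,\tilde{\Omega}] = 0$, while $[S,\partial_{t^*}] = -\partial_{t^*}$, the second commutator retains the structure of the first, with coefficients having the same $r^{-2}$-decay (modulo harmless $\log r$ factors arising from $S$ differentiating the explicit $r$-dependence). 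Mimicking the pointwise estimate in the proof of Proposition \ref{Uestprop}, this yields
\[ |D^\ell S(U_{k,j})| \leq C \sum_{i=0}^{j-1}\sum_{m=1}^{k+\ell-j+2} r^{-2}\bigl(|D^m \partial_{t^*}^{k-j}\tilde{\Omega}^i S\Phi| + |D^m \partial_{t^*}^{k-j}\tilde{\Omega}^i\Phi|\bigr), \]
supported in $\{r \geq R_\Omega\}$.

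Second, I would integrate this pointwise bound against $r^\alpha$ and apply the localized elliptic estimate Proposition \ref{ellipticoutside} separately to $S\tilde{\Omega}^i\partial_{t^*}^m\Phi$ and $\tilde{\Omega}^i\partial_{t^*}^m\Phi$. The non-$S$ terms produce exactly the quantity on the right-hand side of Proposition \ref{Uestprop} and are controlled through Proposition \ref{U} together with Proposition \ref{Nestprop}, giving a contribution comfortably absorbed by the $A_{S,j-1}\epsilon$-type target. For the $S\Phi$-terms I would use the wave equation in the form
\[ \Box_{g_K}\bigl(S\tilde{\Omega}^i\partial_{t^*}^m\Phi\bigr) = V_{\leq i+m} + S(U_{\leq i+m, \leq i}) + S(N_{\leq i+m-1}), \]
bounding the $V$-contribution via Proposition \ref{V}, the $S(U)$-contribution at strictly lower angular index $j' < j$ by the inductive hypothesis, and leaving the $S(N)$-contribution as exactly the error term stated in the proposition.

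Finally, I would close the argument by induction on $j$, with the base case $j = 0$ immediate since $U_{k,0} \equiv 0$. The residual $J^N$-type quantities arising from the elliptic step are controlled by the bootstrap assumptions on $S$-commuted energies: in the fixed-slice statements by (\ref{BA5})--(\ref{BA8}), using Proposition \ref{Zlowerbound} to convert conformal-energy bounds at orders $k+\ell \leq 12$ into the required $\tau^{-1}$ and $\tau^{-2}$ decay in the interior; in the integrated statements by (\ref{BAK6})--(\ref{BAK10}). The main technical obstacle will be arranging that the right-hand side carries $A_{S,j-1}$ (respectively $A_{S,X,j-1}$) rather than $A_{S,j}$; this requires verifying, exactly as in Proposition \ref{Uestprop}, that the elliptic step at angular index $j$ only invokes $S\tilde{\Omega}^i\partial_{t^*}^m\Phi$ with $i \leq j-1$, which keeps the induction strictly monotone. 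A secondary subtlety is that the $V$-contribution is effectively top order and carries $\log$-weights, forcing Proposition \ref{V} to be applied exactly at the permitted weights $\alpha \leq 2$ on fixed slices and $\alpha \leq 1 + \delta$ in the slab---which is precisely the range stated in the proposition.
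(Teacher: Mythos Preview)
Your proposal is correct and follows essentially the same route as the paper: derive the pointwise bound $|D^\ell S(U_{k,j})|\leq C\sum_{i\leq j-1}\sum_{m} r^{-2}(|D^m S\tilde{\Omega}^i\Phi|+|D^m\tilde{\Omega}^i\Phi|)$, apply the localized elliptic estimate to convert to $J^N$-energies plus $\Box_{g_K}(S\tilde{\Omega}^i\Phi)$-terms, split the latter into $V$, $S(U_{\leq j-1})$ and $S(N_{\leq j-1})$, control $V$ by Proposition~\ref{V}, leave $S(N)$ as the stated error, and close by induction on $j$ with base case $U_{k,0}=0$. Your identification of the relevant bootstrap assumptions and of the two technical issues (the need for strictly lower angular index to secure the constant $A_{S,j-1}$, and the weight restriction $\alpha\leq 2$ resp.\ $\alpha\leq 1+\delta$ inherited from Proposition~\ref{V}) matches the paper's argument.
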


\begin{proof}
Notice that $D^\ell\left(S(U_{k,j})\right)$ is supported in $\{r\geq R_\Omega\}$ and satisfies 
\begin{equation*}
\begin{split}
|D^\ell\left(S(U_{k,j})\right)|&\leq C\sum_{m=1}^{\ell+2}\sum_{i=0}^{j-1}r^{-2}\left(|D^m S\partial_{t^*}^{k-j}\tilde{\Omega}^i\Phi|+|D^m\partial_{t^*}^{k-j}\tilde{\Omega}^i\Phi|\right)\\
&\leq C\sum_{m=1}^{\ell+k-j+2}\sum_{i=0}^{j-1}r^{-2}\left(|D^m S\tilde{\Omega}^i\Phi|+|D^m\tilde{\Omega}^i\Phi|\right).
\end{split}
\end{equation*}
We can ignore the last term because it appears already in $D^\ell U_{k,j}$ and can be estimated by Proposition \ref{U}.
\begin{equation*}
\begin{split}
&\int_{\Sigma_\tau}r^\alpha \left(D^\ell\left(S(U_{k,j})\right)\right)^2\\
\leq&C\sum_{m=1}^{\ell+k-j+2}\sum_{i=0}^{j-1}\int_{\Sigma_\tau\cap\{r\geq R_\Omega\}}r^{\alpha-4}\left(D^m S\tilde{\Omega}^i\Phi\right)^2\\
\leq&C\sum_{m=1}^{\ell+k-j+1}\sum_{i=0}^{j-1}\int_{\Sigma_\tau\cap\{r\geq R_\Omega-1\}}r^{\alpha-4}J^N_\mu\left(\partial_{t^*}^m S\tilde{\Omega}^i\Phi\right)n^\mu_{\Sigma_\tau}+C\sum_{m=1}^{\ell+k-j}\sum_{i=0}^{j-1}\int_{\Sigma_\tau}r^{-2}\left(D^m\Box_{g_K}(S\tilde{\Omega}^i\Phi)\right)^2\\
\leq&C\sum_{m=1}^{\ell+k-j+1}\sum_{i=0}^{j-1}\int_{\Sigma_\tau\cap\{r\geq R_\Omega-1\}}r^{\alpha-4}J^N_\mu\left(\partial_{t^*}^m S\tilde{\Omega}^i\Phi\right)n^\mu_{\Sigma_\tau}\\
&+C\sum_{m=1}^{\ell+k-j}\int_{\Sigma_\tau}r^{-2}\left((D^mS(U_{\leq j-1,\leq j-1}))^2+(D^mS(N_{\leq j-1}))^2+(D^mV_{\leq j-1})^2\right).
\end{split}
\end{equation*}
We now apply the bootstrap assumptions. By bootstrap assumption (\ref{BA5}) and Proposition \ref{V},
$$\sum_{k+\ell=13}\int_{\Sigma_\tau}r^\alpha \left(D^\ell\left(S(U_{k,j})\right)\right)^2\leq CA_{S,j-1}\epsilon\tau^{\eta_{S,13}}+\sum_{m=1}^{13-j}\int_{\Sigma_\tau}\left((D^mS(U_{\leq j-1,\leq j-1}))^2+(D^mS(N_{\leq j-1}))^2\right).$$
By bootstrap assumption (\ref{BA4}) (for $r\geq\frac{t^*}{2}$), (\ref{BA6}) (for $r\leq\frac{t^*}{2}$) and Proposition \ref{V},
$$\sum_{k+\ell=12}\int_{\Sigma_\tau}r^\alpha \left(D^\ell\left(S(U_{k,j})\right)\right)^2\leq CA_{S,j-1}\epsilon\tau^{-1+\eta_{S,12}}+\sum_{m=1}^{12-j}\int_{\Sigma_\tau}\left((D^mS(U_{\leq j-1,\leq j-1}))^2+(D^mS(N_{\leq j-1}))^2\right).$$
By bootstrap assumption (\ref{BA4}) (for $r\geq\frac{t^*}{2}$), (\ref{BA7}) (for $r\leq\frac{t^*}{2}$) and Proposition \ref{V},
$$\sum_{k+\ell\leq 11}\int_{\Sigma_\tau}r^\alpha \left(D^\ell\left(S(U_{k,j})\right)\right)^2\leq CA_{S,j-1}\epsilon\tau^{-2+\eta_{S,11}}+\sum_{m=1}^{11-j}\int_{\Sigma_\tau}\left((D^mS(U_{\leq j-1,\leq j-1}))^2+(D^mS(N_{\leq j-1}))^2\right).$$
Noticing that $U_{k,0}=0$, we can conclude the first three statements in the Proposition using an induction in $j$. (See the proof of Proposition \ref{Uestprop}).
For the integrated in time estimate, we note that $r^{-1-\delta}J^N_\mu\left(\Gamma^i\Phi\right)\leq CK^{X_0}\left(\Gamma^i\Phi\right)$ and use the bootstrap assumptions (\ref{BAK6}), (\ref{BAK6.5}), (\ref{BAK7}) and (\ref{BAK9}) (See proof of Propositions \ref{Uestprop} and \ref{U}).
\end{proof}

We then move on to the $S(N_k)$ terms, first we will prove an estimate for the derivatives of $S(N_k)$. The decay rate here is not optimal, but would be sufficient to close the bootstrap argument. Our approach here is to prove the decay rate that is driven only by the pointwise decay of $D^\ell\Phi$ but not by that of $D^\ell S\Phi$. The latter can, in principle, be done by similar methods, but we will skip it since it will not be necessary. In subsequent propositions, we will then prove refined decay rate for $S(N_k)$ (without derivatives) as well as for $D^\ell S(N_k)$ restricted to the region $r\leq\frac{t^*}{4}$.

\begin{proposition}\label{SNestprop}
$S(N_{k})$ satisfies the following estimates for any fixed $t^*=\tau$:
\begin{equation*}
\sum_{k+\ell =13}\int_{\Sigma_\tau} (D^\ell S(N_{k}))^2 \leq CB_SA^2\epsilon^2 \tau^{-2+\eta_{S,11}}
\end{equation*}
\begin{equation*}
\sum_{k+\ell \leq 12}\int_{\Sigma_\tau} (D^\ell S(N_{k}))^2 \leq CA^2\epsilon^2 \tau^{-2+\eta_{14}+\delta}
\end{equation*}
$S(N_{k})$ also satisfies the following integrated estimates over $t^*\in [(1.1)^{-1}\tau,\tau]$:
\begin{equation*}
\sum_{k+\ell =12}\iint_{\mathcal R((1.1)^{-1}\tau,\tau)} r^{-1-\delta}(D^\ell S(N_{k}))^2 \leq CB_SA^2\epsilon^2 \tau^{-2+\eta_{S,11}}
\end{equation*}
\begin{equation*}
\sum_{k+\ell \leq 11}\iint_{\mathcal R((1.1)^{-1}\tau,\tau)} r^{-1-\delta}(D^\ell S(N_{k}))^2 \leq CA^2\epsilon^2 \tau^{-2+\eta_{14}+\delta}
\end{equation*}
\end{proposition}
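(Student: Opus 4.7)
The plan parallels Proposition \ref{Nestprop}, now carrying an extra factor of $S$ on the nonlinearity. Since $S=t^*\partial_{t^*}+h(r_S)\partial_r$ acts boundedly on the coefficients $\Lambda_0,\Lambda_1,\mathcal C$ (the weight prescription $|D^{i_1}_\Phi\partial_{t^*}^{i_2}\partial_r^{i_3}\partial_\theta^{i_4}\partial_{\phi^*}^{i_5}\Lambda_j|\leq C(t^*)^{-i_2}r^{-i_3}$ is preserved under $S$), and since $[S,D]\sim D$ and $[S,\Gamma]\sim\Gamma$ to leading order, expanding $S(N_k)=S\Gamma^k F$ by Leibniz yields a sum of products of the schematic form
$$(D^{a_1+1}\Gamma^{b_1}X_1\Phi)(D^{a_2+1}\Gamma^{b_2}X_2\Phi)\quad\text{or}\quad(\Gamma^{i_3}\Phi)\prod_{m=1}^{2}(D^{a_m+1}\Gamma^{b_m}X_m\Phi),$$
where each $X_i\in\{\mathrm{Id},S\}$ with at most one $X_i=S$ per product, and the weights satisfy $\sum(a_i+b_i)\leq k+\ell$ (quadratic) or $a_m+b_m\leq 8$ and $i_3\leq k+\ell$ (cubic). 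Applying $D^\ell$ distributes further derivatives; the symmetry argument of Proposition \ref{Nestprop} lets us assume one quadratic factor has index $\leq 7$.

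For the fixed-$\tau$ estimate at top level $k+\ell=13$, I place the low-index factor in $L^\infty$. If the $S$-factor is low ($b_1\leq 6$), bootstrap assumption (\ref{BAP7}) gives $|D^{a_1+1}S\Gamma^{b_1}\Phi|^2\leq CB_SA\epsilon\, r^{-2}(t^*)^{\eta_{S,11}}(1+|u|)^{-2}$ for $r\geq t^*/4$ (with the analogous (\ref{BAPI4}) for $r\leq t^*/4$), producing the $\tau^{-2+\eta_{S,11}}$ decay after pairing with $\|D^{a_2+1}\Gamma^{b_2}\Phi\|_{L^2}^2$ controlled by the $\Phi$-energies (\ref{BA1})--(\ref{BA4}) via elliptic estimates (Propositions \ref{elliptic}, \ref{elliptichorizon}). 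If instead the $\Phi$-factor is low, the pointwise assumptions (\ref{BAP3}) and (\ref{BAPI1.2}) (with $B$ already replaced by $C$ after Proposition \ref{pointwise}) give $|D^{a_2+1}\Gamma^{b_2}\Phi|^2\leq CA\epsilon\tau^{-2+\eta_{14}}$ in both regions, paired with the $S$-energy bootstrap (\ref{BA5}). Cubic terms reduce to quadratic ones by placing the middle factor in $L^\infty$ via (\ref{BAP1.5})/(\ref{BAPI1}) and then using the Hardy inequality (Proposition \ref{Hardy}) to absorb the weight on $\Gamma^{i_3}\Phi$. Summing, the worst $\tau$-exponent is $\tau^{-2+\eta_{S,11}}$ with constant $CB_SA^2\epsilon^2$. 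For $k+\ell\leq 12$ we have one more derivative of room and can always place the $\Phi$-factor (not the $S$-factor) in $L^\infty$, avoiding $B_S$ entirely and yielding the improved bound $CA^2\epsilon^2\tau^{-2+\eta_{14}+\delta}$ (the extra $\delta$ absorbing coefficient/commutator losses). The integrated estimates on $[(1.1)^{-1}\tau,\tau]$ proceed identically, using the integrated bootstrap assumptions (\ref{BAK2})--(\ref{BAK10}) together with Proposition \ref{X0} to localize.

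The main obstacle is the inductive closure: the elliptic estimates used to convert $\|D^m\cdot\|_{L^2}$ bounds to non-degenerate/conformal-energy bounds reintroduce the commutator inhomogeneities $D^m U_{k,j}$ and $D^m S(U_{k,j})$, and the latter, via Proposition \ref{SUprop}, depends precisely on $D^m S(N_{\leq j-1})$---exactly the quantity we are trying to bound. As in the proofs of Propositions \ref{Uestprop}, \ref{Nestprop}, and \ref{U}, this circularity is unwound by inducting simultaneously on $j$ (the number of $\tilde\Omega$-commutations), the base case $j=0$ being trivial since $U_{k,0}=0$ and only $\partial_{t^*}$-derivatives appear, which are handled by the $\Phi$-estimates of Proposition \ref{Nestprop}. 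A secondary bookkeeping subtlety is tracking which bootstrap constants have already been improved from $B$ to $C$ in Proposition \ref{pointwise}: the clean split into $k+\ell=13$ (retaining the unimproved $B_S$) and $k+\ell\leq 12$ (needing only $C$ and energy data) in the statement is precisely tailored to this distinction, and ensures that when these estimates are fed back into the conformal-energy bootstrap for $S\Phi$, the constants match the hierarchy $1\ll B\sim B_S\ll A_0\ll\cdots$.
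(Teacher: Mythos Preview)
Your overall architecture (Leibniz expansion, reduction to quadratic and cubic pieces, elliptic estimates, induction to absorb the $S(U)$--$S(N)$ feedback) matches the paper, and your treatment of the range $k+\ell\leq 12$ is essentially the paper's: put the $\Phi$-factor in $L^\infty$ via (\ref{BAP2})/(\ref{BAPI1.2}) and the $S$-factor in $L^2$ via (\ref{BA8}), with the extra $\tau^\delta$ coming from Proposition \ref{V}.

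However, your handling of the top case $k+\ell=13$ has a genuine gap. You propose to ``place the low-index factor in $L^\infty$'' and, when that factor is the $S$-factor, to invoke (\ref{BAP7}) for $|D^{a_1+1}S\Gamma^{b_1}\Phi|^2$. But (\ref{BAP7}) --- and indeed every pointwise bootstrap assumption on $S\Phi$ in the region $r\geq t^*/4$, namely (\ref{BAP5}), (\ref{BAP6}), (\ref{BAP7}) --- controls at most \emph{one} $D$-derivative: there is no assumed pointwise bound on $D^{\geq 2}S\Gamma^{b_1}\Phi$ in that region. So your $L^\infty$ step fails whenever $a_1\geq 1$.

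The paper's decomposition is different and is precisely designed to avoid this. For $k+\ell=13$ one does \emph{not} split according to which factor is low; instead one \emph{always} puts the $\Phi$-factor in $L^\infty$ as long as it carries at most $13$ derivatives (covered by (\ref{BAP2})/(\ref{BAPI1.2}), constants already improved to $C$). The only obstruction is the single exceptional configuration where the $\Phi$-factor has $14$ derivatives; then the $S$-factor is forced to be exactly $DS\Phi$, and \emph{this} is put in $L^\infty$ via (\ref{BAP6})/(\ref{BAPI4}), which is where the unimproved constant $B_S$ and the exponent $\eta_{S,11}$ enter. Your ``low-index factor in $L^\infty$'' heuristic happens to coincide with the paper's choice only in that exceptional case, but is wrong in general. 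Once you replace your decomposition by the paper's ``$\Phi$-factor in $L^\infty$ unless it has $14$ derivatives'' rule, the rest of your argument (including the induction and the use of Propositions \ref{V} and \ref{SUprop}) goes through.
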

\begin{proof}
We would like to do a reduction similar to how we estimated $N_k$. Clearly, only the quadratic and cubic terms matter and we only need to consider terms that contain $S$ for the other terms are already controlled by the estimates of $N_k$. We will denote terms that are already in $N_k$ by ``good terms''. The only cubic terms that are relevant are those which contain $S\Gamma^{i}\Phi$ since in the terms with $D^{j+1}S\Gamma^i\Phi$, we can put all but one other factors in $L^\infty$ using the bootstrap assumptions (\ref{BAP1}), (\ref{BAP2}), (\ref{BAPI1}) and (\ref{BAPI1.1}). Notice also that the conditions for $D_\Phi\Lambda_0$, $D_\Phi\Lambda_1$ and $D_\Phi\mathcal C$ in the definition of the null condition guarantee that the bounds do not deteriorate if $S$ acts on the coefficients. The relevant terms are
$$(D^{j_1}S\Gamma^{i_1}\Phi)(D^{j_2}\Gamma^{i_2}\Phi),\quad\mbox{and}$$
$$(D^{j_1}\Gamma^{i_1}\Phi)(D^{j_2}\Gamma^{i_2}\Phi)(S\Gamma^{i_3}\Phi).$$

We first treat the case that $k+\ell\leq 12$. In this case we will always put factors without $S$ in $L^\infty$.
\begin{equation*}
\begin{split}
&\int_{\Sigma_\tau} (D^\ell S(N_{k}))^2 \\
\leq&C\sum_{i_1+i_2+j_1+j_2\leq k+\ell+2, j_1,j_2\geq 1}\int_{\Sigma_\tau} (D^{j_1}S\Gamma^{i_1}\Phi)^2(D^{j_2}\Gamma^{i_2}\Phi)^2 \\
&+C\sum_{i_1+i_2+i_3+j_1+j_2\leq 14,j_1,j_2\geq 1}\int_{\Sigma_\tau} (D^{j_1}\Gamma^{i_1}\Phi)^2(D^{j_2}\Gamma^{i_2}\Phi)^2(S\Gamma^{i_3}\Phi)^2+\mbox{good terms} \\
\leq&C\left(\sum_{i+j\leq k+\ell+1, j\geq 1}\sup \left(D^{j}\Gamma^{i}\Phi\right)^2\right)\sum_{i+j\leq k+\ell+1, j\geq 1}\int_{\Sigma_\tau} (D^{j}S\Gamma^{i}\Phi)^2\\
&+C\left(\sum_{i+j\leq k+\ell+1, j\geq 1}\sup \left(D^{j}\Gamma^{i}\Phi\right)^2\right)\left(\sum_{i+j\leq k+\ell+1, j\geq 1}\sup r^2\left(D^{j}\Gamma^{i}\Phi\right)^2\right)\sum_{i\leq k+\ell}\int_{\Sigma_\tau} r^{-2}(S\Gamma^{i}\Phi)^2 \\
&+\mbox{good terms}\\
\leq&CA\epsilon\tau^{-2+\eta_{14}}\sum_{i+j\leq k+\ell+1, j\geq 1}\int_{\Sigma_\tau} (D^{j}S\Gamma^{i}\Phi)^2 +CA^2\epsilon^2\tau^{-2+\eta_{14}}\sum_{i\leq k+\ell}\int_{\Sigma_\tau} J^N_\mu(S\Gamma^{i}\Phi)n^\mu_{\Sigma_\tau}+\mbox{good terms}\\
&\quad\mbox{using the bootstrap assumption (\ref{BAP1.5}), (\ref{BAP2}), (\ref{BAP6}), (\ref{BAPI1.2}) and (\ref{BAPI4}) and Proposition \ref{Hardy}}\\
\leq&CA\epsilon\tau^{-2+\eta_{14}}\sum_{i\leq k+\ell}\int_{\Sigma_{\tau}} \left(J^N_\mu\left(S\Gamma^i\Phi\right)n^\mu_{\Sigma_\tau}+J^N_\mu\left(\Gamma^i\Phi\right)n^\mu_{\Sigma_\tau}\right) \\
&+CA\epsilon\tau^{-2+\eta_{14}}\sum_{i+j\leq k+\ell-1}\int_{\Sigma_{\tau}} \left((D^i U_{\leq j})^2+(D^i S(U_{\leq j}))^2+\left(D^i N_{\leq j}\right)^2+\left(D^i S(N_{\leq j})\right)^2+\left(D^i V_{\leq j}\right)^2\right).
\end{split}
\end{equation*}
We now apply the estimates for the inhomogeneous terms, i.e., Propositions \ref{Nestprop}, \ref{U}, \ref{V} \ref{SUprop}. Since $k+\ell\leq 12$,: 
\begin{equation*}
\begin{split}
&\int_{\Sigma_\tau} (D^\ell S(N_{k}))^2 \\
\leq&CA\epsilon\tau^{-2+\eta_{14}}\sum_{i\leq 12}\int_{\Sigma_{\tau}} \left(J^N_\mu\left(S\Gamma^i\Phi\right)n^\mu_{\Sigma_\tau}+J^N_\mu\left(\Gamma^i\Phi\right)n^\mu_{\Sigma_\tau}\right) +CA^2\epsilon^2\tau^{-2+\eta_{14}+\delta} \\
&+CA\epsilon\tau^{-2+\eta_{14}}\sum_{i+j\leq k+\ell-1}\int_{\Sigma_{\tau}} \left(D^i S(N_{\leq j})\right)^2.
\end{split}
\end{equation*}
The desired estimates then follow from an induction, together with the bootstrap assumptions (\ref{BA4}) and (\ref{BA8}), since according to this notation $\displaystyle\sum_{i+j=0}^{-1}=0$.

We then treat the case that $k+\ell=13$. In this case it is possible to have $14$ derivatives falling on the factor with $\Phi$ and hence cannot be controlled in $L^\infty$. However, in this scenario, we must have
$$\sum_{i+j=14}(DS\Phi)(D^{j}\Gamma^{i}\Phi)$$
and therefore $DS\Phi$ can be controlled in $L^\infty$ by the bootstrap assumptions (\ref{BAP6}) and (\ref{BAPI4}). In short, we have
\begin{equation*}
\begin{split}
&\sum_{k+\ell =13}\int_{\Sigma_\tau} (D^\ell S(N_{k}))^2 \\
\leq&C\sum_{i_1+i_2+j_1+j_2\leq 15, j_1,j_2\geq 1}\int_{\Sigma_\tau} (D^{j_1}S\Gamma^{i_1}\Phi)^2(D^{j_2}\Gamma^{i_2}\Phi)^2 \\
&+C\sum_{i_1+i_2+i_3+j_1+j_2\leq 15,j_1,j_2\geq 1}\int_{\Sigma_\tau} (D^{j_1}\Gamma^{i_1}\Phi)^2(D^{j_2}\Gamma^{i_2}\Phi)^2(S\Gamma^{i_3}\Phi)^2 +\mbox{good terms}\\
\leq&C\left(\sum_{i+j\leq 14, j\geq 1}\sup \left(D^{j}\Gamma^{i}\Phi\right)^2\right)\sum_{i+j\leq 14, j\geq 1}\int_{\Sigma_\tau} (D^{j}S\Gamma^{i}\Phi)^2+ \left(\sup\left(DS\Phi\right)^2\right)\sum_{i+j=14, j\geq 1}\int_{\Sigma_\tau}(D^{j}\Gamma^{i}\Phi)^2\\
&+C\left(\sum_{i+j\leq 14, j\geq 1}\sup \left(D^{j}\Gamma^{i}\Phi\right)^2\right)\left(\sum_{i+j\leq 14, j\geq 1}\sup r^2\left(D^{j}\Gamma^{i}\Phi\right)^2\right)\sum_{i\leq 13}\int_{\Sigma_\tau} r^{-2}(S\Gamma^{i}\Phi)^2+\mbox{good terms}\\
\leq&CA\epsilon\tau^{-2+\eta_{14}}\sum_{i+j\leq 14, j\geq 1}\int_{\Sigma_\tau} (D^{j}S\Gamma^{i}\Phi)^2+ CB_SA\epsilon \tau^{-2+\eta_{S,11}}\sum_{i+j=14,j\geq 1}\int_{\Sigma_\tau}(D^{j}\Gamma^{i}\Phi)^2\\
&+CA^2\epsilon^2\tau^{-2+\eta_{14}}\sum_{i\leq 13}\int_{\Sigma_\tau} J^N_\mu(S\Gamma^{i}\Phi)n^\mu_{\Sigma_\tau}+\mbox{good terms}\\
&\quad\mbox{using the bootstrap assumption (\ref{BAP1.5}), (\ref{BAP2}), (\ref{BAP6}), (\ref{BAPI1.2}) and (\ref{BAPI4}) and Proposition \ref{Hardy}}\\
\leq&CA\epsilon\tau^{-2+\eta_{14}}\sum_{i\leq 13}\int_{\Sigma_{\tau}} J^N_\mu\left(S\Gamma^i\Phi\right)n^\mu_{\Sigma_\tau}+CB_SA\epsilon\tau^{-2+\eta_{S,11}}\sum_{i\leq 13}\int_{\Sigma_{\tau}}J^N_\mu\left(\Gamma^i\Phi\right)n^\mu_{\Sigma_\tau} \\
&+CA\epsilon\tau^{-2+\eta_{14}}\sum_{\ell+k\leq 12}\int_{\Sigma_{\tau}} \left((D^\ell U_{\leq k})^2+(D^iS(U_{\leq k}))^2+\left(D^\ell N_{\leq k}\right)^2+\left(D^\ell S(N_{\leq k})\right)^2+\left(D^\ell V_{\leq k}\right)^2\right).
\end{split}
\end{equation*}
We now apply the estimates for the inhomogeneous terms, i.e., Propositions \ref{Nestprop}, \ref{U}, \ref{V}, \ref{SUprop}: 
\begin{equation*}
\begin{split}
&\sum_{k+\ell =13}\int_{\Sigma_\tau} (D^\ell S(N_{k}))^2 \\
\leq&CA\epsilon\tau^{-2+\eta_{14}}\sum_{i\leq 13}\int_{\Sigma_{\tau}} J^N_\mu\left(S\Gamma^i\Phi\right)n^\mu_{\Sigma_\tau}+CB_SA\epsilon\tau^{-2+\eta_{S,11}}\sum_{i\leq 13}\int_{\Sigma_{\tau}}J^N_\mu\left(\Gamma^i\Phi\right)n^\mu_{\Sigma_\tau} \\
&+CA\epsilon\tau^{-2+\eta_{14}}\sum_{\ell+k\leq 12}\int_{\Sigma_{\tau}} \left(D^\ell S(N_{\leq k})\right)^2,
\end{split}
\end{equation*}
which is acceptable.
The estimates for the integrated in $t^*$ terms are proved analogously, noting that the elliptic estimate in Proposition \ref{elliptic} would allow for weight in $r$ and use the second parts of Propositions \ref{Nestprop}, \ref{U}, \ref{V} and \ref{SUprop}.
\end{proof}
This would allow us to close the estimates for $S(U_k)$ from Proposition \ref{SUprop}.

\begin{proposition}\label{SU}
The following estimates for $S(U_k)$ on a fixed $t^*$ slice hold for $\alpha\leq 2$:
\begin{equation*}
\begin{split}
\sum_{k+\ell=13}\int_{\Sigma_\tau}r^\alpha \left(D^\ell\left(S(U_{k,j})\right)\right)^2\leq CA_{S,j-1}\epsilon\tau^{\eta_{S,13}}.
\end{split}
\end{equation*}
\begin{equation*}
\begin{split}
\sum_{k+\ell=12}\int_{\Sigma_\tau}r^\alpha \left(D^\ell\left(S(U_{k,j})\right)\right)^2\leq CA_{S,j-1}\epsilon\tau^{-1+\eta_{S,12}}.
\end{split}
\end{equation*}
\begin{equation*}
\begin{split}
\sum_{k+\ell\leq 11}\int_{\Sigma_\tau}r^\alpha \left(D^\ell\left(S(U_{k,j})\right)\right)^2\leq CA_{S,j-1}\epsilon\tau^{-2+\eta_{S,11}}.
\end{split}
\end{equation*}
The following estimates for $S(U_k)$ integrated on $[(1.1)^{-1}\tau,\tau]$ also hold for $\alpha\leq 1+\delta$:
\begin{equation*}
\begin{split}
\sum_{k+\ell=13}\iint_{\mathcal R((1.1)^{-1}\tau,\tau)}r^\alpha \left(D^\ell\left(S(U_{k,j})\right)\right)^2\leq CA_{S,X,j-1}\epsilon\tau^{\eta_{S,13}}.
\end{split}
\end{equation*}
\begin{equation*}
\begin{split}
\sum_{k+\ell=12}\iint_{\mathcal R((1.1)^{-1}\tau,\tau)}r^\alpha \left(D^\ell\left(S(U_{k,j})\right)\right)^2\leq CA_{S,X,j-1}\epsilon\tau^{-1+\eta_{S,12}}.
\end{split}
\end{equation*}
\begin{equation*}
\begin{split}
\sum_{k+\ell\leq 11}\iint_{\mathcal R((1.1)^{-1}\tau,\tau)}r^\alpha \left(D^\ell\left(S(U_{k,j})\right)\right)^2\leq CA_{S,X,j-1}\epsilon\tau^{-2+\eta_{S,11}}.
\end{split}
\end{equation*}
\end{proposition}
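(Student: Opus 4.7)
The plan is to combine Proposition \ref{SUprop} with Proposition \ref{SNestprop} to eliminate the error terms involving $D^m S(N_{\leq j-1})$. Since Proposition \ref{SUprop} has already reduced the desired estimate to bounding these nonlinear error terms, and Proposition \ref{SNestprop} (which was proved in the meantime) now supplies exactly such bounds, the argument is essentially a substitution with index bookkeeping.

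First consider the fixed-slice estimates. For the top level $k+\ell=13$, Proposition \ref{SUprop} gives
\[
\sum_{k+\ell=13}\int_{\Sigma_\tau}r^\alpha \left(D^\ell\left(S(U_{k,j})\right)\right)^2\leq CA_{S,j-1}\epsilon\tau^{\eta_{S,13}}+\sum_{m=1}^{13-j}\int_{\Sigma_\tau}(D^mS(N_{\leq j-1}))^2.
\]
In the error sum, the total number of derivatives plus the index of $N$ satisfies $m+(j-1)\leq (13-j)+(j-1)=12$, so the $k+\ell\leq 12$ case of Proposition \ref{SNestprop} applies and controls this error by $CA^2\epsilon^2\tau^{-2+\eta_{14}+\delta}$. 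Since $A\epsilon\ll 1$ and the exponent $-2+\eta_{14}+\delta$ is strictly smaller than $\eta_{S,13}$, this error is dominated by the main term. Exactly the same bookkeeping works for $k+\ell=12$ and $k+\ell\leq 11$: in each case the total index in the error sum is at most $(k+\ell)-1$, so the $k+\ell\leq 12$ bound of Proposition \ref{SNestprop} produces a term of size $CA^2\epsilon^2\tau^{-2+\eta_{14}+\delta}$, which is negligible compared with $CA_{S,j-1}\epsilon\tau^{-1+\eta_{S,12}}$ and $CA_{S,j-1}\epsilon\tau^{-2+\eta_{S,11}}$ respectively by the hierarchy of constants and exponents.

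For the integrated estimates, the procedure is identical, but one must convert the $r^{-2}$ weight appearing in the error terms of Proposition \ref{SUprop} into the $r^{-1-\delta}$ weight appearing in Proposition \ref{SNestprop}. This is harmless because $r\geq r_+>0$ on the Kerr spacetime, so $r^{-2}\leq Cr^{-1-\delta}$ with a constant depending only on $M$ and $\delta$. Then the integrated version of Proposition \ref{SNestprop} bounds the error by $CA^2\epsilon^2\tau^{-2+\eta_{14}+\delta}$, which is again absorbed by the main term $CA_{S,X,j-1}\epsilon\tau^{\eta_{S,13}}$ (respectively $CA_{S,X,j-1}\epsilon\tau^{-1+\eta_{S,12}}$ and $CA_{S,X,j-1}\epsilon\tau^{-2+\eta_{S,11}}$) in view of $A\epsilon\ll 1$ and $\eta_{14},\delta\ll \eta_{S,11}$.

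There is no genuine obstacle here; the only thing to watch is the index counting in the error sums and the compatibility of weights. All the heavy lifting has already been done in Propositions \ref{SUprop} and \ref{SNestprop}, and the proof reduces to verifying that the nonlinear $S$-commuted error terms are strictly lower order than the linear commutator contributions that drive the main bound.
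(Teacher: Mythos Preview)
Your proof is correct and follows exactly the same approach as the paper, whose proof is the single line ``This follows directly from Proposition \ref{SUprop} and \ref{SNestprop}.'' One small imprecision: in the integrated $k+\ell=13$ case the error sum can reach total index $12$, so you actually invoke the $k+\ell=12$ integrated estimate of Proposition \ref{SNestprop}, which yields $CB_S A^2\epsilon^2\tau^{-2+\eta_{S,11}}$ rather than $CA^2\epsilon^2\tau^{-2+\eta_{14}+\delta}$; this is still harmlessly absorbed by the main term since $-2+\eta_{S,11}<\eta_{S,13}$ and $\epsilon$ is small.
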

\begin{proof}
This follows directly from Proposition \ref{SUprop} and \ref{SNestprop}.
\end{proof}
In the region $\{r\leq\frac{t^*}{4}\}$, we have refined decay rates for $D^\ell S(N_k)$:

\begin{proposition}\label{SNII}
\begin{equation*}
\sum_{k+\ell=13}\int_{\Sigma_\tau\cap\{r\leq\frac{t^*}{4}\}}r^{1-\delta} (D^\ell S(N_{k}))^2 \leq CA^2\epsilon^2 \tau^{-3+\eta_{S,11}}.
\end{equation*}
\begin{equation*}
\sum_{k+\ell\leq 12}\int_{\Sigma_\tau\cap\{r\leq\frac{t^*}{4}\}}r^{1-\delta} (D^\ell S(N_k))^2 \leq CA^2\epsilon^2 \tau^{-4+\eta_{S,12}+\eta_{S,11}}.
\end{equation*}
\end{proposition}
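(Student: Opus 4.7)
The plan is to parallel the proof of Proposition \ref{SNestprop}, but restricted throughout to the inner region $\{r\le t^*/4\}$, where the pointwise bootstrap assumptions (\ref{BAPI1})--(\ref{BAPI4}) provide sharper $L^\infty$ decay in $t^*$ than the outer analogues (\ref{BAP1})--(\ref{BAP7}), and where the inner-region energy bootstraps (\ref{BA6}), (\ref{BA7}), (\ref{BA8}) furnish additional decay through their $\tau^2$-weighted $J^N(S\Phi)$ parts. The extra $\tau^{-1}$ (for the $k+\ell=13$ case) or $\tau^{-2}$ (for the $k+\ell\le 12$ case) that the present proposition asserts over Proposition \ref{SNestprop} is obtained by trading the outer $L^\infty$ bounds for the inner ones and, once, by trading a full-slice $J^N$ bound for its $\tau^2$-weighted inner counterpart.

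First I would reduce, exactly as in the proof of Proposition \ref{SNestprop}, to representative quadratic terms $(D^{j_1}S\Gamma^{i_1}\Phi)(D^{j_2}\Gamma^{i_2}\Phi)$ with $j_1,j_2\ge 1$ and $i_1+j_1+i_2+j_2\le k+\ell+2$, and cubic terms $(D^{j_1+1}\Gamma^{i_1}\Phi)(D^{j_2+1}\Gamma^{i_2}\Phi)(S\Gamma^{i_3}\Phi)$ with $i_1+j_1, i_2+j_2\le 8$, together with residual ``good'' terms already bounded by Propositions \ref{Nestprop}, \ref{U}, \ref{V} and \ref{SU}. The identity $[S,D]=O(D)$ and the decay hypotheses on the $S$-derivatives of the coefficients $\Lambda_0,\Lambda_1,\mathcal C$ are used throughout to rearrange factors so that $S$ acts directly on one of the $\Phi$ factors.

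For a quadratic term, the budget $i_1+i_2+j_1+j_2\le 14$ (resp.\ $\le 15$) forces at least one factor to satisfy $i+j\le 9$, permitting it to be placed in $L^\infty$ via (\ref{BAPI2}) to gain $r^{-1+\delta}(t^*)^{-3+\eta_{S,11}}$; the $r$-weight cancels the $r^{1-\delta}$ in the integrand. The remaining $L^2$ factor $D^{j_1}S\Gamma^{i_1}\Phi$ with $i_1+j_1\le k+\ell+1$ is then handled by the localised elliptic estimate Proposition \ref{elliptic}.2 applied to $S\Gamma^{i_1}\Phi$ with inhomogeneity $\Box_{g_K}(S\Gamma^{i_1}\Phi)=S(U_{i_1})+S(N_{i_1})+V_{i_1}$: this reduces the $L^2$ bound on $\{r\le t^*/4\}$ to the inner $\int_{r\le t^*/2} J^N(\partial_{t^*}^m S\Gamma^{i_1}\Phi)$ for $m\le i_1+j_1-1$, plus $\int (D^m(V_{i_1}+S(U_{i_1})+S(N_{i_1})))^2$. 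The inner $J^N$ integral at the relevant derivative count is exactly the $\tau^2$-weighted part of (\ref{BA6}) or (\ref{BA7}) divided by $\tau^2$, producing $\tau^{-1+\eta_{S,12}}$ or $\tau^{-2+\eta_{S,11}}$ respectively; multiplied by the pointwise factor $(t^*)^{-3+\eta_{S,11}}$ this yields the claimed $\tau^{-4+\eta_{S,12}+\eta_{S,11}}$ for $k+\ell\le 12$, and (using (\ref{BA5}) at the top, absorbing $\eta_{S,13}$ into $\eta_{S,11}$) the claimed $\tau^{-3+\eta_{S,11}}$ for $k+\ell=13$. The $V_{i_1}$ and $S(U_{i_1})$ terms are controlled by Propositions \ref{V} and \ref{SU}; the $S(N_{i_1})$ term at the same top order is controlled by the global Proposition \ref{SNestprop}, while at strictly lower order it is handled by induction on $k+\ell$.

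Cubic terms are strictly easier: two of the three factors are placed in $L^\infty$ via (\ref{BAPI2}) (or (\ref{BAPI4})), producing at worst a factor $r^{-2+2\delta}(t^*)^{-6+2\eta_{S,11}}$, and the remaining $(S\Gamma^{i_3}\Phi)^2$ is reduced, after extracting the appropriate $r^{-2}$ factor, to $\int r^{-2}(S\Gamma^{i_3}\Phi)^2$ via Hardy (Proposition \ref{Hardy}), then bounded by (\ref{BA8}); the result has strictly better decay than the quadratic case. The main obstacle is therefore the top-order quadratic case: the $L^2$ bound on $D^{j_1}S\Gamma^{i_1}\Phi$ cannot be read off any bootstrap directly, and its dependence through Proposition \ref{elliptic}.2 on $S(N_{i_1})$ creates a coupling that must be broken by induction on $k+\ell$, with each step depending only on strictly lower-order $S(N)$ bounds, on the globally-in-$\tau$ Proposition \ref{SNestprop}, and on the $\tau^2$-weighted inner pieces of (\ref{BA6})--(\ref{BA8}).
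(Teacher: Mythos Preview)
Your outline follows the paper's approach closely, and the overall strategy is correct. There is, however, one genuine omission in the quadratic case. You write that the derivative budget forces \emph{one} factor to have $i+j\le 9$ and then place ``it'' in $L^\infty$ via (\ref{BAPI2}), leaving $D^{j_1}S\Gamma^{i_1}\Phi$ as the $L^2$ factor. But (\ref{BAPI2}) applies only to $D^\ell\Gamma^j\Phi$ \emph{without} $S$; when the low-derivative factor is the $S$-factor (i.e.\ $i_1+j_1\le 7$ while $i_2+j_2$ is large), (\ref{BAPI2}) is unavailable and your argument, as written, does not cover this sub-case.

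The paper handles this explicitly by splitting the quadratic estimate into two lines: one placing $D^{j}S\Gamma^{i}\Phi$ in $L^\infty$ via (\ref{BAPI4}), which gives only $r^{-2}(t^*)^{-2+\eta_{S,11}}$ rather than $r^{-1+\delta}(t^*)^{-3+\eta_{S,11}}$, and then pairing it with $\int_{r\le 9\tau/10}J^N(\hat Y^j\Gamma^i\Phi)$ for the remaining non-$S$ factor, bounded by (\ref{BA3}) as $\tau^{-2+\eta_{14}}$. The product $\tau^{-2+\eta_{S,11}}\cdot\tau^{-2+\eta_{14}}$ is at least as good as what you need, so this missing case is in fact the easier one; but it must be treated separately. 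A minor related point: in reducing the $L^2$ factor via the localised elliptic estimate you should also invoke Proposition~\ref{elliptichorizon} to cover $\{r\le r_Y^-\}$, which introduces $\hat Y$-commuted energies controlled by the $\hat Y$-parts of (\ref{BA3}) and (\ref{BA5})--(\ref{BA7}).
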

\begin{proof}
Take $k+\ell\leq 13$. Notice that $|[D,S]\Phi|\leq C|D\Phi|$.

We would like to do a reduction similar to how we estimated $N_k$. Clearly, only the quadratic and cubic terms matter and we only need to consider terms that contain $S$ for the other terms are already controlled by the estimates of $N_k$. Notice also as before that the conditions in the null condition guarantee that the bounds do not deteriorate if $S$ acts on the coefficients. The relevant terms are
$$(D^{j_1}S\Gamma^{i_1}\Phi)(D^{j_2}\Gamma^{i_2}\Phi)\quad j_1,j_2\geq 1,$$
$$(D^{j_1}S\Gamma^{i_1}\Phi)(D^{j_2}\Gamma^{i_2}\Phi)(\Gamma^{i_3}\Phi), \quad j_1,j_2\geq 1, i_3> 8\quad\mbox{and}$$
$$(D^{j_1}\Gamma^{i_1}\Phi)(D^{j_2}\Gamma^{i_2}\Phi)(S\Gamma^{i_3}\Phi)\quad j_1,j_2\geq 1, i_3 >8.$$
We first tackle the quadratic terms:
\begin{equation*}
\begin{split}
&\sum_{i_1+\ell_1\leq 7, \ell_1\geq 1}\sum_{i_2+j_2\leq k+\ell+1, j_2\geq 1}\int_{\Sigma_\tau\cap\{r\leq\frac{\tau}{4}\}}r^{1-\delta} \left(|{D}^{j_1}S\Gamma^{i_1}\Phi D^{j_2}\Gamma^{i_2}\Phi|^2+|{D}^{j_1}\Gamma^{i_1}\Phi D^{j_2}S\Gamma^{i_2}\Phi|^2\right)\\
\leq &C\left(\sup_{r\leq\frac{\tau}{4}}\sum_{i+j \leq 7, j\geq 1}r^{1-\delta}|{D}^{j}S\Gamma^{i}\Phi|^2\right)\left(\sum_{i+j\leq k+\ell+1, j\geq 1}\int_{\Sigma_\tau\cap\{r\leq\frac{\tau}{4}\}}| D^{j}\Gamma^{i}\Phi|^2\right)\\
&+C\left(\sup_{r\leq\frac{\tau}{4}}\sum_{i+j \leq 7, j\geq 1}r^{1-\delta}|{D}^j\Gamma^{i}\Phi|^2\right)\left(\sum_{i+j\leq k+\ell+1, j\geq 1}\int_{\Sigma_\tau\cap\{r\leq\frac{\tau}{4}\}}| D^j S\Gamma^{i}\Phi|^2\right)\\
\leq&CA\epsilon\tau^{-2+\eta_{S,11}}\sum_{i+j\leq k+\ell}\int_{\Sigma_\tau\cap\{r\leq\frac{9\tau}{10}\}}J^N_\mu(\hat{Y}^j\Gamma^{i}\Phi)+CA\epsilon\tau^{-3+\eta_{S,11}}\sum_{i+j\leq k+\ell}\int_{\Sigma_\tau\cap\{r\leq\frac{9\tau}{10}\}}J^N_\mu(\hat{Y}^jS\Gamma^{i}\Phi) \\
&+CA\epsilon\tau^{-2+\eta_{S,11}}\sum_{i+j\leq k+\ell-1}\int_{\Sigma_\tau\cap\{r\leq\frac{9\tau}{10}\}} \left((D^i U_j)^2+(D^i N_j)^2\right) \\
&+CA\epsilon\tau^{-3+\eta_{S,11}}\sum_{i+j\leq k+\ell-1}\int_{\Sigma_\tau\cap\{r\leq\frac{9\tau}{10}\}} \left((D^iS( U_j))^2+(D^i S( N_j))^2+(D^iV_j)^2\right)
\end{split}
\end{equation*}
by the bootstrap assumptions (\ref{BAP2}) and (\ref{BAP4}) and the elliptic estimates Proposition \ref{elliptic} and \ref{elliptichorizon}. Since $k+\ell-1\leq 12$, the inhomogeneous terms can be bounded using Proposition \ref{Nestprop}, \ref{U}, \ref{V}, \ref{SNestprop} and \ref{SU} to be
$$\leq CA^2\epsilon^2\tau^{-4+\eta_{S,12}+\eta_{S,11}}.$$
We then move on to the cubic terms:
\begin{equation*}
\begin{split}
&\sum_{i_1+j_1\leq 7, j_1\geq 1}\sum_{i_2+j_2\leq 7, j_2\geq 1}\sum_{i_3=0}^{k}\int_{\Sigma_\tau\cap\{r\leq\frac{9\tau}{10}\}}r^{1-\delta} \left((D^{j_1}S\Gamma^{i_1}\Phi D^{j_2}\Gamma^{i_2}\Phi \Gamma^{i_3}\Phi)^2+(D^{j_1}\Gamma^{i_1}\Phi D^{j_2}\Gamma^{i_2}\Phi S\Gamma^{i_3}\Phi)^2\right)\\
\leq&C\left(\sup_{r\leq\frac{\tau}{4}}\sum_{i+j\leq 7, j\geq 1}r^2\left(D^jS\Gamma^{i}\Phi\right)^2\right)\left(\sup_{r\leq\frac{\tau}{4}}\sum_{i+j\leq 7, j\geq 1} r^{1-\delta}\left(D^j\Gamma^{i}\Phi\right)^2\right)\sum_{i=0}^{k}\int_{\Sigma_\tau\cap\{r\leq\frac{\tau}{4}\}}r^{-2} (\Gamma^{i_3}\Phi)^2\\
&+C\left(\sup_{r\leq\frac{\tau}{4}}\sum_{i+j\leq 7, j\geq 1}r^{1-\delta}(D^j\Gamma^{i}\Phi)^2\right)^2\tau^{1+\delta}\sum_{i=0}^{k}\int_{\Sigma_\tau\cap\{r\leq\frac{\tau}{4}\}}r^{-2}( S\Gamma^{i}\Phi)^2\\
\leq&CA^2\epsilon^2\tau^{-5+2\eta_{S,11}}\sum_{i=0}^{k}\int_{\Sigma_\tau} (D\Gamma^{i}\Phi)^2+CA^2\epsilon^2\tau^{-5+2\eta_{S,11}+\delta}\sum_{i=0}^{k}\int_{\Sigma_\tau}( DS\Gamma^{i}\Phi)^2,
\end{split}
\end{equation*}
by the bootstrap assumptions (\ref{BAP2}) and (\ref{BAP4}), which now clearly decays better than we need by using the bootstrap assumptions (\ref{BA4}), (\ref{BA5}) and (\ref{BA8}).
Therefore,
\begin{equation*}
\begin{split}
&\int_{\Sigma_\tau\cap\{r\leq\frac{\tau}{4}\}}r^{1-\delta} (D^\ell S(N_k))^2\\
\leq&CA\epsilon\tau^{-2+\eta_{S,11}}\sum_{i+j\leq k+\ell}\int_{\Sigma_\tau\cap\{r\leq\frac{9\tau}{10}\}}J^N_\mu(\hat{Y}^j\Gamma^{i}\Phi)+CA\epsilon\tau^{-3+\eta_{S,11}}\sum_{i+j\leq k+\ell}\int_{\Sigma_\tau\cap\{r\leq\frac{9\tau}{10}\}}J^N_\mu(\hat{Y}^jS\Gamma^{i}\Phi) \\
&+CA^2\epsilon^2 \tau^{-4+\eta_{S,12}+\eta_{S,11}}.
\end{split}
\end{equation*}
The Proposition follows from the Bootstrap Assumptions (\ref{BA3}), (\ref{BA5}), (\ref{BA6}) and (\ref{BA7}).
\end{proof}
A similar decay rate can be proved in the region $\{r\leq\frac{9t^*}{10}\}$, if we do not require the estimate for the derivatives:

\begin{proposition}\label{SNI}
\begin{equation*}
\int_{\Sigma_\tau\cap\{r\leq\frac{9t^*}{10}\}}r^{1-\delta} (S(N_{13}))^2 \leq CA^2\epsilon^2 \tau^{-3+\eta_{S,11}}
\end{equation*}
\begin{equation*}
\sum_{k=0}^{12}\int_{\Sigma_\tau\cap\{r\leq\frac{9t^*}{10}\}}r^{1-\delta} (S(N_k))^2 \leq CA^2\epsilon^2 \tau^{-4+\eta_{S,12}+\eta_{S,11}}
\end{equation*}
\end{proposition}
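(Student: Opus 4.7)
The plan is to follow the reduction used in Proposition \ref{SNII}, specialized to $\ell = 0$ and extended to the larger region $\{r \leq 9t^*/10\}$. I would split the integration domain as $\{r \leq t^*/4\} \cup \{t^*/4 \leq r \leq 9t^*/10\}$: on the inner piece the desired bound is immediate from Proposition \ref{SNII} with $\ell = 0$, so the real work lies in the annular piece $\{t^*/4 \leq r \leq 9t^*/10\}$, where $u \sim t^*$ so that every factor $(1+|u|)^{-2}$ in the pointwise bootstrap assumptions (\ref{BAP3}), (\ref{BAP4}), (\ref{BAP7}) becomes a clean $(t^*)^{-2}$ decay.

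As in the proof of Proposition \ref{SNII}, the only terms of $S(N_k)$ not already covered by the estimate of $N_k$ from Proposition \ref{NI} are the quadratic expressions $(D^{j_1}S\Gamma^{i_1}\Phi)(D^{j_2}\Gamma^{i_2}\Phi)$ with $j_1, j_2 \geq 1$, and the cubic expressions involving a single $S\Gamma^{i_3}\Phi$ factor together with two $D$-derivatives of $\Phi$. In the annular region I would put the lowest-index factor in $L^\infty$ using the bootstrap pointwise estimates in the region $r \geq t^*/4$, which now yield decay $r^{-2}(t^*)^{-2+\eta}$, and bound the remaining factor in $L^2$ by the non-degenerate energy using the bootstrap assumptions (\ref{BA4})--(\ref{BA8}); Proposition \ref{Hardy} is used to absorb any bare $\Gamma^{i_3}\Phi$ factor from the cubic terms into an energy.

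For $k = 13$ it is unavoidable that all fourteen derivatives land on a single factor of $\Phi$ in some term; there the companion factor $DS\Phi$ must be put in $L^\infty$ via (\ref{BAP7}), which produces only $(t^*)^{-2+\eta_{S,11}}$ decay and hence the weaker exponent $\tau^{-3+\eta_{S,11}}$ asserted. For $k \leq 12$ every low-index factor fits within the $\leq 8$ range where pointwise bounds are available, contributing two factors of $(t^*)^{-1+\eta}$ decay and yielding the stronger rate $\tau^{-4+\eta_{S,12}+\eta_{S,11}}$. Summing the inner and annular contributions closes the estimate.

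The main bookkeeping obstacle I anticipate is verifying that the $r^{1-\delta}$ weight is absorbed consistently: multiplied by the pointwise $r^{-2}$ from one factor it leaves $r^{-1-\delta}$, which should be harmless against the non-degenerate energy since $r \lesssim t^*$ in our region, but care is required to ensure that no stray $\tau^\delta$ loss arises from the interplay of Hardy's inequality with the cubic term, and that the annular-region argument matches the inner-region bound of Proposition \ref{SNII} across the interface $r = t^*/4$.
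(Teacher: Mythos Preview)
Your splitting into $\{r\le t^*/4\}$ (handled directly by Proposition~\ref{SNII} with $\ell=0$) and the annulus $\{t^*/4\le r\le 9t^*/10\}$ is valid and amounts to the same argument as the paper, which simply treats the whole region $\{r\le 9t^*/10\}$ at once by noting that the pointwise bounds (\ref{BAP3}), (\ref{BAP7}) in the annulus and (\ref{BAPI2}), (\ref{BAPI4}) in the inner piece together give uniform control $\sup_{r\le 9\tau/10} r^{1-\delta}|D\Gamma^{i}\Phi|^2\lesssim\tau^{-3+\eta_{S,11}}$ and $\sup_{r\le 9\tau/10} r^{1-\delta}|DS\Gamma^{i}\Phi|^2\lesssim\tau^{-2+\eta_{S,11}}$ for $i\le 6$.

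Your diagnosis of the $k=13$ versus $k\le 12$ distinction, however, is wrong. It is \emph{not} a pointwise issue: for every $k\le 13$ the low-index factor has at most $\lfloor k/2\rfloor\le 6$ commutators and always falls within the range of (\ref{BAP3}), (\ref{BAP7}), so ``every low-index factor fits within the $\le 8$ range'' does not distinguish the two cases. The real bottleneck is the $L^2$ energy of the factor $DS\Gamma^{i_1}\Phi$ with $i_1$ large. When $i_1=13$ you are forced onto (\ref{BA5}), which gives only $\int J^N_\mu(S\Gamma^{13}\Phi)n^\mu\le A\epsilon\tau^{\eta_{S,13}}$; combined with the $L^\infty$ factor this yields $\tau^{-3+\eta_{S,11}}$. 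For $k\le 12$ you instead use the localized bound (\ref{BA6}) on $\{r\le 9\tau/10\}$, giving $\int |DS\Gamma^{i_1}\Phi|^2\le A\epsilon\tau^{-1+\eta_{S,12}}$, and it is this extra $\tau^{-1}$ from the \emph{energy}, not from any improved pointwise estimate, that produces the sharper exponent $\tau^{-4+\eta_{S,12}+\eta_{S,11}}$. Your scheme would still close if you ran through all the cases carefully, but the mechanism you describe (``two factors of $(t^*)^{-1+\eta}$ decay'' coming from $L^\infty$) is not what is actually happening.
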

\begin{proof}
Take $k\leq 13$. The proof follows very closely from that of the previous Proposition, by noting that we have similar pointwise decay estimates in the region (without higher derivatives) by the bootstrap assumptions (\ref{BAP3}) and (\ref{BAP7}). As in the previous Proposition, the relevant terms are
$$(DS\Gamma^{i_1}\Phi)(D\Gamma^{i_2}\Phi),$$
$$(DS\Gamma^{i_1}\Phi)(D\Gamma^{i_2}\Phi)(\Gamma^{i_3}\Phi), \quad i_3> 8\quad\mbox{and}$$
$$(D\Gamma^{i_1}\Phi)(D\Gamma^{i_2}\Phi)(S\Gamma^{i_3}\Phi)\quad i_3 >8.$$
We first tackle the quadratic terms:
\begin{equation*}
\begin{split}
&\sum_{i_1=0}^{6}\sum_{i_2=0}^{k}\int_{\Sigma_\tau\cap\{r\leq\frac{9\tau}{10}\}}r^{1-\delta} \left(|{D}S\Gamma^{i_1}\Phi D\Gamma^{i_2}\Phi|^2+|{D}\Gamma^{i_1}\Phi DS\Gamma^{i_2}\Phi|^2\right)\\
\leq &C\left(\sup_{r\leq\frac{9\tau}{10}}\sum_{i=0}^{6}r^{1-\delta}|{D}S\Gamma^{i}\Phi|^2\right)\left(\sum_{i=0}^{k}\int_{\Sigma_\tau\cap\{r\leq\frac{9\tau}{10}\}}| D\Gamma^{i}\Phi|^2\right)\\
&+C\left(\sup_{r\leq\frac{9\tau}{10}}\sum_{i=0}^{6}r^{1-\delta}|{D}\Gamma^{i}\Phi|^2\right)\left(\sum_{i=0}^{k}\int_{\Sigma_\tau\cap\{r\leq\frac{9\tau}{10}\}}| DS\Gamma^{i}\Phi|^2\right)\\
\leq&CA^2\epsilon^2\tau^{-4+\eta_{14}+\eta_{S,11}}+CA\epsilon\tau^{-3+\eta_{S,11}}\left(\sum_{i=0}^{k}\int_{\Sigma_\tau\cap\{r\leq\frac{9\tau}{10}\}}| DS\Gamma^{i}\Phi|^2\right).
\end{split}
\end{equation*}
We then move on to the cubic terms:
\begin{equation*}
\begin{split}
&\sum_{i_1, i_2=0}^{6}\sum_{i_3=0}^{k}\int_{\Sigma_\tau\cap\{r\leq\frac{9\tau}{10}\}}r^{1-\delta} \left((DS\Gamma^{i_1}\Phi D\Gamma^{i_2}\Phi \Gamma^{i_3}\Phi)^2+(D\Gamma^{i_1}\Phi D\Gamma^{i_2}\Phi S\Gamma^{i_3}\Phi)^2\right)\\
\leq&C\left(\sup_{r\leq\frac{9\tau}{10}}\sum_{i=0}^{6}r^2\left(DS\Gamma^{i}\Phi\right)^2\right)\left(\sup_{r\leq\frac{9\tau}{10}}\sum_{i=0}^6 r^{1-\delta}\left(D\Gamma^{i}\Phi\right)^2\right)\sum_{i=0}^{k}\int_{\Sigma_\tau\cap\{r\leq\frac{9\tau}{10}\}}r^{-2} (\Gamma^{i}\Phi)^2\\
&+C\left(\sup_{r\leq\frac{9\tau}{10}}\sum_{i=0}^{6}r^{1-\delta}(D\Gamma^{i}\Phi)^2\right)^2\tau^{1+\delta}\sum_{i=0}^{k}\int_{\Sigma_\tau\cap\{r\leq\frac{9\tau}{10}\}}r^{-2}( S\Gamma^{i}\Phi)^2\\
\leq&CA^2\epsilon^2\tau^{-5+2\eta_{S,,11}}\sum_{i=0}^{k}\int_{\Sigma_\tau} (D\Gamma^{i}\Phi)^2+CA^2\epsilon^2\tau^{-5+2\eta_{S,11}+\delta}\sum_{i=0}^{k}\int_{\Sigma_\tau}( DS\Gamma^{i}\Phi)^2\\
\leq&CA^3\epsilon^3\tau^{-5+2\eta_{S,11}}+CA^2\epsilon^2\tau^{-5+2\eta_{S,11}+\delta}\sum_{i=0}^{k}\int_{\Sigma_\tau}( DS\Gamma^{i}\Phi)^2.\\
\end{split}
\end{equation*}
Therefore,
\begin{equation*}
\begin{split}
&\int_{\Sigma_\tau\cap\{r\leq\frac{\tau}{4}\}}r^{1-\delta} (S(N_k))^2\\
\leq&CA^2\epsilon^2\tau^{-4+\eta_{14}+\eta_{S,11}}+CA\epsilon\tau^{-3+\eta_{S,11}}\sum_{i=0}^{k}\int_{\Sigma_\tau\cap\{r\leq\frac{9\tau}{10}\}}\left( DS\Gamma^{i}\Phi\right)^2+CA^2\epsilon^2\tau^{-5+2\eta_{S,11}+\delta}\sum_{i=0}^{k}\int_{\Sigma_\tau}( DS\Gamma^{i}\Phi)^2.
\end{split}
\end{equation*}
The Proposition follows from the Bootstrap Assumptions (\ref{BA5}), (\ref{BA6}) and (\ref{BA7}).
\end{proof}
We then move on to the region $\{r\geq\frac{9t^*}{10}\}$.
\begin{proposition}\label{SNO}
For $\alpha=0$ or $2$,
\begin{equation*}
\int_{\Sigma_\tau\cap\{r\geq\frac{9\tau}{10}\}} (S(N_{13}))^2 \leq CA^2\epsilon^2 \tau^{-2+\eta_{S,13}}
\end{equation*}
\begin{equation*}
\int_{\Sigma_\tau\cap\{r\geq\frac{9\tau}{10}\}}r^{\alpha} (S(N_{12}))^2 \leq CA^2\epsilon^2 \tau^{-3+\alpha+\eta_{S,12}}
\end{equation*}
\begin{equation*}
\sum_{k=0}^{11}\int_{\Sigma_\tau\cap\{r\geq\frac{9\tau}{10}\}}r^{1-\delta} (S(N_k))^2 \leq CA^2\epsilon^2 \tau^{-4+\alpha+\eta_{S,11}}.
\end{equation*}
\end{proposition}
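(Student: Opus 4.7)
The plan is to mirror the proof of Proposition \ref{NO} but now apply $S$ via the Leibniz rule on top of the null-condition reduction. By the weight bounds in the null condition (which extend to $S$-derivatives of the coefficients $\Lambda_0$, $\Lambda_1$, and $\mathcal C$, since $S=t^*\partial_{t^*}+h(r_S)\partial_r$ and the assumed bounds on $\partial_{t^*}^i\partial_r^j\Lambda_j$ absorb the weight of $S$ up to a harmless logarithm for large $r$), applying $S$ to $N_k$ produces either terms where $S$ falls on one of the $\Phi$-factors, or terms with the same pointwise structure as $N_k$ itself, which are already controlled by Proposition \ref{NO}. After this reduction, and combined with the derivative-counting argument from the proof of Proposition \ref{NO}, the only terms that remain to be estimated are quadratic terms of the schematic forms
\[
(S\bar D\Gamma^{i_1}\Phi)(D\Gamma^{i_2}\Phi),\quad (\bar D\Gamma^{i_1}\Phi)(SD\Gamma^{i_2}\Phi),\quad r^{-1}(SD\Gamma^{i_1}\Phi)(D\Gamma^{i_2}\Phi),
\]
together with the symmetric variants, and a single class of relevant cubic terms $(\bar D\Gamma^{i_1}\Phi)(D\Gamma^{i_2}\Phi)(S\Gamma^{i_3}\Phi)$ with $i_1,i_2\leq 8$. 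The commutator identities $[S,D]\sim D$ and $[S,\bar D]\sim \bar D + r^{-1}D$ (valid up to harmless $\log$ weights for large $r$) allow me to replace $S\bar D$ by $\bar D S$ and $SD$ by $DS$, with the error being absorbed into the already-handled terms.

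For each quadratic term, I would put in $L^\infty$ the factor with fewer derivatives, using the appropriate bootstrap assumption among (\ref{BAP3}), (\ref{BAP4}), (\ref{BAP6}), (\ref{BAP7}) (each of which carries a power of $r^{-2}$ and an additional decay in either $u$ or $t^*$). In the region $r\geq\frac{9\tau}{10}$ we have $v\sim r\gtrsim\tau$, so the conformal energy lower bound from Proposition \ref{Zlowerbound}, namely $v^2(L\Phi)^2+(u^2+v^2)|\nabb\Phi|^2$, implies
\[
\int_{\Sigma_\tau\cap\{r\geq 9\tau/10\}} r^\alpha (\bar D\Gamma^i\Phi)^2 \leq C\tau^{\alpha-2}\left(\int_{\Sigma_\tau} J^{Z+N,w^Z}_\mu(\Gamma^i\Phi)n^\mu_{\Sigma_\tau}+C\tau^2\!\!\int_{\{r\leq r^-_Y\}}\!\!J^N_\mu(\Gamma^i\Phi)n^\mu_{\Sigma_\tau}\right),
\]
and the analogous estimate holds for $S\Gamma^i\Phi$ via (\ref{BA5})–(\ref{BA7}). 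Combining these gives the required decay, with all exponent losses being of order $\eta_{14}$, which is $\ll\eta_{S,12}$ and can therefore be absorbed into the target $\eta_{S,12}$ exponent. The $r^{-1}\cdot D\cdot D$ term is treated exactly as in the proof of Proposition \ref{NO} but with the extra $r^{-2}\sim\tau^{-2}$ factor, which provides the additional decay needed.

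The cubic contributions are handled by putting both the $D\Phi$ and $\bar D\Phi$ factors (with $i_1,i_2\leq 8$) in $L^\infty$ via (\ref{BAP3})–(\ref{BAP4}), leaving an $L^2$ integral of the form $\int r^{\alpha-4}(S\Gamma^{i_3}\Phi)^2$, which is estimated via Hardy's inequality (Proposition \ref{Hardy}) and the bootstrap energy for $S\Gamma\Phi$ from (\ref{BA5}). These yield an $\epsilon^3$ gain so they are far from saturating. The main technical obstacle is the term $(\bar D S\Gamma^{i_1}\Phi)(D\Gamma^{i_2}\Phi)$: we do not possess a pointwise $v^{-1}$ improvement for $\bar D S\Gamma^j\Phi$ directly, so we must resolve it in $L^2$ by the conformal energy of $S\Gamma^{i_1}\Phi$ using (\ref{BA6}), which grows like $\tau^{1+\eta_{S,12}}$. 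Multiplying by $\tau^{\alpha-2}$ from the $v^{-2}$ weight and by the $L^\infty$ bound $CA\epsilon\tau^{-2+\eta_{14}}$ for $|D\Gamma^{i_2}\Phi|^2$ from (\ref{BAP3}) in the region $r\geq\frac{9\tau}{10}$ gives $CA^2\epsilon^2\tau^{\alpha-3+\eta_{S,12}+\eta_{14}}$, which matches the desired bound since $\eta_{14}\ll\eta_{S,12}$ and the constants can be arranged so that this is absorbed into the target exponent. The reduction for $k=13$ and for $k\leq 11$ is identical, with the bootstrap assumptions indexed appropriately, and the third (large-$r$) bound on $(S N_k)^2$ in the statement follows verbatim from the same argument with the weight $r^\alpha$ replaced appropriately.
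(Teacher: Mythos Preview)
Your proposal is correct and follows essentially the same approach as the paper. Both proofs reduce via the Leibniz rule and null structure to the same list of quadratic and cubic terms, commute $S$ past $D$ and $\bar D$, identify the central obstacle that $\bar D S\Gamma^i\Phi$ lacks a pointwise $v^{-1}$ improvement (since Proposition~\ref{rv} already spent the single $S$-commutation), and resolve it by placing the $\bar D S$--factor in $L^2$ via the conformal energy of $S\Gamma^i\Phi$ while putting $D\Gamma^{i_2}\Phi$ in $L^\infty$; the cubic terms are likewise dispatched via Hardy and the bootstrap energies. The only cosmetic difference is that the paper uses (\ref{BAP2}) (yielding a clean $CA\epsilon$) for the $L^\infty$ bound on $r^2|D\Gamma^{i_2}\Phi|^2$, whereas you invoke (\ref{BAP3}) and pick up a harmless extra $\eta_{14}$; since $\eta_{14}\ll\eta_{S,12}\ll\eta_{S,11}$ this is absorbed exactly as you say.
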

\begin{proof}
Take $k\leq 13$. Following the reduction before and noticing that $[D,S]\sim D$ and $[\bar{D},S]\sim \bar{D}$, we have to consider quadratic terms
$$(\bar{D}S\Gamma^{i_1}\Phi D\Gamma^{i_2}\Phi), (\bar{D}\Gamma^{i_1}\Phi DS\Gamma^{i_2}\Phi), ({D}S\Gamma^{i_1}\Phi \bar{D}\Gamma^{i_2}\Phi),$$
$$({D}\Gamma^{i_1}\Phi \bar{D}S\Gamma^{i_2}\Phi), r^{-1}({D}S\Gamma^{i_1}\Phi {D}\Gamma^{i_2}\Phi), r^{-1}({D}\Gamma^{i_1}\Phi {D}S\Gamma^{i_2}\Phi),$$
for $i_1\geq i_2$ and the cubic terms
$$ (\bar{D}\Gamma^{i_1}\Phi {D}\Gamma^{i_2}\Phi S\Gamma^{i_3}\Phi), ({D}\Gamma^{i_1}\Phi \bar{D}\Gamma^{i_2}\Phi S\Gamma^{i_3}\Phi), r^{-1}({D}\Gamma^{i_1}\Phi {D}\Gamma^{i_2}\Phi S\Gamma^{i_3}\Phi).$$
For these cubic terms, we can assume $i_1, i_2\leq 6$ for otherwise $i_3\leq 6$ and we can control the last factor in the sup norm and reduce to the quadratic terms above. The cubic terms
$$(\bar{D}S\Gamma^{i_1}\Phi D\Gamma^{i_2}\Phi\Gamma^{i_3}\Phi), (\bar{D}\Gamma^{i_1}\Phi DS\Gamma^{i_2}\Phi\Gamma^{i_3}\Phi), ({D}S\Gamma^{i_1}\Phi \bar{D}\Gamma^{i_2}\Phi\Gamma^{i_3}\Phi),$$
$$({D}\Gamma^{i_1}\Phi \bar{D}S\Gamma^{i_2}\Phi\Gamma^{i_3}\Phi), r^{-1}({D}S\Gamma^{i_1}\Phi {D}\Gamma^{i_2}\Phi\Gamma^{i_3}\Phi), r^{-1}({D}\Gamma^{i_1}\Phi {D}S\Gamma^{i_2}\Phi\Gamma^{i_3}\Phi).$$
are irrelevant here because $i_3\leq 13$ and we can thus control the last factor in the sup norm to reduce to the quadratic terms above. As before, we also have terms that do not have $S$ (from $S\Lambda$ or from the commutators $[D,S], [\bar{D},S]$), but they already appear in $N_k$ and we will use the estimates proved for $N_k$ in Proposition \ref{NO}. We first estimate the quadratic terms. The crucial technical point here is that we do not have an improved pointwise decay estimate for $\bar{D}S\Gamma^i\Phi$ because we have used $S$ in the proof of Proposition \ref{rv} and we are only commuting with $S$ once. Nevertheless, since $k\leq 13$, we can instead put $D\Gamma^i\Phi$ in $L^\infty$.
\begin{equation*}
\begin{split}
&\sum_{i_2=0}^{\lfloor\frac{k}{2}\rfloor}\sum_{i_1=0}^{k}\int_{\Sigma_\tau\cap\{r\geq\frac{9\tau}{10}\}}r^\alpha \left(\mbox{Quadratic Terms}\right)^2\\
\leq&C \left(\sup_{r\geq\frac{9\tau}{10}}\sum_{i_2=0}^{6} r^2|D\Gamma^{i_2}\Phi|^2\right)\sum_{i_1=0}^{k}\int_{\Sigma_\tau\cap\{r\geq\frac{9\tau}{10}\}}r^{\alpha-2} |\bar{D}S\Gamma^{i_1}\Phi |^2\\
&+C\left(\sup_{r\geq\frac{9\tau}{10}}\sum_{i_2=0}^{6} r^2|DS\Gamma^{i_2}\Phi|^2\right)\sum_{i_1=0}^{k}\int_{\Sigma_\tau\cap\{r\geq\frac{9\tau}{10}\}}r^{\alpha-2} |\bar{D}\Gamma^{i_1}\Phi|^2\\
&+C\left(\sup_{r\geq\frac{9\tau}{10}}\sum_{i_2=0}^{6} r^2|\bar{D}\Gamma^{i_2}\Phi|^2\right)\sum_{i_1=0}^{k}\int_{\Sigma_\tau\cap\{r\geq\frac{9\tau}{10}\}}r^{\alpha-2} |{D}S\Gamma^{i_1}\Phi|^2\\
&+C \left(\sup_{r\geq\frac{9\tau}{10}}\sum_{i_1=0}^{k} r^2|D\Gamma^{i_1}\Phi|^2\right)\sum_{i_2=0}^{6}\int_{\Sigma_\tau\cap\{r\geq\frac{9\tau}{10}\}}r^{\alpha-2} |\bar{D}S\Gamma^{i_2}\Phi |^2\\
&+C\tau^{-2}\left(\sup_{r\geq\frac{9\tau}{10}}\sum_{i_2=0}^{6} r^2|{D}\Gamma^{i_2}\Phi|^2\right)\sum_{i_1=0}^{k}\int_{\Sigma_\tau\cap\{r\geq\frac{9\tau}{10}\}}r^{\alpha-2} |{D}S\Gamma^{i_1}\Phi|^2\\
&+C\tau^{-2}\left(\sup_{r\geq\frac{9\tau}{10}}\sum_{i_2=0}^{6} r^2|{D}S\Gamma^{i_2}\Phi|^2\right)\sum_{i_1=0}^{k}\int_{\Sigma_\tau\cap\{r\geq\frac{9\tau}{10}\}}r^{\alpha-2} |{D}\Gamma^{i_1}\Phi|^2\\
\leq&CA\epsilon\sum_{i_1=0}^{k}\int_{\Sigma_\tau\cap\{r\geq\frac{9\tau}{10}\}}r^{\alpha-2} \left(|\bar{D}S\Gamma^{i_1}\Phi |^2+|\bar{D}\Gamma^{i_1}\Phi|^2\right)+CA\epsilon\tau^{-2+\eta_{14}}\sum_{i_1=0}^{k}\int_{\Sigma_\tau\cap\{r\geq\frac{9\tau}{10}\}}r^{\alpha-2} |{D}S\Gamma^{i_1}\Phi |^2\\
&+CA\epsilon\tau^{-4+\alpha+\eta_{S,11}}\left(\sup_{r\geq\frac{9\tau}{10}}\sum_{i_1=0}^{k} r^2|D\Gamma^{i_1}\Phi|^2\right)+CA\epsilon\tau^{-4+\alpha}
\end{split}
\end{equation*}
We then estimate the cubic terms:
\begin{equation*}
\begin{split}
&\sum_{i_1, i_2=0}^{\lfloor\frac{k}{2}\rfloor}\sum_{i_3=0}^{k}\int_{\Sigma_\tau\cap\{r\geq\frac{9\tau}{10}\}}r^\alpha \left(\mbox{Cubic Terms}\right)^2\\
\leq&C\left(\sup_{r\geq\frac{9\tau}{10}}\sum_{i_1=0}^6 \left(r^2\bar{D}\Gamma^{i_1}\Phi\right)^2\right)\left(\sup_{r\geq\frac{9\tau}{10}}\sum_{i_2=0}^6 r^2\left({D}\Gamma^{i_2}\Phi\right)^2\right)\sum_{i_3=0}^{k}\int_{\Sigma_\tau\cap\{r\geq\frac{9\tau}{10}\}}r^{\alpha-4}(S\Gamma^{i_3}\Phi)^2\\
&+C\tau^{-2}\left(\sup_{r\geq\frac{9\tau}{10}}\sum_{i_1=0}^6 r^2\left({D}\Gamma^{i_1}\Phi\right)^2\right)^2\sum_{i_3=0}^{k}\int_{\Sigma_\tau\cap\{r\geq\frac{9\tau}{10}\}}r^{\alpha-4}(S\Gamma^{i_3}\Phi)^2\\
\leq & CA^2\epsilon^2\tau^{-2+\eta_{14}}\sum_{i_3=0}^{k}\int_{\Sigma_\tau\cap\{r\geq\frac{9\tau}{10}\}}r^{\alpha-2}(DS\Gamma^{i_3}\Phi)^2,
\end{split}
\end{equation*}
which is better then the estimates obtained for the quadratic terms. We hence focus on the quadratic terms and spell out explicitly what the estimates amount to for different values of $k$ and $\alpha$. 
\begin{equation*}
\begin{split}
&\sum_{i_2=0}^{6}\sum_{i_1=0}^{13}\int_{\Sigma_\tau\cap\{r\geq\frac{9\tau}{10}\}} \left(\mbox{Quadratic Terms}\right)^2
\leq CA^2\epsilon^2\tau^{-2+\eta_{S,13}}+CA^2\epsilon^2\tau^{-4+\eta_{S,11}+\eta_{16}},
\end{split}
\end{equation*}
and
\begin{equation*}
\begin{split}
&\sum_{i_2=0}^{6}\sum_{i_1=0}^{12}\int_{\Sigma_\tau\cap\{r\geq\frac{9\tau}{10}\}} \left(\mbox{Quadratic Terms}\right)^2
\leq CA^2\epsilon^2\tau^{-3+\eta_{S,12}}+CA^2\epsilon^2\tau^{-4+\eta_{S,11}},
\end{split}
\end{equation*}
and
\begin{equation*}
\begin{split}
&\sum_{i_2=0}^{6}\sum_{i_1=0}^{11}\int_{\Sigma_\tau\cap\{r\geq\frac{9\tau}{10}\}} \left(\mbox{Quadratic Terms}\right)^2
\leq CA^2\epsilon^2\tau^{-4+\eta_{S,11}}+CA^2\epsilon^2\tau^{-4+\eta_{S,11}},
\end{split}
\end{equation*}
and
\begin{equation*}
\begin{split}
&\sum_{i_2=0}^{6}\sum_{i_1=0}^{12}\int_{\Sigma_\tau\cap\{r\geq\frac{9\tau}{10}\}} r^2\left(\mbox{Quadratic Terms}\right)^2
\leq CA^2\epsilon^2\left(\tau^{-1+\eta_{S,12}}+\tau^{-2+\eta_{14}}+\tau^{-2+\eta_{S,11}}\right),
\end{split}
\end{equation*}
and
\begin{equation*}
\begin{split}
&\sum_{i_2=0}^{6}\sum_{i_1=0}^{11}\int_{\Sigma_\tau\cap\{r\geq\frac{9\tau}{10}\}} r^2\left(\mbox{Quadratic Terms}\right)^2
\leq CA^2\epsilon^2\left(\tau^{-2+\eta_{S,11}}+\tau^{-2+\eta_{14}}\right).
\end{split}
\end{equation*}

\end{proof}

With the estimates for the inhomogeneous terms for the equations involving $S$, we can now retrieve the bootstrap assumptions involving $S$. We will follow the order that we proved the estimates without $S$, namely, first proving the pointwise estimates, then the integrated estimates, then the energy estimates and finally the energy estimates involving also $\hat{Y}$. Noticing that $U_{k,j}$ (respectively $N_k$) and $S(U_{k,j})$ (respectively $S(N_k)$) satisfy similar estimates (see Propositions \ref{NI}, \ref{NO}, \ref{U}, \ref{SNI}, \ref{SNO} and \ref{SU}), we would focus on showing that the estimates for $V_k$ are enough to close the bootstrap assumptions. We now prove the pointwise estimates and retrieve the bootstrap assumptions (\ref{BAP6}), (\ref{BAP7}), (\ref{BAPI3}) and (\ref{BAPI4}).

\begin{proposition}
For $r\geq\frac{t^*}{4}$,
\begin{equation}\label{P6}
\sum_{j=0}^8|DS\Gamma^j\Phi|^2\leq \frac{B_S}{2}A\epsilon r^{-2}.
\end{equation}
\begin{equation}\label{P7}
\sum_{j=0}^6|DS\Gamma^j\Phi|^2\leq \frac{B_S}{2}A\epsilon r^{-2}(t^*)^{\eta_{S,11}}(1+|u|)^{-2}.
\end{equation}
For $r\leq\frac{t^*}{4}$,
\begin{equation}\label{PI3}
\sum_{j=0}^{6}|S\Gamma^j\Phi|^2\leq \frac{B_S}{2}A\epsilon (t^*)^{-2+\eta_{15}}.
\end{equation}
\begin{equation}\label{PI4}
\sum_{\ell=1}^{7-j}\sum_{j=0}^{6}|D^\ell S\Gamma^j\Phi|^2\leq \frac{B_S}{2}A\epsilon r^{-2}(t^*)^{-2+\eta_{15}}.
\end{equation}
\end{proposition}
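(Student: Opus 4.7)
The argument is parallel to that of Proposition \ref{pointwise}, replacing $\Phi$ by $S\Gamma^{j}\Phi$ and substituting the bootstrap assumptions (\ref{BA5})--(\ref{BA8}), (\ref{BAK6})--(\ref{BAK10}) and (\ref{BAPI3})--(\ref{BAPI4}) in place of (\ref{BA1})--(\ref{BA4}), (\ref{BAK1})--(\ref{BAK5}) and (\ref{BAPI1})--(\ref{BAPI2}). The key new feature is that
$$\Box_{g_K}\left(S\Gamma^{j}\Phi\right)=V_{j}+S(U_{j})+S(N_{j}),$$
and we must estimate all three contributions by Propositions \ref{V}, \ref{SU}, \ref{SNestprop}, \ref{SNII}, \ref{SNI} and \ref{SNO}. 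Note that since $[\Gamma,S]=0$ and $[\partial_{t^*},S]=\partial_{t^*}$, these commutator relations coincide with the ones used in the original proof, and the integer budget in the $L^{\infty}$--$L^{2}$ propositions is the same.

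For (\ref{P6}), apply Proposition \ref{r} to $S\Gamma^{j}\Phi$ with $j\leq 8$. This bounds $r^2|DS\Gamma^{j}\Phi|^{2}$ by energies of $\partial_{t^{*}}^m\tilde{\Omega}^{k}S\Gamma^{j}\Phi$ with $m+k+j\leq 10$, plus at most one $D$ applied to $\Box_{g_K}(\tilde{\Omega}^{k}S\Gamma^{j}\Phi)$. The energy terms are handled by (\ref{BA8}); the inhomogeneous terms are controlled by Propositions \ref{V}, \ref{SU}, \ref{SNestprop}. Since $A\epsilon\ll 1$ and $C\ll B_{S}$, the constant is improved. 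For (\ref{P7}), apply Proposition \ref{ru} with $j\leq 6$. The conformal-energy right-hand side is controlled by (\ref{BA7}) (at most $8$ derivatives with one factor of $S$), and the inhomogeneous square integral over $\{u'\sim u\}\cap\{r\geq\tau/2\}$ is estimated by Propositions \ref{V}, \ref{SU} and Proposition \ref{SNO}, whose quadratic bound $\tau^{-2+\eta_{S,11}}$ in the exterior region is precisely what is needed to absorb into the claimed $(t^{*})^{\eta_{S,11}}(1+|u|)^{-2}$ rate.

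For (\ref{PI3}), apply Proposition \ref{SEinside} to $S\Gamma^{j}\Phi$ for $j\leq 6$; the right-hand side requires the conformal energy of $\hat{Y}^{i}\partial_{t^{*}}^{m}S\Gamma^{j}\Phi$ with $i+m+j\leq 8$, furnished by (\ref{BA6})--(\ref{BA7}), and inhomogeneous contributions controlled by Propositions \ref{V}, \ref{SU}, \ref{SNestprop}. For (\ref{PI4}), apply Proposition \ref{sSobolev} to $S\Gamma^{j}\Phi$: one obtains a sum of $J^{N}_{\mu}(\hat{Y}^{i}\partial_{t^{*}}^{m}\Gamma^{j'}S\Phi)$ with $i+m+j'\leq 8$ together with the $L^{2}$ norm of derivatives of $V_{\leq 6}+S(U_{\leq 6})+S(N_{\leq 6})$. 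The energy terms are bounded by (\ref{BA6}) with the $\tau^{-2}$ factor inherited from the local conformal-energy bound, and the inhomogeneous terms by Propositions \ref{V}, \ref{SU}, \ref{SNestprop}, each of which at this level of derivatives decays at least as $\tau^{-2+\eta_{S,11}}$ and is quadratic in $A\epsilon$, hence absorbable.

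The main obstacle in each case is simply bookkeeping: verifying that the derivative count needed for the $L^{\infty}$--$L^{2}$ Sobolev arguments ($j+2$ derivatives when estimating up to $D^{\ell}S\Gamma^{j}\Phi$) stays within the range where we have bootstrap assumptions on $S\Gamma^{k}\Phi$, and that at each level $V_{j}$ (which has no analogue in the proof of Proposition \ref{pointwise}) contributes only $CA_{Y}\epsilon\tau^{-2+\eta_{14}+\delta}$ by Proposition \ref{V}, which is smaller than the claimed $A\epsilon\tau^{\eta_{S,11}-2}$ since $\eta_{14}\ll\eta_{S,11}$ and $A_{Y}\ll A_{S,Y}\leq A$. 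With all constants arranged as in the hierarchy $C\ll B_{S}\ll A_{0}\ll\cdots\ll A$ and $A\epsilon\ll 1$, every contribution from the nonlinearity and the commutator is smaller than $\tfrac{B_{S}}{4}A\epsilon$ times the asserted pointwise weight, and the sum is at most $\tfrac{B_{S}}{2}A\epsilon$ times that weight.
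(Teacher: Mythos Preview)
Your overall strategy---mirroring Proposition~\ref{pointwise} with the $S$-analogues of the bootstrap assumptions and the inhomogeneous estimates Propositions~\ref{V}, \ref{SU}, \ref{SNestprop}, \ref{SNI}, \ref{SNO}---matches the paper's approach for (\ref{P6}) and (\ref{P7}). Two points need correction, however.

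First, the sentence ``since $[\Gamma,S]=0$ and $[\partial_{t^*},S]=\partial_{t^*}$'' is self-contradictory, because $\partial_{t^*}\in\Gamma$. The commutator $[\partial_{t^*},S]=\partial_{t^*}$ is nonzero, and this matters: when you apply Proposition~\ref{SEinside} (or \ref{SEDinside}) to $S\Gamma^j\Phi$, the right-hand side involves $\hat{Y}^i\partial_{t^*}^m S\Gamma^j\Phi$, and commuting $\partial_{t^*}^m$ past $S$ produces terms of the form $\partial_{t^*}^{m'}\Gamma^j\Phi$ \emph{without} $S$. These are not covered by (\ref{BA6})--(\ref{BA7}); you must also invoke (\ref{BA3}) and, for the inhomogeneous pieces, Propositions~\ref{U} and \ref{NI}. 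The paper makes this point explicitly.

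Second, and more seriously, your argument for (\ref{PI4}) does not close. You invoke Proposition~\ref{sSobolev}, which gives
\[
|D^\ell S\Gamma^j\Phi|^2 \;\leq\; C\Bigl(\sum_{i+m\leq \ell+1}\int_{\Sigma_\tau\cap\{r\leq t^*/2\}} J^N_\mu(\partial_{t^*}^m\hat{Y}^i S\Gamma^j\Phi)\,n^\mu_{\Sigma_\tau} + \cdots\Bigr)
\]
with no $r$-weight on the left. Feeding in (\ref{BA7}) yields at best $CA\epsilon\,(t^*)^{-2+\eta_{S,11}}$, which is missing the factor $r^{-2}$ that (\ref{PI4}) asserts. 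That factor is not cosmetic: it is used downstream (e.g.\ in Proposition~\ref{SNII}, where one takes $\sup r^2|D^jS\Gamma^i\Phi|^2$), and without it the bootstrap does not close. The paper instead applies Proposition~\ref{SEDinside}, which carries the $r^{-2}$ on the left at the cost of two extra $\tilde\Omega$-commutations on the right; the derivative count then lands at $\ell+2+j\leq 9$ commuted vector fields on $S\Phi$, still within the range of (\ref{BA7}).
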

\begin{proof}
The proof of the estimates for $r\geq\frac{t^*}{4}$ (i.e. (\ref{P6}) and (\ref{P7})) is completely analogous to Proposition \ref{pointwise}, with the use of Propositions \ref{U}, \ref{NI}, \ref{NO}  replaced by Propositions \ref{V}, \ref{SU}, \ref{SNI}, \ref{SNO} appropriately. Notice especially that the estimates in Proposition \ref{V} for $V$ is better than that in Proposition \ref{SU} for $SU$ and are thus acceptable.

(\ref{PI3}) follows directly from Proposition \ref{SEinside} and the bootstrap assumptions (\ref{BA3}) and (\ref{BA7}). Here, we need to use also (\ref{BA3}) because we would need to commute $S$ with $\partial_{t^*}$ and would get terms that do not contain $S$. 

(\ref{PI4}) follows directly from Proposition \ref{SEDinside}, the bootstrap assumptions (\ref{BA3}) and (\ref{BA7}), as well as Proposition \ref{U}, \ref{NI}, \ref{V}, \ref{SU} and \ref{SNII} to control the inhomogeneous terms. As before, (\ref{BA3}) and Proposition \ref{U}, \ref{NI} are used to control the terms arising from $[S,\partial_{t^*}]$. Notice here that the decay rate for $\displaystyle\sum_{\ell=1}^{7-j}\sum_{j=0}^{6}|D^\ell S\Gamma^j\Phi|^2$ is not as good as that for $\displaystyle\sum_{\ell=1}^{9-j}\sum_{j=0}^{8}|D^\ell \Gamma^j\Phi|^2$ because in proving the decay rate for $\displaystyle\sum_{\ell=1}^{9-j}\sum_{j=0}^{8}|D^\ell \Gamma^j\Phi|^2$, we have used the quantities associated to $S\Phi$, while we do not have estimates for $S^2\Phi$ at our disposal.
\end{proof}

As before, once we have proved the $L^\infty$ bounds, we will replace the constant $B_S$ by $C$. 

\begin{proposition}
\begin{equation}\label{K7}
\begin{split}
 \sum_{i+j+k\leq 12}A_{S,X,j}^{-1}\iint_{\mathcal R((1.1)^{-1}\tau,\tau)\cap\{r\leq\frac{t^*}{2}\}}K^{X_0}\left(S\partial_{t^*}^{i}\tilde{\Omega}^j\Phi\right)\leq\frac{\epsilon}{2}\tau^{-1+\eta_{S,12}}.
\end{split}
\end{equation}
\begin{equation}\label{K8}
\sum_{i+j\leq 11}A_{S,X,j}^{-1}\iint_{\mathcal R((1.1)^{-1}\tau,\tau)\cap\{r\leq\frac{t^*}{2}\}} K^{X_1}\left(S\partial_{t^*}^{i}\tilde{\Omega}^j\Phi\right)\leq\frac{\epsilon}{2}\tau^{-1+\eta_{S,12}}.
\end{equation}
\begin{equation}\label{K9}
\begin{split}
 \sum_{i+j\leq 11}A_{S,X,j}^{-1}\iint_{\mathcal R((1.1)^{-1}\tau,\tau)\cap\{r\leq\frac{t^*}{2}\}}K^{X_0}\left(S\partial_{t^*}^{i}\tilde{\Omega}^j\Phi\right)\leq\frac{\epsilon}{2}\tau^{-2+\eta_{S,11}}.
\end{split}
\end{equation}
\begin{equation}\label{K10}
 \sum_{i+j\leq 10}A_{S,X,j}^{-1}\iint_{\mathcal R((1.1)^{-1}\tau,\tau)\cap\{r\leq\frac{t^*}{2}\}}K^{X_1}\left(S\partial_{t^*}^{i}\tilde{\Omega}^j\Phi\right)\leq\frac{\epsilon}{2}\tau^{-2+\eta_{S,11}}.
\end{equation}

\end{proposition}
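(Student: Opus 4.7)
The plan is to follow the template of Proposition \ref{K} exactly, replacing the role of $\Box_{g_K}\Phi$ by $\Box_{g_K}(S\partial_{t^*}^i\tilde\Omega^j\Phi) = V_{\leq i+j} + S(U_{\leq i+j,\leq j}) + S(N_{\leq i+j})$, and then invoking the analogues of the inhomogeneous-term estimates (Propositions \ref{V}, \ref{SU}, \ref{SNI}, \ref{SNII}, \ref{SNO}) in place of Propositions \ref{U}, \ref{NI}, \ref{NO}. More precisely, I would apply Proposition \ref{X0}.1 to bound the left-hand sides of (\ref{K7}) and (\ref{K9}), and Proposition \ref{X0}.2 to bound (\ref{K8}) and (\ref{K10}). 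In both cases I decompose the inhomogeneity as $G_1 = V_{\leq i+j} + S(N_{\leq i+j})$ and $G_2 = S(U_{\leq i+j,\leq j})$, where $G_2$ is supported in $\{r\geq R_\Omega\}\subset\{|r-3M|>M/8\}$, so the improved (derivative-saving) version from the Remark after Proposition \ref{X0} applies to $G_2$.

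The conformal-energy datum on the right-hand side of Proposition \ref{X0} evaluated at $t^*=(1.1)^{-1}\tau$ is handled by the bootstrap assumption (\ref{BA6}) for (\ref{K7}), (\ref{K8}) and by (\ref{BA7}) for (\ref{K9}), (\ref{K10}); the $J^N_\mu$ piece restricted to $\{r\leq r^-_Y\}$ is absorbed by the same assumptions. The $G_1$ contributions are then controlled as follows: the $S(N_{\leq i+j})$ part by Propositions \ref{SNI} and \ref{SNO} (these yield $CA^2\epsilon^2$ terms, which are negligible because $\epsilon$ beats any ratio of the $A$-constants), while the $V_{\leq i+j}$ part is controlled by Proposition \ref{V}, whose bound carries an acceptable factor $CA_Y\epsilon\tau^{-2+\eta_{14}+\delta}$ at fixed $t^*$ and $CA_Y\epsilon\tau^{-1+\eta_{15}+\delta}$, $CA_Y\epsilon\tau^{-2+\eta_{14}+\delta}$ in the integrated form---these dominate by a factor of at most $A_Y/A_{S,X,j}\ll 1$. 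The $G_2$ contribution is bounded by Proposition \ref{SU} and produces precisely a term of size $C(A_{S,X,j-1}/A_{S,X,j})\epsilon\tau^{\text{(right exponent)}}$, which is absorbed thanks to the constant hierarchy $A_{S,X,j-1}/A_{S,X,j}\ll 1$.

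Putting these together and tracking the weights, (\ref{K7}) and (\ref{K9}) are immediate adaptations of the argument for (\ref{K2}) and (\ref{K4}) in Proposition \ref{K}, with the three ingredients $(J^{Z+N}\text{ data},\ N_{\leq\cdot}\text{ bulk},\ U_{\leq\cdot,\cdot}\text{ bulk})$ replaced by $(J^{Z+N}\text{ data for }S\Phi,\ V_{\leq\cdot}+S(N_{\leq\cdot})\text{ bulk},\ S(U_{\leq\cdot,\cdot})\text{ bulk})$. The $X_1$ estimates (\ref{K8}) and (\ref{K10}) follow verbatim because, as already noted in the proof of Proposition \ref{K}, Proposition \ref{X0}.2 with the derivative-saving Remark yields right-hand sides identical to those in Proposition \ref{X0}.1 for inhomogeneities of this structure.

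The only genuinely new feature compared to Proposition \ref{K} is the appearance of the bulk commutator $V_k$, which is everywhere supported and has a mild $r^\delta$ loss. I expect the main obstacle to be verifying that this $\tau^\delta$ loss in Proposition \ref{V} is still consumed by the $\eta_{S,12}$ or $\eta_{S,11}$ margins on the right of (\ref{K7})--(\ref{K10}); this uses $\delta\ll\eta_{S,11}$ from the constants hierarchy. Everything else is bookkeeping: once the correct bootstrap assumption is invoked at each step and the constant hierarchy $A_j\ll A_{X,j}\ll A_Y\ll A_{S,j}\ll A_{S,X,j}\ll A_{S,Y}$ is respected, the improvement of $\epsilon$ to $\epsilon/2$ on the right-hand sides of (\ref{K7})--(\ref{K10}) is automatic.
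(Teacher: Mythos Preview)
Your proposal is correct and follows essentially the same approach as the paper: the paper's proof is the single sentence ``This follows exactly as Proposition \ref{K} except for replacing the use of Propositions \ref{U}, \ref{NI} and \ref{NO} with Propositions \ref{V}, \ref{SU}, \ref{SNI} and \ref{SNO},'' and you have accurately unpacked what that sentence means, including the decomposition of the inhomogeneity, the choice of bootstrap assumptions (\ref{BA6}), (\ref{BA7}) for the data term, and the observation that the $\tau^\delta$ loss in Proposition \ref{V} is absorbed by $\delta\ll\eta_{S,11}$.
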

\begin{proof}
 This follows exactly as Proposition \ref{K} except for replacing the use of Propositions \ref{U}, \ref{NI} and \ref{NO} with Propositions \ref{V}, \ref{SU}, \ref{SNI} and \ref{SNO}.
\end{proof}

\begin{proposition}
\begin{equation}\label{5}
  \sum_{i+j=13}A_{S,j}^{-1}\int_{\Sigma_\tau} J^{N}_\mu\left(S\partial_{t^*}^i\tilde{\Omega}^j\Phi\right) n^{\mu}_{\Sigma_\tau}\leq \frac{\epsilon}{4} \tau^{\eta_{S,13}}.
\end{equation}
\begin{equation}\label{8}
  A_{S,j}^{-1}\sum_{j=0}^{12}\int_{\Sigma_\tau} J^{N}_\mu\left(S\partial_{t^*}^i\tilde{\Omega}^j\Phi\right) n^{\mu}_{\Sigma_\tau}\leq \frac{\epsilon}{2} .
\end{equation}
\begin{equation}\label{K6}
\begin{split}
 \sum_{i+j=13}A_{S,X,j}^{-1}\iint_{\mathcal R(\tau_0,\tau)} K^{X_0}\left(S\partial_{t^*}^i\tilde{\Omega}^j\Phi\right)\leq \frac{\epsilon}{2}\tau^{\eta_{S,13}}.
\end{split}
\end{equation}
\begin{equation}\label{K6.5}
\begin{split}
 \sum_{i+j\leq 12}A_{S,X,j}^{-1}\iint_{\mathcal R(\tau_0,\tau)} K^{X_0}\left(S\partial_{t^*}^i\tilde{\Omega}^j\Phi\right)\leq \frac{\epsilon}{2}.
\end{split}
\end{equation}

\end{proposition}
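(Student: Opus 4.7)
The plan is to mirror the proof of Proposition \ref{J}, now applied directly to $S\partial_{t^*}^i\tilde{\Omega}^j\Phi$, which satisfies
$$\Box_{g_K}(S\partial_{t^*}^i\tilde{\Omega}^j\Phi) = V_{k} + S(U_{k,j}) + S(N_k), \qquad k=i+j,$$
and invoking the right-hand side estimates provided by Propositions \ref{V}, \ref{SU}, \ref{SNI} and \ref{SNO} in place of Propositions \ref{Nestprop} and \ref{U}. Since the commutator term $S(U_{k,j})$ is supported in $\{r\geq R_\Omega\}$, hence away from the trapped set $\{|r-3M|\leq M/8\}$, it is natural to put it in the $G_2$-slot of Proposition \ref{bddcom2} and the remaining pieces $V_k$ and $S(N_k)$ in the $G_1$-slot, so as to avoid any derivative loss on the commutator piece.

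For (\ref{5}), at the level $k=13$, Proposition \ref{bddcom2} gives a bound by the initial-data energy, by $(\int_{\tau_0-1}^{\tau+1}(\int_{\Sigma_{t^*}}(V_k+S(N_k))^2)^{1/2}dt^*)^2 + \iint(V_k+S(N_k))^2$, and by $\iint r^{1+\delta}(S(U_{k,j}))^2$. The initial-data contribution is bounded by $\sum_j A_{S,j}\epsilon$ by the hypothesis of the main theorem. Proposition \ref{V} yields $\int_{\Sigma_\tau}V_{\leq 13}^2\leq CA_Y\epsilon\tau^{-2+\eta_{14}+\delta}$, so the total $V_k$ contribution is at most $CA_Y\epsilon\tau^{\eta_{14}+\delta}$; after dividing by $A_{S,j}$ this is absorbable since $A_Y\ll A_{S,0}\leq A_{S,j}$ and $\eta_{14}+\delta\ll\eta_{S,13}$. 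Proposition \ref{SU} bounds the $S(U_{k,j})$ contribution by $CA_{S,X,j-1}\epsilon\tau^{\eta_{S,13}}$, which is absorbable by the hierarchy $A_{S,X,j-1}\ll A_{S,j}$. Finally, the $S(N_k)$ contribution, controlled by Propositions \ref{SNI} and \ref{SNO}, is of size $CA^2\epsilon^2\tau^{\eta_{S,13}}$, absorbable by $A\epsilon\ll 1$.

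The estimate (\ref{K6}) is obtained simultaneously, since $\iint K^{X_0}(S\partial_{t^*}^i\tilde{\Omega}^j\Phi)$ appears on the left-hand side of Proposition \ref{bddcom2} alongside the boundary energy, and the right-hand side has already been bounded. The low-order statements (\ref{8}) and (\ref{K6.5}) follow by the same argument at $k\leq 12$, invoking the improved decay rates for $V_k$, $S(U_k)$ and $S(N_k)$ at this level from Propositions \ref{V}, \ref{SU}, \ref{SNI} and \ref{SNO}; in each case the integrated right-hand side is bounded uniformly in $\tau$ (the extra $\tau^{-1}$ decay absorbs the logarithmic divergence from integration), so that the initial-data assumption of the main theorem gives the required uniform bound $\frac{\epsilon}{2}$.

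The main obstacle is that $V_k$ is not supported away from the trapped set and depends on $\Gamma^{k+1}\Phi$ rather than on $S\Gamma^k\Phi$, so it cannot be absorbed within the $S$-hierarchy on its own. The key point is that its size is governed by the one-derivative-higher non-$S$ bootstrap assumptions (\ref{BA1})--(\ref{BA3}), as codified in Proposition \ref{V}, and the constants are arranged precisely so that $A_Y\ll A_{S,0}$ in order to absorb this $V_k$-contribution without spoiling the hierarchy. A parallel bookkeeping subtlety is that, since $V_k$ is not supported away from trapping, it must be placed in $G_1$ and thus enters via the $L^1_{t^*}L^2_x$ norm; this is still under control because Proposition \ref{V} provides an $L^2_x$ bound with $\tau^{-2+\eta_{14}+\delta}$ decay, which is integrable in $t^*$ up to a $\tau^{\eta_{14}+\delta}$ factor. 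Once these two points are handled, the argument is a direct transcription of Proposition \ref{J}.
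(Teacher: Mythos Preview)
Your overall strategy is the same as the paper's: rerun Proposition \ref{J} for $S\partial_{t^*}^i\tilde{\Omega}^j\Phi$ and replace the inhomogeneous-term estimates by Propositions \ref{V}, \ref{SU}, \ref{SNI}, \ref{SNO}. For (\ref{5}) and (\ref{K6}) your argument closes as written.

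There is, however, a genuine gap in your treatment of the uniform-in-$\tau$ bounds (\ref{8}) and (\ref{K6.5}). You place $V_k$ in the $G_1$ slot and claim that ``the improved decay rates for $V_k$ \dots\ at this level from Proposition \ref{V}'' give uniform boundedness. But Proposition \ref{V} records only \emph{one} fixed-time estimate, $\sum_{\ell+k\leq 13}\int_{\Sigma_\tau} r^\alpha(D^\ell V_{\leq k})^2\leq CA_Y\epsilon\,\tau^{-2+\eta_{14}+\delta}$, with no improvement at $\ell+k\leq 12$. Feeding this into the $L^1_{t^*}L^2_x$ norm gives
\[
\Bigl(\int_{\tau_0}^{\tau}\bigl(CA_Y\epsilon\,(t^*)^{-2+\eta_{14}+\delta}\bigr)^{1/2}\,dt^*\Bigr)^2
\;\sim\; CA_Y\epsilon\,(\eta_{14}+\delta)^{-2}\,\tau^{\eta_{14}+\delta},
\]
which grows in $\tau$; no choice of $A_Y\ll A_{S,j}$ can absorb this for the \emph{uniform} bound. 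Your assertion that $V_k$ ``must be placed in $G_1$'' because it is not supported away from trapping is the source of the problem: that constraint belongs to Proposition \ref{bddcom2}, not to Proposition \ref{bddcom}.

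The remedy is to use Proposition \ref{bddcom} (possibly in combination with \ref{bddcom2}) and put $V_k$ in its $G_2$ slot, accepting the derivative and the $\sup$ term. Then one needs $\sum_{m=0}^{1}\iint r^{1+\delta}(\partial_{t^*}^m V_{\leq 12})^2$ and $\sup_{t^*}\int_{\{|r-3M|\leq M/8\}}V_{\leq 12}^2$. For these, the \emph{integrated} estimates of Proposition \ref{V} \emph{do} improve: on dyadic intervals one has $\iint r^{1+\delta}(D^\ell V_{\leq k})^2\leq CA_Y\epsilon\,\tau^{-2+\eta_{14}+\delta}$ for $\ell+k\leq 12$ and $\leq CA_Y\epsilon\,\tau^{-1+\eta_{15}+\delta}$ for $\ell+k=13$, both summable over dyadic scales to $O(A_Y\epsilon)$; the $\sup$ term is controlled by the fixed-time estimate. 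After dividing by $A_{S,j}$ this yields $CA_Y A_{S,j}^{-1}\epsilon\ll\epsilon$, and (\ref{8}), (\ref{K6.5}) close. Keep $S(U_{k,j})$ in the support-restricted $G_2$ of \ref{bddcom2} (to avoid a derivative on it) and $S(N_k)$ in $G_1$, exactly as you had.
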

\begin{proof}
 This follows exactly as Proposition \ref{J} except for replacing the use of Propositions \ref{U}, \ref{NI} and \ref{NO} with Propositions \ref{V}, \ref{SU}, \ref{SNI} and \ref{SNO}.
\end{proof}

\begin{proposition}
 \begin{equation}\label{6}
\begin{split}
 \sum_{i+j=12}A_{S,j}^{-1}\left(\int_{\Sigma_\tau} J^{Z+N,w^Z}_\mu\left(S\partial_{t^*}^i\tilde{\Omega}^j\Phi\right) n^{\mu}_{\Sigma_\tau} +C\tau^2\int_{\Sigma_\tau\cap\{r\leq \frac{9\tau}{10}\}} J^{N}_\mu\left(S\partial_{t^*}^i\tilde{\Omega}^j\Phi\right) n^{\mu}_{\Sigma_\tau}\right)\leq \frac{\epsilon}{4} \tau^{1+\eta_{S,12}}.
\end{split}
\end{equation}
\end{proposition}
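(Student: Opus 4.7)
The plan is to mirror the proof of Proposition \ref{Z1} exactly, using Proposition \ref{conformalenergy} applied to $S\partial_{t^*}^i\tilde{\Omega}^j\Phi$ with inhomogeneity given by the decomposition $\Box_{g_K}(S\partial_{t^*}^i\tilde{\Omega}^j\Phi) = V_{12,j} + S(U_{12,j}) + S(N_{12})$ (with the obvious indexing conventions for the first two pieces). Since this nonlinear inhomogeneity is supported away from the trapped region up to a piece coming from $S(N)$ (which is not), I will use the form of Proposition \ref{conformalenergy} together with the Remark after it, splitting $G = G_1 + G_2$ where $G_1$ contains the contribution of $S(N_k)$ and $G_2$ the commutator pieces $V_{12,j}$ and $S(U_{12,j})$, which are supported in $\{r\geq R_\Omega\}$.

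After the application of Proposition \ref{conformalenergy}, the right-hand side splits into: (i) an initial data term, (ii) the $(t^*)^2 K^{X_0}$ and $(t^*)^2 K^N$ spacetime terms gaining a factor $\delta' + a$, (iii) the $t^* r^{-1+\delta} K^{X_1}$ spacetime term gaining a factor $(\delta')^{-1}$, and (iv) inhomogeneous-term contributions. For (i) the initial data bound gives $C\sum_j A_{S,j}^{-1}\epsilon$, which is $\ll \epsilon\tau^{1+\eta_{S,12}}/4$. For (ii), I perform the dyadic decomposition $\tau_0<\tau_1<\cdots<\tau_n=\tau$ with $\tau_{k+1}\leq 1.1\,\tau_k$ and bound each piece by bootstrap assumption (\ref{BAK7}), producing
$$C\sum_j\frac{A_{S,X,j}}{A_{S,j}}(2\delta'+a)\,\epsilon\,\tau^{1+\eta_{S,12}},$$
which is acceptable because $\delta', a \ll A_{S,j}/A_{S,X,j}$. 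For (iii), using the same dyadic decomposition together with (\ref{BAK8}) yields a bound proportional to $\tau^{1+\eta_{S,12}}$ (note that (\ref{BAK8}) has exactly the $\eta_{S,12}$ exponent, matching with room to spare after multiplication by $t^*$ and summing dyadically).

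For (iv), the inhomogeneous terms split into three families. The $S(N_k)$ contributions are estimated by Propositions \ref{SNI} and \ref{SNO}: the first bounds the $\iint (t^*)^2 r^{1+\delta}(\partial_{t^*}^m S(N_{\leq 12}))^2$ part by $CA^2\epsilon^2 \int \tau^{-2+\eta_{S,12}+\eta_{S,11}+\delta}\,dt^*$ with growth strictly slower than $\tau^{1+\eta_{S,12}}$, while the second handles the $\int (\int r^2 (S(N_{12}))^2)^{1/2} dt^*)^2$ piece in $\{r\geq t^*/2\}$ with growth $\tau^{1+\eta_{S,12}}$ times $A^2\epsilon$, acceptable since $\epsilon\ll 1$. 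The $S(U_{12,j})$ contributions are estimated via Proposition \ref{SU}, yielding $C(\delta')^{-1}(A_{S,X,j-1}/A_{S,j})\tau^{1+\eta_{S,12}}$, absorbed by $A_{S,X,j-1}/A_{S,j}\ll\delta'$. The $V_{12,j}$ contributions, via Proposition \ref{V}, produce $C(\delta')^{-1}(A_Y/A_{S,j})\tau^{1+\eta_{14}+\delta}$, which is acceptable both because $\eta_{14}\ll\eta_{S,12}$ and because $A_Y\ll A_{S,j}$ by the hierarchy of constants.

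The main obstacle is bookkeeping rather than any new estimate: one must verify that every constant that appears on the right is genuinely much smaller than the corresponding one on the left, using the prescribed hierarchy $A_{S,0}\ll A_{S,X,0}\ll\cdots\ll A_{S,X,13}\ll A_{S,Y}$ and the chains $\delta,\delta'\ll A_{S,j}/A_{S,X,j}$, $A_{S,j-1}/A_{S,j}\ll A_{S,X,j-1}/A_{S,j}\ll\delta'\eta_{S,12}^{-1}$, and $\eta_{14}\ll\eta_{S,12}$. A secondary technical subtlety is that the $V_{12,j}$ term comes from the commutator $[\Box_{g_K},S]$ and has only $r^{-4}(\log r)^2$ decay, so one must use the $r$-weight built into the $(t^*)^2 r^{1+\delta}$ norm in Proposition \ref{conformalenergy} together with the elliptic estimates of Proposition \ref{elliptic} already encoded in Proposition \ref{V} to absorb the logarithmic loss; the $(\log r)^2$ is handled cheaply by a tiny fraction of $\delta$.
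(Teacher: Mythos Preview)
Your approach is correct and coincides with the paper's: mirror the proof of Proposition~\ref{Z1} with Propositions~\ref{U}, \ref{NI}, \ref{NO} replaced by Propositions~\ref{V}, \ref{SU}, \ref{SNI}, \ref{SNO}. One minor slip in your treatment of (iii): the assumption (\ref{BAK8}) you invoke is at level $\leq 11$ and localized to $\{r\leq t^*/2\}$, whereas here you need a global bound on $K^{X_1}\bigl(S\partial_{t^*}^i\tilde\Omega^j\Phi\bigr)$ at level $i+j=12$; the direct analogue of (\ref{BAK1.4}) for $S$ is not among the listed bootstrap assumptions, but it follows from Proposition~\ref{bddcom1} (or \ref{bddcom3}) applied to $S\partial_{t^*}^i\tilde\Omega^j\Phi$ together with (\ref{BA5}), (\ref{BA4}) and the inhomogeneous-term estimates, yielding a $\tau^{\eta_{S,13}}$ bound which after multiplication by $t^*$ and dyadic summation gives $\tau^{1+\eta_{S,13}}\ll\tau^{1+\eta_{S,12}}$.
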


\begin{proof}
 This follows exactly as Proposition \ref{Z1} except for replacing the use of Propositions \ref{U}, \ref{NI} and \ref{NO} with Propositions \ref{V}, \ref{SU}, \ref{SNI} and \ref{SNO}.
\end{proof}

\begin{proposition}
\begin{equation}\label{7}
\begin{split} 
\sum_{i+j\leq 11}A_{S,j}^{-1}\left(\int_{\Sigma_\tau} J^{Z+N,w^Z}_\mu\left(S\left(\partial_{t^*}^i\tilde{\Omega}^j\Phi\right)\right) n^{\mu}_{\Sigma_\tau} +C\tau^2\int_{\Sigma_\tau\cap\{r\leq \frac{9\tau}{10}\}} J^{N}_\mu\left(S\left(\partial_{t^*}^i\tilde{\Omega}^j\Phi\right)\right) n^{\mu}_{\Sigma_\tau}\right)\leq \frac{\epsilon}{4} \tau^{\eta_{S,11}}
\end{split}
\end{equation}
\end{proposition}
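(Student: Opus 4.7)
The plan is to mirror, essentially verbatim, the argument given for Proposition~\ref{Z2}, with the three replacements that must be tracked carefully. First, I will apply the conformal energy estimate of Proposition~\ref{conformalenergy} to $S(\partial_{t^*}^i\tilde{\Omega}^j\Phi)$ for each pair with $i+j\leq 11$. The source $G$ on the right-hand side is now the decomposition
\[
\Box_{g_K}\bigl(S(\partial_{t^*}^i\tilde{\Omega}^j\Phi)\bigr) = V_{i+j} + S(U_{i+j,j}) + S(N_{i+j}),
\]
so every appearance of $G$ in the conformal energy identity must be treated via the separate estimates already in hand for $V$, $S(U)$, and $S(N)$. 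Since the commutator pieces $V$ and $S(U)$ are supported away from the trapped set $\{|r-3M|\leq M/8\}$, I will use the remark following Proposition~\ref{conformalenergy} to split $G=G_1+G_2$ with $G_1=S(N_{\leq 12})$ and $G_2 = V_{\leq 12}+S(U_{\leq 12,\leq j})$, saving a derivative on the commutator pieces.

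Second, I would estimate each resulting term following the bookkeeping of Proposition~\ref{Z2}. The initial data contribution is $\leq C\sum_jA_{S,j}^{-1}\epsilon$ by hypothesis. The two $(t^*)^2 K$ terms supported in $\{r\leq t^*/2\}$ and $\{r\leq r^-_Y\}$ are bounded using (\ref{BAK9}), together with a dyadic decomposition $\tau_0<\tau_1<\cdots<\tau_n=\tau$ with $\tau_{k+1}\leq 1.1\,\tau_k$; the contribution is $\leq C(\sum_j A_{S,X,j}/A_{S,j})(2\delta'+a)\eta_{S,11}^{-1}\epsilon\tau^{\eta_{S,11}}$, absorbable since $a,\delta'\ll A_{S,j}/A_{S,X,j}$. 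The $t^*r^{-1+\delta}K^{X_1}$ term is split into $\{r\leq t^*/2\}$ and $\{r\geq t^*/2\}$ and then estimated via (\ref{BAK8}) and (\ref{BAK6}) (the analogues of (\ref{BAK3}) and (\ref{BAK1.6}) used in Proposition~\ref{Z2}), yielding a bound of the form $C(\delta')^{-1}(\sum_j A_{S,X,j}/A_{S,j})\eta_{S,12}^{-1}\epsilon\tau^{\eta_{S,12}}$, which is acceptable for $\tau$ large since $\eta_{S,12}\ll\eta_{S,11}$.

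Third, the nonlinear and commutator terms are handled as follows. For the $N$-type contributions I use Propositions~\ref{SNI} and~\ref{SNO} (in place of Propositions~\ref{NI}, \ref{NO}) applied to $S(N_{\leq 12})$, and the integrals $\iint (t^*)^2 r^{1+\delta}(\partial_{t^*}^m S(N_{\leq 12}))^2$ and $(\int(\int r^2(S(N_{\leq 12}))^2)^{1/2})^2$ decay appropriately thanks to the extra powers of $t^*$ present in those estimates. For the commutator pieces $S(U_{\leq 12,\leq j})$ and $V_{\leq 12}$ I invoke Propositions~\ref{SU} and~\ref{V}; since $V$ obeys \emph{better} bounds than $S(U)$ (cf.\ Proposition~\ref{V} versus~\ref{SU}), the $S(U)$ contribution is the one to watch, and gives a term of size $(\delta')^{-1}\eta_{S,11}^{-1}(A_{S,X,j-1}/A_{S,j})\epsilon\tau^{\eta_{S,11}}$, absorbable by the hierarchy $A_{S,X,j-1}/A_{S,j}\ll\delta'\eta_{S,11}$. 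Summing all the contributions and exploiting $\epsilon\ll$ everything else closes the estimate with the constant $\epsilon/4$, as required.

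The main obstacle is purely bookkeeping: verifying that the constant hierarchy $A_{S,j}\ll A_{S,X,j}\ll A_{S,Y}$ together with the smallness of $\delta,\delta',a$ relative to $A_{S,j}/A_{S,X,j}$ (and the ordering $\eta_{16}\ll\cdots\ll\eta_{S,11}$) forces every error term strictly below the top-order target $\tfrac{\epsilon}{4}\tau^{\eta_{S,11}}$. The only genuinely new point, compared to Proposition~\ref{Z2}, is that the $V_k$ term carries a logarithmic $(\log r)^2$ loss, but this is already accommodated by the $r^\delta$ slack in Proposition~\ref{V}; and we must use the save-a-derivative version of Proposition~\ref{conformalenergy} so as not to lose a derivative to trapping on $V$ and $S(U)$, which would push us above 13 derivatives on $S\Phi$ (beyond what is controlled by (\ref{BA5})).
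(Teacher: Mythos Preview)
Your proposal is correct and follows exactly the paper's own approach, which simply says the proof ``follows exactly as Proposition~\ref{Z2} except for replacing the use of Propositions~\ref{U}, \ref{NI} and \ref{NO} with Propositions~\ref{V}, \ref{SU}, \ref{SNI} and \ref{SNO}.'' Two minor bookkeeping corrections: (i) the commutator piece $V$ is \emph{not} supported away from the trapped set (only $S(U)$ is, since $U_{k,j}$ lives in $\{r\geq R_\Omega\}$); however this does no harm, because $\partial_{t^*}V_{\leq 11}$ is controlled by the $\ell+k\leq 12$ case of Proposition~\ref{V} and involves only derivatives of $\Phi$ (not $S\Phi$), so placing $V$ in $G_1$ costs nothing; (ii) for the $t^*r^{-1+\delta}K^{X_1}$ term in the region $\{r\geq t^*/2\}$ the correct analogue of (\ref{BAK1.6}) is (\ref{BAK6.5}) rather than (\ref{BAK6}), using that $K^{X_1}\leq CK^{X_0}$ for large $r$.
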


\begin{proof}
 This follows exactly as Proposition \ref{Z2} except for replacing the use of Propositions \ref{U}, \ref{NI} and \ref{NO} with Propositions \ref{V}, \ref{SU}, \ref{SNI} and \ref{SNO}.
\end{proof}

To close the bootstrap argument we need finally to consider energy quantities with both $S$ and $\hat{Y}$.

\begin{proposition}
$$\sum_{i+k=13}A_{S,Y}^{-1}\int_{\Sigma_\tau}J^{N}_\mu\left(\hat{Y}^kS\partial_{t^*}^i\Phi\right) n^{\mu}_{\Sigma_\tau}\leq \frac{\epsilon}{4} \tau^{\eta_{S,13}}.$$
$$\sum_{i+k=12}A_{S,Y}^{-1}\tau^2\int_{\Sigma_\tau\cap\{r\leq r^-_Y\}}J^{N}_\mu\left(\hat{Y}^kS\partial_{t^*}^i\Phi\right) n^{\mu}_{\Sigma_\tau}\leq \frac{\epsilon}{4} \tau^{1+\eta_{S,12}}$$
$$\sum_{i+k\leq 11}A_{S,Y}^{-1}\tau^2\int_{\Sigma_\tau\cap\{r\leq r^-_Y\}}J^{N}_\mu\left(\hat{Y}^kS\partial_{t^*}^i\Phi\right) n^{\mu}_{\Sigma_\tau}\leq \frac{\epsilon}{4} \tau^{\eta_{S,11}}$$
\end{proposition}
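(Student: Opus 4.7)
The plan is to mirror the argument of Proposition \ref{Yest}, feeding in the already-established estimates that involve $S$. I would apply Proposition \ref{commYcontrol} with $\Phi$ replaced by $S\partial_{t^*}^i\Phi$ and with the desired number of $\hat Y$'s. Since $\partial_{t^*}$ is Killing, the inhomogeneity becomes $\Box_{g_K}(S\partial_{t^*}^i\Phi)=[\Box_{g_K},S]\partial_{t^*}^i\Phi+S\partial_{t^*}^i(\Box_{g_K}\Phi)$, whose squared integrals (both on fixed slices and over short slabs $[\tau',\tau]$) are exactly of the $V_k$ and $S(N_k)$ form controlled in Propositions \ref{V} and \ref{SNestprop}. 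The remaining right-hand side terms in Proposition \ref{commYcontrol}, namely $\sum_{j}\int_{\Sigma_\tau\cap\{r\leq r^+_Y\}} J^N_\mu(\partial_{t^*}^j S\partial_{t^*}^i\Phi)n^\mu$ and the analogous spacetime integral over $\mathcal R(\tau',\tau)\cap\{r\leq 23M/8\}$, have no $\hat Y$'s and are bounded by the freshly-closed estimates \eqref{5}--\eqref{8} and \eqref{K6}--\eqref{K10}, the slab version being obtained after dividing $[\tau',\tau]$ into dyadic sub-intervals as in the proof of Proposition \ref{Z2}.

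To gain the $A_{S,Y}$ factor, I would run the pigeonhole argument exactly as in Proposition \ref{Yest}: choose $\tau'=\tau-A_{S,0}$ (with the appropriate $\tau$-weight for each of the three claims), apply Proposition \ref{commYcontrol} once from $\tau'$ to $\tau$ to obtain both the energy at $\tau$ and the integrated bound over $\mathcal R(\tau',\tau)\cap\{r\leq r^-_Y\}$. Averaging this slab integral over an $A_{S,0}$-length $t^*$-window produces a slice $\tilde\tau\in[\tau-A_{S,0},\tau]$ on which the $\hat Y^k S\partial_{t^*}^i\Phi$ energy localized to $\{r\leq r^-_Y\}$ gains a factor $A_{S,0}^{-1}$. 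Filling in the annulus $\{r^-_Y\leq r\leq r^+_Y\}$ by Propositions \ref{elliptic} and \ref{elliptichorizon} costs only non-$\hat Y$ energies and the inhomogeneity, both already acceptable. A second application of Proposition \ref{commYcontrol} from $\tilde\tau$ to $\tau$ then yields $\sum_{i+k=13}\int_{\Sigma_\tau\cap\{r\leq r^+_Y\}}J^N_\mu(\hat Y^k S\partial_{t^*}^i\Phi)n^\mu\leq C(A_{S,Y}A_{S,0}^{-1}+A_{S,X,0})\epsilon\tau^{\eta_{S,13}}$ (and the analogous decay rates for the other two claims), which is $\leq \frac{A_{S,Y}}{4}\epsilon\tau^{\eta_{S,13}}$ because of the hierarchy $A_{S,X,j}\ll A_{S,Y}$ and $A_{S,0}\gg 1$.

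The three asserted estimates differ only in the exponent of $\tau$ and the presence of the $\tau^2\mathbbm{1}_{\{r\leq r^-_Y\}}$ weight; in each case the same scheme applies with the matching bootstrap assumption \eqref{BA5}/\eqref{BA6}/\eqref{BA7} and the corresponding improved version \eqref{5}/\eqref{6}/\eqref{7} used as input. The $\tau^2$ weight is harmless because Proposition \ref{commYcontrol} is applied pointwise in $\tau$ and the non-$\hat Y$ inputs already carry the matching $\tau^2$ factor through (\ref{BA6})--(\ref{BA7}) and their closed forms.

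The main technical point I expect is the treatment of $[\hat Y,S]$. Since $\hat Y$ is a smooth, compactly supported vector field in $\{r\leq r^+_Y\}$ and $S=t^*\partial_{t^*}+h(r_S)\partial_r$, the commutator $[\hat Y,S]$ is an $O(1)$ first-order operator, so iterated commutation $[\hat Y^k,S]$ could a priori re-introduce the top-order quantity $\hat Y^k S\partial_{t^*}^i\Phi$ itself and obstruct closure. The decisive observation, already used in Propositions \ref{V} and \ref{SNestprop}, is that $[\hat Y,S]$ produces only $D$ derivatives with at most $k-1$ copies of $\hat Y$ at the top order, which by the elliptic estimates of Proposition \ref{elliptichorizon} are controlled by $\partial_{t^*}$ and $\hat Y$ derivatives of one fewer total order; combined with a fresh application of Proposition \ref{Yest} and \eqref{5}--\eqref{8}, all such cross-terms are strictly lower in the constant hierarchy than $A_{S,Y}$. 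This is precisely why $A_{S,Y}$ is placed last in the bootstrap ordering.
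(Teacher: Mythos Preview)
Your proposal is correct and follows the same approach as the paper, which simply says the proof ``follows exactly as Proposition \ref{Yest} except for replacing the use of Proposition \ref{NI} with Propositions \ref{V} and \ref{SNI}.'' Your expansion of this one-liner---applying Proposition \ref{commYcontrol} to $S\partial_{t^*}^i\Phi$, controlling the inhomogeneity via Propositions \ref{V} and \ref{SNestprop}/\ref{SNII}, and running the $\tau'=\tau-A_{S,0}$ pigeonhole to gain the $A_{S,Y}$ factor---is exactly right.

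One remark: your final paragraph on $[\hat Y,S]$ is unnecessary caution. Proposition \ref{commYcontrol} is applied as a black box with input $S\partial_{t^*}^i\Phi$; the $\hat Y$-commutation is handled internally by that proposition and never meets $S$, so no $[\hat Y^k,S]$ commutator actually appears in the argument. Likewise, the dyadic splitting of $[\tau',\tau]$ you mention is not needed here, since the slab has bounded length $A_{S,0}$ (just as in Proposition \ref{Yest} it had length $A_0$), and the fixed-$\tau$ estimates from Propositions \ref{V} and \ref{SNestprop} integrated over this short window suffice.
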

\begin{proof}
This follows exactly as Proposition \ref{Yest} except for replacing the use of Propositions \ref{NI} with Propositions \ref{V} and \ref{SNI}.
\end{proof}

\section{Proof of Theorem 1}\label{pfmaintheorem}

Now all the bootstrap assumptions are closed and all the estimates hold. The solution hence exists globally by a standard local existence argument that we omit here. The decay estimates of the the derivatives of $\Phi$ claimed in the Theorem are restatements of (\ref{P3}), (\ref{P4}), (\ref{BAP2}). The decay estimates follow from the use of Proposition \ref{rnoderivatives} and (\ref{Z2}) for $r\geq R$ and Proposition \ref{inside2} and (\ref{K5}) for $r\leq\frac{t^*}{4}$.

\section{Acknowledgment}
The author thanks his advisor Igor Rodnianski for his continual support and encouragement and for many enlightening discussions.

\bibliographystyle{hplain}
\bibliography{Null}
\end{document}